\tikzset{oplus/.style={draw,circle,inner sep=1pt, draw=black!100,fill = white!100,line width=1.5pt,minimum size = 15pt,path picture={%
      \draw[black]
       (path picture bounding box.south) -- (path picture bounding box.north) 
       (path picture bounding box.west) -- (path picture bounding box.east);
      }}}
\definecolor{dkpurple}{rgb}{0.6,0,0.6}
\newcommand{\zhuan}[1]{{\color{orange}\ifmmode\text{\footnotesize(ZL) #1}\else\footnotesize{(ZL) #1}\fi}}
\newcommand{\roger}[1]{{\color{dkpurple}\ifmmode\text{\footnotesize(RM) #1}\else\footnotesize{(RM) #1}\fi}}
\newcommand{\toERASE}[1]{{\small\color{blue} #1}}
	\renewcommand{\toERASE}[1]{} \renewcommand{\zhuan}[1]{} \renewcommand{\roger}[1]{}	
\theoremstyle{definition}     
\newtheorem{definition}{Definition}[section]
\newtheorem{theorem}[definition]{Theorem}
\newtheorem{proposition}[definition]{Proposition}
\newtheorem{lemma}[definition]{Lemma}
\newtheorem{corollary}[definition]{Corollary}
\newcommand{\defeq}{\stackrel{\mathrm{def}}{=}}
\newcommand*{\D}[2][]{\mathinner{\mathrm{d}^{#1}\mkern-1mu{#2}}}
\newcommand{\bigXp}[1]{\mathinner{\bigl\langle #1 \bigr\rangle}}
\newcommand{\ketbra}[2]{\mathinner{\ket{#1}\mkern-5mu\bra{#2}}}
\newcommand{\rrangle}{\rangle\mkern-4mu\rangle}
\newcommand{\E}{\mathcal{E}}        
\newcommand{\M}{\mathcal{M}}
\newcommand{\R}{\mathcal{R}}
\newcommand{\Z}{\mathbb{Z}}
\newcommand{\Tr}{\operatorname{Tr}} 
\newcommand{\tr}{\operatorname{tr}} 
\newcommand{\Prob}{\mathrm{Pr}}
\newcommand{\auxX}{\hat{X}}
\newcommand{\auxU}{\hat{U}}
\newcommand{\auxV}{\hat{V}}
\newcommand{\Th}{\mathbf{h}}
\newcommand{\Tv}{\mathbf{v}}
\newcommand{\Td}{\mathbf{d}} 
\newcommand{\Ti}{\mathbf{i}}
\newcommand{\Tj}{\mathbf{j}}
\newcommand{\Tk}{\mathbf{k}}
\newcommand{\Tl}{\mathbf{l}}
\newcommand{\Ts}{\mathbf{s}}
\newcommand{\Tt}{\mathbf{t}}
\newcommand{\DMxS}[1]{\mathcal{S}^{(#1)}}
\newcommand{\DMxV}[1]{\mathcal{V}^{(#1)}}
\newcommand{\phd}{{\vphantom{\dag}}}
\newcommand{\ccc}{{\bm c}}
\newcommand{\gPath}{\mathfrak{q}}
\newcommand{\JHh}{\mathord{\diamond\mkern-1.2mu\diamond}}
\newcommand{\JHv}{\mathord{\substack{\textstyle\diamond\\[-1.4pt]\textstyle\diamond}}}
\begin{document}

\title{Replica topological order in quantum mixed states and quantum error correction}

\author{Zhuan Li (李專)}
 \affiliation{
 Department of Physics and Astronomy, University of Pittsburgh, Pittsburgh, Pennsylvania 15260, USA
}%
\affiliation{Pittsburgh Quantum Institute,  Pittsburgh, Pennsylvania 15260, USA}

\author{Roger S. K. Mong (蒙紹璣)}
\affiliation{
 Department of Physics and Astronomy, University of Pittsburgh, Pittsburgh, Pennsylvania 15260, USA
}%
\affiliation{Pittsburgh Quantum Institute, Pittsburgh, Pennsylvania 15260, USA}

\date{\today}

\begin{abstract}
Topological phases of matter offer a promising platform for quantum computation and quantum error correction.
Nevertheless, unlike its counterpart in pure states, descriptions of topological order in mixed states remain relatively under-explored.
Our work give two definitions for replica topological order in mixed states, which involve $n$~copies of density matrices of the mixed state.
Our framework categorizes topological orders in mixed states as either quantum, classical, or trivial, depending on the type of information that can be encoded.
For the case of the toric code model in the presence of decoherence, we associate for each phase a quantum channel and describes the structure of the code space.
We show that in the quantum-topological phase, there exists a postselection-based error correction protocol that recovers the quantum information, while in the classical-topological phase, the quantum information has decohere and cannot be fully recovered.
We accomplish this by describing the mixed state as a projected entangled pairs state (PEPS) and identifying the symmetry-protected topological order of its boundary state to the bulk topology.
We discuss the extent that our findings can be extrapolated to $n \to 1$ limit.
\end{abstract}

\begin{CJK*}{UTF8}{bsmi}
\maketitle
\end{CJK*}

\hbadness=3000

\section{Introduction}
\label{sec:intro}

Topological phases of matter lie beyond the traditional Landau-Ginzburg symmetry-breaking paradigm, where phases are characterized by local order parameters.
Their unique properties, such as ground state degeneracy, long-range entanglements, and nontrivial excitations, originated from the topological nontrivial structures of the Hamiltonian and thus remain stable when subjected to local perturbations. 
This inherent stability positions topological phases as prime candidates for error correction and fault-tolerant quantum computation~\cite{KITAEV20032, RevModPhys.NayakTopologicalQuantumComputation}. 
For instance, their degenerate ground states can be used to encode quantum information, and the braiding of nontrivial excitations can act as quantum gates.
Multiple characteristics have been used to capture the topological order in two-dimensional (2D) pure states, including topology-dependent ground state degeneracy~\cite{wen1990TopologicalOrder},
topological entanglement entropy~\cite{TEEkitaev,TEELevinWen,YiZhang2012GroundStateEntanglement,DMRG2012} and nontrivial ground state modular transform~\cite{YiZhang2012GroundStateEntanglement,PhysRevLettCincio,PhysRevB.91.035127Over,PhysRevBZhuan}.

In reality, quantum devices are noisy, prone to environmental effects, which means we must inevitably need to deal with quantum mixed states.
Recently, the interplay between mixed states and topology has drawn increased attention due to its connections with quantum error correction and state preparation. 
While it is argued that 2D topological order is destroyed by thermal effects~\cite{KITAEV2006Honeycomb,PhysRevLett_Hastings_NozeroTemperature}, recent works have noted that they are stable against small local quantum decoherence~\cite{2023BaoBoundarySPT,DiagnosticsRelativeEntropyNegativity,wang2023fcondensation,sang2023mixedstateRG}.
For example, it has been observed that the Toric code (TC)~\cite{KITAEV20032} model---a paragon of topological order---has a finite error threshold against bit-flip and phase errors~\cite{dennis2002topologicalMemory,Jean_Marie_Maillard_2003_RBIM}.
This ensures that, when the error rate is within this threshold, the quantum information stored in the initial pure state can still be retrieved from the mixed state, suggesting the existence of mixed-state topological order.
Subsequently, error thresholds for various other channels have also been found~\cite{KatzgraberPhysRevLett2009ColorCode,PhysRevX2012DepolarizingThreshold,PhysRevA2015ZnToricCodeThreshold,PhysRevLett2017Ampdamping,KubicaPhysRevLett3DcolorcodeThreshold,UltrahighErrorThresholdYNoise,chubb2021MapsToStatistical,bonilla2021xzzx,xiao2024exactXZZX}, implying that topological order remains robust beyond the realm of pure states.

Noisy topologically ordered states have also enjoyed a renewed surge of interest due to recent research regarding topological state preparation~\cite{Satzinger2021TCsuperconducting,verresen2022efficientlyPreparation,tantivasadakarn2022LREfromSRE,bravyi2022PreparationOfNonAbelianState,fossfeig2023adaptiveQCpreparation,iqbal2023D4TrappedIon,Tantivasadakarn2023PreparationOfNonAbelianState,Tantivasadakarn2023D4State}.
Finite depth adaptive circuits have been designed and implemented on various quantum architectures with a moderate number of qubits.
As the current generation of devices are limited by local decoherence---e.g.\ lifetimes of the individual components---it is natural to ask: in what sense are the prepared states topological?
These concerns motivate a detailed study into the topological order of mixed states.

In contrast to the well-defined pure state topological order, a precise definition for mixed state topological order is still in development.
One approach is to generalize entropy measurements, such as topological entanglement entropy, coherent information, and topological quantum negativity, to the mixed-state density matrix~\cite{Castelnovo_finiteTemperatureTopologicalEntropy, Castelnovo_negativity_TC_finiteTemperature,Lu_negativity_finiteTemperature,DiagnosticsRelativeEntropyNegativity,wang2023fcondensation,colmenarez2023thresholdsCoherentInformation}.
As with the pure state case, the sharp transition of these quantities in mixed states can reveal changes in the system's topological order.

An alternative approach is to describe a mixed state by ``vectorizing'' it to a pure state living in the double space~\cite{DiagnosticsRelativeEntropyNegativity,2023BaoBoundarySPT,PRXQuantumLeeBoundarySPT}.
In this framework, the double-space pure-state topological order depicts the mixed-state topological order.
This approach allows us to characterize mixed-state topological orders, conceptually and numerically, using the more familiar tools for pure states, such as anyon condensation, bipartite entanglement entropy, etc.
However, the error threshold for the quantum channel, as determined using the vectorization method, typically differs from that found in the error correction context.
For example, the error threshold for the the bit- and phase-flip channels are $\approx 18\%$ in Ref.~\onlinecite{2023BaoBoundarySPT}, greater than the established value of $\approx 11\%$~\cite{dennis2002topologicalMemory,Jean_Marie_Maillard_2003_RBIM},
We understand this discrepancy arises because the vectorization method, utilizes two copies of the mixed state in characterizing and computing its properties.
By extension, the error thresholds changes as the number copies of the mixed state increases~\cite{Jean_Marie_Maillard_2003_RBIM,DiagnosticsRelativeEntropyNegativity}.

This raises several critical questions: What are the physical interpretations of these error thresholds?
Can we define a unified theory that encompasses the different interpretations of topological order?
How do the various methods for detecting topological orders relate to one another? 
Notably, recent studies suggest that the mixed-state topological order exhibits a richer structure than its counterpart in the pure state.
There are so-called classical topologically ordered mixed states within which only classical information can be encoded~\cite{2023BaoBoundarySPT,wang2023fcondensation}. Such states have no corresponding pure state topological order and, as a result, require new descriptions.
What are the fundamental differences between these states and the more traditional topological states?
We will address some of these questions in this paper. 

There are four main objectives of this paper:
(1) To give an intrinsic, process-independent definition of topological orders in mixed states, distinguishing between \emph{quantum} and \emph{classical} topological order.
(2) To classify possible topological phases constructed from the TC model subject to local decoherence.
(3) To give an interpretation to replica mixed-state topological orders in the quantum error correction paradigm.
(4) To establish tensor network framework for characterizing mixed state topological order.

In Sec.~\ref{sec:TopoMixed}, we present two definitions of \emph{replica topological order} for mixed states.
Our definitions involve $n$ replicas of a state (which shares the same conceptual basis as the R\'enyi entropy), and are based on properties of the mixed state density matrix itself rather than the process which produces it.
The doublefield/vectorized methods~\cite{DiagnosticsRelativeEntropyNegativity,2023BaoBoundarySPT,PRXQuantumLeeBoundarySPT} can be regarded as a special case when $n=2$.
Our first definition establishes the mixed-state topological order through the geometric structure of density matrix space.
The second definition characterizes topological order via nonlocal Wilson loop operators, and their algebraic properties.
Both definitions capture the types of information encoded in mixed-state space, allowing us to classify states either trivial, \emph{classical topologically ordered}, or \emph{quantum topologically ordered}.

\begin{figure}
	\centering
	\includegraphics[width = \linewidth]{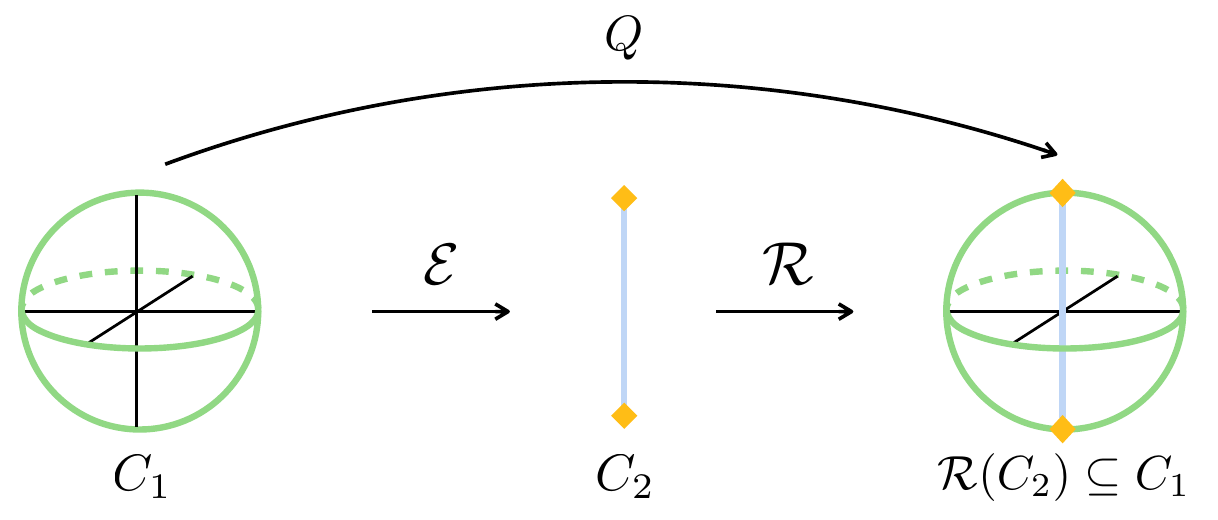}
	\caption{Example of a quantum map on a logical qubit encoded within a topological phase.
	$C_1$ is the Bloch sphere (along with its interior), representing the density matrix space of the qubit.
	Applying an error channel $\E$ to the topological state turns it into a mixed state.
	In this example, $\E$ acts as a dephasing channel which maps $C_1$ to $C_2$, a line segment representing the logical space space of the mixed state.
	The optimal recovery channel $\R$ embeds the mixed state logical space $C_2$ back to the original topological state space $C_1$.
	From the effective quantum channel of entire process $Q = \R \circ \E$, we can infer the structure of the mixed state space $C_2$.
	} \label{fig:qubit_geo}
\end{figure}

In Sec.~\ref{sec:GaugeModel}, we present examples of classical topologically ordered states, using both Abelian and non-Abelian lattice gauge theories.
These unconventional phases are constructed as incoherent sums of gauge configurations.
These examples provide first supports for our definitions of mixed-state topological orders.

In Sec.~\ref{sec:boundarySPT}, we present a postselection-based error correction protocol as a concrete physical interpretation of the mixed-state topological order resulting from a local noisy quantum channel.
Our protocol starts with $n$ replicas of a topologically ordered state $\rho_\text{init} = \rho_1^{\otimes n}$.
Subsequently external noises and syndrome measurements are applied on each copy of $\rho_1$.
The key step of our protocol is to postselect the same error configurations among different replicas.

To study this error correction protocol, we use the powerful projected entangled pair state (PEPS) tensor network method, where the error channel, syndrome measurement, and postselection are represented by tensor elements.
The PEPS tensor network relates the quantum channel acting on the logical space with the \emph{boundary symmetry topological (SPT) order} of the network.
This allows us to classify the descendants---mixed states that arises from local decoherence channels---of the toric code based on their optimal recovery map.
With this formulism, we also prove that there are exactly $p+3$ phases that are descended from the $\Z_p$ TC model in Sec.~\ref{sec:enum_boundary_SPT}.

In Sec.~\ref{sec:ClassifyMixed}, we classify the mixed state topological orders of the descendants of the $\Z_p$ toric code in accordance to our two definitions.
For the geometric definition, we determine the structure of the recovery space by explicitly writing down the quantum channel of the code space; this idea is illustrated in Fig.~\ref{fig:qubit_geo}.
Regarding the Wilson loop definition, we infer the algebra of Wilson loop operators in the noisy mixed-state space from the boundary SPT order of the tensor network.
For each descendant phase identified in Sec.~\ref{sec:enum_boundary_SPT}, we find the same mixed-state topological order classification under both definitions, thereby supporting the equivalence between our two definitions of mixed-state topological orders.
We close this section with various remarks regarding the anyon condensation picture of decoherence~\cite{2023BaoBoundarySPT}, and chiral topological order.

In Sec.~\ref{sec:numerics}, we numerically detect the mixed state topological order of the $\Z_2$ TC model that subjects to different error channels using the PEPS tensor network.
These calculations support our theoretical definition and demonstrate PEPS’s power in simulating the replica topological order.
Unlike previous research that map the TC model to exactly solvable statistical models~\cite{dennis2002topologicalMemory,Jean_Marie_Maillard_2003_RBIM,KatzgraberPhysRevLett2009ColorCode,chubb2021MapsToStatistical,KubicaPhysRevLett3DcolorcodeThreshold}, the PEPS approach establishes a framework for dealing with general error channels.   

\begin{figure}[tb]
\begin{align*}
	\begin{tikzpicture}
	[baseline={([yshift=-.5ex]current bounding box.center)},every text node part/.style={align=center}],
	\node[rectangle,draw] (3) at (70pt,-20pt) { Examples\\[-4pt] \S\ref{sec:GaugeModel}};
	\node[rectangle, draw] (2a) at (0pt,20pt) { Geometric definition \\[-4pt] \S\ref{sec:TopoMixedGeo}};
	\node[rectangle, draw] (2b) at (140pt,20pt) { Wilson loop definition \\[-4pt] \S\ref{sec:TopoMixedWL}};
	\node[rectangle, draw] (4a) at (140pt,-60pt) { Boundary SPT order \\[-4pt]  \S\ref{sec:PEPS}};
	\node[rectangle, draw] (4b) at (0pt,-60pt) { Logical space \\[-4pt] quantum channel \\[-4pt] \S\ref{sec:QEC_quantum_channel}};
	\node[rectangle, draw] (6) at (140pt,-100pt) { Numerics \\[-5pt]
	\S\ref{sec:numerics}};
	\draw [<->,shorten >= 2pt, shorten <= 3pt,line width=1pt] (2a) edge (3) (2b) edge (3);
	\draw [<->,shorten >= 2pt, shorten <= 2pt,line width=1pt] (4a) edge (6);
	\draw [<->,shorten >= 2pt, shorten <= 2pt,line width=1pt] (2a) edge (4b);
	\draw [<->,shorten >= 2pt, shorten <= 2pt,line width=1pt] (2b) edge (4a);
	\draw [<->,shorten >= 2pt, shorten <= 2pt,line width=1pt] (4b) edge (4a);
	\node[] () at (-14pt,-20pt) {\S\ref{sec:ClassificationGeo}};
	\node[] () at (154pt,-20pt) {\S\ref{sec:ClassificationWL}};
	\node[] () at (67pt,-52pt) {\S\ref{sec:QEC_quantum_channel}};
	\end{tikzpicture} 
\end{align*}
\vspace{-4ex}
\caption{A schematic describing what the main ideas of this work are and how they are connected.}
\label{fig:schematic}
\end{figure}
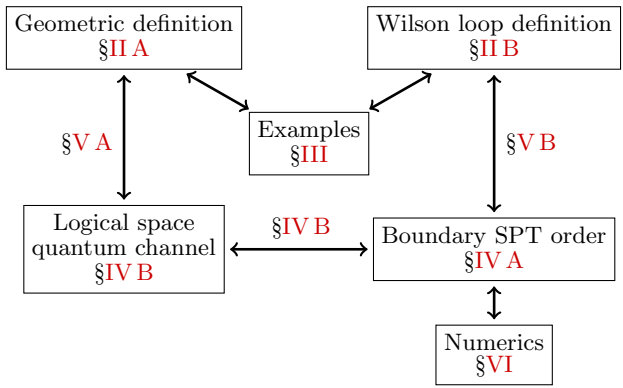

The main concepts and overall logical flow of this paper is illustrated in Figure~\ref{fig:schematic}.

\section{Topological order in mixed states}
\label{sec:TopoMixed}

In this section, we define the topological ordered mixed state.
Let us start by reviewing the properties of topologically ordered pure state.

A pure state $\ket{\varphi}$ on a closed orientable topologically not-simply-connected manifold (e.g.\ torus) is (nontrivially) topologically ordered if there exists a different state $\ket{\psi}$ such that  $\ket{\varphi}$ and $\ket{\psi}$ are \emph{locally indistinguishable}.
This condition guarantees the degenerate Hilbert space and allows quantum information to be encoded therein, hidden and protected against local perturbations.
The set of normalized locally indistinguishable wavefunctions forms the complex projective space $\mathbb{CP}^{d-1}$;  where $d$ is the dimension of the Hilbert space.
(E.g.\ $\mathbb{CP}^{1}$ is the Bloch sphere for qubits.)
The geometry of the manifold $\mathbb{CP}^{d-1}$ reveals information in regards to the type of topological order $\ket{\varphi}$ possesses.

Another defining characteristic of topological pure states are the existence of the Wilson loop operators.
The Wilson loop operator $W_\ell(a)$ is defined from the process of creating a particle-antiparticle pair $(a,\bar{a})$, moving one of them along the loop $\ell$, and then annihilating the particle pair back into the vacuum.
Around noncontractible loops, Wilson loop operators acts non-trivially within the degenerate Hilbert space, acting as logical operators in a topological encoding.

Inspired by these properties of topological pure states, we asks the analogous questions: what information can be hidden in a mixed state, what are the operators that we can use to manipulate this information?
This leads us to two definitions for topological mixed state: a geometric definition and a Wilson loop definition.
We conjecture that these two definitions are equivalent, and we explicitly show that they are the same in the toric code model under local quantum channels in Sec.~\ref{sec:ClassifyMixed}.

We first define $n$-replica (in)distinguishability in context of mixed states; when two states should be considered the different/same and when they belong to the same topological phase.

\vspace{1ex}\noindent\textbf{Definitions}:
Let $n$ be a positive integer.

Two density matrices $\rho$ and $\sigma$ are \emph{$n$-replica locally indistinguishable}, if 
\begin{align} \label{eq:n_indisting}
    \frac{\Tr(A_1\rho A_2\rho \cdots A_n \rho)}{\Tr(\rho^n)}
    = \frac{\Tr(A_1\sigma A_2\sigma \cdots A_n \sigma)}{\Tr(\sigma^n)}
\end{align}
for any set of operators $\{A_1,\dots,A_n\}$ with support $\mathcal{R} = \bigcup \mathrm{Supp}(A_i)$ on a contractible region.

We use the $n$-Schatten norm to measure state normalization and distances between states.
\begin{align} \label{eq:Schatten_norm}
    \left\lVert \alpha \right\rVert_n \defeq \frac{1}{\mathcal{N}} \Bigl[ \Tr\bigl(|\alpha|^n\bigr) \Bigr]^{1/n} .
\end{align}
$|\alpha| = \sqrt{\alpha^\dag \alpha}$ denote the ``absolute value'' of $\alpha$.
(If $\alpha$ is positive semidefinite, then $|\alpha| = \alpha$.)
$\mathcal{N}$ is a normalization constant to be discussed shortly.

Two density matrices $\rho$ and $\sigma$ are \emph{$n$-replica globally indistinguishable}, if their distance---defined as follows---vanishes 
\begin{align} \label{eq:Schatten_dist}
    \mathrm{dist}_n(\rho,\sigma) \defeq \frac{1}{2^{1/n}} \bigl\lVert\rho-\sigma\bigr\rVert_n \,.
\end{align}
When $n=1$, this reduces to the trace distance, which equals to zero if and only if the fidelity $F(\rho,\sigma) = 1$.

It is understood that we are characterizing states in the thermodynamic limit.
For example, typically topological grounds states in finite systems are locally indistinguishable up to an exponentially small correction in system size; we expect the analogous behaviour to occur for mixed states.
Crucially, it is possible for two states $\rho$ and $\sigma$ to be globally distinguishable for one $n$ but not another.%
    \footnote{This is clearly impossible for finite systems where $\lVert \rho-\sigma \rVert_n = 0 \Leftrightarrow \rho=\sigma$.}
Consequently the topological classification of a state may depends on the replica index $n$.

Generically, mixed states have an exponentially large number of eigenvalues and $\Tr \rho^n$ vanishes in the thermodynamic limit for $n > 1$.
To give meaning to our state normalization and distance measure, we need to scale $\mathcal{N}$ with the system size.
An appropriate choice would be to let $\mathcal{N} = \exp\bigl[ -(1-\tfrac{1}{n}) s_n N \bigr]$, where $s_n$ is the $n$\textsuperscript{th}-R\'enyi entropy density, $N$ is the system size.

The notion of global indistinguishability allows us to define equivalent classes of density matrices.
Simply put, we group together states that are $n$-replica globally indistinguishable from one another into a set, which we call \emph{$n$-replica states}.%
	\footnote{The construction of $\DMxS{n}$ and $\DMxV{n}$ rely on the conjecture that globally indistinguishable states are also locally indistinguishable.  We are not aware of any counterexamples to this conjecture.}
\begin{itemize}
\item Let $\DMxS{n}(\rho)$ denote the set of $n$-replica states that are $n$-replica locally indistinguishable from $\rho$.
\item Let $\DMxV{n}(\rho)$ denote the complex vector space generated by $\DMxS{n}(\rho)$.
\end{itemize}
The elements of $\DMxS{n}(\rho)$ must be positive semidefinite with unit trace.
In contrast, elements in $\DMxV{n}(\rho)$ may not even be Hermitian.

For what follows we require that $\DMxS{n}(\rho)$ be a \emph{convex set}, that is, if $\sigma, \sigma' \in \DMxS{n}(\rho)$ then $x \sigma + (1-x) \sigma' \in \DMxS{n}(\rho)$ for $x\in[0,1]$.
This must be true for $n=1$ since Eq.~\eqref{eq:n_indisting} is linear.
We provide an argument for the $n=2$ case as follows.  Let $|\rho\rrangle$ denote the vectorized form of density matrix $\rho$.%
\footnote{For a density matrix $\sigma$ in the Hilbert space $\mathcal{H}$, its vectorized form $|\sigma\rrangle$ can be considered as a pure state in $\mathcal{H} \times \mathcal{H}^\ast$.}
Then Eq.~\eqref{eq:n_indisting} implies that $|\rho\rrangle$ and $|\sigma\rrangle$ are topologically degenerate, that is, locally indistinguishable as pure states.
Both are ground states of some Hamiltonian on 2-layers,
and so the state $x|\rho\rrangle + (1-x)|\sigma\rrangle$ must also be topologically degenerate.

\subsection{Geometric definition}
\label{sec:TopoMixedGeo}
We classify mixed states based on the geometry of $\DMxS{n}(\rho)$ \textbf{on the torus}.
Analogous to the pure state case, the set of states topologically degenerate with $\rho$ on a manifold (with genus greater than zero) encodes characteristics of topological order.
An extreme point of $\DMxS{n}(\rho)$ is a state that \emph{cannot} be written as a positive linear combination of two different states.
\begin{itemize}
\item [(i)]
    $\rho$ is called $n$-replica \textbf{trivial} if $\DMxS{n}(\rho)$ is a single point.
\item [(ii)]
    $\rho$ is called $n$-replica \textbf{classical topologically ordered} (CTO) if $\DMxS{n}(\rho)$ has a finite number of extreme points.
\item [(iii)]
    $\rho$ is called $n$-replica \textbf{quantum topologically ordered} (QTO) if the extreme points of $\DMxS{n}(\rho)$ form a submanifold with dimension $\geq 1$.
\end{itemize}
Here ``classical" and ``quantum" reflect the type of information that can be stored the set $\DMxS{n}(\rho)$.

Now define $\mathcal{B}(d)$ to be the set density operators in a $d$-dimensional Hilbert space, that is, the set of matrices
\begin{align} \label{eq:qudit_space}
	\mathcal{B}(d) \defeq \bigl\{ M \in \mathbb{C}^{d \times d} \,\big|\, M^\dag = M ,\, M \succeq 0 ,\, \tr M = 1 \bigr\} .
\end{align}
$\mathcal{B}(d)$ has dimensions $d^2-1$, and its extreme points is the set of rank-1 matrices, diffeomorphic to $\mathbb{CP}^{d-1}$.

If a pure state $\ket{\psi}$--say the ground state of some Hamiltonian--is topologically ordered in the conventional sense,
then its corresponding mixed state ensemble is always QTO.
Indeed, $\DMxS{1}(\ketbra\psi\psi)$ is the set of density matrices of $\mathcal{H}$ with $\mathcal{H}$ the Hilbert space of states locally indistinguishable to $\ket{\psi}$, and is isomorphic to $\mathcal{B}(\dim\mathcal{H})$.
The extreme points of $\DMxS{1}\ketbra\psi\psi)$ are the pure states $\ketbra\psi\psi$ of $\mathcal{H}$.
Importantly, $\mathbb{CP}^{d-1}$ is a manifold with infinite number of points (for $d>1$) and hence ``quantum''.

The same is true for $\DMxS{n}(\ketbra\psi\psi)$, with the following proof:
First note that if $\rho $ and $\sigma $ are pure states, then $\mathrm{dist}_n(\rho,\sigma) = \mathrm{dist}_1(\rho,\sigma) = \sqrt{1-\Tr(\rho\sigma)}$ (when $\mathcal{N}$ is set to 1).

Any state $\sigma \in \DMxS{1}(\ketbra\psi\psi)$ can be diagonalized: $\sigma = \sum_i p_i \ketbra{\varphi_i}{\varphi_i}$ such that $\ket{\varphi_i}$ form an orthonormal basis for $\mathcal{H}$.
The ratio~\eqref{eq:n_indisting} 
$\Tr(A_1 \sigma A_2 \sigma \cdots A_n \sigma) / \Tr(\sigma^n)$
has denominator $\sum_i p_i^n$.
Its numerator consists of terms of the form $\braket{\varphi_{i_n}|A_1|\varphi_{i_1}} p_{i_1} \braket{\varphi_{i_1}|A_2|\varphi_{i_2}} p_{i_2} \braket{\varphi_{i_2}|A_3|\varphi_{i_3}}  p_{i_3} \cdots$, which vanishes unless $i_1 = i_2 = \dots = i_n$ as the orthogonal states of a topologically-degenerate Hilbert space $\mathcal{H}$ have vanishing matrix elements for all local operators.
Furthermore, each braket $\braket{\varphi_i|A_k|\varphi_i} = \braket{\psi|A_k|\psi}$ since the $\ketbra{\varphi_i}{\varphi_i}$ is (1-replica) indistinguishable from $\ketbra{\psi}{\psi}$.
Hence
\begin{align} \notag
 &   \frac{\Tr(A_1\sigma A_2\sigma \cdots A_n \sigma)}{\Tr(\sigma^n)}
= \frac{\sum_i p_i^n \braket{\varphi_i|A_1|\varphi_i}\braket{\varphi_i|A_2|\varphi_i} \cdots}
    {\sum_i p_i^n}
\\ &= \braket{\psi|A_1|\psi} \braket{\psi|A_2|\psi} \cdots \braket{\psi|A_n|\psi} ,
\end{align}
from which we conclude that $\DMxS{1}(\ketbra\psi\psi) \in S^{(n)}(\ketbra\psi\psi)$

On the other hand, if $\sigma \in \DMxS{n}(\ketbra\psi\psi)$, then for any local operator $A$,
\begin{align}
    \frac{\Tr(A\sigma^n)}{\Tr(\sigma^n)}
    = \frac{\Tr\bigl[ A (\ketbra{\psi}{\psi})^n \bigr]}{\Tr\bigl[ (\ketbra{\psi}{\psi})^n \bigr]}
    = \braket{\psi|A|\psi} .
\end{align}
Therefore, $\sigma^n/\Tr(\sigma^n) \in \DMxS{1}(\ketbra\psi\psi)$.
Diagonalizing $\sigma^n$ into
$\sigma^n/\Tr(\sigma^n) = \sum_i q_i \ketbra{\varphi_i}{\varphi_i}$,
it is evident that $\ket{\varphi_i} \in \mathcal{H}$
and so $\sigma \propto \sum_i \sqrt[n]{q_i} \ketbra{\varphi_i}{\varphi_i} \in S^{(1)}(\ketbra\psi\psi)$
and hence $\DMxS{n}(\ketbra\psi\psi) \in \DMxS{1}(\ketbra\psi\psi)$.
Therefore $\DMxS{n}(\ketbra\psi\psi) = \DMxS{1}(\ketbra\psi\psi)$ and thus $\ketbra\psi\psi$ is $n$-replica QTO.

In contrast, the phase space of a probability distribution forms a simplex and is CTO.
A probability distribution over $N$ items is characterized by $\{p_1,p_2,\dots,p_N\}$ constrained to $p_1+p_2+\dots+p_N=1$.
The geometry of this space is an $(N-1)$-simplex, its extreme points are the $N$ vertices corresponding the cases when one of the items have unit probability.
Density matrix spaces with such structure do not exhibit quantum characteristics (such as destructive interference) and hence is ``classical''.

\subsection{Wilson loop definition}
\label{sec:TopoMixedWL}
Alternately, we classify mixed state topological order using the algebra of Wilson loop operators acting on the vector space $\DMxV{n}(\rho)$.

We call $W$ a \emph{non-identity operator} if it acts as an automorphism of $\DMxV{n}(\rho)$ via both left- and right-multiplication and is not proportional to the identity map.
That is, (a) $W R, R W \in \DMxV{n}(\rho)$ for all $R \in \DMxV{n}(\rho)$, and (b) for all $\alpha \in \mathbb{C}$ there exists some $Q \in \DMxV{n}(\rho)$ such that $W Q \neq \alpha Q \neq Q W$.
The operators are classified as follows.
Suppose $W$ is a non-identity operator.
\begin{itemize}[itemsep=0pt, parsep=4pt, topsep=2pt]
\item	$W$ is \textbf{classical} if it commutes with every element of $\DMxV{n}(\rho)$.
\item	$W$ is \textbf{quantum} if there exists $R \in V^{(n)}(\rho)$ such that $W R \neq R W$.
\end{itemize}

\vspace{1ex}
A mixed state $\rho$ on a topological non-trivial manifold is called
\begin{itemize}
\item [(i)]
    $\rho$ is $n$-replica \textbf{trivial} if there are no such non-identity operator; i.e., all operators act trivially: $W R \propto R$ for all $W$ and $R \in \DMxV{n}(\rho)$.
\item [(ii)]
    $\rho$ is $n$-replica \textbf{classical topologically ordered} (CTO) if (a) there exists at least one non-identity operator, and that (b) all non-identity operators are classical: $W R = R W$ for all $W$ and $R \in \DMxV{n}(\rho)$.
\item [(iii)]
    $\rho$ is $n$-replica \textbf{quantum topologically ordered} (QTO) if there exists a quantum non-identity operator. 
\end{itemize}
Physically, the non-identity operators acting on the subspace describes Wilson loop operators (or combinations of such) of the state.

The intuition behind these definitions is captured by the following statement.

\begin{proposition} \label{prop:noncommuting_W}
Suppose there exists a pair of non-identity operators $W_1$ and $W_2$ that do not commute, then both operators $W_1$ and $W_2$ are quantum and the phase is QTO.
\end{proposition}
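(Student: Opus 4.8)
The plan is to collapse the whole statement onto a single implication: I will show that the failure of $W_1$ and $W_2$ to commute forces each of them to be quantum. Once either one is established to be quantum, the phase is immediately QTO by classification~(iii), since the existence of a quantum non-identity operator is exactly the defining condition. So the only substantive work is to rule out the possibility that $W_1$ or $W_2$ is classical.

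I would argue by contraposition. Suppose, toward a contradiction, that $W_1$ is classical, meaning $W_1 R = R W_1$ for every $R \in \DMxV{n}(\rho)$. The structural fact I would lean on is the closure property~(a) of $W_2$: for any $R \in \DMxV{n}(\rho)$, the element $W_2 R$ again lies in $\DMxV{n}(\rho)$. This is precisely what allows me to apply the assumed classicality of $W_1$ not only to $R$ but also to $W_2 R$, giving $W_1 (W_2 R) = (W_2 R) W_1$. Chaining this with associativity and with $R W_1 = W_1 R$ yields $W_1 W_2 R = W_2 (R W_1) = W_2 W_1 R$ for all $R \in \DMxV{n}(\rho)$, i.e.\ $W_1$ and $W_2$ commute as operators on $\DMxV{n}(\rho)$. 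This contradicts the hypothesis, so $W_1$ cannot be classical and is therefore quantum. Exchanging the roles of $W_1$ and $W_2$ (now invoking closure~(a) of $W_1$ to guarantee $W_1 R \in \DMxV{n}(\rho)$) shows by the identical computation that $W_2$ is quantum as well.

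The place where I expect the only genuine care to be required is the precise meaning of ``commute'' for these operators. Because the $W_i$ are defined solely through their left- and right-action on $\DMxV{n}(\rho)$, I would state explicitly that equality and commutation are to be read as equality of these actions on the vector space, rather than as abstract matrix identities; with that convention the derived relation $W_1 W_2 R = W_2 W_1 R$ for all $R$ genuinely contradicts non-commutativity. The load-bearing hypothesis is the closure property~(a): without knowing that $W_2 R$ remains inside $\DMxV{n}(\rho)$, the classicality of $W_1$ could not legitimately be applied to it and the chain of equalities would collapse. Everything else is routine algebra once that point is secured.
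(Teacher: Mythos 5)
Your proof is correct and takes essentially the same approach as the paper's: both argue by contradiction from the assumption that $W_1$ is classical, both use closure property~(a) to apply that classicality to $W_2 R$ as well as to $R$ itself, and both conclude $W_1 W_2 R = W_2 W_1 R$ for all $R \in \DMxV{n}(\rho)$, contradicting non-commutation (then repeat symmetrically for $W_2$, and invoke classification~(iii) for QTO). The only cosmetic difference is that the paper packages the contradiction through $W_3 = W_1 W_2 - W_2 W_1$ and a single $R_\ast$ with $W_3 R_\ast \neq 0$, whereas you state directly---and helpfully make explicit---that commutation must be read as equality of the left/right actions on $\DMxV{n}(\rho)$.
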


\begin{proof}
We prove this claim by contradiction.
Let $W_1$ and $W_2$ be operators as stated in the claim,
but suppose $W_1$ is classical.

Let $W_3 = W_1 W_2 - W_2 W_1$. Since $W_3 \neq 0$,  we can always find a $R_\ast \in \DMxV{n}(\rho)$ such that $W_3 R_\ast \neq 0$.
Then,
\begin{align} \begin{aligned}
&	W_2 R_\ast W_1 = W_1 W_2 R_\ast = (W_3 + W_2 W_1) R_\ast \\&\quad = W_3 R_\ast + W_2 W_1 R_\ast = W_3 R_\ast + W_2 R_\ast W_1 \,,
\end{aligned} \end{align}
which leads to $W_3 R_\ast = 0$, a contradiction.  Hence $W_1$---and by the same argument, $W_2$---is quantum.
Therefore the phase is QTO.
\end{proof}

In a pure state topological phase, states are transformed between one another via Wilson loop operators.
Since intersecting Wilson loop operators generally do not commute,%
	\footnote{Let $W_\Th(a)$, $W_\Tv(b)$ be Wilson loop operators along intersecting simple loops $\Th$, $\Tv$.  The operators commute only if $S_{a,b} = S_{0,a} S_{0,b} / S_{0,0}$, where $0$ is the vacuum anyon and $S$ is the $S$-matrix of the TQFT.  In a TQFT, for every anyon $a \neq 0$ there exists anyon $b$ such that $S_{a,b} \neq S_{0,a} S_{0,b} / S_{0,0}$ and hence all (non-identity) Wilson loops are quantum.}
	the phase is rightfully classified as quantum.
In fact, the non-commuting nature of logical operators is critical to manipulation of quantum information and operation of a quantum computer!

In a CTO phase, $\DMxV{n}(\rho)$ are spanned by states $\rho_1, \rho_2, \dots$ (that are locally indistinguishable from one another) which are orthogonal operators: $\rho_i \rho_j = \delta_{i,j} \rho_i$.
The orthogonal states can be distinguished by Wilson loop operators, taking the form $W \rho_i = \alpha_i \rho_i = \rho_i W$,%
	\footnote{To explain why $W\rho_i$ must be a multiple of $\rho_i$, consider hypothetically $W \rho_1 = \alpha_1 \rho_1 + \alpha_2 \rho_2$.  Then $(W-\alpha_1) \rho_1 \rho_2 = \alpha_2 \rho_2^2 \neq 0$ despite $\rho_1 \rho_2 = 0$, leading to a contradiction.}
but cannot be transformed into one another via conjugation $W \rho_i W^{-1} = \rho_i$.
Hence all non-identity operators in a CTO phase are classical.

A phase may have both classical and quantum Wilson loop operators.
For example, the product of a CTO phase with a QTO phase, or an error channel acting on a non-Abelian phase.
In such cases, the phase is still classified as being QTO.

\section{Classical Topological Order example: incoherent gauge model}
\label{sec:GaugeModel}

In this section, we explicitly construct examples of classical topologically order mixed state from a discrete gauge theory.
In a discrete gauge theory, the wavefunction of a topological pure state is determined uniquely by local constraints of the gauge configuration.

The discrete gauge theory is dual to the loop gas construction, which we explain.
In the loop gas construction, the wavefunction is a coherent sum of loop configurations.
Loop configurations that differ by local deformations---such as the creation and annihilation of loops, and loop surgery---are given the same amplitude.

We begin by a brief review of the toric code, realizing $\Z_2$ gauge theory.
For the toric code on a square lattice, there is a qubit degree of freedom on each edge of the square lattice.
It is realized as a loop gas on the dual lattice (also a square lattice), with spin up/down corresponding to the presence/absence of loop crossing the edge.
The toric code wavefunctions are all $+1$ eigenstate of the operators
\begin{align} \label{eq:toric_code_stab}
A_+ &=
	\begin{tikzpicture}
	[baseline={([yshift=-.5ex]current bounding box.center)},rnode1/.style={circle,inner sep=1pt, draw=black!100,fill = white!100,line width=1.5pt,minimum size = 8pt},
	rnode2/.style={circle,inner sep=1pt, draw=black!100,fill = black!100,line width=1.5pt,minimum size = 8pt}],
	\node[] (1) at (-12pt,0pt) {$\scriptstyle X$};
	\node[] (2) at (12pt,0pt) {$\scriptstyle X$};
	\node[] (3) at (0pt,-12pt) {$\scriptstyle X$};
	\node[] (4) at (0pt, 12pt) {$\scriptstyle X$};
	\draw[color=black!100,line width=.6pt] (1) -- (2);
	\draw[color=black!100,line width=.6pt] (3) -- (4);
	\end{tikzpicture},
&
B_\square &= 
	\begin{tikzpicture}
	[baseline={([yshift=-.5ex]current bounding box.center)},rnode1/.style={circle,inner sep=1pt, draw=black!100,fill = white!100,line width=1.5pt,minimum size = 8pt},
	rnode2/.style={circle,inner sep=1pt, draw=black!100,fill = black!100,line width=1.5pt,minimum size = 8pt}],
	\node[] (1) at (-12pt,0pt) {$\scriptstyle Z$};
	\node[] (2) at (12pt,0pt) {$\scriptstyle Z$};
	\node[] (3) at (0pt,-12pt) {$\scriptstyle Z$};
	\node[] (4) at (0pt, 12pt) {$\scriptstyle Z$};
	\draw[color=black!100,line width=.6pt] (1) -- (-12pt,12.25pt);
	\draw[color=black!100,line width=.6pt] (4) -- (-12.25pt,12pt);
	\draw[color=black!100,line width=.6pt] (1) -- (-12pt,-12.25pt);
	\draw[color=black!100,line width=.6pt] (3) -- (-12.25pt,-12pt);
	\draw[color=black!100,line width=.6pt] (2) -- (12pt,-12.25pt);
	\draw[color=black!100,line width=.6pt] (3) -- (12.25pt,-12pt);
	\draw[color=black!100,line width=.6pt] (2) -- (12pt,12.25pt);
	\draw[color=black!100,line width=.6pt] (4) -- (12.25pt,12pt);
	\end{tikzpicture},
\end{align}
for every vertex $+$ and plaquette $\square$.

On the torus, the Hilbert space of wavefunctions is four-fold degenerate.
Let $\Th$ and $\Tv$ be two simple loops on the torus that wraps around horizontally and vertically respectively.
Denote $L_{n_\Th,n_\Tv}$ as the set of loop configurations that intersect
	the $\Th$ loop an $\Bigl\{ \begin{array}{@{}ll} \text{even} & n_\Th=0 \\[-0.5ex] \text{odd} & n_\Th=1 \end{array}$ number of times,
	the $\Tv$ loop an $\Bigl\{ \begin{array}{@{}ll} \text{even} & n_\Tv=0 \\[-0.5ex] \text{odd} & n_\Tv=1 \end{array}$ number of times.
The Hilbert space is spanned by the four wavefunctions
\begin{align} \label{eq:TC_wavefunctions} \begin{aligned}
    \ket{\Psi_{00}} \propto \sum_{g\in L_{00}} \ket{g} = \left|  \begin{tikzpicture}
        [baseline={([yshift=-.5ex]current bounding box.center)}],
        \draw[color=black!100,line width=0.5 pt] (-9pt,3pt)-- (9pt,3pt);
         \draw[color=black!100,line width=0.5 pt] (-9pt,-3pt) -- (9pt,-3pt);
         \draw[color=black!100,line width=0.5 pt] (3pt,-9pt)-- (3pt,9pt);
         \draw[color=black!100,line width=0.5 pt] (-3pt,-9pt)-- (-3pt,9pt);
    \end{tikzpicture} \right \rangle+\left|  \begin{tikzpicture}
        [baseline={([yshift=-.5ex]current bounding box.center)}],
        \draw[color=black!100,line width=0.5 pt] (-9pt,3pt)-- (9pt,3pt);
         \draw[color=black!100,line width=0.5 pt] (-9pt,-3pt) -- (9pt,-3pt);
         \draw[color=black!100,line width=0.5 pt] (3pt,-9pt)-- (3pt,9pt);
         \draw[color=black!100,line width=0.5 pt] (-3pt,-9pt)-- (-3pt,9pt);
         \draw[color=red!100,line width=.7pt] plot [smooth cycle, tension=.8] coordinates{(-3pt,0pt) (0pt,3pt)(-3pt,6pt)  (-6pt,3pt)};
    \end{tikzpicture} \right \rangle+\left| 
    \begin{tikzpicture}
        [baseline={([yshift=-.5ex]current bounding box.center)}],
        \draw[color=black!100,line width=0.5 pt] (-9pt,3pt)-- (9pt,3pt);
         \draw[color=black!100,line width=0.5 pt] (-9pt,-3pt) -- (9pt,-3pt);
         \draw[color=black!100,line width=0.5 pt] (3pt,-9pt)-- (3pt,9pt);
         \draw[color=black!100,line width=0.5 pt] (-3pt,-9pt)-- (-3pt,9pt);
         \draw[color=red!100,line width=.7pt] plot [smooth cycle, tension=.8] coordinates{(-3pt,0pt) (3pt,0pt) (6pt,3pt)(3pt,6pt)  (-3pt,6pt)  (-6pt,3pt)};
    \end{tikzpicture} \right \rangle + \dots,
\\
    \ket{\Psi_{01}} \propto \sum_{g\in L_{01}} \ket{g} = \left|  \begin{tikzpicture}
        [baseline={([yshift=-.5ex]current bounding box.center)}],
        \draw[color=black!100,line width=0.5 pt] (-9pt,3pt)-- (9pt,3pt);
         \draw[color=black!100,line width=0.5 pt] (-9pt,-3pt) -- (9pt,-3pt);
         \draw[color=black!100,line width=0.5 pt] (3pt,-9pt)-- (3pt,9pt);
         \draw[color=black!100,line width=0.5 pt] (-3pt,-9pt)-- (-3pt,9pt);
         \draw[color=red!100,line width=0.7 pt] (-9pt,-6pt) -- (9pt,-6pt);
    \end{tikzpicture} \right \rangle+\left|  \begin{tikzpicture}
        [baseline={([yshift=-.5ex]current bounding box.center)}],
        \draw[color=black!100,line width=0.5 pt] (-9pt,3pt)-- (9pt,3pt);
         \draw[color=black!100,line width=0.5 pt] (-9pt,-3pt) -- (9pt,-3pt);
         \draw[color=black!100,line width=0.5 pt] (3pt,-9pt)-- (3pt,9pt);
         \draw[color=black!100,line width=0.5 pt] (-3pt,-9pt)-- (-3pt,9pt);
         \draw[color=red!100,line width=.7pt] plot [smooth cycle, tension=.8] coordinates{(-3pt,0pt) (0pt,3pt)(-3pt,6pt)  (-6pt,3pt)};
         \draw[color=red!100,line width=0.7 pt] (-9pt,-6pt) -- (9pt,-6pt);
    \end{tikzpicture} \right \rangle+\left| 
    \begin{tikzpicture}
        [baseline={([yshift=-.5ex]current bounding box.center)}],
        \draw[color=black!100,line width=0.5 pt] (-9pt,3pt)-- (9pt,3pt);
         \draw[color=black!100,line width=0.5 pt] (-9pt,-3pt) -- (9pt,-3pt);
         \draw[color=black!100,line width=0.5 pt] (3pt,-9pt)-- (3pt,9pt);
         \draw[color=black!100,line width=0.5 pt] (-3pt,-9pt)-- (-3pt,9pt);
         \draw[color=red!100,line width=.7pt] plot [smooth cycle, tension=.8] coordinates{(-3pt,0pt) (3pt,0pt) (6pt,3pt)(3pt,6pt)  (-3pt,6pt)  (-6pt,3pt)};
         \draw[color=red!100,line width=0.7 pt] (-9pt,-6pt) -- (9pt,-6pt);
    \end{tikzpicture} \right \rangle + \dots,
\\
    \ket{\Psi_{10}} \propto \sum_{g\in L_{10}} \ket{g} = \left|  \begin{tikzpicture}
        [baseline={([yshift=-.5ex]current bounding box.center)}],
        \draw[color=black!100,line width=0.5 pt] (-9pt,3pt)-- (9pt,3pt);
         \draw[color=black!100,line width=0.5 pt] (-9pt,-3pt) -- (9pt,-3pt);
         \draw[color=black!100,line width=0.5 pt] (3pt,-9pt)-- (3pt,9pt);
         \draw[color=black!100,line width=0.5 pt] (-3pt,-9pt)-- (-3pt,9pt);
         \draw[color=red!100,line width=0.7 pt] (6pt,-9pt)-- (6pt,9pt);
    \end{tikzpicture} \right \rangle+\left|  \begin{tikzpicture}
        [baseline={([yshift=-.5ex]current bounding box.center)}],
        \draw[color=black!100,line width=0.5 pt] (-9pt,3pt)-- (9pt,3pt);
         \draw[color=black!100,line width=0.5 pt] (-9pt,-3pt) -- (9pt,-3pt);
         \draw[color=black!100,line width=0.5 pt] (3pt,-9pt)-- (3pt,9pt);
         \draw[color=black!100,line width=0.5 pt] (-3pt,-9pt)-- (-3pt,9pt);
         \draw[color=red!100,line width=.7pt] plot [smooth cycle, tension=.8] coordinates{(-3pt,0pt) (0pt,3pt)(-3pt,6pt)  (-6pt,3pt)};
         \draw[color=red!100,line width=0.7 pt] (6pt,-9pt)-- (6pt,9pt);
    \end{tikzpicture} \right \rangle+\left| 
    \begin{tikzpicture}
        [baseline={([yshift=-.5ex]current bounding box.center)}],
        \draw[color=black!100,line width=0.5 pt] (-9pt,3pt)-- (9pt,3pt);
         \draw[color=black!100,line width=0.5 pt] (-9pt,-3pt) -- (9pt,-3pt);
         \draw[color=black!100,line width=0.5 pt] (3pt,-9pt)-- (3pt,9pt);
         \draw[color=black!100,line width=0.5 pt] (-3pt,-9pt)-- (-3pt,9pt);
         \draw[color=red!100,line width=.7pt] plot [smooth, tension=.8] coordinates{(6pt,9pt) (6pt,5pt)(3pt,6pt)  (-3pt,6pt)  (-6pt,3pt) (-3pt,0pt) (3pt,0pt) (6pt,1pt)(6pt,-9pt)};
    \end{tikzpicture} \right \rangle + \dots,
\\
    \ket{\Psi_{11}} \propto \sum_{g\in L_{11}} \ket{g} = \left|  \begin{tikzpicture}
        [baseline={([yshift=-.5ex]current bounding box.center)}],
        \draw[color=black!100,line width=0.5 pt] (-9pt,3pt)-- (9pt,3pt);
         \draw[color=black!100,line width=0.5 pt] (-9pt,-3pt) -- (9pt,-3pt);
         \draw[color=black!100,line width=0.5 pt] (3pt,-9pt)-- (3pt,9pt);
         \draw[color=black!100,line width=0.5 pt] (-3pt,-9pt)-- (-3pt,9pt);
         \draw[color=red!100,line width=0.7 pt] (6pt,-9pt)-- (6pt,9pt);
         \draw[color=red!100,line width=0.7 pt] (-9pt,-6pt) -- (9pt,-6pt);
    \end{tikzpicture} \right \rangle+\left|  \begin{tikzpicture}
        [baseline={([yshift=-.5ex]current bounding box.center)}],
        \draw[color=black!100,line width=0.5 pt] (-9pt,3pt)-- (9pt,3pt);
         \draw[color=black!100,line width=0.5 pt] (-9pt,-3pt) -- (9pt,-3pt);
         \draw[color=black!100,line width=0.5 pt] (3pt,-9pt)-- (3pt,9pt);
         \draw[color=black!100,line width=0.5 pt] (-3pt,-9pt)-- (-3pt,9pt);
         \draw[color=red!100,line width=.7pt] plot [smooth cycle, tension=.8] coordinates{(-3pt,0pt) (0pt,3pt)(-3pt,6pt)  (-6pt,3pt)};
         \draw[color=red!100,line width=0.7 pt] (6pt,-9pt)-- (6pt,9pt);
         \draw[color=red!100,line width=0.7 pt] (-9pt,-6pt) -- (9pt,-6pt);
    \end{tikzpicture} \right \rangle+\left| 
    \begin{tikzpicture}
        [baseline={([yshift=-.5ex]current bounding box.center)}],
        \draw[color=black!100,line width=0.5 pt] (-9pt,3pt)-- (9pt,3pt);
         \draw[color=black!100,line width=0.5 pt] (-9pt,-3pt) -- (9pt,-3pt);
         \draw[color=black!100,line width=0.5 pt] (3pt,-9pt)-- (3pt,9pt);
         \draw[color=black!100,line width=0.5 pt] (-3pt,-9pt)-- (-3pt,9pt);
         \draw[color=red!100,line width=.7pt] plot [smooth, tension=.8] coordinates{(6pt,9pt) (6pt,5pt)(3pt,6pt)  (-3pt,6pt)  (-6pt,3pt) (-3pt,0pt) (3pt,0pt) (6pt,1pt)(6pt,-9pt)};
         \draw[color=red!100,line width=0.7 pt] (-9pt,-6pt) -- (9pt,-6pt);
    \end{tikzpicture} \right \rangle + \dots.
\end{aligned} \end{align}
Here the black grid denote the square lattice (with qubits along each edge), the red lines/loops denote the loop gas on the dual lattice signifying the spin state of each bond.

\begin{figure}
	\centering
	\includegraphics[width = \linewidth]{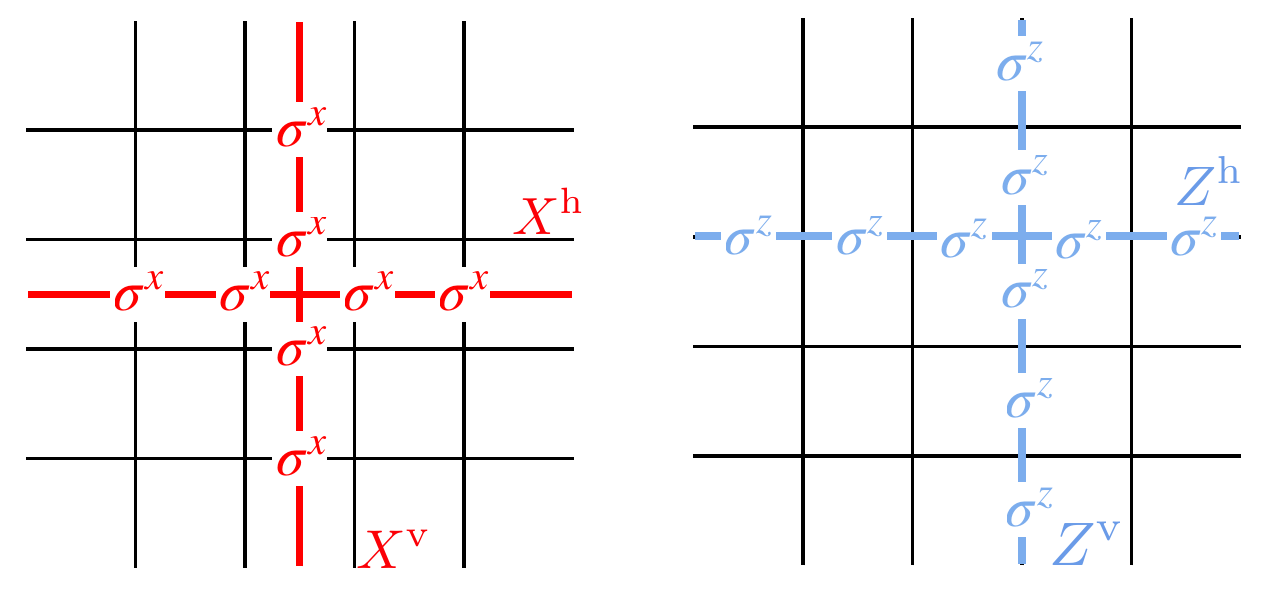}
	\caption{The basis of the Wilson loop operators of the TC model.
$X$-type strings transports $m$ (magnetic) anyons, $Z$-type strings transports $e$ (electric) anyons.}
	\label{fig:TC_Wilson}
\end{figure}
The Wilson loop algebra is generated by $Z$-strings and $X$-strings, shown in Fig.~\ref{fig:TC_Wilson}.
The $Z^{\Th/\Tv}$ operator is a product of $\sigma^z$ along a loop in the (black) square lattice, wrapping around the torus along $\Th/\Tv$ direction.
The $X^{\Th/\Tv}$ operator is a product of $\sigma^x$ along a loop in the \emph{dual} lattice, again wrapping around the torus along $\Th/\Tv$ direction.
The ground states basis $\ket{\Psi_{st}}$ are all eigenstates of the $Z^\Th$ and $Z^\Tv$ with eigenvalues $(-1)^s$ and $(-1)^t$ respectively, while the $X^\bullet$ operators permutes between the different basis states.

\subsection{Incoherent gauge model for \texorpdfstring{$\Z_2$}{Z2}}
To construct a CTO state, we consider the incoherent sum of the loop configuration:
\begin{align} \label{eq:Z2_incoherent_rho} \begin{aligned}
	\rho_{00} &\propto \sum_{g\in L_{00}} \ketbra{g}{g},
&	\rho_{01} &\propto \sum_{g\in L_{01}} \ketbra{g}{g},
\\	\rho_{10} &\propto \sum_{g\in L_{10}} \ketbra{g}{g},
&	\rho_{11} &\propto \sum_{g\in L_{11}} \ketbra{g}{g}.
\end{aligned} \end{align}
(Each state is normalized to have unit trace.)
We want to emphasize that $\rho_{00} \neq \ketbra{\Psi_{00}}{\Psi_{00}}$.

The states are all stabilized by plaquette operators $B_\square$ from Eq.~\eqref{eq:toric_code_stab},
	but not the vertex operators.
In fact, the states are locally indistinguishable from $\rho_e = \lim_{\beta\to+\infty} \prod_\square \exp\bigl[-\beta B_\square\bigr]$;
	the ensemble where vertex stabilizers $A_+$ are at infinite temperature while the plaquette stabilizers $B_\square$ are at zero temperature.

States~\eqref{eq:Z2_incoherent_rho} are also locally indistinguishable from their linear combinations
\begin{align} \label{eq:Z2_incoherent_rho_sum}
	\rho = \sum_{l,m \in \{0,1\}} q_{l,m} \rho_{lm}
\end{align}
subject to normalization condition $\tr \rho = 1$.
Unlike in the pure state case, states $\rho$ must be positive semidefinite and so the coefficients $q_{l,m}$ must be non-negative.%
	\footnote{The states of~\eqref{eq:Z2_incoherent_rho} are (proportional to) orthogonal projectors.}
We assert that states of the form~\eqref{eq:Z2_incoherent_rho_sum} constitute all possible states that are $n$-locally indistinguishable from $\rho_{lm}$'s,
	that is,
\begin{align}
	\DMxS{n}(\rho_{00}) = \Bigl\{ \text{Eq.~\eqref{eq:Z2_incoherent_rho_sum}} \,\Big|\, 0 \leq q_{l,m}, \,{\textstyle\sum}_{l,m} q_{l,m} = 1 \Bigr\} .
\end{align}
$\DMxS{n}$ is a convex set diffeomorphic to the 3-simplex, with the $\rho_{lm}$'s as its extreme points.

We now we examine the Wilson loop operators of this model.
In the pure state toric code, both $Z$-strings and $X$-strings (see Fig.~\ref{fig:TC_Wilson}) act as non-identity operators on the space of states.
For the mixed states in this section, we find that these operators behave differently than in the pure case.

The vector space of mixed states $\DMxV{n}$ is spanned by $\rho_{lm}$:
\begin{align}
	\DMxV{n} &= \bigl\{ q_{00}\rho_{00} + q_{01}\rho_{01} + q_{10}\rho_{10} + q_{11}\rho_{11} \,\big|\, q_{lm} \in \mathbb{C} \bigr\} .
\end{align}
Denote $\mathcal{H}(L_{lm}) \bigl\{ \ket{g} \mathrel{|} g \in L_{lm} \bigr\}$ as the Hilbert space spanned by loop configurations in sector $L_{lm}$.
Observe that $\rho_{lm} \in \mathcal{H}(L_{lm}) \times \mathcal{H}(L_{lm})^\ast$; the bra and ket of each $\rho$ must come from the same sector.
The $X$-strings permutes states between different sectors, for example, 
\begin{align}
	X^\Th \rho_{00} \in \mathcal{H}(L_{01}) \times \mathcal{H}(L_{00})^\ast .
\end{align}
So we see that $X^\Th \rho_{00}$ cannot be written as a sum of $\rho_{lm}$'s; $X^\Th \rho_{00} \notin \DMxV{n}$.
Similar arguments show that $X^\Tv$ and $X^{\Th+\Tv} = X^\Th X^\Tv$ are also not operators on the space of density matrices.%
	\footnote{One may notice that $\DMxV{n}(\rho)$ is closed when conjugated by $X^\Ts$, e.g., $X^\Th \rho_{l,m} X^\Th = \rho_{l,1-m}$.  We can think of $\{X^\Ts\}$ as elements of the automorphism group of $\DMxV{n}(\rho)$; transforming between classical states.}

On the other hand the $Z$-strings act on the states
\begin{align} \begin{aligned}
	Z^\Th \rho_{lm} &= (-1)^l\rho_{lm} \,,
\\	Z^\Tv \rho_{lm} &= (-1)^m\rho_{lm} \,, 
\\	Z^{\Th+\Tv} \rho_{lm} &= (-1)^{l+m}\rho_{lm} \,.
\end{aligned} \end{align} 
($Z^{\Th+\Tv}$ denote $Z^\Th Z^\Tv$.)
We can see that $\DMxV{n}$ is closed under left (and right) multiplication of $Z$-string operators.
Furthermore, it is easy to check that $Z^\Ts \rho_{lm} = \rho_{lm} Z^\Ts$, and therefore, this is a CTO phase.
(This state is in fact an example of Case~2 discussed in future sections.)

\subsection{Non-Abelian generalization}

For the non-Abelian generalization, it is more convenient to describe the state via discrete gauge theory.
The model is defined for a finite group $\mathcal{G}$ on a ``planar'' graph (e.g. a triangulation) on an orientable 2D surface.
Our set up parallels Kitaev's non-Abelian toric code~\cite{KITAEV20032}, where states are given by sums of gauge configurations.
Here instead, we will be considering incoherent sums to define the mixed state.

The technical description is as follows.
On the planar graph, each edge has $|\mathcal{G}|$ degrees of freedom, spanned by $\ket{g}$ for $g \in \mathcal{G}$.
Each edge is labeled with a direction towards one of its endpoints.
(Reversing the direction of an edge is equivalent to a change of basis $\ket{g} \leftrightarrow \ket{g^{-1}}$.)

A path $\gPath$ consists of sequence of adjacent vertices, traversing a sequence of edges $(b_1,b_2,\dots,b_{|\gPath|})$.
(Denote $|\gPath|$ as the number of edges in $\gPath$.)
For each edge along $\gPath$, let $\sigma_\gPath(e) \in \{\pm1\}$ depending on its direction relative to the path:
	$\sigma_\gPath(e) = +1$ if the path and edge direction are the same, and $\sigma_\gPath(e) = -1$ if the two directions are opposing.
Denote $\bar{\gPath}$ as the reverse path of $\gPath$, consisting of edges $(b_{|\gPath|}, b_{|\gPath|-1}, \dots, b_1)$.
Evidently, $\sigma_{\bar{\gPath}}(e) = -\sigma_{\gPath}(e)$.
A path whose starting and ending point coincide is call a loop.
We say a loop is contractible if it can be deformed into a null loop (single point) when embed in the torus.

\begin{figure}
	\centering
	\includegraphics[width = 0.9\linewidth]{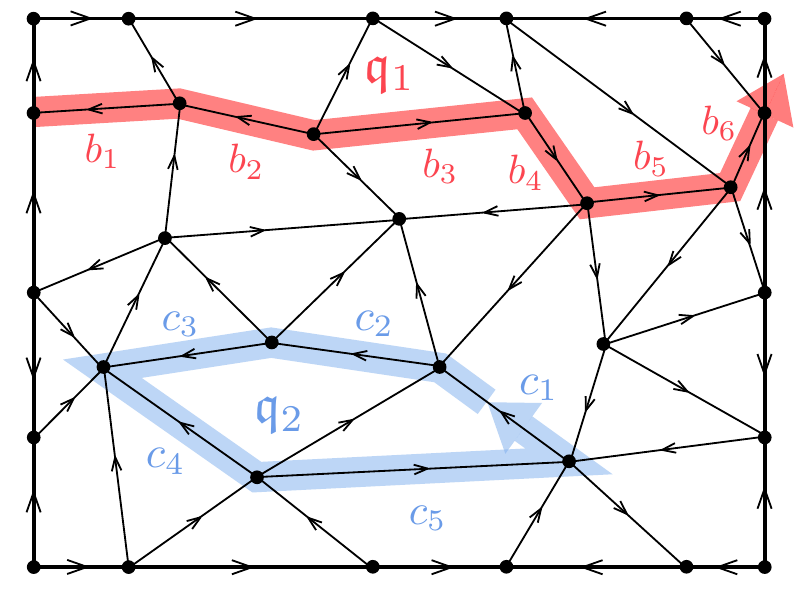}
	\caption{%
An example of a ``planar'' graph on the torus with directed edges.
Illustrated here are $\gPath_1 = (b_1,b_2,b_3,b_4,b_5,b_6)$ and $\gPath_2 = (c_1,c_2,c_3,c_4,c_5)$ as examples of loops, the latter of which is contractible.
A gauge configuration $\alpha$ is an assignment of a group element on each edge of the graph,
with the requirement that $\gPath(\alpha) = 1$ (product of group elements along the loop) for every contractible loop $\gPath$.
For example, $\gPath_2(\alpha) = \alpha(c_5)\alpha(c_4)^{-1}\alpha(c_3)\alpha(c_2)\alpha(c_1) = 1$.
	} \label{fig:gauge_model_graph}
\end{figure}

Let $\alpha$ be a map from the set of edges to $\mathcal{G}$ (i.e., an assignment of group element on every edge),
	and $\ket{\alpha}$ be the quantum product state where edge $e$ takes on state $\ket{\alpha(e)}$.
Given such map $\alpha$ and a path $\gPath$, we can define $\gPath(\alpha) \in \mathcal{G}$ as the product
\begin{align}
	\gPath(\alpha) \defeq \alpha(b_{|\gPath|})^{\sigma_\gPath(b_{|\gPath|})} \cdots \alpha(b_2)^{\sigma_\gPath(b_2)} \alpha(b_1)^{\sigma_\gPath(b_1)} \,.
\end{align}
That is, $\gPath(\alpha)$ is the product of group elements along path $\gPath$, taking into account of the direction of each edge relative to the path.
We have the relation $\bar{\gPath}(\alpha) = \big(\gPath(\alpha)\big)^{-1}$.

We call $\alpha$ a \textbf{gauge configuration} if for every contractible loop $\gPath$, $\gPath(\alpha) = 1$ is identity.
(See Fig.~\ref{fig:gauge_model_graph}.)

Let $\Gamma$ be the set of all gauge configurations.
Then the sum $\sum_\alpha \ket{\alpha}$ yields the $\mathcal{G}$-toric code~\cite{KITAEV20032}, a non-Abelian (pure) state,
whose topological quantum field theory (TQFT) is given by $\mathcal{Z}(\mathcal{G})$, the quantum double of $\mathcal{G}$.

Our mixed state is simply an incoherent sum of gauge configurations
\begin{align}
	\rho = \frac{1}{|\Gamma|} \sum_{\alpha \in \Gamma} \ketbra{\alpha}{\alpha} .
\end{align}
The mixed state can be constructed by dephasing the $\mathcal{G}$-toric code in the ``computational'' basis $\{\ket{g}|g\in\mathcal{G}\}$.

We now construct a basis of states on the torus.
Pick an origin, and let $\Th$ and $\Tv$ to be two loops that begin and end at the origin, forming a basis of loops on the torus.
For $g_1, g_2 \in \mathcal{G}$ a pair of commuting group elements,
let $\Gamma_0[g,h] \subseteq \Gamma$ be the set of gauge configurations such that its holonomy around the two fundamental cycles $\Th$ and $\Tv$ are $g_1$ and $g_2$ respectively.%
	\footnote{If $g_1, g_2$ do not commute, then $\Gamma_0[g_1,g_2]$ is empty; it is impossible for the gauge holonomy around two cycles of the torus to not commute.
	This is because $\pi_1(T^2) = \Z^2$ is Abelian; the loops $\Tl_1\Tl_2$ is homotopic to $\Tl_2\Tl_1$ for any pair of loops (with same base point).}
Formally,
\begin{subequations} \begin{align}
	\Gamma_0[g_1,g_2] &= \bigl\{ \alpha \in \Gamma \,\big|\, \Th(\alpha) = g_1 ,\, \Tv(\alpha) = g_2 \bigr\} ,
\\	\Gamma[g_1,g_2] &= \bigcup_{h \in \mathcal{G}} \Gamma_0\bigl[ hg_1h^{-1}, hg_2h^{-1} \bigr] .
\end{align} \end{subequations}
Finally, let
\begin{align}
	\rho[g_1,g_2] \propto \sum_{\alpha \in \Gamma[g_1,g_2]} \ketbra{\alpha}{\alpha}.
\end{align}

From the definition of $\Gamma[g_1,g_2]$, we see that $\rho[g_1,g_2] = \rho[hg_1h^{-1},hg_2h^{-1}]$ for any $h \in \mathcal{G}$.
Hence the distinct $\rho[g_1,g_2]$ is the number of commuting pairs modulo conjugacy equivalence.
Unsurprisingly, this is in fact the same as $|\mathcal{Z}(\mathcal{G})|$, the number of anyons in $\mathcal{Z}(\mathcal{G})$~\cite{beigi2011quantumDoubleCondensation}, which is the Hilbert space dimension of the $\mathcal{G}$-toric code~\cite{KITAEV20032}.

Each of the $\rho[g_1,g_2]$ are locally indistinguishable from $\rho$, and hence also locally indistinguishable with each other.
Since the set of distinct $\rho[g_1,g_2]$ are proportional to orthogonal projectors, the space of density matrices $\DMxS{n}(\rho)$ are simply nonnegative sums of $\rho[g_1,g_2]$.
Thus, $\DMxS{n}(\rho)$ is a simplex $\Delta^{|\mathcal{Z}(\mathcal{G})|-1}$, its extreme points are the set of $\rho[g_1,g_2]$.
From our geometric definition, this is a classically ordered phase.

From the Wilson loop point of view, the extreme points $\rho[g_1,g_2]$ can be distinguished by operators that ``measures'' $\Th(\alpha)$ and $\Tv(\alpha)$.
For instance, for a simple noncontractible loop $\Ts$, the following is a non-identity operator on the mixed states,
\begin{align}
	Y_\Ts(g) \ket{\alpha} = \ket{\alpha} \begin{cases} 1 & \Ts(\alpha) \equiv g, \\ 0 & \text{otherwise}. \end{cases}
\end{align}
Here $g \equiv h$ means $g$ and $h$ are in the same conjugacy class.
Clearly, $Y_\Ts(g)$ commutes with each $\rho[g_1,g_2]$.

\section{Phases induced from quantum channels \& quantum error correction}
\label{sec:boundarySPT}

\begin{figure*}
    \centering
    \includegraphics[width = 0.98\textwidth]{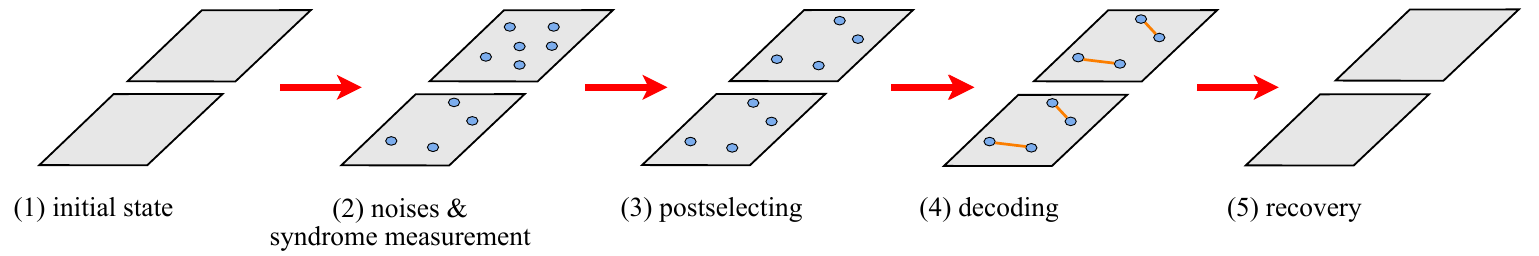}
    \caption{A schematic of the $n$-replica quantum error correction, illustrated here for $n=2$.
    Step (1): Encode quantum information in two copies of TC ground state.
    Step (2): Introduce some quantum channel (e.g. decoherence) independently on each layer, then measure the stabilizers to determine the location of the errors.
    Step (3): Postselect states that have the same error configuration for both copies.
    Step (4): Decoding the error syndrome by matching up anyons.
    Step (5): Apply recovery map reverting back to a pure TC ground state.
    }
    \label{fig:QEC}
\end{figure*}

In this section we study mixed states in the context of topological quantum error correction.
In a topological QEC code, the code space are the ground states of some topologically-ordered Hamiltonian comprised of local terms.
When the topological code begin to decoheres, error syndromes appear as violations of local Hamiltonian terms, indicative of pairs of anyons being created and moved around.
In principle, if the anyon pair density remains low, then the original ground state can be recovered and the encoded quantum information are preserved.
In the topological mixed state formalism that we have establish, we associate $\DMxS{n}$ as the logical code space.
We view the noisy system below the error threshold as remaining in the same topological phase as that of the pure topological QEC, and their code space are isomorphic.
As the system continues to decohere, the anyon density reaches a critical threshold and the quantum information is no longer recoverable.
We seek to understand what happens to the code space across this transition and its effect on the encoded quantum information.

Two factors determine the structure of a mixed state space: $n$-replica local indistinguishability and $n$-distance (defining global indistinguishability).
We expect local quantum channels to preserve local indistinguishability, i.e., locally indistinguishable states remain locally indistinguishable after decoherence.  However, as the decoherence error surpasses the threshold, the  initially distinct states may no longer be globally distinguishable;
the $n$-distance between states to reduces to zero across the transition.
When this happens, states become identified with each other resulting in a structural change of the logical code space, such that the noisy mixed-state code space is no longer isomorphic to the original code space.
The goal of this section is to develop a method utilizing tensor networks to diagnose structural changes to the code space.

For a concrete realization of these concepts, we consider the postselection-based QEC protocol on the toric code shown in Fig.~\ref{fig:QEC}.
As the $n$-Schatten norm~\eqref{eq:Schatten_norm} involves $n$ copies of a state: $\lVert \rho \rVert_n \propto (\Tr \rho^n)^{1/n}$, we will need to work with $n$ copies of the QEC code.
We start by preparing $n$ copies of the toric code pure state on the torus as our initial state.
\begin{align}
    \rho_\text{init} = \underbrace{ \rho_1 \otimes \rho_1 \otimes \cdots \otimes \rho_1 }_\text{$n$ copies} .
\end{align}
Each copy encodes two qubit of information in the basis of $\ket{\Psi_{00}}$, $\ket{\Psi_{01}}$, etc.
(Each copy may encode a different state, but for notational simplicity we assume the copies start off identical and unentangled with one another.)
Each individual copy undergo identical quantum channels $\E$, composed of local channels: $\E = \bigotimes_\mathbf{r} \epsilon_\mathbf{r}$ where $\mathbf{r}$ denote physical qubits.%
    \footnote{More generally, $\E$ can be any finite-depth local quantum circuit comprised of quantum channels with bounded range.}
(For instance, $\epsilon$ may be the on-site amplitude damping or dephasing channel.)

We then perform error syndrome measurements---which are the plaquette and star stabilizers~\eqref{eq:toric_code_stab}---on each copy $\E(\rho_1)$.
Viewed as a quantum channel $\M$, the measurement process transforms the state such that it becomes block diagonal in the syndrome measurement basis.
Let $\theta$ denote the possible results of these error syndrome measurements.
On a single copy,
\begin{align}
    \mathcal{E}(\rho_1) &\mapsto \mathcal{M}\E(\rho_1) = \sum_{\theta} \Prob(\theta) \, \rho(\theta),
\end{align}
where $\text{Pr}(\theta)$ is the probability that the error configuration $\theta$ is detected, and $\rho(\theta)$ is the resulting density matrix measurement when $\theta$ is detected.

Although the copies began as identical states, the outcome of the measurements may differ.
We postselect outcomes for which the error syndromes are identical among all $n$ copies, illustrated in Fig.~\ref{fig:QEC} step (3).
The postselected quantum state is
\begin{align}
    [\M\E(\rho_1)]^{\otimes n} \mapsto
    \mathcal{S}\bigl( [\M\E(\rho_1)]^{\otimes n} \bigr)
    = \frac{\sum_{\theta} \Prob(\theta)^n\rho (\theta) ^{\otimes n }}{\sum_{\theta} \displaystyle\Prob(\theta)^n} \,.
\end{align}
The denominator of the expression---i.e., the probability that the postselection succeeds---is in fact equals to the $n$-Schatten norm of the state $\M\E(\rho_1)$ (assuming $\rho_1$ began as a pure state).
Indeed, since the set of $\rho(\theta)$ are mutually orthogonal, the trace $\Tr\bigl[ [ \M\E(\rho_1) ]^n \bigr]$ filters out terms where the syndrome differs among copies.

Following the postselection, we attempt to deduce the most likely source of errors that lead to the observed syndrome.
For the toric code this means finding the correct pairing of anyons (syndromes that appears incorrect) to annihilate.
The recovery step $\R$ is a unitary operation which transforms each copy back into the toric code pure state.
The entire process is described by the sequence of operations
\begin{align}
    \rho_\text{init} \mapsto \rho_\text{f} = \R^{\otimes n} \mathcal{S}\bigl( (\M\E)^{\otimes n} \cdot \rho_\text{init} \bigr)
\end{align}
On each individual copy, this process can be characterized by a quantum channel $Q$, such that $Q^{\otimes n} \rho_\text{init} = \rho_\text{f}$.%
\footnote{This statement is not obvious, since postselection couples together different copies and is also not trace preserving.}
The measure of success of this protocol (as a QEC) is how close $Q$ is to the identity (or a unitary) map.

The primary clutch relating the structure of the code space to quantum channels is the use of Projected Entangled Pair States (PEPS)~\cite{verstraetePhysRevA2003Valence-bondQC,verstraete2004renormalization} and their generalization to operators.
PEPS is a type of tensor network useful both as a theoretical and numerical tool to classify and diagnose the topological order in two-dimensions~\cite{SCHUCH20102153PEPS,SCHUCH2011PEPS2,PhysRevB17Norbert,Norbert_anyonCondensation,PhysRevX21Norbert}.
One of the reason why PEPS is a powerful tool is because they encode the local indistinguishability conditions very naturally.
The PEPS formalism allows us to write down $\M\E(\rho_1)$ as a tensor network and evaluate its $n$\textsuperscript{th} power trace.

The main result of this section is follows.
We find that, starting from the $\Z_2$ toric code, the error state $\M\E(\rho_1)$ can be classified into one of five possible topological phases.
Associated with each of these phases is a quantum channel $Q$.
Among the five channels, one is a complete erasure, three takes the form of measurement projection, and one of them is a unitary map.
The latter map is indicative to a successful QEC process---corresponding to the regime where the error rate falls below the threshold.
Note that the initial state do not have to be identical among the copies or even unentangled; this protocol can encode and correct $2n$ qubits on the torus. 
(In following section we show that their corresponding phases can be identified as trivial/CTO/QTO.)
We also generalize this result to the $\Z_p$ toric code for arbitrary prime number $p$, noting the subtle (but significant) differences with when $p > 2$.

Before we delve into the explicit calculations, Sec.~\ref{sec:PEPS} briefly reviews the various tools of PEPS that we will employ, notably using the symmetry of the boundary state to classify topological orders.
Sec.~\ref{sec:QEC_quantum_channel} sets up the QEC protocol in detail, relating the quantum channel within the code space to the boundary SPT order, culminating in the key formula Eq.~\eqref{eq:xp_XXXX_Qq}.

In Sec.~\ref{sec:enum_boundary_SPT}, we prove Thm.~\ref{thm:TCdesc_classification}, that there are exactly $p+3$ possible classes of solutions that can arise from the $\Z_p$ toric code under decoherence, and compute their associated quantum channels.
This final subsection is purely mathematical, providing a complete classification of the possible descendant phases of TC, with the results summarized in Table~\ref{tab:TCcondensation}.

\subsection{PEPS and Symmetry}
\label{sec:PEPS}

\begin{figure*}
	\centering
	\includegraphics[width = 0.99\linewidth]{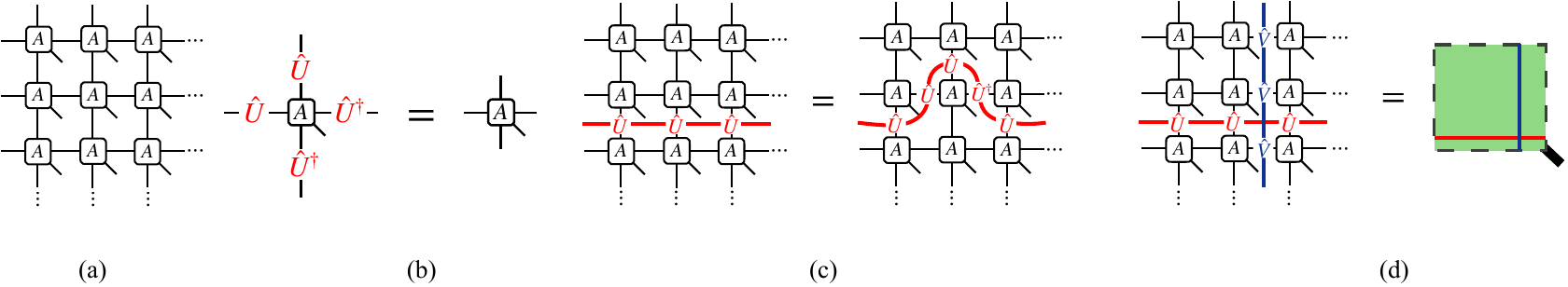}
	\caption{%
(a) An illustration of a 2D PEPS network on the square lattice, comprised of an array of 5-leg tensors.
(b) Each tensor $A$ admits an auxiliary symmetry $\mathcal{G}$; being invariant when conjugated by $\auxU$.
(c) Due to the auxiliary symmetry, a string operator $\prod \auxU$ can be topologically deformed within the PEPS network.
(d) The ground state on the torus $\ket{0}$ is constructed by tiling $A$ in a grid with periodic boundary conditions.
Strings of symmetry generators inserted along the horizontal (red) and vertical (blue) directions gives different ground states.
We use a green plane (with a thick line denoting all the physical degrees of freedom) to represent a layer of the pure state PEPS.
	} \label{fig:PEPS_sym}
\end{figure*}

In two dimensions, PEPS is a versatile class of tensor networks capable of capturing a variety of phases~\cite{verstraetePhysRevA2003Valence-bondQC,verstraete2004renormalization}.
Figure~\ref{fig:PEPS_sym}(a) shows a typical PEPS as a tensor network, where each individual tensor of the PEPS have a physical leg capturing the local degrees of freedom, as well as multiple auxiliary legs which are interconnected in a network.
In a PEPS, topological order is manifested as some form of symmetry acting on the auxiliary legs.
In particular, states realizing the topological order $\mathcal{Z}(\mathcal{G})$---where $\mathcal{Z}(\mathcal{G})$ is the quantum double of the finite group $\mathcal{G}$---can be written as a PEPS that admit local symmetries on auxiliary degrees of freedom~\cite{SCHUCH20102153PEPS,SCHUCH2011PEPS2}.
Figure~\ref{fig:PEPS_sym}(b) illustrates the action of such symmetry.
Each tensor is invariant when its auxiliary legs are conjugated by $\auxU$, where $\auxU$ is a representation of an element of the group $\mathcal{G}$.
As a result, a string these auxiliary symmetry operators can be passed freely through the network of tensors, as exemplified in Fig.~\ref{fig:PEPS_sym}(c).

Let's consider the toric code as our example where $\mathcal{G} = \Z_2$.
(The quantum double $\mathcal{Z}(\Z_2)$ is the toric code topological order.)
The toric code consists of two qubits per unit cell, and its ground states can be constructed as a PEPS via the unit cell tensor
\begin{align} \label{eq:toric_code_PEPS_A}
\begin{tikzpicture}
	[baseline={([yshift=-.5ex]current bounding box.center)},rnode1/.style={circle,inner sep=1pt, draw=orange!100,fill = white!100,line width=1.5pt,minimum size = 10pt}],
	\draw[color=black!100,line width=1.6pt] (-20pt,0pt) -- (25pt,0pt);
	\draw[color=black!100,line width=1.6pt] (0pt,-25pt) -- (0pt,15pt);
	\draw[color=black!100,line width=2pt] (0pt,0pt) -- (18pt,-18pt);
	\draw[color=black!100,fill=white!100,rounded corners=4pt,line width=1.5pt] (-10pt,-10pt) rectangle (10pt,10pt);
	\node[] at (0pt, 0pt) {$A$};
\end{tikzpicture}
\mkern10mu = \mkern10mu
\begin{tikzpicture}
[baseline={([yshift=-.5ex]current bounding box.center)},rnode1/.style={circle,inner sep=1pt, draw=orange!100,fill = white!100,line width=1.5pt,minimum size = 10pt,path picture={\draw[orange]
       (path picture bounding box.south) -- (path picture bounding box.north) 
       (path picture bounding box.west) -- (path picture bounding box.east);
      }}],
\draw[color=black!100,line width=1.6pt] (-20pt,0pt) -- (25pt,0pt);
\draw[color=black!100,line width=1.6pt] (0pt,-25pt) -- (0pt,15pt);
\draw[color=orange!100,line width=1.6pt] (12.5pt,0pt) -- (22.5pt,-10pt);
\draw[color=orange!100,line width=1.6pt] (0pt,-12.5pt) -- (10pt,-22.5pt);
\node[rnode1] (0) at (12.5pt,0pt) {};
\node[rnode1] (1) at (0pt,-12.5pt) {};
\end{tikzpicture} \ ,
\end{align}
where every leg (physical or auxiliary) has dimension 2.
The tensor is built from the XOR tensor
\begin{subequations} \begin{align}
\begin{tikzpicture}
	[baseline={([yshift=-.5ex]current bounding box.center)},rnode1/.style={circle,inner sep=1pt, draw=orange!100,fill = white!100,line width=2pt,minimum size = 20pt}],
	\node[rnode1] (0) at (0pt,0pt) {};
	\draw[color=orange!100,line width=2pt] (-20pt,0pt) -- (20pt,0pt);
	\draw[color=orange!100,line width=2pt] (0pt,10pt) -- (0pt,-20pt);
	\node[] () at (-18pt,6pt) {$i$};
	\node[] () at (18pt,6pt) {$j$};
	\node[] () at (6pt,-15pt) {$k$};
\end{tikzpicture} 
 = \begin{cases} 1 & i+j+k \equiv 0 \pmod{2}, \\ 0 & \text{otherwise}, \end{cases}
\end{align}
and the delta tensor
\begin{align}
\begin{tikzpicture}
	[baseline={([yshift=-.5ex]current bounding box.center)}],
	\draw[color=black!100,line width=2pt] (-20pt,0pt) -- (20pt,0pt);
	\draw[color=black!100,line width=2pt] (0pt,20pt) -- (0pt,-20pt);
	\node[] () at (-18pt,6pt) {$i$};
	\node[] () at (18pt,6pt) {$j$};
	\node[] () at (6pt,-15pt) {$k$};
	\node[] () at (6pt,15pt) {$l$};
\end{tikzpicture} 
 = \delta_{ij}\delta_{ik}\delta_{il} \,.
\end{align} \end{subequations}
The PEPS representation of the the toric code state admits the a $\Z_2$ symmetry on the auxiliary space.
Group elements $1, x \in \Z_2$ are represented by the identity and Pauli-$X$ operators respectively.
We can visually verify that the tensors~\eqref{eq:toric_code_PEPS_A} are invariant when conjugated by $X$.
Observe that 
\begin{subequations} \begin{align}
\begin{tikzpicture}
	[baseline={([yshift=-.5ex]current bounding box.center)},rnode1/.style={circle,inner sep=1pt, draw=orange!100,fill = white!100,line width=2pt,minimum size = 15pt},
	rnode2/.style={circle,inner sep=0pt, draw=white!100,fill = white!100,line width=0pt,minimum size = 10pt}],
	\node[rnode1] (0) at (0pt,0pt) {};
	\draw[color=orange!100,line width=2pt] (-30pt,0pt) -- (20pt,0pt);
	\draw[color=orange!100,line width=2pt] (0pt,8pt) -- (0pt,-20pt);
	\node[rnode2] (1) at (-20pt,1pt) {$\auxX$};
\end{tikzpicture}
\mkern12mu &= \mkern12mu
\begin{tikzpicture}
	[baseline={([yshift=-.5ex]current bounding box.center)},rnode1/.style={circle,inner sep=1pt, draw=orange!100,fill = white!100,line width=2pt,minimum size = 15pt},
	rnode2/.style={circle,inner sep=0pt, draw=white!100,fill = white!100,line width=0pt,minimum size = 10pt}],
	\node[rnode1] (0) at (0pt,0pt) {};
	\draw[color=orange!100,line width=2pt] (-20pt,0pt) -- (30pt,0pt);
	\draw[color=orange!100,line width=2pt] (0pt,8pt) -- (0pt,-20pt);
	\node[rnode2] (1) at (20pt,1pt) {$\auxX$};
\end{tikzpicture} \;,
\\
\begin{tikzpicture}
	[baseline={([yshift=-.5ex]current bounding box.center)},
	rnode2/.style={circle,inner sep=0pt, draw=white!100,fill = white!100,line width=0pt,minimum size = 10pt}],
	\draw[color=black!100,line width=2pt] (-30pt,0pt) -- (30pt,0pt);
	\draw[color=black!100,line width=2pt] (0pt,30pt) -- (0pt,-30pt);
	\node[rnode2] (1) at (-20pt,1pt) {$\auxX$};
	\node[rnode2] (1) at (0pt,-20pt) {$\auxX$};
	\node[rnode2] (1) at (0pt,20pt) {$\auxX$};
	\node[rnode2] (1) at (20pt,1pt) {$\auxX$};
\end{tikzpicture}
\mkern12mu &= \mkern12mu
\begin{tikzpicture}
	[baseline={([yshift=-.5ex]current bounding box.center)}],
	\draw[color=black!100,line width=2pt] (-20pt,0pt) -- (20pt,0pt);
	\draw[color=black!100,line width=2pt] (0pt,20pt) -- (0pt,-20pt);
\end{tikzpicture} \;.
\end{align} \end{subequations}
Thus, strings of $\auxX$'s cutting across the PEPS network can be topologically deformed as shown in Fig.~\ref{fig:PEPS_sym}(c), reminiscent of Wilson loop operators.

On the torus, the different ground states are constructed by tiling the unit tensors with various boundary conditions.
For a general Abelian group $\mathcal{G}$, the state $\ket{\Psi_{00}}$ results from periodic boundary conditions on the torus.
(Henceforth, we assume $\mathcal{G}$ is Abelian for simplicity.)
The remaining ground states $\ket{\Psi_{\bullet\bullet}}$ can be obtained by inserting non-contractible strings into the network of tensors, as shown in Fig.~\ref{fig:PEPS_sym}(d).
The fact that all ground states can be constructed from the same PEPS unit cell means that \textbf{these states are local indistinguishable}.

\begin{figure*}
    \centering
    \includegraphics[width = 0.9\linewidth]{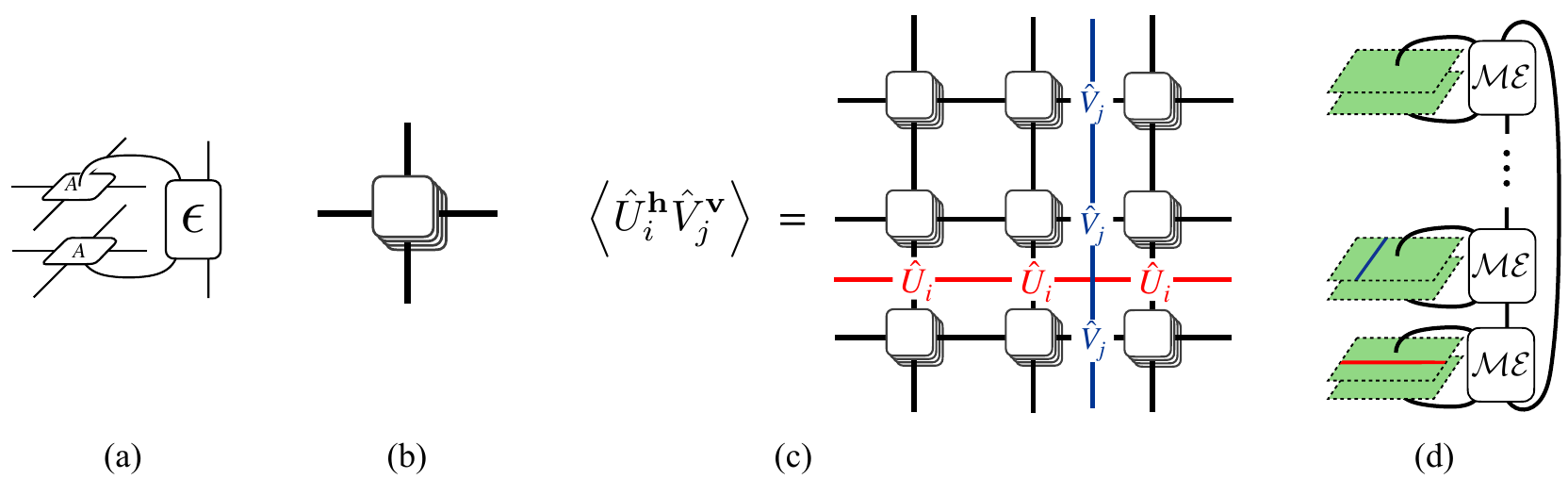}
    \caption{%
(a) The unit cell of the mixed state PEPS obtained by applying a channel $\epsilon$ on every individual site, comprised of two copies of $A$ and the $\epsilon$ channel written as a superoperator.
(b) Illustration of a unit cell of the $n$-replica network computing $\Tr \M\E(\ketbra{0}{0})^{n}$.
It is built from $n$ copies of (a) (along with the syndrome measurement not illustrated).
(c) The top view of an $n$\textsuperscript{th} trace tensor network with a $U$ string on the $i$\textsuperscript{th} layer and a $V$ string on the $j$\textsuperscript{th} layer.
(d) An illustration of the same tensor network viewed from the side.
Here we can see that the tensor network is comprised of $2n$ copies of the original pure state.
Each pair of pure states PEPS forms a density matrix, which is passed through the channels $\M\E$.  The $n$-Schatten norm (with auxiliary strings) is obtained by contracting $n$ copies of the density operator.
	} \label{fig:PEPS_channel}
\end{figure*}

The density matrix corresponding to the topological pure state is $\rho = \ketbra{\Psi}{\Psi}$, and its tensor network representation is simply two decoupled copies of the PEPS wavefunction.
Hence the density matrix has $\mathcal{G}^2$ auxiliary symmetry, a copy of $\mathcal{G}$ for the ket and bra.
Because the error channel $\E$ act locally on the physical degrees of freedom, they can be realized as a product of tensors connecting the ket and bra side of the density matrix.
The unit tensor for the mixed state $\E(\rho_1)$ is visualized in Fig.~\ref{fig:PEPS_channel}(a).
Notably, the maps $\E$ and $\M$ do not alter the symmetry of the auxiliary degrees of freedom;%
	\footnote{The syndrome measurements $\M$ do not act on a single site--and so they generally will increase the number of auxiliary degrees of freedom (bond dimension).}
	the mixed state inherits the $\mathcal{G}^2$ auxiliary symmetry from the pure state.

To compute the $n$-Schatten norm of our state, we contract $n$ copies of $\M\E(\rho_1)$ tensor network.
Hence the trace $\Tr \M\E(\rho_1)^{n}$ is characterized by the tensor network involving $n$ copies of $\M\E$ channels and $2n$ copies of the topological pure state, illustrated in Fig.~\ref{fig:PEPS_channel}(d).
Its unit cell---shown in Fig.~\ref{fig:PEPS_channel}(b)---has no remaining physical degree of freedom, but possesses $\mathcal{G}^{2n}$ symmetry acting on the auxiliary degrees of freedom.
The combined tensor network carries no physical degree of freedom---akin to a 2D statistical mechanics model---and can be evaluated into a number.

In order to discuss the role of auxiliary symmetry in the evaluation of $n$-replica tensor network, we need to establish some notation.
Suppose $\auxU$ is the representation of $u \in \mathcal{G}$ acting on (one single copy of) the auxiliary legs of the PEPS,
then we let $\auxU_i^\Th$ for $1 \leq i \leq 2n$ denote a horizontal string of $\auxU$'s acting on the $i$\textsuperscript{th} layer of the tensor network.
Similar, let $\auxU_i^\Tv$ denote a vertical string of $\auxU$'s on the $i$\textsuperscript{th} copy.
We denote
\begin{align}
	\Braket{ \hat{A}_{i_1}^\Th \hat{B}_{i_2}^\Tv \hat{C}_{i_3}^\Tv \hat{D}_{i_4}^\Th \cdots }
\end{align}
as the evaluation of the tensor network with a horizontal string of $\hat{A}$'s on layer $i_1$, vertical string of $\hat{B}$'s on layer $i_2$, etc.
The layer indices can repeat; there can be multiple strings on any layer.
For example, the evaluation of network shown in Fig.~\ref{fig:PEPS_channel}(c) is $\bigXp{\auxU_i^\Th \auxV_j^\Tv} = \bigXp{\auxV_j^\Tv \auxU_i^\Th}$,
the evaluation of the network with no auxiliary string is $\braket{1}$.

The trace tensor network can be evaluated via the standard transfer operator method.
For a torus with dimension $L_x \times L_y$, the transfer operator is a row of the tensor network---consisting $L_x$ unit cells, shown in Fig.~\ref{fig:PEPS_transferOp}(a)---whose $L_y$\textsuperscript{th} power trace is the tensor network evaluation.
Inserting a horizontal string along the auxiliary degrees of freedom:
\begin{align}
	\bigXp{\auxU_i^\Th} = \tr\Bigl[ \bigl(\hat{T}_0\bigr)^{\mkern-2mu L_y} \auxU_i^\Th \Bigr] .
\end{align}
Here $\hat{T}_0$ is the transfer operator, which acts on the \emph{auxiliary space}, a row of auxiliary legs of the tensor network.
In the auxiliary space, $\auxU_i^\Th$ can be written as an operator and commutes with $\hat{T}_0$, as shown in Fig.~\ref{fig:PEPS_transferOp}(b).
In the thermodynamic limit,
\begin{align}
	\bigXp{\auxU_i^\Th} = \sum_\mu b^{(T)\mu} \bigl(\hat{T}_0\bigr)^{\mkern-2mu L_y} \auxU_i^\Th \, b^{(B)\mu} ,
\end{align}
where $b^{(T/B)\mu}$ are the dominant top/bottom eigenvector pairs for the transfer operator $\hat{T}_0$.
Thus
\begin{align}
	\frac{\bigXp{ \auxU_{i_1}^\Th \auxU_{i_2}^\Th \cdots \auxU_{i_m}^\Th }}{\braket{1}}
	= \frac{1}{\#_\mu} \sum_\mu b^{(T)\mu} \auxU_{i_1}^\Th \cdots \auxU_{i_m}^\Th b^{(B)\mu} ,
\end{align}
where $\#_\mu = \sum_\mu b^{(T)\mu} b^{(B)\mu}$ is the number of dominant eigenvectors.
We call $b^{(T/B)\mu}$ the \textbf{boundary states} of the $n$-replica tensor network.

\begin{figure}
	\centering
	\includegraphics[width = \linewidth]{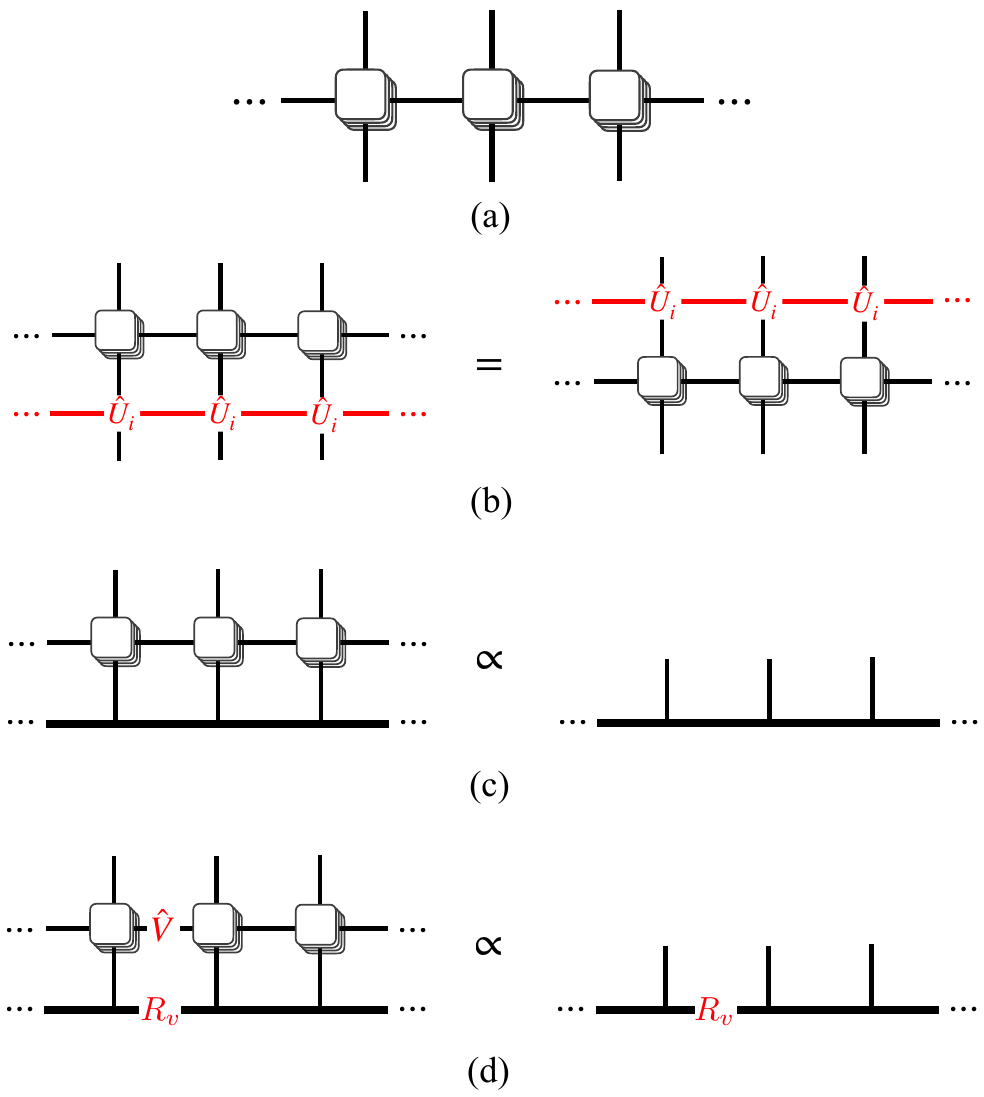}
	\caption{%
(a) The tensor network representation of the the transfer operator $\hat{T}_0$, a slice of the $n$-replica tensor network.
(b) Because auxiliary strings can be moved around freely, the transfer operator commutes with the string $\auxU_i^\Th$.
(c) The boundary state(s) of the tensor network are the dominant eigenvector(s) of the transfer operator.
(d) The boundary state(s) of the tensor network with $\auxV^\Tv$ string, which are related to boundary states of (c) by insertion of the projective representation element of $v$: $R_v$.
	} \label{fig:PEPS_transferOp}
\end{figure}

While the bulk tensors (and transfer operators) possess $\mathcal{G}^{2n}$ symmetry,
it may be spontaneously broken by the boundary states $b^{(T/B)\mu}$~\cite{PhysRevB17Norbert,Norbert_anyonCondensation, PhysRevX21Norbert}.
\textbf{Let $G$ denote the symmetry subgroup of the boundary states} (so $G \subseteq \mathcal{G}^{2n}$).
If the symmetry action of $u_i$ is unbroken, then the eigenvectors are invariant the action: $\auxU_i^\Th b^{(B)\mu} = e^{i\varphi(U_i)} b^{(B)\mu}$ (we can absorb the phase $e^{i\varphi(U_i)}$ into $\auxU_i^\Th$).
On the other hand, if the symmetry action of $u_i$ is spontaneously broken, then $\bigXp{\auxU_i^\Th}$ vanishes.\footnote{If $u \notin G$, then generically the eigenvectors $b^{(B)\mu}$ belong to the regular representation of $\mathcal{G}^{2n}/G$.}
Succinctly, the expectation values of the horizontal strings are
\begin{align}
	\frac{\bigXp{ \auxU_{i_1}^\Th \auxU_{i_2}^\Th \cdots \auxU_{i_m}^\Th }}{\braket{1}} = \begin{cases} 1 & u_{i_1} u_{i_2} \cdots u_{i_m} \in G , \\ 0 & \text{otherwise} . \end{cases}
\end{align}
For example, for the 1-replica tensor network (with a single ket and bra) of the pure state TC,
$x_1$ and $x_2$ are both individually spontaneously broken by the boundary state but $x_1^\phd x_2^{-1}$ remains preserved~\cite{Norbert_anyonCondensation}.

We can visualize the boundary state(s) $b^{(B)\mu}$ as a matrix product state (MPS).
When $u \in G$ ($u$ is a product of group generators $x_i$), then 
\begin{align}
	\begin{tikzpicture}
	[baseline={([yshift=-.5ex]current bounding box.center)}],
	\draw[color=red!70,line width=3pt] (-64pt,-10pt) -- (14pt,-10pt);
	\draw[color=black!100,line width=1.6pt] (-64pt,-25pt) -- (14pt,-25pt);
	\draw[color=black!100,line width=1.6pt] (-48pt,0pt) -- (-48pt,-25pt);
	\draw[color=black!100,line width=1.6pt] (-24pt,0pt) -- (-24pt,-25pt);
	\draw[color=black!100,line width=1.6pt] (0pt,0pt) -- (-0pt,-25pt);
    \node [circle, fill = white!100,minimum size=12pt] (0) at (-48pt,-10pt) {};
    \node [color = red] () at (-48pt,-10pt) {$\auxU$};
    \node [circle, fill = white!100,minimum size=12pt] (0) at (-24pt,-10pt) {};
    \node [color = red] () at (-24pt,-10pt) {$\auxU$};
    \node [circle, fill = white!100,minimum size=12pt] (0) at (0pt,-10pt) {};
    \node [color = red] () at (0pt,-10pt) {$\auxU$};
	\end{tikzpicture} 
\mkern12mu = \mkern12mu
	\begin{tikzpicture}
	[baseline={([yshift=1.3ex]current bounding box.center)}],
	\draw[color=black!100,line width=1.6pt] (-64pt,-25pt) -- (14pt,-25pt);
	\draw[color=black!100,line width=1.6pt] (-48pt,-15pt) -- (-48pt,-25pt);
	\draw[color=black!100,line width=1.6pt] (-24pt,-15pt) -- (-24pt,-25pt);
	\draw[color=black!100,line width=1.6pt] (0pt,-15pt) -- (-0pt,-25pt);
	\end{tikzpicture} \;.
\end{align}
In the usual contexts, the vertical legs of the MPS represent physical degrees of freedom while the horizontal links represent auxiliary legs capturing entanglement among sites.
Here instead, each of the vertical leg represents an auxiliary bond of the $n$-replica tensor network, the horizontal bonds of the MPS now captures the ``boundary state entanglement'' of the network,\footnote{The horizontal bonds of the MPS are the \emph{auxiliary of the auxiliary legs} of the 2D network, which is mouthful.} which we refer to as the ``boundary backbone''.

Since the boundary state can be viewed as a 1D system with $u$-symmetry, the action of $\auxU$ on any one site can also be pushed into the boundary backbone.
\begin{align} \label{eq:MPS_U_projective}
	\begin{tikzpicture}
	[baseline={([yshift=-.5ex]current bounding box.center)}],
	\draw[color=black!100,line width=1.6pt] (0pt,25pt) -- (-0pt,0pt);
	\draw[color=black!100,line width=1.6pt] (-20pt,0pt) -- (20pt,0pt);
	\node [circle, fill = white!100,minimum size=12pt] (0) at (0pt,15pt) {};
	\node[] () at (0pt,15pt) {$\auxU$};
	\end{tikzpicture}
\mkern12mu = \mkern12mu
	\begin{tikzpicture}
	[baseline={([yshift=1ex]current bounding box.center)}],
	\node[] (1) at (-20pt,0pt) {$R_{u}$};
	\node[] (2) at (20pt,0pt) {$R^{-1}_{u}$};
	\draw[color=black!100,line width=1.6pt] (-2pt,20pt) -- (-2pt,0pt);
	\draw[color=black!100,line width=1.6pt] (0.5pt,0pt) -- (1);
	\draw[color=black!100,line width=1.6pt] (-40pt,0pt) -- (1);
	\draw[color=black!100,line width=1.6pt] (-0.5pt,0pt) -- (2) ;
	\draw[color=black!100,line width=1.6pt] (40pt,0pt) -- (2) ;
	\end{tikzpicture},
\end{align}
where $u \mapsto R_{u}$ is a projective representation of $G$.
When $\auxU$ is applied on every site, the operators $R_{u}, R_{u}^{-1}$ appears in pairs and cancel.

In a projective representation, the operators $R_u$ obey group multiplication up to a multiplicative factor:
\begin{align}
	R_u R_v = \omega(u,v) R_{uv} \,,
\end{align}
for some set $\omega(u,v) \in \mathbb{C}$.
This makes projective representations a generalization of ordinary representations where $\omega = 1$.
Two projective representations are ``gauge equivalent'' if they are related by a ``gauge transformation'' $R_u \to \sigma(u) R_u$.
Notice that Eq.~\eqref{eq:MPS_U_projective} only determines the projective representation up to a gauge transformation.
Since the $\omega$'s (called 2-cocycles) are gauge-dependent, transforming as $\omega(u,v) \to \omega(u,v) \sigma(u) \sigma(v) / \sigma(uv)$, it is convenient to define
\begin{align}
	\chi(u,v) = \frac{\omega(u,v)}{\omega(v,u)}
\end{align}
which are gauge-invariant (for an Abelian group $G$).
$\chi$ is a bimultiplicative function: $\chi(u,vw) = \chi(u,v) \chi(u,w)$ and $\chi(uv,w) = \chi(u,w) \chi(v,w)$.

Projective representations of $G$ are classified by the group cohomology $H^2(G,\mathrm{U}(1))$.%
	\footnote{Specifically, $H^2(G,\mathrm{U}(1))$ characterizes the obstruction to finding a gauge transformation such that $\omega(u,v) = 1$ for all $u,v \in G$.}
The group $G$, along with the cohomology class $\in H^2(G,\mathrm{U}(1))$ that the (bottom) boundary state belongs to, constitute 
the \textbf{boundary SPT order} of the $n$-replica tensor network.

Now let's consider the expectation value with two intersecting string operators, (similar to that in Fig.~\ref{fig:PEPS_channel}(c), but omitting layer subscripts)
\begin{align}
	\bigXp{ \auxU^\Th \auxV^\Tv } .
\end{align}
Just as before, this ca be evaluated with the aid of transfer operators.
On a finite torus, $\bigXp{ \auxU^\Th \auxV^\Tv } = \tr\bigl[ \bigl(\hat{T}_V\bigr)^{L_y} U^\Th \bigr]$,
where $\hat{T}_V$ is the transfer operator with a single $V$ operator applied to one of its auxiliary legs.
Again, in the thermodynamic limit,
\begin{align}
	\frac{\bigXp{ \auxU^\Th \auxV^\Tv }}{\bigXp{ \auxV^\Tv }} = \frac{1}{\#_\mu} \sum_\mu b_V^{(T)\mu} \, \auxU^\Th \, b_V^{(B)\mu} ,
\end{align}
where $b_V^{(T/B)\mu}$ are the dominant top/bottom eigenvector pairs for the transfer operator $\hat{T}_V$.
Remarkably, the dominant eigenvector(s) of $\hat{T}_V$ are related to those of $\hat{T}_0$ by the insertion of a projective representation element $R_v$ into the backbone of the MPS, illustrated in Fig.~\ref{fig:PEPS_transferOp}(d).
Finally, since $v \in G$, we can replace $\bigXp{\auxV^\Tv} = \braket{1}$.

Putting it all together: for $u, v \in G$, the expectation value of a pair of intersecting strings is given by 
\begin{align} \begin{aligned}
&	\bigXp{ \auxU^\Th \auxV^\Tv } / \braket{1}
\\ &\; = \mkern12mu
\begin{tikzpicture}
	[baseline={([yshift=-.5ex]current bounding box.center)}],
	\node[] (1) at (-36pt, 20pt) {$R_v^{-\mathrm{T}}$};
	\draw[color=black!100,line width=1.6pt] (-64pt,20pt) -- (-45pt, 20pt);
	\draw[color=black!100,line width=1.6pt] (-25pt, 20pt) -- (25pt,20pt);
	\draw[color=black!100,line width=1.6pt] (-53pt,-20pt) -- (-53pt,20pt);
	\draw[color=black!100,line width=1.6pt] (-19pt,-20pt) -- (-19pt,20pt);
	\draw[color=black!100,line width=1.6pt] (15pt,-20pt) -- (15pt,20pt);
	\draw[color=red!70,line width=3pt] (-64pt,0pt) -- (25pt,0pt);
	\node [circle, fill = white!100,minimum size = 12pt] (0) at (-53pt,0pt) {};
	\node [color = red] () at (-53pt,0pt) {$\auxU$};
	\node [circle, fill = white!100,minimum size = 12pt] (0) at (-19pt,0pt) {};
	\node [color = red] () at (-19pt,0pt) {$\auxU$};
	\node [circle, fill = white!100,minimum size = 12pt] (0) at (15pt,0pt) {};
	\node [color = red] () at (15pt,0pt) {$\auxU$};
	\node[] (1) at (-36pt, -20pt) {$R_v$};
	\draw[color=black!100,line width=1.6pt] (-64pt,-20pt) -- (-43pt, -20pt);
	\draw[color=black!100,line width=1.6pt] (-30pt, -20pt) -- (25pt,-20pt);
\end{tikzpicture}
\begin{tikzpicture}
	[baseline={([yshift=0ex]current bounding box.center)}],
	\draw[color=black!100,line width=1.6pt] (7pt,23pt) -- (-7pt,-23pt);
\end{tikzpicture}
\begin{tikzpicture}
	[baseline={([yshift=1ex]current bounding box.center)}],
	\draw[color=black!100,line width=1.6pt] (-64pt,20pt) --  (14pt,20pt);
	\draw[color=black!100,line width=1.6pt] (-53pt,10pt) -- (-53pt,20pt);
	\draw[color=black!100,line width=1.6pt] (-19pt,10pt) -- (-19pt,20pt);
	\draw[color=black!100,line width=1.6pt] (0pt,10pt) -- (-0pt,20pt);
	\draw[color=black!100,line width=1.6pt] (-64pt,-10pt) -- (14pt,-10pt);
	\draw[color=black!100,line width=1.6pt] (-53pt,10pt) -- (-53pt,-10pt);
	\draw[color=black!100,line width=1.6pt] (-19pt,10pt) -- (-19pt,-10pt);
	\draw[color=black!100,line width=1.6pt] (0pt,10pt) -- (-0pt,-10pt);
	\node[circle, fill = white!100,minimum size = 19pt] () at (-36pt, 20pt) {};
	\node[circle, fill = white!100,minimum size = 12pt] () at (-36pt, -10pt) {};
	\node[] () at (-36pt, 20pt) {$R_v^{-\mathrm{T}}$};
	\node[] () at (-36pt, -10pt) {$R_v$};
\end{tikzpicture}
\\ &\; = \mkern12mu
\begin{tikzpicture}
	[baseline={([yshift=-.5ex]current bounding box.center)}],
	\node[] (1) at (-25pt, 20pt) {$R_v^{-\mathrm{T}}$};
	\draw[color=black!100,line width=1.6pt] (-64pt,20pt) -- (-40pt, 20pt);
	\draw[color=black!100,line width=1.6pt] (-10pt, 20pt) -- (10pt,20pt);
	\node[] (2) at (-25pt, -20pt) {$R_u^{-1} R_v R_u$};
	\draw[color=black!100,line width=1.6pt] (-64pt,-20pt) -- (-50pt, -20pt);
	\draw[color=black!100,line width=1.6pt] (-0pt, -20pt) -- (10pt,-20pt);
	\draw[color=black!100,line width=1.6pt] (-64pt,-20.5pt) -- (-64pt,20.5pt);
	\draw[color=black!100,line width=1.6pt] (10pt,-20.5pt) -- (10pt,20.5pt);
\end{tikzpicture} 
\hspace{4pt}
\begin{tikzpicture}
	[baseline={([yshift=0ex]current bounding box.center)}],
	\draw[color=black!100,line width=1.6pt] (7pt,23pt) -- (-7pt,-23pt);
\end{tikzpicture}
\hspace{4pt}
\begin{tikzpicture}
	[baseline={([yshift=1.5ex]current bounding box.center)}],
	\draw[color=black!100,line width=1.6pt] (12pt,12pt) -- (-12pt,12pt) -- (-12pt, -12pt) -- (12pt,-12pt) -- (12pt,12pt);
	\end{tikzpicture}
\\	&\; = \chi(v,u) \,.
\end{aligned} \end{align}
Hence we can summarize
\begin{align}
	\frac{\bigXp{ \auxU^\Th \auxV^\Tv }}{\braket{1}} = \frac{\bigXp{ \auxV^\Tv \auxU^\Th }}{\braket{1}}
	= \begin{cases} \chi(v,u) & u,v \in G , \\ 0 & \text{$u \notin G$ or $v \notin G$}. \end{cases}
\end{align}
For Abelian groups $G$, the bimultiplicative function $\chi$ uniquely identifies the group cohomology class.
Therefore, the evaluation of the $n$-replica tensor network with auxiliary string operators completely determines the boundary SPT order of the network.
As we will be argued in Sec.~\ref{sec:ClassifyMixed}, the boundary SPT order determines the type of Wilson loop operators that exists in the mixed state phase, and thus provide a classification of the mixed state topological orders.

A few comments to cap off our discussion of PEPS.
In this section we explain how the boundary SPT order is used to classify the effect of auxiliary symmetry in the bulk.
However, since we are evaluating the tensor network on a torus, no where in our discussion is there a physical boundary.
Rather, the ``boundary states''---i.e., the dominant eigenvectors of the transfer operator---formally appear when evaluating powers of $\hat{T}_0$ in the thermodynamic limit.
We consider the symmetry characterization of the boundary states as a concrete tool to determine expectations such as the intra- and inter-layer correlation of auxiliary strings in the network.
Because the boundary SPT order is a (tensor network) renormalization invariant, it is no surprise that it can also gives us a classification of $n$-replica mixed states.

We also assume that $G$, the boundary symmetry subgroup, is independent of the direction (of the boundary states).
That is, we assume that $\bigXp{ U^\Th } = \bigXp{ U^\Tv }$.
We find this to be true for generic error channels, but the assumption fails for extreme anisotropic channels.\footnote{For example, if errors only occur on horizontal but not vertical bonds, or if anyon pairs are only created along one direction of the lattice.}
In this work, we assume that topological properties of our states are isotropic; the boundary SPT order is independent of the lattice directions.

Finally, our discussion implicitly relies on a gap in the transfer operator spectrum (such that the boundary states are well-defined and finite in number), and that the boundary states are also short-ranged correlated (such that we can apply the 1D SPT classification results).
These conditions reflect the criteria that we think of when describing a (mixed) state as being ``gapped'' or possessing ``exponential decay in correlations'', which must be violated at topological phase transitions.
Just as it has been done for pure states,\roger{cite}
	it would be useful for future work to formalize these ideas and provide concrete characterizations for a gapped mixed state.

\subsection{Quantum Channels and Boundary SPT}
\label{sec:QEC_quantum_channel}

In this section, we relate the boundary SPT phases of the tensor network with quantum channels in the $\Z_p$ TC model, for $p$ a positive integer.

We begin by establishing our notation for states and Wilson loop operators.
Denote $Z^{\Th}/X^{\Th}$ as the Wilson loop operators wrapping around the torus comprising of $\Z_p$-Pauli $Z/X$ operators along the original/dual lattice.
Similarly $Z^{\Tv}/X^{\Tv}$ are defined along the vertical direciton.
Of course, $(Z^{\Th})^p = (X^{\Th})^p = (Z^{\Tv})^p = (X^{\Tv})^p = 1$.
For a general loop $\Ts$ in the torus wrapping $a$ around horizontally and $b$ times vertically, we denote
\begin{align} \begin{aligned}
	Z^\Ts = Z^{a\Th + b\Tv} &= \bigl(Z^\Th\bigl)^a \bigl(Z^{\Tv}\bigr)^b ,
\\	X^\Ts = X^{a\Th + b\Tv} &= \bigl(X^\Th\bigl)^a \bigl(X^{\Tv}\bigr)^b .
\end{aligned} \end{align}
Here we think of the loops themselves as vectors in a vector space spanned by $\{\Th,\Tv\}$, such that $p\Th = p\Tv = 0$.%
	\footnote{Formally, the loops are elements of the homology group $H_1(T^2;\Z_p)$ (which is Abelian).}
(That is, wrapping around any direction $p$ times is equivalent to a trivial loop.)
This allows us to take advantage of the notation and write statements such as $Z^\Ts Z^\Tt Z^\Ts = Z^{2\Ts+\Tt}$, $Z^{p\Ts} = Z^0 = 1$.
The Wilson loops obey the algebra
\begin{align} \label{eq:TC_WL_algebra}
	Z^\Ts X^\Tt &= \zeta_p^{\Ts \times \Tt} X^\Tt Z^\Ts ,
&	X^\Ts Z^\Tt &= \zeta_p^{\Ts \times \Tt} Z^\Tt X^\Ts ,
\end{align}
where $\zeta_p$ denotes the $p$\textsuperscript{th} root of unity
\begin{align}
	\zeta_p &\defeq \exp\bigl[ 2 \pi i / p \bigr] ,
\end{align}
and $\Ts \times \Tt$ is the intersection product defined by
\begin{align} \label{eq:cross_product}
	(a_1 \Th + a_2 \Tv) \times (b_1 \Th + b_2 \Tv) = a_1 b_2 - a_2 b_1 \,.
\end{align}
The intersection product $\Ts \times \Tt = -\Tt \times \Ts$ counts the number%
	\footnote{The intersection product is only defined modulo $p$.  However $\zeta_p^{\Ts\times\Tt}$ is unambiguous.}
	of oriented crossings between loops $\Ts$ and $\Tt$.

\begin{figure}
	\centering
	\includegraphics[width = \linewidth]{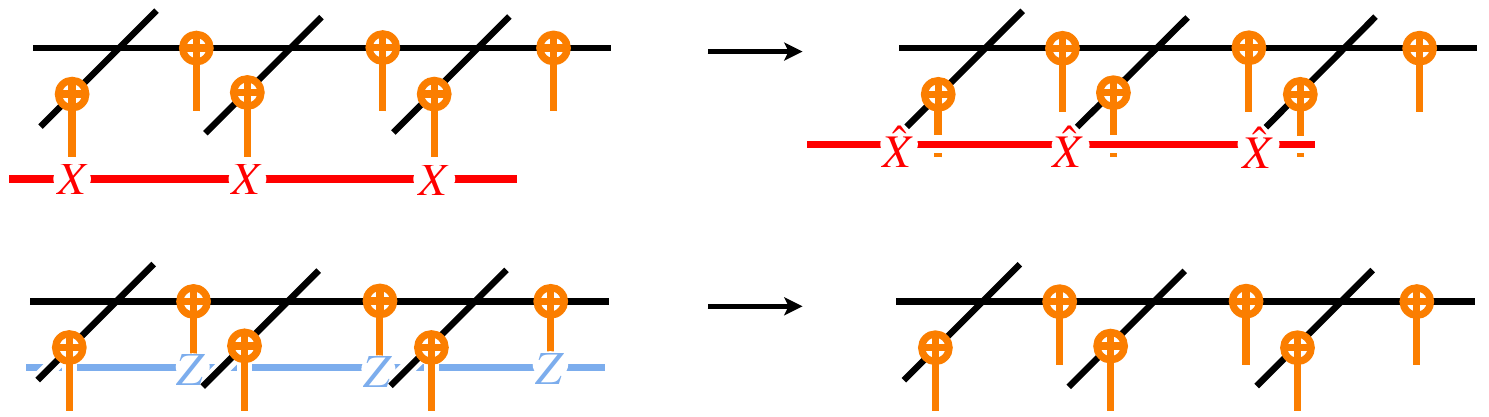}
	\caption{
For the (pure state) toric code PEPS, an $X$-Wilson loop applied to the physical degrees of freedom can be pushed into a string operator $\auxX$ acting on the auxiliary degrees of freedom.
In contrast, a $Z$-Wilson loop can be absorbed into the network (they detect the boundary conditions of the PEPS).
	} \label{fig:TC_Wilson_aux}
\end{figure}

There are $p^2$ orthogonal logical states---that is, the ground state Hilbert space of the $\Z_p$ TC Hamiltonian has dimension $p^2$, or equivalently its topological quantum field theory (TQFT) has $p^2$ anyons.
Denote $\ket{0}$ as the eigenstate of $Z^\Th$ and $Z^\Tv$ with eigenvalue 1.\footnote{$\ket{0}$ is not a minimum entangled state (MES), and should not be confused with the MES corresponding to the vacuum anyon.}
We label the pure ground states via loop labels $\Ts$, such that $\ket{\Ts} = X^\Ts \ket{0}$.
In this basis, the Wilson loop operators behave as
\begin{align} \label{eq:Xpsi_basis} \begin{aligned}
	Z^{\Ts} \ket{\Tt} &= \zeta_p^{\Ts\times\Tt} \ket{\Tt} ,
\\	X^{\Ts} \ket{\Tt} &= \ket{\Ts+\Tt} .
\end{aligned} \end{align}
The state $\ket{0}$ is of particular importance because it results from the PEPS on a torus with periodic boundary conditions,
i.e., the ($\Z_p$ generalization of the) state $\ket{\Psi_{0,0}}$ described in Eq.~\eqref{eq:TC_wavefunctions}.
Indeed, as shown in Fig.~\ref{fig:TC_Wilson_aux}, $Z$-type Wilson loops can be ``absorbed'' into the PEPS network; meaning $Z^\Ts \ket{0} = \ket{0}$.
Due to the $\Z_p$ auxiliary symmetry of the PEPS, $X$-type Wilson loop operators can be pushed between the physical and auxiliary degrees of freedom, also shown in Fig.~\ref{fig:TC_Wilson_aux}.
In other words, acting $X^{a\Th+b\Tv}$ on the (physical legs of) state $\ket{0}$ is equivalent to inserting a horizontal $\auxX^a$-string and a vertical $\auxX^b$-string into the auxiliary legs of the tensor network.
In other words,
\begin{align} \label{eq:TC_Xloop_Xp} \begin{aligned}
&	\bigXp{ \auxX_1^{\Tt_1} \auxX_2^{\Tt_2} \cdots \auxX_{2n}^{\Tt_{2n}} }_\text{TC}
\\	&\quad = \Tr X^{\Tt_1} \rho_{00} X^{\Tt_2} X^{\Tt_3} \rho_{00} X^{\Tt_4} X^{\Tt_5} \cdots \rho_{00} X^{\Tt_{2n}} ,
\end{aligned} \end{align}
where we denote $\rho_{00} = \ketbra{0}{0}$.
Similar to the notation used for Wilson loops, $\auxX_i^{a\Th+b\Tv}$ means that we insert on the $i$\textsuperscript{th} layer, $a$ horizontal $\auxX$-strings and $b$ vertical $\auxX$ strings.
This is reflected in the notational similarity between physical Wilson loop operators $X^\Tt$ and auxiliary strings $\auxX^\Tt$.

Applying Eq.~\eqref{eq:Xpsi_basis} to simplify Eq.~\eqref{eq:TC_Xloop_Xp},
\begin{align} \label{eq:xp_TC} \begin{aligned}
&	\bigXp{ \auxX_1^{\Tt_1} \auxX_2^{\Tt_2} \cdots \auxX_{2n}^{\Tt_{2n}} }_\text{TC}
\\	&= \braket{ 0 | X^{\Tt_2} X^{\Tt_3} | 0 } \braket{ 0 | X^{\Tt_4} X^{\Tt_5} | 0 } \cdots \braket{ 0 | X^{\Tt_{2n}} X^{\Tt_1} | 0 }
\\	&= \delta(\Tt_2+\Tt_3) \delta(\Tt_4+\Tt_5) \cdots \delta(\Tt_{2n}+\Tt_1) .
\end{aligned} \end{align}
This expression completely captures the boundary SPT order of the toric code.
For example, $\bigXp{\auxX_1^\Th} = \bigXp{\auxX_2^\Th} = 0$ means that the $\Z_p$ symmetry are spontaneously broken in the first and second layer.
However, $\bigXp{\auxX_1^\Th \auxX_2^{-\Th}} = 1$ means that a residue $\Z_p$ (generated by $x_1^\phd x_2^{-1}$) remains unbroken.
Finally, the fact that the expectation value is always 0 or 1 means that the boundary state is in the trivial SPT phase of its symmetry group $G$.

Here
\begin{align}
	\delta(\Ts) = \begin{cases} 1 & \Ts=0 , \\ 0 & \Ts\neq0 , \end{cases}
\end{align}
is the delta function.

We now analyze the postselection-based QEC protocol shown in Fig.~\ref{fig:QEC} on the toric code. 
The initial state is $n$ copies of some TC logical state $\rho_1$.
External noises are introduced by quantum channel $\E$, followed by measurement of all the syndrome operators [cf.\ Eq.~\eqref{eq:toric_code_stab}].
Explicitly, the quantum channel
\begin{align} \label{eq:M_channel}
	\M = \prod_\text{vertices $+$}\mkern-16mu \Phi[A_+] \mkern8mu \circ \prod_\text{plaquettes $\square$}\mkern-24mu \Phi[B_\square] \,,
\end{align}
a product over all the plaquette/star stabilizers with $\Phi[P](\sigma) \defeq \frac{1}{2}\sigma + \frac{1}{2} P \sigma P^\dag$.
Crucially, $\M$ is a product of local channels and therefore can be written as a (superoperator) tensor network.
After measuring the stabilizers (star and plaquette operators), the density matrix becomes an incoherent sum of different error configuration states:
\begin{align}
	\mathcal{M}\E(\rho_1) = \sum_{\theta} \Prob(\theta) \, Q_\theta(\rho_1) ,
\end{align}
where $\Prob(\theta) = \Tr(M_\theta \E(\rho_1) M_\theta)$ is the probability that the syndrome $\theta$ is detected,
$M_\theta$ is the projector into the set of states compatible with the syndrome $\theta$,
and $Q_\theta(\rho_1) = M_\theta \E(\rho_1) M_\theta / \Tr[M_\theta \E(\rho_1) M_\theta]$ is the density matrix after measurement when $\theta$ is detected.
Notably, the probabilities $\Prob(\theta)$ do not depend on the initial state $\rho_1$.

Regardless of $\theta$, the rank of $M_\theta$ is $p^2$.
Simply put, once the syndrome is determined, $Q_\theta(\rho_1)$ belong to some $p^2$-dimensional Hilbert space which is isomorphic to the logical space of the TC.
Denote $M_0$ as the projector into the set of states with no error syndromes; i.e., logical states of the TC.
Then for each $\theta$ we can find a unitary $E_\theta$ that maps between the two space:%
	\footnote{$E_\theta$ can always be chosen to a product of $\Z_p$ Pauli operators.}
	$M_\theta = E_\theta^\phd M_0 E_\theta^\dag$.
Therefore, we can express the map from $\rho_1$ to the post-measurement state as a quantum channel
\begin{align} \label{eq:def_Ptheta}
	Q_\theta(\rho) = \sum_{\Ti,\Tj,\Tk,\Tl} P_{\Ti,\Tk;\Tj,\Tl}(\theta) \, E_\theta X^\Ti Z^\Tk \rho \, Z^{-\Tl} X^{-\Tj} E^\dag_\theta \,.
\end{align}
The summation is over loops $\Ti,\Tj,\Tk,\Tl$ on the torus.
The set $\{ X^\Ti Z^\Tk \}$ is a linear basis of operators in the logical space, and $P_{\Ti,\Tk;\Tj,\Tl}(\theta)$ are the coefficients of the channel in such basis.
$P$ obey $\sum_{\Ti,\Tj} P_{\Ti,\Tj;\Ti+\Ts,\Tj+\Tt}(\theta) = \delta(\Ts) \delta(\Tt)$ (trace-preserving map) and $P_{\Ti,\Tk;\Tj,\Tl}(\theta)$ is a positive semidefinite viewed as a $p^4 \times p^4$ matrix indexed by $(\Ti,\Tk)$ and $(\Tj,\Tl)$.

To relate the coefficients $P(\theta)$ to symmetry properties of the PEPS, it is convenient to work in the basis defined in Eq.~\eqref{eq:Xpsi_basis}.
First define the ``intersection form''
\begin{align} \label{eq:def_I}
	I_{\Ts,\Tt} \defeq \zeta_p^{\Ts \times \Tt} \,,
\end{align}
which satisfies
\begin{subequations} \label{eq:I_properties} \begin{align}
	I_{\Ts,\Tt}^\phd = I_{\Ts,-\Tt}^\ast = I_{-\Ts,\Tt}^\ast &= I_{\Tt,\Ts}^\ast \,,
\\	I_{\Ti,\Ts} I_{\Ti,\Tt} &= I_{\Ti,\Ts+\Tt} \,,
\\	\sum_{\Ti} I_{\Ts,\Ti} = \sum_{\Ti} I_{\Ti,\Ts} &= p^2 \delta(\Ti) .
\end{align} \end{subequations}
Hence, as a $p^2 \times p^2$ matrix, $\frac{1}{p}I$ is both unitary and Hermitian.
Let
\begin{align} \label{eq:P_to_Q}
	Q_{\Ti,\Ts;\Tj,\Tt}(\theta) &= \sum_{\Tk,\Tl} P_{\Ti-\Ts,\Tk \,;\, \Tj-\Tt,\Tl}(\theta) \, I_{\Ts,\Tk}^\ast \, I_{\Tt,\Tl}^{\vphantom\ast} \,.
\end{align}

Acting on $\rho_1 = \sum_{\Ts,\Tt} \rho_{\Ts,\Tt} \ketbra{\Ts}{\Tt} = \sum_{\Ts,\Tt} \rho_{\Ts,\Tt} X^\Ts \rho_{00} X^{-\Tt}$,
the quantum channel behave as
\begin{align} \label{eq:Qchannel}
	\bigl( E_\theta^\dag Q_\theta(\rho_1) E_\theta \bigr)_{\Ts',\Tt'}
	&= \sum_{\Ti,\Tj,\Tk,\Tl} \raisebox{-1ex}{$\begin{array}{l} P_{\Ti,\Tk;\Tj,\Tl}(\theta) \\[0.2ex]\quad \times \Braket{ \Ts' | X^\Ti Z^\Tk \rho_1 Z^{-\Tl} X^{-\Tj} | \Tt' } \end{array}$}
\notag\\
	&= \sum_{\Tk,\Tl,\Ts,\Tt} P_{\Ts'-\Ts\,,\,\Tk\,;\,\Tt'-\Tt\,,\,\Tl}(\theta) \, I_{\Tk,\Ts} I_{\Tt,\Tl} \, \rho_{\Ts,\Tt}
\notag\\
	&= \sum_{\Ts,\Tt} Q_{\Ts',\Ts \,;\, \Tt',\Tt}(\theta) \, \rho_{\Ts,\Tt} \,.
\end{align}
Evidently $Q_{\Ti,\Ts;\Tj,\Tt}(\theta)$ is the Choi matrix of the map $E_\theta^\dag Q_\theta E_\theta^\phd$ in the basis~\eqref{eq:Xpsi_basis}.
It is positive semidefinite ($Q(\theta) \succeq 0$) and trace-preserving $\sum_{\Ti} Q_{\Ti,\Ts;\Ti,\Tt}(\theta) = \delta(\Ts-\Tt)$.
$Q(\theta)$ completely characterizes the channel $\M\E$ acting on the TC state; tracking the evolution of logical quantum information for all possible error syndrome.

The key step in our QEC protocol is to postselect the same error syndromes for all copies.
This effectively alters the probability distribution for each $\theta$.
\begin{align}
	\rho_\text{ps} = \mathcal{S}\bigl( \M\E(\rho_1)^{\otimes n} \bigr)
	&= \frac{\sum_{\theta} \Prob(\theta)^n Q_\theta(\rho_1)^{\otimes n}}{\sum_{\theta} \displaystyle\Prob(\theta)^n}.
\end{align}

We can now relate this postselection-based QEC protocol with the expectation value of auxiliary loop operators.
The expectation value
\begin{align}
	\bigXp{ \auxX_1^{t_1} \auxX_2^{t_2} \cdots \auxX_{2n}^{t_{2n}} }
\end{align}
of the $n$-replica network of $\M\E(\rho_{00})$ is the evaluation of tensor network representing $\Tr\bigl[ \bigl( \M\E(\rho_{00}) \bigr)^{n} \bigr]$ inserting $\auxX_i^{t_i}$ in the $i$\textsuperscript{th} layer.
Recall that in the (unperturbed) toric code state, auxiliary $\auxX$-strings can be transformed into $X$-Wilson loop operators, making the evaluation of $\auxX$-strings straightforward [cf.\ Fig.~\ref{fig:TC_Wilson_aux}, Eq.~\eqref{eq:TC_Xloop_Xp}].
This is no longer the case once the error channel is introduced since the $X$-Wilson loops can be ``dressed'', or may no longer exist: $\bigXp{ \auxX_1^{t_1} \auxX_2^{t_2} \cdots } \neq \Tr X^{\Tt_1} \M\E(\rho_{00}) X^{\Tt_2} \cdots$.

Since the strings $\auxX$ acts on the auxiliary space while the channels $\M\E$ act on the physical space of the tensor network, their action actually commutes!
We can first insert strings $\auxX_1^{t_1}$, $\auxX_2^{t_2}$, etc. into the toric code density matrix $\rho_{00}$ (with periodic boundary conditions), then apply the error and measurement channels.
For the toric code PEPS, $\auxX_{2i-1}^{t_{2i-1}}$ and $\auxX_{2i}^{t_{2i}}$ can be pushed into Wilson loops acting on the ket and bra side, respectively, of the $i$\textsuperscript{th} copy of $\rho_{00}$.

\begin{widetext}
\noindent
Then the expectation values of the loop operators are given by
\begin{align}
\Braket{ \auxX_1^{\Tt_1} \auxX_2^{\Tt_2} \cdots \auxX_{2n}^{\Tt_{2n}} }
&\notag
= \Tr\Bigl[ \M\E(X^{\Tt_1} \rho_{00} X_2^{\Tt_2}) \, \M\E(X^{\Tt_3} \rho_{00} X_2^{\Tt_4}) \, \cdots \, \M\E(X^{\Tt_{2n-1}} \rho_{00} X_2^{\Tt_{2n}}) \Bigr]
	\,\Big/\, \Tr\Bigl[ \bigl( \M\E(\rho_{00}) \bigr)^n \Bigr]
\\\notag&
= {\displaystyle\sum}_{\theta}{\Prob^n} \Tr\Bigl[ Q_\theta(X^{\Tt_1} \rho_{00} X^{\Tt_2}) \, Q_\theta(X^{\Tt_3} \rho_{00} X^{\Tt_4}) \, \cdots \, Q_\theta(X^{\Tt_{2n-1}} \rho_{00} X^{\Tt_{2n}}) \Bigr]
	\,\Big/\, {\displaystyle\sum}_{\theta}{\Prob^n} \Tr\Bigl[ \bigl( Q_\theta(\rho_{00}) \bigr)^n \Bigr]
\\\notag&
=	\frac{ {\displaystyle\sum}_{\theta} \Prob^n \, {\displaystyle\sum}_{\Ti_1,\cdots,\Ti_{n}} Q_{\Ti_1,\Tt_1;\,\Ti_2,-\Tt_2} \, Q_{\Ti_2,\Tt_3;\,\Ti_3,-\Tt_4} \cdots Q_{\Ti_{n},\Tt_{2n-1};\,\Ti_{1},-\Tt_{2n}}}
	{ {\displaystyle\sum}_{\theta} \Prob^n \, {\displaystyle\sum}_{\Ti_1,\cdots,\Ti_n} Q_{\Ti_1,0;\,\Ti_2,0} \, Q_{\Ti_2,0;\,\Ti_3,0} \cdots Q_{\Ti_n,0;\,\Ti_1,0}}
\\&
=	\frac{ {\displaystyle\sum}_{\theta}{\Prob^n} \tr\bigl( q_{\Tt_1,-\Tt_2}^\phd \, q_{\Tt_3,-\Tt_4}^\phd \cdots\, q_{\Tt_{2n-1},-\Tt_{2n}}^\phd \bigr) }
	{ {\displaystyle\sum}_{\theta} \Prob^n \tr\bigl(q_{00}^n\bigr) }
\,. \label{eq:xp_XXXX_Qq}
\end{align}
\end{widetext}
Here the dependence on the syndrome configuration $\theta$ are implicit.
On the left hand side, we also omit $\braket{1} = \Tr \M\E(\rho_{00})^{n}$ in the denominator.\footnote{From this point on, we always absorb $\braket{1}$ into $\bigXp{ \auxX_1^{\Tt_1} \cdots }$.}
For convenience, we also define a set of $p^2 \times p^2$ matrices $q_{\bullet\bullet}(\theta)$ which are submatrices of $Q(\theta)$:
\begin{align}
	\bigl(q_{\Ts,\Tt}(\theta)\bigr)_{\Ti,\Tj} \defeq Q_{\Ti,\Ts;\Tj,\Tt}(\theta) \,.
\end{align}

We refer to the object $\bigXp{ \auxX_1^{\Tt_1} \auxX_2^{\Tt_2} \cdots \auxX_{2n}^{\Tt_{2n}} }$ (the collection of numbers) as the \textbf{$X$-loop expectation tensor}.
Equation~\eqref{eq:xp_XXXX_Qq} is the keystone connecting the bulk topological order (as characterized by the boundary SPT order) to the quantum information content of a mixed state.
In Sec.~\ref{sec:ClassifyMixed}, we leverage this connection to extract information about the types of Wilson loop operators and the geometry to the state space.
Prior to that, we first find the set of solutions to the equation.

\subsection{Enumerating the possible topological orders}
\label{sec:enum_boundary_SPT}

In this subsection, we enumerate the possible $n$-replica topological phases which are descendants of the $\Z_p$ toric code for $p$ primes,
and associate them to quantum channels acting on the logical space of the toric code.

\begin{theorem} \label{thm:TCdesc_classification}
Let $p$ be a prime number.
Subject to the constraints listed below, there are exactly $p+3$ classes of solutions to Eq.~\eqref{eq:xp_XXXX_Qq} for $n \geq 2$.
The results are summarized in Table~\ref{tab:TCcondensation}.
\end{theorem}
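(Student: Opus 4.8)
The plan is to recast the statement as a classification of the gapped boundary SPT orders that the $n$-replica network can carry, and then to show that the stated constraints collapse this data into a $(p+3)$-element family indexed by the subgroups of $\Z_p^2$. First I would invoke the machinery of Sec.~\ref{sec:PEPS}: any admissible evaluation of the network produces a gapped boundary whose symmetry data is a pair $(G,\chi)$, where $G\subseteq\Z_p^{2n}$ is the unbroken subgroup detected by horizontal $\auxX$-strings and $\chi\colon G\times G\to\mathrm{U}(1)$ is the gauge-invariant bimultiplicative function detected by intersecting strings. The $X$-loop expectation tensor is then $\bigXp{\,\prod_i\auxX_i^{h_i\Th+v_i\Tv}\,}=\chi(\vec v,\vec h)$ when $\vec h,\vec v\in G$ and $0$ otherwise. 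Thus ``a solution of Eq.~\eqref{eq:xp_XXXX_Qq}'' becomes ``a pair $(G,\chi)$ realized by some family of Choi matrices $Q(\theta)$,'' and the task is to enumerate the realizable pairs (the upper bound) and exhibit a channel for each (the lower bound).

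Next I would impose the constraints on the submatrices $q_{\Ts,\Tt}(\theta)$: positive semidefiniteness of $Q(\theta)$, the trace-preserving identity $\tr q_{\Ts,\Tt}=\delta(\Ts-\Tt)$, the Hermiticity relation $q_{\Ts,\Tt}=q_{\Tt,\Ts}^\dag$ inherited from $\rho=\rho^\dag$, invariance under cyclic permutation of the $n$ replicas, and isotropy. The decisive computation is the single-replica expectation $\bigXp{\auxX_{2k-1}^{\Tt}\auxX_{2k}^{\Tt'}}=\bigl(\sum_\theta\Prob^n\tr(q_{\Tt,-\Tt'}q_{00}^{\,n-1})\bigr)\big/\bigl(\sum_\theta\Prob^n\tr(q_{00}^{\,n})\bigr)$, which isolates a single insertion against the transfer data $q_{00}$. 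Analyzing the dominant eigenspace of $q_{00}$, I would argue that positivity together with the $0$-or-phase property of a gapped boundary forces each insertion $q_{\Ts,\Tt}$ to act on that eigenspace either as zero or as a pure phase; this identifies $G$ as a replica-shift-invariant subgroup and pins down $\chi$ (I expect $\chi\equiv 1$ on $G$, as in the pure toric-code computation Eq.~\eqref{eq:xp_TC}).

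The central reduction—and the step I expect to be hardest—is to show that replica symmetry plus Choi-positivity collapses the a priori large group $G\subseteq\Z_p^{2n}$ down to a single-copy datum: a subgroup $K\subseteq\Z_p^2=\langle e,m\rangle$ of logical Wilson loops, with $G=G_K$ determined by a fixed rule. The difficulty is ruling out ``exotic'' boundary patterns that couple the replicas inequivalently or support a nontrivial cocycle; positivity of $Q(\theta)$ is precisely what discretizes the possibilities and renders the list finite. Granting this reduction, the count is immediate: for $p$ prime the subgroups of $\Z_p^2$ are the trivial group, the full group, and the $p+1$ lines $\langle(1,c)\rangle$ ($c\in\Z_p$) together with $\langle(0,1)\rangle$, totalling $p+3$; this yields the upper bound.

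Finally I would close the classification by realizability. For each $K$ I would write down a local channel whose effective logical map $Q$ reproduces $G_K$: the identity map for $K=\{0\}$, recovering the pure toric code of Eq.~\eqref{eq:xp_TC}; the erasure map for $K=\Z_p^2$, whose Choi matrix gives $q_{\Ts,\Tt}\propto\delta(\Ts-\Tt)\,\mathbb{1}$ and hence the ket–bra pairing $\prod_k\delta(\Tt_{2k-1}+\Tt_{2k})$; and a measurement projection onto the eigenbasis of the Wilson-loop operator dual to $K$ for each of the $p+1$ lines. Verifying that each of these solves Eq.~\eqref{eq:xp_XXXX_Qq} and recording its channel type (one unitary, one erasure, and $p+1$ projections) establishes the lower bound and matches the entries of Table~\ref{tab:TCcondensation}, completing the proof.
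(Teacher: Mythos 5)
Your overall framework---classifying solutions by the boundary data $(G,\chi)$, constraining the blocks $q_{\Ts\Tt}(\theta)$ via Choi-positivity, Hermiticity, trace, and replica-cyclicity, and then exhibiting one channel per class---is the same skeleton as the paper's proof, and your realizability (lower-bound) direction is essentially correct: the $p+3$ channels you list (identity, erasure, and $p+1$ ``line'' decoherences) match Table~\ref{tab:TCcondensation}, and your organizing principle of subgroups of $\Z_p^2$ (trivial, full, and the $p+1$ lines) is exactly the anyon-decoherence picture the paper arrives at. Your group-theoretic count is also arithmetically consistent with the paper's $1+1+p+1$ tally over its Cases 1, 2, 3, 4.

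However, there is a genuine gap in your upper bound, located precisely where you wrote ``I expect $\chi\equiv 1$ on $G$, as in the pure toric-code computation.'' This expectation is false, and it is inconsistent with your own final count. The $p-1$ solutions of the paper's Case~3b carry a \emph{nontrivial} cocycle: their expectation tensor is $\delta(\Tt_1+\dots+\Tt_{2n})\prod_{i<j}I_{\Tt_i,\Tt_j}^c$ with $c\neq 0$ [Eq.~\eqref{eq:xp_case_em}], so the boundary state is in a nontrivial SPT phase of $G$. Moreover, Cases~3a and~3b share the \emph{same} unbroken subgroup $G=\langle x_1^\phd x_i^{-1}\rangle$; they are distinguished only by $\chi$. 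So if $\chi$ were forced to be trivial, solutions would be classified by $G$ alone and your $p+1$ lines would collapse to at most three classes (giving $4$, not $p+3$); conversely, once you admit nontrivial $\chi$, you have no argument bounding which classes in $H^2(G,\mathrm{U}(1))\cong\Z_p^{(2n-1)(n-1)}$ are realizable---and that bound is the hard core of the theorem. In the paper this is Prop.~\ref{prop:case_mf_same_m}, proved in App.~\ref{sec:TC_case_mf}: Choi-positivity of $Q(\theta)$, applied to carefully chosen principal submatrices of the reshaped expectation tensor (the ``X3-tensor''), forces the cocycle data $\ccc_{i,j,k}$ to be monochromatic, i.e.\ determined by a single $c\in\Z_p$, which is exactly what cuts the enormous cohomology group down to $p$ classes. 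Your sketch contains no substitute for this step, and as stated it cannot be completed because its premise eliminates the very solutions it must count. A secondary (related) slip: for $p>2$ and $c\neq 0$ the Kraus operators $X^\Ts Z^{-c\Ts}$ do not commute, so those channels are partial depolarizations rather than ``measurement projections onto an eigenbasis''---which is also why those phases are quantum, not classical, topologically ordered.
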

Our result is that the boundary state can admit only one of $p+3$ possible SPT orders, each of these order representing a distinct mixed state topological order.
For each case the error channel (followed by syndrome measurement and postselection) can be described by a specific class of quantum channels.
When $p$ is not a prime, the SPT orders/quantum channels found in this section remain valid, however there are additional solutions not covered in Table~\ref{tab:TCcondensation}.

The constraints are as follows.
\begin{itemize}
\item
    In sums over $\theta$ [e.g.\ Eq.~\eqref{eq:xp_XXXX_Qq}], we only consider $\theta$ with $\Prob(\theta) > 0$.

    This allows us to ignore terms with $\Pr(\theta)^n = 0$ where $Q$ cannot be constrained.  However, those zero terms do not contribute to any expectation values or physical processes.

\item Hermiticity and positivity.
	\begin{align} Q_{\Ti,\Ts\,;\,\Tj,\Tt}^\ast(\theta) = Q_{\Tj,\Tt\,;\,\Ti,\Ts}^\phd(\theta) \end{align}
	and
	\begin{align} \label{eq:Q_positive} Q(\theta) \succeq 0 \end{align}
	when $Q$ is written as a matrix with rows indexed by $(\Ti,\Ts)$ and columns indexed by $(\Tj,\Tt)$.

	The Hermiticity condition is $q_{\Ts\Tt}^\dag = q_{\Tt\Ts}^\phd$ phrased in terms of the $q$-matrices.
	The positivity condition means that any principal submatrix of $Q$ is also positive.
	For example, $q_{\Ts\Ts} \succeq 0$, and
	\begin{align} \label{eq:q2_positive} \begin{bmatrix} q_{\Ts\Ts} & q_{\Ts\Tt} \\ q_{\Tt\Ts} & q_{\Tt\Tt} \end{bmatrix} \succeq 0 . \end{align}
	which in turn implies%
	\footnote{$\operatorname{im} A$ denote the image of matrix $A$; the vector space spanned by the columns of $A$.}
	\begin{align} \label{eq:column_inclusion} \operatorname{im} q_{\Ts\Tt} \subseteq \operatorname{im} q_{\Ts\Ts} \,. \end{align}
	(See Lemma~\ref{lem:column_inclusion}.)

\item Normalization (trace).
    \begin{align} \tr q_{\Ts\Tt}(\theta) = \delta(\Ts-\Tt) . \end{align}

\item SPT order.
As discussed in Sec.~\ref{sec:PEPS}, the $n$-replica PEPS network admits $\Z_p^{2n}$ symmetry in its auxiliary degrees of freedom.
Denote the symmetry generator on the $k$\textsuperscript{th} copy of the TC state as $x_k$, that is, $\Z_p^{2n}$ is the Abelian group generated by $x_1, \dots, x_{2n}$ such that $x_k^p = 1$.

Let $\Tt_1, \Tt_2, \dots, \Tt_{2n} \in \Z_p[\Th] \oplus \Z_p[\Tv]$ be loops on the torus.
Decompose $\Tt_k = a_k \Th + b_k \Tv$ into horizontal and vertical components.
Let $a = \prod_k x_k^{a_k} \in \Z_p^{2n}$ and $b = \prod_k x_k^{b_k} \in \Z_p^{2n}$.%
    \footnote{For the Abelian group of loops $\Z_p[\Th] \oplus \Z_p[\Tv]$, we use the additive notation $(+)$ to denote group composition.  For the symmetry group of the tensor network $\Z_p^{2n}$ (also Abelian), we use the multiplicative notation.}
The $X$-loop expectation tensor is given by
\begin{align} \label{eq:xp_XXXX_sptorder} \quad
    \bigXp{ \auxX_1^{\Tt_1} \cdots \auxX_{2n}^{\Tt_{2n}} }
    &= \begin{cases} \chi(a,b) & \text{$a \in G$ and $b \in G$},
        \\ 0 & \text{$a \notin G$ or $b \notin G$}, \end{cases}
\end{align}
for some group $G \subseteq \Z_p^{2n}$.
$\chi: G \times G \to \mathbb{C}^\ast$ is an alternating bimultiplicative function, that is,
$\chi(ab,c) = \chi(a,c) \chi(b,c)$, $\chi(a,bc) = \chi(a,b) \chi(a,c)$, and $\chi(a,a) = 1$.

These conditions implies that $\chi(a,b)^p = 1$ for all $a,b \in G$,
with $\chi(1,a) = +1$ and $\chi(a,b) = \chi(b,a)^{-1} = \chi(b,a)$.
There is a one-to-one correspondence between functions $\chi$ and SPT orders of $G$.
When there are only logical strings in one direction---that is, $\{\Tt_1,\dots,\Tt_{2n}\}$ consists of only scalar multiples of some element $\Tl$---the expectation value only depends on whether $a$ (or $b$) is in $G$, and not on the SPT class.

\end{itemize}

\begin{lemma} \label{lem:Qrestrictions}
Subject to the constraints, the following are true.
\begin{enumerate}[label={\it(\arabic*)}, itemsep=0pt, parsep=4pt, topsep=-2pt]
\item Cyclic symmetry.
If $x_{i_1} x_{i_2} \cdots x_{i_m} \in G$,
then $x_{i_1+2k} x_{i_2+2k} \cdots x_{i_m+2k} \in G$ for all $k\in\Z$.
(Here we use cyclic labelling for group generators $x_i = x_{i+2n\Z}$.)

\item $x_1^\phd x_2^{-1} x_3^\phd x_4^{-1} \cdots x_{2n}^{-1} \in G$.

\item If $x_1^\phd x_2^{-1} \in G$, then $q_{\Ts\Ts}(\theta) = q_{00}(\theta)$ for all loops $\Ts$.

\item If $x_1^\phd x_2^{-1} \notin G$, then the set $\bigl\{ q_{\Ts\Ts}(\theta) \,\big|\, \text{loops $\Ts$} \bigr\}$ are mutually orthogonal projectors.

\item $x_k^m \notin G$ for any $k$ and $1 \leq m < p$.

\item If $x_2^\phd x_3^{-1} \notin G$, then $q_{\Ts\Tt}(\theta) = 0$ for $\Ts\neq\Tt$.

\end{enumerate}
\end{lemma}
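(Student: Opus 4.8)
The plan is to obtain each of the six statements by evaluating the master identity Eq.~\eqref{eq:xp_XXXX_Qq} on carefully chosen loop configurations $\Tt_1,\dots,\Tt_{2n}$ and comparing against the SPT form Eq.~\eqref{eq:xp_XXXX_sptorder}. Writing $D \defeq \sum_\theta \Prob(\theta)^n \tr(q_{00}^n)$ for the common denominator, the two working engines are: (i) cyclicity of the trace in the numerator of Eq.~\eqref{eq:xp_XXXX_Qq}; and (ii) the fact that when the inserted loops are arranged so that every $q$-factor is a \emph{diagonal} block $q_{\Ts\Ts}(\theta)$ (or $q_{00}$), or forms a Frobenius norm $\lVert\cdot\rVert_F^2$, each summand $\Prob(\theta)^n\tr(\cdots)$ is a trace of a product of positive-semidefinite matrices and hence nonnegative. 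A vanishing expectation then forces each summand, and ultimately the underlying $q$-matrix product, to vanish pointwise in $\theta$; the column-inclusion relation Eq.~\eqref{eq:column_inclusion} upgrades statements like $q_{\Ts\Tt}q_{00}=0$ to $q_{\Ts\Tt}=0$. Throughout, primality of $p$ is used so that a subgroup of $\Z_p^{2n}$ either contains a cyclic generator $\langle g\rangle$ or meets it trivially.

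For (1) I probe a word $g=x_{i_1}\cdots x_{i_m}$ with purely horizontal loops ($\Tt_k=a_k\Th$, $a_k$ the exponent of $x_k$ in $g$), so $b=1\in G$ and Eq.~\eqref{eq:xp_XXXX_sptorder} gives $\bigXp{\cdots}=[g\in G]$. Cyclically permuting the $n$ factors in Eq.~\eqref{eq:xp_XXXX_Qq} leaves the trace invariant but implements the layer shift $\Tt_k\mapsto\Tt_{k+2}$, so $[g\in G]=[\mathrm{shift}_2(g)\in G]$, which is exactly (1). For (2) I take $\Tt_k=(-1)^{k+1}\Th$, collapsing every factor to $q_{\Th\Th}(\theta)$, so the numerator is $\sum_\theta\Prob^n\tr(q_{\Th\Th}^{\,n})$; since $q_{\Th\Th}\succeq0$ with $\tr q_{\Th\Th}=1\neq0$, each $\tr(q_{\Th\Th}^{\,n})>0$, the expectation is strictly positive, and Eq.~\eqref{eq:xp_XXXX_sptorder} forces $x_1x_2^{-1}\cdots x_{2n}^{-1}\in G$.

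For (4) and (6) I run the positivity engine. Assuming $x_1x_2^{-1}\notin G$ (so by (1) every $x_{2i-1}x_{2i}^{-1}\notin G$), I probe $\tr(q_{\Ts\Ts}^{\,n-1}q_{\Tt\Tt})$; using the in-$G$ element of (2) the group label becomes $g^{c}(x_{2n-1}x_{2n}^{-1})^{c'-c}$ (with $\Ts=c\Th+d\Tv$, $\Tt=c'\Th+d'\Tv$), which lies in $G$ only when $\Ts=\Tt$. For $\Ts\neq\Tt$ the expectation vanishes, nonnegativity gives $q_{\Ts\Ts}q_{\Tt\Tt}=0$ per $\theta$, and combined with $\tr q_{\Ts\Ts}=1$ a rank/dimension count in the $p^2$-dimensional space makes the $q_{\Ts\Ts}$ mutually orthogonal rank-one projectors. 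For (6), assuming $x_2x_3^{-1}\notin G$, I set $\Tt_2=-\nu,\Tt_3=\nu$ (rest zero) so that the product is $\tr(q_{\nu,0}^\dagger q_{\nu,0}\,q_{00}^{\,n-2})=\lVert q_{\nu,0}\,q_{00}^{(n-2)/2}\rVert_F^2$ while the label is $(x_2x_3^{-1})^{-\nu_1}$ (resp.\ $-\nu_2$ for $b$); vanishing for $\nu\neq0$ plus $\operatorname{im}q_{0,\nu}\subseteq\operatorname{im}q_{00}$ forces $q_{\nu,0}=0$, and the remaining off-diagonals $q_{\Ts\Tt}$ follow by shifting the reference state $\ket0\to\ket{\Ts}$ (a uniform background $\auxX$-string, which is an element of $G$ by (2) and hence leaves all expectations invariant).

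For (3) and (5) the argument is a saturation analysis. When $x_1x_2^{-1}\in G$, the alternating identity $\chi(x_1x_2^{-1},x_1x_2^{-1})=1$ gives $\sum_\theta\Prob^n\tr(q_{\Ts\Ts}q_{00}^{n-1})=D$, while (2) gives $\chi(g,g)=1$ and hence $\sum_\theta\Prob^n\tr(q_{\Ts\Ts}^{\,n})=D$; feeding these into the matrix Hölder inequality $\tr(q_{\Ts\Ts}q_{00}^{n-1})\le(\tr q_{\Ts\Ts}^{\,n})^{1/n}(\tr q_{00}^{\,n})^{(n-1)/n}$ (and a second Hölder over $\theta$) pins the chain at equality, forcing $q_{\Ts\Ts}(\theta)=q_{00}(\theta)$. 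For (5) I reduce, using $p$ prime, to $x_1\notin G$ (and $x_2\notin G$): a Cauchy--Schwarz bound $\bigXp{\auxX_1^\Th}^2\le\bigXp{\auxX_1^\Th\auxX_2^{-\Th}}$ shows $x_1\in G$ would give $x_1x_2^{-1}\in G$, hence $G=\Z_p^{2n}$ by (1); then (3) makes every $q_{\Ts\Ts}=q_{00}$, saturation of the same Cauchy--Schwarz forces $q_{\Ts\Tt}=\chi(\cdot,\cdot)\,q_{00}$, and trace preservation $\tr q_{\Ts\Tt}=\delta(\Ts-\Tt)=0$ for $\Ts\neq\Tt$ then demands $\chi=0$, contradicting that $\chi$ is a root of unity.

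The step I expect to be the main obstacle is precisely the passage from a \emph{globally} summed vanishing (or saturated) expectation to a \emph{per-$\theta$} matrix identity: this works only because the loop choices keep each summand manifestly nonnegative, so the loops in (4) and (6) must be engineered to produce diagonal blocks or Frobenius norms rather than generic off-diagonal products, and Eq.~\eqref{eq:column_inclusion} is needed to remove the surviving $q_{00}$ factors. The most delicate bookkeeping is ensuring the probed group label genuinely witnesses (non)membership in $G$ uniformly over all pairs $\Ts\neq\Tt$ in (6)—where isolating a general off-diagonal block tends to drag in uncontrolled ``distance-three'' elements such as $x_1x_4^{-1}$—which is why I rely on first settling the $\Tt=0$ case and then invoking the reference-state relabeling symmetry.
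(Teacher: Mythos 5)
Your proposal is correct, and for items \textit{(1)}, \textit{(2)}, \textit{(4)}, and \textit{(6)} it follows essentially the paper's own route: probe Eq.~\eqref{eq:xp_XXXX_Qq} with loop configurations engineered so that every factor is a diagonal block or a Frobenius-norm square, compare against Eq.~\eqref{eq:xp_XXXX_sptorder}, use per-$\theta$ nonnegativity to convert a vanishing sum into a per-$\theta$ matrix identity, and finish with the column-inclusion relation Eq.~\eqref{eq:column_inclusion} together with Eq.~\eqref{eq:2positive0trace}. (Your ``shift the reference state'' step in \textit{(6)}, once unpacked, inserts exactly the paper's configuration $\auxX_1^{\Ts}\auxX_2^{-\Tt}\auxX_3^{\Tt}\auxX_4^{-\Ts}\cdots\auxX_{2n}^{-\Ts}$, so it is cosmetic; note however that the background string only preserves the \emph{vanishing pattern} of expectations, not their values, since it can contribute cocycle phases $\chi(g_0,\cdot)$ --- which is all your argument actually uses.) Where you genuinely diverge is in \textit{(3)} and \textit{(5)}. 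For \textit{(3)} the paper avoids any equality analysis: it observes that $x_{2j-1}^{\phantom{-1}}x_{2j}^{-1}\in G$ makes $\sum_\theta\Prob^n\tr\bigl[q_{\Ts_1\Ts_1}\cdots q_{\Ts_n\Ts_n}\bigr]=D$ for \emph{every} selection $\Ts_k\in\{0,\Th\}$, then takes the linear combination $\sum_\theta\Prob^n\tr\bigl[(q_{00}-q_{\Th\Th})^2q_{00}^{n-2}\bigr]=0$ and concludes from positivity alone. Your route needs only two expectations but requires the saturation case of the Schatten--H\"older inequality; this is available from the paper's Lemma~\ref{lem:matrix_power_trace_ineq} applied to $A^n$, $B^n$ with $\mu=(n-1)/n$ (equality iff $A\propto B$), so your argument can be made fully rigorous with the paper's own appendix. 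For \textit{(5)} your Cauchy--Schwarz saturation is essentially the content of the paper's Lemma~\ref{lem:block2tr}, and both proofs hinge on the identical final contradiction --- $\tr q_{\Ts0}=0$ versus $q_{\Ts0}$ being a unimodular multiple of the unit-trace $q_{00}$ (your ``$\chi=0$'' is this trace mismatch, with $\chi$ meaning the proportionality phase rather than the cocycle). Net comparison: the paper's versions buy self-containedness, with every inequality proved with its equality condition in the appendix, while yours buys economy, running a single saturation principle through \textit{(3)} and \textit{(5)} with fewer probe configurations.
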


\noindent
In what follows, let
\begin{align}
    D \defeq \sum_\theta \Prob^n \tr q_{00}^n
\end{align}
denote the denominator in~\eqref{eq:xp_XXXX_Qq}.
Crucially, $D$ is positive.
We will often use the property that if two matrices $A,B \succeq 0$, then
\begin{align} \label{eq:2positive0trace}
    \tr AB = 0 \;\Leftrightarrow\; BA = 0 \;\Leftrightarrow\; \operatorname{im}A \subseteq \ker B .
\end{align}

\begin{proof}[Proof to \ref{lem:Qrestrictions}(1)]
This is evident by considering the form of the expectation tensor $\bigXp{ \auxX_1^{\Tt_1} \cdots \auxX_{2n}^{\Tt_{2n}} }$ combining Eq.~\eqref{eq:xp_XXXX_Qq} and Eq.~\eqref{eq:xp_XXXX_sptorder}.
\end{proof}

\begin{proof}[Proof to \ref{lem:Qrestrictions}(2)]
As $\tr q_{\Ts\Ts} = 1$ for each $\Ts$ and $\theta$, we have the bound $1 \geq \tr q_{\Ts\Ts}^n \geq p^{2(1-n)}$.
Hence $\bigXp{\auxX_1^\Ts \auxX_2^{-\Ts} \auxX_3^{\Ts} \cdots \auxX_{2n}^{-\Ts}} = \frac{1}{D} \sum_\theta \Prob^n \tr q_{\Ts\Ts}^n \geq p^{2(1-n)}$ is positive.
Via Eq.~\eqref{eq:xp_XXXX_sptorder}, the expression must in fact be equal to $1$ and $x_1^\phd x_2^{-1} x_3 \cdots x_{2n}^{-1} \in G$.
\end{proof}

\begin{proof}[Proof to \ref{lem:Qrestrictions}(3)]
By {\it(1)}, $x_{2j-1}^\phd x_{2j}^{-1} \in G$ for all $j\in\Z$.
Hence
$\bigXp{\auxX_1^{\Ts_1} \auxX_2^{-\Ts_1} \auxX_3^{\Ts_2} \auxX_4^{-\Ts_2} \cdots \auxX_{2n-1}^{\Ts_n} \auxX_{2n}^{-\Ts_n}} = +1$ for any selection of $\Ts_k \in \{0,\Th\}$,
which implies $\sum_\theta \Prob^n \tr\bigl[ q_{\Ts_1\Ts_1} q_{\Ts_2\Ts_2} \cdots q_{\Ts_n\Ts_n} \bigr] = D$.
The expressions
\begin{align} \label{eq:lem_Qrestrict_c2} \begin{aligned}
    \textstyle \sum_\theta \Prob^n \tr\bigl[ (q_{00}-q_{\Th\Th})^2 q_{00}^{n-2} \bigr] &= 0 ,
\\  \textstyle \sum_\theta \Prob^n \tr\bigl[ (q_{00}-q_{\Th\Th})^2 q_{\Th\Th}^{n-2} \bigr] &= 0 .
\end{aligned} \end{align}
follow because they are linear combinations of the previous above.
As $q_{00} \succeq 0$, $q_{\Th\Th} \succeq 0$, and $(q_{00}-q_{\Th\Th})^2 \succeq 0$,
every term in the summand of~\eqref{eq:lem_Qrestrict_c2} must be non-negative and therefore zero.
$(q_{00}-q_{\Th\Th})^2$ must be orthogonal [cf.~\eqref{eq:2positive0trace}] to both $q_{\bullet\bullet}$ while also part of the algebra that they generate, and hence $(q_{00}-q_{\Th\Th})^2 = 0$.

Repeating the same argument replacing $\Th$ for any other loop, we conclude that $q_{00}(\theta) = q_{\Ts\Ts}(\theta)$ for all $\Ts$.
\end{proof}

\begin{proof}[Proof to \ref{lem:Qrestrictions}(4)]
When $p$ is a prime, $x_1^\phd x_2^{-1} \notin G$ implies that $x_1^{\vphantom{-}m} x_2^{-m} \notin G$ for all $1 \leq m < p$ (since $m$ is invertible in the field $\Z_p$.)
Combining $x_1^\phd x_2^{-1} \notin G$ and {\it(2)}, $x_3^\phd x_4^{-1} \cdots x_{2n-1}^\phd x_{2n}^{-1} \notin G$.
So $0 = \bigXp{\auxX_1^\Ts \auxX_2^{-\Ts} \auxX_3^\Tt X_4^{-\Tt} \cdots \auxX_{2n}^{-\Tt}}$
$= \frac{1}{D} \sum_\theta \Prob^n \tr( q_{\Ts\Ts}^\phd q_{\Tt\Tt}^{n-1} )$ when $\Ts \neq \Tt$.
Since $q_{\Ts\Ts} \succeq 0$, each individual summand $\tr \bigl[ q_{\Ts\Ts}^\phd q_{\Tt\Tt}^{n-1} \bigr]$ must vanish.
Thus for each $\theta$, the $q_{\Ts\Ts}(\theta)$s are mutually orthogonal.
Since there are $p^2$ such $q$ matrices, each of which are $p^2 \times p^2$ matrices with unit trace, they must be orthogonal projectors.

The diagonal blocks take the general form
\begin{align} \label{eq:orthogonal_blocks} \begin{aligned}
    q_{00} &= U_\theta^\phd \begin{bmatrix}
    1&0&\cdots&0\\0&0&\cdots&0\\[-1ex]\vdots&\vdots&\ddots&\vdots\\0&0&\cdots&0 \end{bmatrix} U_\theta^\dag ,
\\
    q_{\Ts\Ts} &= U_\theta^\phd \begin{bmatrix}\phd
    \ddots&&&&\\&0&&&\\&&1&&\\&&&0&\\[-1ex]&&&&\ddots \end{bmatrix} U_\theta^\dag ,
\end{aligned} \end{align}
where $U_\theta$ are syndrome-dependent $p^2 \times p^2$ unitary matrices.
\end{proof}

\begin{proof}[Proof to \ref{lem:Qrestrictions}(5)]
It suffices to prove that $x_1 \notin G$, since $p$ is a prime.
Let $\Ts \neq 0$.
Observe that $\bigXp{\auxX_1^\Ts}
= \frac{1}{D} \sum_\theta \Prob^n \tr q_{\Ts0}q_{00}^{n-1}
= \frac{1}{D} \bigl[ \sum_\theta \Prob^n \tr q_{0\Ts} q_{00}^{n-1} \bigr]^\ast = \bigXp{\auxX_2^{-\Ts}}^\ast$
and hence $x_1 \in G \Leftrightarrow x_2 \in G$.

[Proof by contradiction.]
Suppose $x_k \in G$ for some $k$.
By {\it(1)} either $x_1$ or $x_2$ is also in $G$, and hence all $x_1, x_2, x_1^\phd x_2^{-1} \in G$.
From {\it(3)}, $q_{00} = q_{\Ts\Ts}$, and so [cf.\ Eq.~\eqref{eq:q2_positive}]
\begin{align}
    \begin{bmatrix} q_{00} & q_{0\Ts} \\ q_{\Ts0} & q_{00} \end{bmatrix} \succeq 0 .
\end{align}
By Lemma~\ref{lem:block2tr}, $\bigl| \tr\bigl( q_{\Ts0}^\phd q_{00}^{n-1} \bigr) \bigr| \leq \tr q_{00}^n$, with equality iff $q_{\Ts0} = e^{i\phi} q_{00}$ for some phase $\phi$.
However, equality cannot occur since $\tr q_{\Ts0} = 0 \neq 1 = \tr q_{00}$.
Therefore,
\begin{align}
\notag
	1 = \bigXp{\auxX_1^\Ts} &= \frac{1}{D} \mathord{\sum}_\theta \Prob^n \tr\bigl( q_{\Ts0}^\phd q_{00}^{n-1} \bigr)
\\\notag
	&\leq \frac{1}{D} \mathord{\sum}_\theta \Prob^n \Bigl| \tr\bigl( q_{\Ts0}^\phd q_{00}^{n-1} \bigr) \Bigr|
\\\notag
&   < \frac{1}{D} \sum_\theta \Prob^n \tr q_{00}^n
\\&= 1 ,
\end{align}
which is a contradiction.
\end{proof}

\begin{proof}[Proof to \ref{lem:Qrestrictions}(6)]
Let $\Ts \neq \Tt$.
Observe that the expectation value
$\bigXp{ \auxX_1^{+\Ts} \auxX_2^{-\Tt} \auxX_3^{+\Tt} \auxX_4^{-\Ts} \auxX_5^{+\Ts} \cdots \auxX_{2n}^{-\Ts} }
= \frac{1}{D}\sum_\theta \Pr^n \tr(q_{\Ts\Tt}q_{\Tt\Ts}q_{\Ts\Ts}^{n-2})$ vanishes because $x_2^\phd x_3^{-1} \notin G$.

Because $q_{\Ts\Ts}$ and $q_{\Ts\Tt}q_{\Tt\Ts}$ are both positive semidefinite, each summand $\tr(q_{\Ts\Tt}q_{\Tt\Ts}q_{\Ts\Ts}^{n-2})$ must non-negative and hence zero.
This means that the image of $q_{\Ts\Tt}q_{\Tt\Ts}$ is both orthogonal [cf.~\eqref{eq:2positive0trace}] and subspace [cf.~\eqref{eq:column_inclusion}] to $\operatorname{im} q_{\Ts\Ts}$, and must be zero as a result.
Therefore, $q_{\Ts\Tt} = 0$.
\end{proof}

\begin{table*}
\caption{Descendant phases of the $\Z_p$ TC: boundary SPT orders and logical quantum channels} \label{tab:TCcondensation}
\begin{minipage}{\textwidth} \renewcommand{\arraystretch}{1.4} \begin{tabular}{l @{\quad\;\;} c @{\quad} c @{\quad} c @{\quad} c}
\hline\hline \\[-3ex]
	& {\parbox{30ex}{ $X$-loop expectation tensor \\ $\bigXp{ \auxX_1^{\Tt_1} \auxX_2^{\Tt_2} \cdots \auxX_{2n-1}^{\Tt_{2n-1}} \auxX_{2n}^{\Tt_{2n}} }$ }} & {\parbox{22ex}{boundary symmetry\\subgroup generators}}
		& {\parbox{22ex}{code space channel\\$\R\M\E(\cdot)$}} & \parbox{13ex}{decoherence\\type}
\\[1.5ex]\hline
	Case~1
	& $ \delta(\Tt_1+\Tt_2) \delta(\Tt_3+\Tt_4) \cdots \delta(\Tt_{2n-1}+\Tt_{2n})$ & $\{x_{2i-1}^\phd x_{2i}^{-1}\}$
		& $q_{00}$ & $e+m$
\\
	Case~2
	& $\delta(\Tt_1+\Tt_2)\delta(\Tt_2+\Tt_3) \cdots \delta(\Tt_{2n}+\Tt_1)$ & $ x_1^\phd x_2^{-1} \cdots x_{2n-1}^\phd x_{2n}^{-1}$
		& $\frac{1}{p^2}\sum_{\Ts} Z^\Ts (\cdot) Z^{-\Ts}$ & $e$
\\
	Case~3a
	& $\delta(\Tt_1+\Tt_2+\dots+\Tt_{2n})$ & $\{x_1^\phd x_i^{-1}\}$
		& $\frac{1}{p^2}\sum_{\Ts} X^\Ts (\cdot) X^{-\Ts}$ & $m$ 
\\
	Case~3b
	& $ \delta(\Tt_1+\Tt_2+\dots+\Tt_{2n}) \prod_{i<j} I_{\Tt_i,\Tt_j}^c$ & $\{x_1^\phd x_i^{-1}\}$
		& $\frac{1}{p^2}\sum_{\Ts} X^{\Ts} Z^{-c\Ts} (\cdot) Z^{c\Ts} X^{-\Ts}$ & $e^{-c}m$
\\
	Case~4
	& $\delta(\Tt_2+\Tt_3) \delta(\Tt_4+\Tt_5) \cdots \delta(\Tt_{2n}+\Tt_1)$ & $\{ x_{2i}^\phd x_{2i+1}^{-1}\}$
		& $(\cdot)$ & $-$
\\\hline\hline
\end{tabular} \end{minipage}
\\[0.8ex]
\begin{minipage}{0.96\linewidth} \raggedright
	A classification of the possible phases that results when a local error channels is applied to the $\Z_p$ TC for $p$ prime.
	This classification is based on the boundary SPT order in the $n$-replica network representing $\Tr \M\E(\rho_\text{TC})^{n}$.
	Each case represents a boundary SPT order, except for Case~3b which encompasses $p-1$ possible solutions, one for each integer $c \in [1,p-1]$.
	For each case, we provide the effective quantum channel acting on the code space under optimal recovery.
	The column under ``decoherence type''---consistent with labelling of Ref.~\onlinecite{2023BaoBoundarySPT}---indicate the type(s) of Wilson loop that appear as Kraus operators.
\end{minipage}
\end{table*}

We now consider the four cases based on the presence/absence of $x_1^\phd x_2^{-1}$ and $x_2^\phd x_3^{-1}$ in the group $G$.
\begin{align*}
	\begin{array}{|c|c|c|} \hline & x_2^\phd x_3^{-1} \notin G & x_2^\phd x_3^{-1} \in G \\\hline x_1^\phd x_2^{-1} \notin G & \text{Case~2} & \text{Case~4} \\\hline x_1^\phd x_2^{-1} \in G & \text{Case~1} & \text{Case~3} \\\hline \end{array}
\end{align*}
Cases~1, 2, and 4 each give one solution for $\bigXp{ \auxX_1^{\Tt_1} \cdots }$, while Case~3 contains $p$ total solutions.
We summarize the results in Table~\ref{tab:TCcondensation}. 

\vspace{1em}\noindent\textbf{Case~1:} $x_1^\phd x_2^{-1} \in G$ and $x_2^\phd x_3^{-1} \notin G$.

From Lemma~\ref{lem:Qrestrictions}\textit{(3,6)}
the diagonal blocks are all identical while the off diagonal blocks vanish.
The channel $Q$ becomes
\begin{align}
	Q(\theta) = \begin{bmatrix} q_{00}&&\text{\raisebox{-5pt}{\Large 0}} \\ &\ddots& \\ \text{\Large 0}&&q_{00} \end{bmatrix} .
\end{align}
This channel takes every state to $q_{00}$, completely erasing any logical information held in the code.

Directly applying~\eqref{eq:xp_XXXX_Qq} gives
\begin{align} \label{eq:xp_case_triv} \begin{aligned}
&   \bigXp{ \auxX_1^{\Tt_1} \auxX_2^{\Tt_2} \cdots \auxX_{2n}^{\Tt_{2n}} }
\\  &= \delta(\Tt_1+\Tt_2) \, \delta(\Tt_3+\Tt_4) \,\cdots\, \delta(\Tt_{2n-1}+\Tt_{2n}) .
\end{aligned} \end{align}
Thus, the boundary state symmetry group is $G = \bigl\langle \{x_{2k-1}^\phd x_{2k}^{-1} \,|\, 1\leq k\leq n \} \bigr\rangle \cong \Z_p^{n}$ in its trivial SPT phase.

As we will show in the following section, this is corresponds to a trivial topological mixed state.

\vspace{1em}\noindent\textbf{Case~2:} $x_1^\phd x_2^{-1} \notin G$ and $x_2^\phd x_3^{-1} \notin G$.

From Lemma~\ref{lem:Qrestrictions}\textit{(4,6)}, the diagonal blocks are rank-1 orthogonal matrices while the off-diagonal blocks vanishes.
Writing Eq.~\eqref{eq:orthogonal_blocks} in terms of indices:
\begin{align} \label{eq:Q_case_e}
    Q_{\Ti,\Ts;\Tj,\Tt}(\theta) &= (U_\theta^\phd)_{\Ti,\Ts}^\phd \delta(\Ts-\Tt) (U_\theta^\ast)_{\Tj,\Tt}^\phd
\end{align}
for unitaries $U_\theta$.
The quantum channel [see Eq.~\eqref{eq:Qchannel}] thus strips $\rho$ of its off-diagonal elements [in the $\ket{\Ts}$ basis~\eqref{eq:Xpsi_basis}] and then applies $U_\theta$.
The action of this channel is a measurement of $Z$-loop operators,
\begin{align} \label{eq:Qmap_case_e}
    Q(\theta)[\rho] = U_\theta^\phd \left( \frac{1}{p^2}\sum_{\Ts} Z^\Ts \rho \, Z^{-\Ts} \right) U^\dag_\theta \,.
\end{align}
 
Via Eq.~\eqref{eq:xp_XXXX_Qq},
\begin{align} \label{eq:xp_case_e}
    \bigXp{ \auxX_1^{\Tt_1} \auxX_2^{\Tt_2} \cdots \auxX_{2n}^{\Tt_{2n}} }
    = \begin{cases} 1 & (-1)^k\Tt_k \text{ are all equal} , \\ 0 & \text{otherwise} . \end{cases}
\end{align}
Therefore, the boundary state has $G \cong \Z_p$ symmetry generated by $x_1^\phd x_2^{-1} \cdots x_{2n-1}^\phd x_{2n}^{-1}$.

Later we show that this is corresponds to a classical topological mixed state.

\vspace{1em}\noindent\textbf{Case~3:} $x_1^\phd x_2^{-1} \in G$ and $x_2^\phd x_3^{-1} \in G$.

From Lemma~\ref{lem:Qrestrictions}{\it(1)}, $x_k^\phd x_{k+1}^{-1} \in G$ for all $1 \leq k < 2n$, and hence $x_i^\phd x_j^{-1} \in G$ for all $1 \leq i < j \leq 2n$.
From Lemma~\ref{lem:Qrestrictions}{\it(5)}, $x_i^m \notin G$ for $0 < m < p$. Therefore, $G \cong \Z_p^{2n-1}$ is the group of all elements $x_1^{a_1} \cdots x_{2n}^{a_{2n}}$ such that $\sum_k a_k \equiv 0 \pmod{p}$.

The SPT phase of $G$ are classified by the second cohomology group $H^2(G,\mathrm{U}(1)) \cong \Z_p^{(2n-1)(n-1)}$.
We can label elements of the cohomology group by specifying $\chi(a,b)$ for every pair of elements in a minimal generating set for the Abelian group $G$.

\begin{proposition} \label{prop:xp_case_mf_Im_form}
Any $X$-loop expectation tensor consistent with an SPT order [cf.\ Eq.~\eqref{eq:xp_XXXX_sptorder}] must take on the form
\begin{align} \label{eq:xp_case_mf_Im_form}
    \bigl\langle \auxX_1^{\Tt_1} \auxX_2^{\Tt_2} \cdots \auxX_{2n}^{\Tt_{2n}} \bigr\rangle
    &= \delta(\Tt_1 {\mkern1mu+\mkern1mu} \Tt_2 {\mkern1mu+\mkern1mu} \dots {\mkern1mu+\mkern1mu} \Tt_{2n}) \prod_{i<j} \bigl(I_{\Tt_i,\Tt_j}\bigr)^{m_{i,j}}
\end{align}
for some $2n \times 2n$ antisymmetric integer matrix $m$ with vanishing diagonal.
[$I$ is the intersection form defined in Eq.~\eqref{eq:def_I}.]
\end{proposition}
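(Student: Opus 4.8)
The plan is to evaluate the alternating bimultiplicative function $\chi$ from \eqref{eq:xp_XXXX_sptorder} on an explicit generating set of $G$ and then recognize the answer as a product of intersection forms. First I would observe that the support condition ``$a\in G$ and $b\in G$'' is exactly the factor $\delta(\Tt_1+\dots+\Tt_{2n})$: writing $\Tt_k=a_k\Th+b_k\Tv$, one has $\sum_k\Tt_k=(\sum_k a_k)\Th+(\sum_k b_k)\Tv$, which vanishes in $\Z_p[\Th]\oplus\Z_p[\Tv]$ precisely when $\sum_k a_k\equiv\sum_k b_k\equiv 0\pmod p$, i.e.\ $a,b\in G$. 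Hence whenever the $\delta$ factor vanishes both sides of \eqref{eq:xp_case_mf_Im_form} are zero, and it remains only to match the nonzero values on the support $\sum_k\Tt_k=0$.

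On that support I would pass to the generators $g_k=x_k^\phd x_{2n}^{-1}$ for $1\le k\le 2n-1$, which generate $G\cong\Z_p^{2n-1}$. A one-line check gives $a=\prod_{k=1}^{2n-1}g_k^{a_k}$ (and similarly $b=\prod_{l=1}^{2n-1}g_l^{b_l}$), since the residual power of $x_{2n}$ is $-\sum_{k<2n}a_k\equiv a_{2n}$. Bimultiplicativity then yields $\chi(a,b)=\prod_{k,l=1}^{2n-1}\chi(g_k,g_l)^{a_kb_l}$. Because every value of $\chi$ is a $p$-th root of unity, I write $\chi(g_k,g_l)=\zeta_p^{c_{kl}}$, where $c$ is antisymmetric with vanishing diagonal over $\Z_p$ (this being exactly the alternating property $\chi(g_k,g_k)=1$ together with $\chi(g_l,g_k)=\chi(g_k,g_l)^{-1}$).

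Finally I would define the $2n\times 2n$ antisymmetric integer matrix $m$ by lifting the $c_{ij}$ to integers in the top-left $(2n-1)\times(2n-1)$ block, setting $m_{i,2n}=m_{2n,i}=0$, and antisymmetrizing. Using $I_{\Tt_i,\Tt_j}=\zeta_p^{\Tt_i\times\Tt_j}=\zeta_p^{a_ib_j-b_ia_j}$ from \eqref{eq:def_I}–\eqref{eq:cross_product}, I would invoke the identity $\sum_{i<j}m_{ij}(a_ib_j-b_ia_j)=\sum_{i,j}m_{ij}a_ib_j$, valid for any antisymmetric $m$, to collapse the product into $\prod_{i<j}(I_{\Tt_i,\Tt_j})^{m_{ij}}=\zeta_p^{\sum_{k,l\le 2n-1}c_{kl}a_kb_l}=\chi(a,b)$. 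Combined with the first step this establishes \eqref{eq:xp_case_mf_Im_form}.

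I do not expect a serious conceptual obstacle here; the whole content is that a $\Z_p$-valued alternating form on $G$ can be re-expressed through intersection numbers of the loops. The one thing to get right is the index bookkeeping: the manipulation converting $\sum_{i<j}$ into the full double sum, and keeping the horizontal components $a_k$ and vertical components $b_l$ in the correct slots so that the mixed products of the intersection form align with those generated by $\chi$. I would also note that primality of $p$ is not actually needed for this proposition—it enters only through the earlier determination (Case~3) that $G$ has the stated form.
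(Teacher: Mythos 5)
Your proof is correct and follows essentially the same route as the paper's: both split off the delta function as the condition $a,b\in G$, expand $\chi(a,b)$ bimultiplicatively over a generating set of $G$ obtained by referencing all layers to one fixed layer, and build $m$ by zeroing out the row/column of that reference layer. The only differences are cosmetic — you reference layer $2n$ where the paper references layer $1$, and you spell out the antisymmetrization identity $\sum_{i<j}m_{ij}(a_ib_j-b_ia_j)=\sum_{i,j}m_{ij}a_ib_j$ that the paper leaves as ``a direct calculation'' (your closing remark that primality of $p$ is not needed is also confirmed in the paper's appendix).
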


\begin{proof}
Let $\Tt_1, \Tt_2, \dots, \Tt_{2n} \in \Z_p[\Th] \oplus \Z_p[\Tv]$ be loops.
and $\Tt_k = a_k\Th + b_k\Tv$.
Define $a = \prod_k x_k^{a_k}$ and $b = \prod_k x_k^{b_k}$.
Group elements $a, b \in G$ iff $\sum_k a_k = \sum_k b_k = 0$ iff $\sum_k \Tt_k = 0$.
Thus
\begin{align}
    \bigXp{ \auxX_1^{\Tt_1} \auxX_2^{\Tt_2} \cdots \auxX_{2n}^{\Tt_{2n}} }
    &= \delta(\Tt_1+\Tt_2+\dots+\Tt_{2n}) \, \chi(a,b) .
\end{align}

Let $( g_2=x_2^\phd x_1^{-1}, g_3=x_3^\phd x_1^{-1}, \dots, g_{2n}=x_{2n}^\phd x_1^{-1})$ be a generating set for $G$.
If $a \in G$, then it can be expressed as $a = \prod_{k=2}^{2n} g_k^{a_k}$ (likewise for $b)$.
Thus for any pair $a,b, \in G$,
the bimultiplicative function is
\begin{align} \begin{aligned}
    \chi(a,b)
    &= \prod_{i=2}^{2n}\prod_{j=2}^{2n} \chi(g_i,g_j)^{a_ib_j}
\\  &= \prod_{2 \leq i < j \leq 2n} \mkern-6mu \chi(g_i,g_j)^{a_ib_j-b_ia_j} \mkern2mu.
\end{aligned} \end{align}
In addition, observe that the intersection matrix can be written as
$I_{\Tt_i,\Tt_j} = \zeta_p^{a_ib_j-b_ia_j}$.

[Proof by construction.]
Given an alternating bimultiplicative function $\chi: G \times G \to \{1, \zeta_p, \dots, \zeta_p^{-1}\}$,
Assign $m_{i,j} = 0$ if either $i$ or $j = 1$, and $\zeta_p^{m_{i,j}} = \chi(g_i,g_j)$ otherwise.
A direct calculation combining the previous two equations shows that $\chi(a,b) = \prod_{1 \leq i < j \leq 2n} \bigl(I_{\Tt_i,\Tt_j}\bigr)^{m_{i,j}}$.
\end{proof}

We note that there are multiple choices of the matrix $m$ which produce the same expectation tensor~\eqref{eq:xp_case_mf_Im_form}.
Indeed, there are more possible $m$-matrices (with $\binom{2n}{2}$ independent $\Z_p$ elements) than the possible $G$-SPT orders (with $\binom{2n-1}{2}$ $\Z_p$ generators for the second cohomology group of $G$) (see App.~\ref{sec:TC_case_mf}).

\begin{proposition} \label{prop:case_mf_same_m}
There are only $p$ solutions of the form~\eqref{eq:xp_case_mf_Im_form} consistent with Eq.~\eqref{eq:xp_XXXX_Qq} and the positivity of $Q$.
They are characterized by the matrix
\begin{align}
	m_{i,j} &= \begin{cases} c & i<j, \\ -c & i>j, \\ 0 & \text{otherwise}, \end{cases}
\end{align}
for $0 \leq c < p$.
\end{proposition}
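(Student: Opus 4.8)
The plan is to first pin down the block structure of the Choi matrix $Q(\theta)$ in Case~3, then feed it into the positivity constraints. Since $x_1^{\phd}x_2^{-1}\in G$ here, Lemma~\ref{lem:Qrestrictions}\textit{(3)} already gives $q_{\Ts\Ts}(\theta)=q_{00}(\theta)$ for every loop $\Ts$: all diagonal blocks coincide. To control the off\nobreakdash-diagonal blocks I would evaluate the $X$-loop expectation tensor on the ``palindromic'' configuration $\auxX_1^{\Ts}\auxX_2^{-\Tt}\auxX_3^{\Tt}\auxX_4^{-\Ts}\auxX_5^{0}\cdots\auxX_{2n}^{0}$, which by Eq.~\eqref{eq:xp_XXXX_Qq} equals $\tfrac1D\sum_\theta\Prob^n\tr\bigl(q_{\Ts\Tt}q_{\Tt\Ts}q_{00}^{n-2}\bigr)$. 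Because $q_{\Ts\Tt}q_{\Tt\Ts}=q_{\Ts\Tt}q_{\Ts\Tt}^\dag\succeq0$, and the $2\times2$ block positivity~\eqref{eq:q2_positive} (with both diagonal blocks equal to $q_{00}$) forces $q_{\Ts\Tt}q_{\Ts\Tt}^\dag\preceq q_{00}^2$, each summand lies in $[0,\tr q_{00}^n]$; meanwhile the SPT form~\eqref{eq:xp_case_mf_Im_form} makes the whole expression a $p$-th root of unity. A nonnegative real that is a root of unity must equal $1$, so the bound is saturated termwise, and the equality case of Lemma~\ref{lem:block2tr} yields $q_{\Ts\Tt}(\theta)=q_{00}(\theta)\,V_{\Ts\Tt}(\theta)$ with $V_{\Ts\Tt}$ unitary on $\operatorname{im}q_{00}(\theta)$.

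With this in hand I would extract linear relations on the antisymmetric integer matrix $m$. The basic palindrome above already forces the accompanying phase exponent to vanish; comparing with~\eqref{eq:xp_case_mf_Im_form} and choosing $\Ts,\Tt\in\{\Th,\Tv\}$ so that $\Ts\times\Tt=1$ gives $-m_{1,2}+m_{1,3}-m_{2,4}+m_{3,4}\equiv 0\pmod p$. Inserting an extra ``spectator'' pair $\auxX^{\Ts'}\auxX^{-\Ts'}$ on another position (which keeps the trace in the manifestly nonnegative form $\tr(q_{\Ts\Tt}q_{\Tt\Ts}q_{00}^{\,\cdots})$ while decorating it with new cross terms) produces further congruences after collecting the coefficients of the independent intersection numbers $\Ts\times\Tt$, $\Ts\times\Ts'$, $\Tt\times\Ts'$, each settable to $1$. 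I would organize these using the cyclic symmetry of Lemma~\ref{lem:Qrestrictions}\textit{(1)} and the gauge redundancy $m_{ij}\mapsto m_{ij}+\nu_i-\nu_j$, which leaves the tensor invariant because $\sum_{i<j}(\nu_i-\nu_j)\,\Tt_i\times\Tt_j=\bigl(\sum_i\nu_i\Tt_i\bigr)\times\bigl(\sum_k\Tt_k\bigr)$ vanishes on the constraint surface $\sum_k\Tt_k=0$.

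Solving this linear system modulo gauge is the crux of the argument, and I expect it to show that every consistent $m$ is cohomologous to a constant antisymmetric matrix $m_{ij}=c\operatorname{sgn}(j-i)$, leaving a single free parameter $c\in\Z_p$. This bounds the number of solutions by $p$; and since $\prod_{i<j}\bigl(I_{\Tt_i,\Tt_j}\bigr)^{c}$ takes $p$ distinct values on a suitable configuration, the candidates for different $c\in\{0,\dots,p-1\}$ are genuinely distinct expectation tensors. (Equivalently, the data $V_{\Ts\Tt}$ should assemble into a projective representation of the loop group $\Z_p[\Th]\oplus\Z_p[\Tv]\cong\Z_p^2$, whose classes $H^2(\Z_p^2,\mathrm U(1))\cong\Z_p$ are labelled by exactly this $c$.) To establish that all $p$ are realized, I would exhibit the channels $Q(\cdot)=\tfrac1{p^2}\sum_{\Ts}X^{\Ts}Z^{-c\Ts}(\cdot)Z^{c\Ts}X^{-\Ts}$, which are manifestly completely positive and trace preserving; a direct computation of their Choi matrices $q_{\Ts\Tt}$ and substitution into Eq.~\eqref{eq:xp_XXXX_Qq} reproduces~\eqref{eq:xp_case_mf_Im_form} with $m_{ij}=c\operatorname{sgn}(j-i)$.

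The main obstacle is the third step: showing that the congruences together with cyclic symmetry collapse all $\binom{2n-1}{2}$ gauge-invariant parameters of the SPT order down to the single parameter $c$. I expect primality of $p$ to be essential here (so that $\Z_p$ is a field and the intersection numbers are invertible, exactly as in Lemma~\ref{lem:Qrestrictions}\textit{(4,5)}), and the composite-$p$ failure—the ``extra'' non-constant antisymmetric matrices—to be precisely the subtlety deferred to App.~\ref{sec:TC_case_mf}.
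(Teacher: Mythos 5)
Your opening moves are sound: the palindromic insertion $\auxX_1^{\Ts}\auxX_2^{-\Tt}\auxX_3^{\Tt}\auxX_4^{-\Ts}$ does force $-m_{1,2}+m_{1,3}-m_{2,4}+m_{3,4}\equiv 0 \pmod p$ (a modulus-one number that is also a nonnegative real trace must equal $1$), and the existence half via the channels $\tfrac{1}{p^2}\sum_{\Ts}X^{\Ts}Z^{-c\Ts}(\cdot)Z^{c\Ts}X^{-\Ts}$ matches the paper's own Case~3b computation. But the proof has a genuine gap at exactly the point you flag as the crux: you never show that the constraints you can actually derive collapse the $\binom{2n-1}{2}$ gauge-invariant parameters of the boundary SPT order down to the single parameter $c$. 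This is not a routine linear-algebra cleanup, and reality-type constraints of the palindrome kind are provably insufficient: in Appendix~\ref{sec:TC_case_mf}, already at size $4$ Hermiticity only cuts the $p^3$ candidate tensors down to $p^2$, and one needs genuine positive semidefiniteness---Lemma~\ref{lem:column_inclusion} applied to off-diagonal blocks of principal submatrices of the reshaped expectation tensor---to reach the $p$ monochromatic solutions (Lemma~\ref{lem:Xtensor_4}); a size-$6$ analogue (Lemma~\ref{lem:Xtensor_6}) and an induction over contiguous subsequences then handle general $n$. Your projective-representation idea (the $V_{\Ts\Tt}$ assembling into a projective representation of $\Z_p[\Th]\oplus\Z_p[\Tv]$ with $H^2(\Z_p^2,\mathrm{U}(1))\cong\Z_p$) is a plausible substitute for that machinery and would constitute a genuinely different proof, but as written it is an expectation, not an argument: you would still need to (i) derive the multiplicative structure $V_{\Ts\Tt}V_{\Tt\Tk}\propto V_{\Ts\Tk}$ from further saturation identities, and (ii) show that this single $\Z_p^2$ cocycle determines \emph{all} the inter-layer invariants $\ccc_{i,j,k}$---which is precisely the ``monochromaticity'' statement the paper's induction establishes.

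Two further corrections. First, the operator inequality you invoke, $q_{\Ts\Tt}^\phd q_{\Ts\Tt}^\dag\preceq q_{00}^2$, does not follow from the block positivity~\eqref{eq:q2_positive}: writing $q_{\Ts\Tt}=q_{00}^{1/2}Cq_{00}^{1/2}$ with $C$ a contraction, one gets $q_{\Ts\Tt}^\phd q_{\Ts\Tt}^\dag=q_{00}^{1/2}Cq_{00}C^\dag q_{00}^{1/2}$, and $Cq_{00}C^\dag\preceq q_{00}$ fails for a generic unitary $C$ not commuting with $q_{00}$. The trace bound you actually need, $\tr\bigl(q_{\Ts\Tt}q_{\Tt\Ts}q_{00}^{n-2}\bigr)\le\tr q_{00}^n$, is still true, but requires a separate argument (e.g.\ a rearrangement inequality on the doubly substochastic weights $|C_{ij}|^2$ in the eigenbasis of $q_{00}$), and the equality case---which is what delivers the unitaries $V_{\Ts\Tt}$---must be analyzed within that argument rather than quoted from Lemma~\ref{lem:block2tr}, whose hypothesis concerns a single block $B$ rather than the product $BB^\dag$. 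Second, your closing expectation that primality of $p$ is essential for the crux is the opposite of what holds: the paper proves exactly this proposition for every integer $p>1$; primality enters only in the main-text case split through Lemma~\ref{lem:Qrestrictions}\textit{(4,5)}, not in the Case-3 classification itself.
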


The proof of this proposition is in App.~\ref{sec:TC_case_mf}.
Below we give the corresponding quantum channels for the solutions.
For $c = 0$, which we categorize as Case~3a, the boundary state is in the trivial SPT phase of $G$.
Case~3b deals with the non-trivial classes when $c \neq 0$.

\vspace{1ex}\noindent\textbf{Case~3a.}

In the trivial SPT phase $c = 0$, the expectation tensor is
\begin{align} \label{eq:xp_case_m}
    \bigXp{ \auxX_1^{\Tt_1} \auxX_2^{\Tt_2} \cdots \auxX_{2n}^{\Tt_{2n}} } = \delta(\Tt_1 + \Tt_2 + \ldots + \Tt_{2n}) \,.
\end{align}

This case can be obtained from that of Case~2 via a unitary transformation.
The general form of a unitary transformation takes
\begin{align} \begin{aligned}
	Q_{\Ti,\Ts\,;\,\Tj,\Tt} &\to \sum_{\Ts',\Tt'} Q_{\Ti,\Ts';\,\Tj,\Tt'} \, A_{\Ts,\Ts'}^\phd A_{\Tt,\Tt'}^\ast \,,
\\
	\bigXp{ \auxX_1^{\Tt_1} \cdots } &\to \sum_{\{\Ts\}} \bigXp{ \auxX_1^{\Ts_1} \cdots } \prod_{i=1}^n \bigl( A_{\Tt_{2i-1},\Ts_{2i-1}} A_{-\Tt_{2i},-\Ts_{2i}}^\ast \bigr) ,
\end{aligned} \end{align}
for some unitary matrix $A$.

With the choice of $A = \frac{1}{p}I$ [cf.\ Eqs.~\eqref{eq:I_properties}], we can transform between the expectation tensor of this case~\eqref{eq:xp_case_m} and that of Case~2~\eqref{eq:xp_case_e}.
\begin{align} \begin{aligned}
&	\sum_{\{\Ts_i\}}\left( \prod_{i=1}^{n} \frac{I_{\Tt_{2i-1},\Ts_{2i-1}}}{p} \times \frac{I_{-\Tt_{2i},-\Ts_{2i}}^\ast}{p} \right) \times \text{Eq.~\eqref{eq:xp_case_e}}
\\	&=	\sum_{\{\Ts_i\}}\left( \prod_{i=1}^{2n} \frac{ I_{\Tt_i , -(-1)^{i}\Ts_i} }{p} \right) \times \begin{cases} 1 & (-1)^{i}\Ts_i \text{ are all equal} \\ 0 & \text{otherwise} \end{cases}
\\	&= \sum_{\Ts} \prod_{i=1}^{2n} \frac{I_{\Tt_i,-\Ts}}{p}
\\	&= \frac{1}{p^{2(n-1)}} \delta(\Tt_1+\Tt_2+\dots+\Tt_{2n}) .
\end{aligned} \end{align}
As a result, the corresponding quantum channel is that of Case 2~\eqref{eq:Q_case_e} conjugated by $\frac{1}{p} I$,%
    \footnote{Let $Q^{(e)}$ be the $Q$ from case~2~\eqref{eq:xp_case_e}, and $S = \frac{1}{p}I$.  Then:
    \\$Q^{(m)}_{\Ti,\Ts\,;\,\Tj,\Tt} = \sum_{\Ts',\Tt'} S_{\Ts,\Ts'} S_{\Tt,\Tt'}^\ast Q^{(e)}_{\Ti,\Ts';\,\Tj,\Tt'}$
    \\$= \sum_{\Ts',\Tt'} S_{\Ts,\Ts'} S_{\Tt,\Tt'}^\ast (U_\theta^\phd)_{\Ti,\Ts'} \delta(\Ts'-\Tt') (U_\theta^\ast)_{\Tj,\Tt'}$
    \\$= \sum_{\Ti',\Tj',\Ts',\Tt'} S_{\Ts,\Ts'} S_{\Tt,\Tt'}^\ast (U_\theta^\phd S)_{\Ti,\Ti'} S_{\Ti',\Ts'} \delta(\Ts'-\Tt') (U_\theta^\ast S^\ast)_{\Tj,\Tj'} S_{\Tj',\Tt'}^\ast$
    \\$= \sum_{\Ti',\Tj',\Ts'} S_{\Ts,\Ts'} S_{\Tt,\Ts'}^\ast (U_\theta^\phd S)_{\Ti,\Ti'} S_{\Ti',\Ts'} (U_\theta^\ast S^\ast)_{\Tj,\Tj'} S_{\Tj',\Ts'}^\ast$
    \\$= \frac{1}{p^2} \sum_{\Ti',\Tj'} (U_\theta^\phd S)_{\Ti,\Ti'} (U_\theta^\phd S)^\ast_{\Tj,\Tj'} \delta(\Ts-\Tt+\Ti'-\Tj')$.
    }
and takes the form
\begin{align} \begin{aligned}
    Q_{\Ti,\Ts\,;\,\Tj,\Tt} &= \sum_{\Ti',\Tj'}
    (U_\theta^\phd)_{\Ti,\Ti'} \frac{\delta(\Ts-\Tt+\Ti'-\Tj')}{p^2} (U_\theta^\ast)_{\Tj,\Tj'}
\end{aligned} \end{align}
for some unitaries $U_\theta$.

This transformation turns eigenstates of $Z$-Wilson loop operators [Eq.~\eqref{eq:Xpsi_basis}] into eigenstates of $X$-Wilson loop operators and vice versa; effectively switching between $e$ and $m$ anyons of the TQFT model.
Therefore, the action of this channel is a measurement of $X$-loop operators,
\begin{align} \label{eq:Qmap_case_m}
    Q(\theta)[\rho] = U_\theta^\phd \left( \frac{1}{p^2}\sum_{\Ts} X^\Ts \rho \, X^{-\Ts} \right) U^\dag_\theta \,.
\end{align}
As this case is related to Case~2 via unitary transformations, this phase yields CTO states.

\vspace{1ex}\noindent\textbf{Case~3b.}

Now, consider the nontrivial representation
\begin{align} \label{eq:xp_case_em}
    \bigXp{ \auxX_1^{\Tt_1} \auxX_2^{\Tt_2} \cdots \auxX_{2n}^{\Tt_{2n}} }
    &= \delta(\Tt_1+\Tt_2+\dots+\Tt_{2n}) \prod_{i<j} I_{\Tt_i,\Tt_j}^c \,,
\end{align}
for $c$ an integer in $[1,p-1]$.

This case corresponds to the channel of the form.
\begin{align} \label{eq:Qmap_case_em}
    Q(\theta)[\rho] = U_\theta^\phd \left( \frac{1}{p^2}\sum_{\Ts} X^\Ts Z^{-c\Ts} \rho \, Z^{c\Ts} X^{-\Ts} \right) U^\dag_\theta \,.
\end{align}

Indeed, we can show that this gives the expectation tensor~\eqref{eq:xp_case_em}.\footnote{Here we only derive $\eqref{eq:Qmap_case_em} \Rightarrow \eqref{eq:xp_case_em}$ but not the converse statement.}
Comparing to~\eqref{eq:def_Ptheta}, $P$ can be written as
\begin{align}
	P_{\,\Ti,\Tk\,;\,\Tj,\Tl} &= \delta(\Tk+c\Ti) \delta(\Tl+c\Tj) \delta(\Ti-\Tj) / p^2
\end{align}
(We can ignore $U_\theta$ in this calculation since it will not affect the expectation values.)
Plugging into~\eqref{eq:P_to_Q},
\begin{align}
	p^2 Q_{\Ti,\Ts;\Tj,\Tt} &= {\textstyle\sum}_{\Tk,\Tl} P_{\, \Ti-\Ts,\Tk \,;\, \Tj-\Tt,\Tl} \, I_{\Ts,\Tk}^\ast \, I_{\Tt,\Tl}^{\vphantom\ast} \notag
\\	&= \delta(\Ti-\Ts-\Tj+\Tt) \, I_{\Ts, c(\Ts-\Ti)}^\ast I_{\Tt, c(\Ts-\Ti)} \notag
\\	&= \delta(\Tj-\Ti+\Ts-\Tt) \, I_{\Ts-\Ti,\Ts-\Tt}^c \,.
\end{align}
The expression
\begin{align} \begin{aligned}
	Q_{\Ti_i,\Tt_{2i-1} ;\, \Ti_{i+1},-\Tt_{2i}} &= \delta\bigl( \Ti_{i+1} - \Ti_i + \Tt_{2i-1}+\Tt_{2i} \bigr) \\ &\qquad \times I_{ \Tt_{2i-1}-\Ti_i , \Tt_{2i-1}+\Tt_{2i} }^c / p^2
\end{aligned} \end{align}
enforces $\Ti_{i+1} = \Ti_i - (\Tt_{2i-1}+\Tt_{2i})$ and so $\Ti_{i} = \Ti_1 + \bigl( \Tt_1+\Tt_2+\dots+\Tt_{2i-2} \bigr)$ when these are chained together as in~\eqref{eq:xp_XXXX_Qq}.
Hence
\begin{align}
	\bigXp{ \auxX_1^{\Tt_1} \cdots } &= \delta\Bigl( {\textstyle\sum}_{k=1}^{2n} \Tt_{k} \Bigr) \times \prod_{i=1}^n I_{ \Tt_1+\Tt_2+\dots+\Tt_{2i-1} , \Tt_{2i-1}+\Tt_{2i} }^c \notag
\\	&= \text{Eq.~\eqref{eq:xp_case_em}}.
\end{align}

We show, in the following section, that theses phases may be CTO or QTO, depending on $p$.

\vspace{1em}\noindent\textbf{Case~4:} $x_1^\phd x_2^{-1} \notin G$ and $x_2^\phd x_3^{-1} \in G$.

Invoking Lemma~\ref{lem:Qrestrictions}\textit{(4)}, the diagonal blocks $q_{ss}$ are rank-1 orthogonal matrices.
Let $Q^\circ$ be $Q$ written in a basis such its diagonal blocks take on the form of~\eqref{eq:orthogonal_blocks} without the unitaries.
Specifically,
$Q_{\Ti,\Ts\,;\,\Tj,\Tt} = \sum_{\Ti',\Tj'} (U_\theta)_{\Ti,\Ti'} (U_\theta^\ast)_{\Tj,\Tj'} Q^\circ_{\Ti',\Ts\,;\,\Tj',\Tt}$.
In this basis, the submatrices $q^\circ_{\Ts\Tt} = Q^\circ_{\bullet,\Ts\,;\,\bullet,\Tt}$ have rank at most 1 [cf.~\eqref{eq:column_inclusion}]
and its only potential nonzero element must occur at $(\Ti,\Tj) = (\Ts,\Tt)$ due to its positivity.
Hence $Q^\circ$ is constrained to the form
\begin{align}
	Q^\circ_{\Ti,\Ts\,;\,\Tj,\Tt} = \delta(\Ti-\Ts) \delta(\Tj-\Tt) M_{\Ts,\Tt} \,,
\end{align}
with $M_{\Ts,\Ts} = 1$ and $M \succeq 0$.

Equation~\eqref{eq:xp_XXXX_Qq} relating $\bigXp{\auxX_1^{\Tt_1}\cdots}$ to the quantum channel remains valid substituting $Q$ for $Q^\circ$.
Since $x_2^\phd x_3^{-1} \in G$,
we have $0 \neq \bigXp{\auxX_1^\Ts \auxX_2^{-\Tt} \auxX_3^\Tt \auxX_4^{-\Ts} \auxX_5^\Ts \cdots \auxX_{2n}^{-\Ts}}
= \frac{1}{D} \sum_\theta \Prob^n \tr\bigl[ q^\circ_{\Ts\Tt} q^\circ_{\Tt\Ts} (q^\circ_{\Ts\Ts})^{n-2} \bigr]
= \frac{1}{D} \sum_\theta \Prob^n |M_{\Ts,\Tt}|^2$,
and $D = \sum_\theta \Prob^n \tr (q^\circ_{00})^n = \sum_\theta \Prob^n$.
We conclude that the former expression must be $+1$, and that $|M_{\Ts,\Tt}(\theta)| = 1$.
We can always use the residual degrees of freedom in $U_\theta$ such that $M_{0,\Ts} = 1$ for all $\Ts$.
Via Lemma~\ref{lem:phaseMx_positive}, the positivity of $M$ then constrains all other elements to be unity: $Q^\circ_{\Ti,\Ts\,;\,\Tj,\Tt} = \delta(\Ti-\Ts) \delta(\Tj-\Tt)$.
Hence $Q$ always take the form
\begin{align}
	Q_{\Ti,\Ts\,;\,\Tj,\Tt}(\theta) &= (U_\theta)_{\Ti,\Ts} (U_\theta^\ast)_{\Tj,\Tt}
\end{align}
for some unitary matrix $U_\theta$.
The expectation tensor
\begin{align} \label{eq:xp_case_quantum} \begin{aligned}
&	\bigXp{ \auxX_1^{\Tt_1} \auxX_2^{\Tt_2} \cdots \auxX_{2n}^{\Tt_{2n}} }
\\	&= \delta(\Tt_2+\Tt_3) \delta(\Tt_4+\Tt_5) \cdots \delta(\Tt_{2n}+\Tt_1) .
\end{aligned} \end{align}
The boundary state has symmetry $G \cong \Z_p^n$ generated by $\bigl\{ x_2^\phd x_3^{-1}, \dots, x_{2k}^\phd x_{2k+1}^{-1}, \dots, x_{2n}^\phd x_1^{-1} \bigr\}$.

As a quantum channel,
\begin{align}
    Q(\theta)[\rho]  = U_\theta^\phd \rho \, U^\dag_\theta \,.
\end{align}
For each syndrome outcome $\theta$, the quantum channel $Q(\theta)$ can be reversed by application of the inverse unitary [along with $E_\theta^\dag$ from Eq.~\eqref{eq:Qchannel}].
Therefore, under optimal decoding and postselection, the noise channel $\mathcal{M}\E$ is reversible; the quantum information encoded in the logical space can be retrieved by a suitably constructed map recovery $\R$.
On a practical level, for this to be an effective QEC this requires the map $U_\theta$ to be efficiently computable on a classical computer---which is challenge beyond the scope of this work.
Importantly, our derivation proves that in this phase such recovery map always exists.

As we will demonstrate in the next section, this topological phase corresponds to a QTO matching that of the original pure state TC.

\section{Classifying mixed states descendants of the toric code}
\label{sec:ClassifyMixed}

\begin{table*}
\caption{Topological classification of $\Z_p$ toric code descendant mixed states} \label{tab:TCdesc_classification}
\begin{minipage}{\linewidth} $\begin{array}{c @{\qquad} c c c c c c}
\hline\hline
    & \text{Case~1} & \text{Case~2} & \text{Case~3a} & \multicolumn{2}{c}{\text{Case~3b}} & \text{Case~4}
\\ & \makebox[14ex][c]{} & \makebox[14ex][c]{} & \makebox[14ex][c]{} & \makebox[14ex][c]{$p=2$} & \makebox[14ex][c]{$p>2$ (prime)} & \makebox[14ex][c]{}
\\\hline \\[-2.8ex]
    \R\M\E(\cdot)
    & q_{00}
    & \frac{1}{p^2} \!\sum_\Ts Z^\Ts(\cdot)Z^\Ts
    & \frac{1}{p^2} \!\sum_\Ts X^\Ts(\cdot)X^\Ts
    & \multicolumn{2}{c}{ \frac{1}{p^2} \!\sum_\Ts X^\Ts Z^{-c\Ts} (\cdot) Z^{c\Ts} X^{-\Ts} }
    & (\cdot)
\\[1ex]	\text{decoherence type}
	& e+m & e & m & \psi {\,\scriptstyle(=\,em)} & e^{-c}m & -
\\[1ex]
	\DMxS{n}(\rho_\E)
	& \raisebox{0.20ex}{$\scriptstyle\bullet$} & \Delta^{p^2-1} & \Delta^{p^2-1} & \Delta^3 & \mathcal{B}(p) & \mathcal{B}(p^2)
\\[0.5ex]\hline \textbf{Boundary SPT order}\hfill
\\
	x_1^\phd x_2^{-1} \in G & \checkmark && \checkmark & \checkmark^\ast & \checkmark^\ast &
\\
	x_2^\phd x_3^{-1} \in G &&& \checkmark & \checkmark^\ast & \checkmark^\ast & \checkmark
\\
	\bigXp{ \auxX_1^{\Tt_1} \cdots \auxX_{2n}^{\Tt_{2n}} }
	& \text{Eq.~\eqref{eq:xp_case_triv}}
	& \text{Eq.~\eqref{eq:xp_case_e}}
	& \text{Eq.~\eqref{eq:xp_case_m}}
	& \multicolumn{2}{c}{\text{Eq.~\eqref{eq:xp_case_em}}}
	& \text{Eq.~\eqref{eq:xp_case_quantum}}
\\\hline \textbf{Wilson loops}\hfill
\\
	\qquad ( X^\Ts )_\E \hfill &&& \text{C} &&& \text{Q}
\\
	\qquad ( Z^\Ts )_\E \hfill && \text{C} &&&& \text{Q}
\\
	\qquad ( X^{\Ts} Z^{b\Ts} )_\E \quad{\scriptstyle(0<b<p)} \hfill
	&&&& \text{C} & -/\text{Q}^\dag & \text{Q}
\\\hline
    \text{\parbox[c][8mm]{30mm}{Mixed state\\topological order}}
    & \text{Trivial} & \text{Classical} & \text{Classical} & \text{Classical} & \text{Quantum} & \text{Quantum}
\\\hline\hline
\end{array}$ \end{minipage}
\\[1ex]
\begin{minipage}{0.95\linewidth} \begin{flushleft}
On the row $\DMxS{n}(\rho_\E)$, 
	the dot denote a set with a single point, $\Delta^k$ is a $k$-simplex,
	$\mathcal{B}(d)$ is the set of normalized density matrices in a $d$-dimensional Hilbert space [cf.\ Eq.~\eqref{eq:qudit_space}].
In the rows under Boundary SPT order, a checkmark denote that the group element is in the boundary symmetry group $G$.
The checkmark star indicates that the group element has nontrivial cocycles (i.e., SPT order).
In the rows under Wilson loop operators, C and Q denote that the operator is classical and quantum operator respectively.
For the entry with $\dag$, the operator is quantum if $b = c$ and non-valid if $b \neq c$.
\end{flushleft} \end{minipage}
\end{table*}

In the previous section, we show the one-to-one correspondence between the boundary SPT phases and the code space quantum channels for descendants of the $\Z_p$ toric code.
In this section, we will use this correspondence as a bridge to connect two definitions of the mixed state topological order we present in Sec.~\ref{sec:TopoMixed}.
Denote $\rho_\E = \M\E(\rho_1)$ to be the mixed state that results after applying the error and measurement channels to our initial state.
(Henceforth in this section, we will ``absorb'' $\M$ into the definition of $\E$ to simplify notation.)
Recall that the first definition is based on the geometric structure of the locally indistinguishable density matrix space $\DMxS{n}(\rho_\E)$, while the second one determines the topological order based on the properties of the Wilson loop operators acting on $\DMxV{n}(\rho_\E)$.

A key feature of our analysis relies on the property that the combined error and measurement process $\E$ is \textbf{sujective}, that is, the image of $\E$ acting on the TC $\DMxS{n}(\rho_\text{TC})$ is equals to $\DMxS{n}(\rho_\E)$.
More colloquially, it means that every mixed state that is locally indistinguishable from $\rho_\E$ can be constructed via $\E$ starting from some state in the original TC logical space.

The geometric definition of the mixed state topological order is directly related to the code space quantum channel.
In our postselection-based error correcting process, the initial pure state topological ordered space $C_1 = \DMxS{n}(\rho_\text{TC})$ is mapped via the channel $\E$ to a noisy mixed state space $C_2 = \DMxS{n}(\rho_\E) = \E(C_1)$, which is the space we wish to characterize.
The decoding and recovery steps of QEC map $C_2$ back to the original code space $C_1$.
If the decoding and recovery are optimal, then $\R: C_2 \to C_1$ is injective.
(See Fig.~\ref{fig:qubit_geo} for an illustration of this idea.)
Thus the image of optimal error correction process $\R\circ\E$ is equivalent to $C_2$, and revealing the topological order of $\rho_\E$.

The connection between the boundary SPT order and the Wilson loop operator definition of topological order is more subtle.
Unlike the toric code (and other exactly solvable models), the Wilson loop operators of the noisy mixed states $\rho_\E$, if they exist, usually take complicated forms and may be difficult to find.
Our strategy is to track evolution of Wilson loop operators from the TC model, and see how they behaves after passing through $\E$ into the mixed state.
To be specific, we take a Wilson loop operator $W$ of the TC model~(Fig.~\ref{fig:TC_Wilson}), and ask if we can define an operator $W_\E$ of the mixed state such that $\E(W R) = W_\E \E(R)$.
This turns out to be the same as testing for the condition $\E(WR_1) = \E(WR_2) \text{ whenever } \E(R_1) = \E(R_2)$.
If $W_\E$ is well-defined, we then want to determine if the operator is classical or quantum, to arrive at a classification of the phase.

We show these two definitions yield the same topological classification of the descendants of TC model.
This establishes---at least for the examples studied here---an equivalence between the two definitions of topological order from Sec.~\ref{sec:TopoMixed}.
Our results are summarized in Table~\ref{tab:TCdesc_classification}.
We conclude this section with remarks regarding specific microscopic realizations of the mixed states, and discusses the relationship between our work and anyon condensation.

\subsection{Geometric classification}
\label{sec:ClassificationGeo}

Here, for each boundary SPT class, we consider the image of the map
\begin{align}
	\M\E(\cdot) = \sum_\theta \Prob(\theta) \, Q_\theta(\cdot)
\end{align}
which takes a state from the toric code $\DMxS{n}(\rho_\text{TC})$ into $\DMxS{n}(\rho_\E)$.
While the exact form of $Q_\theta[\rho]$ depends on the error syndrome $\theta$, the maps are all equivalent up to a quantum channel%
	\footnote{For any two $\theta$, $\theta'$, there exists a (nonlocal) channel $C_{\theta,\theta'}$ such that $Q_\theta(\cdot) = F_{\theta,\theta'} Q_{\theta'}(\cdot)$.}
	and so it suffices to examine a single $Q_\theta$ as a representative of the entire phase $\rho_\E$.

\vspace{1ex}\noindent
\textbf{Case~1}.
$Q(\rho) = q_{00}$.
In this case, the code space channel maps all the state to a fixed point, akin to the complete depolarization channel.
Thus, $\DMxS{n}(\rho_\text{TC})$ is a single point and this space and the state has trivial topological order.

\vspace{1ex}\noindent
\textbf{Case~2} and \textbf{Case~3a}.
Both of these cases (up to a unitary) can be written in the form
\begin{align} \label{eq:Q_Kraus_form}
	Q(\rho) = \frac{1}{p^2} \sum_\mu K_\mu^\phd \, \rho \, K_\mu^\dag
\end{align}
where the Kraus operators $K_\mu$ form a complete projective measurement.
Effectivly, the channel dephases the mixed state in the the $Z$- (case~2) or $X$- (case~3a) basis,
the resulting state is always a ``classical sum'' of orthogonal states.

The geometry of the resulting space is a $(p^2-1)$-simplex $\Delta^{p^2-1}$.
States are characterized by $p^2$ nonnegative coefficients summing to 1.
These phases is CTO.

\vspace{1ex}\noindent
\textbf{Case~3b} for $p = 2$.
This case is identical to Cases~2 and~3a, except dephasing occurs in the $XZ$-basis.
The resulting space is $\Delta^3$, that is, a tetraherdron.

\vspace{1ex}\noindent
\textbf{Case~3b} for $p \geq 3$ prime.
This channel (up to a unitary) also takes the Kraus form~\eqref{eq:Q_Kraus_form}.
Notice, in contrast to the previous cases, that the Kraus operators $X^\Ts Z^{-c\Ts}$ do not commute among each other,
\begin{align} \label{eq:noncommute_case3b}
	\bigl( X^\Ts Z^{-c\Ts} \bigr) \bigl( X^\Tt Z^{-c\Tt} \bigr) &= I_{\Ts,\Tt}^{-2c} \bigl( X^\Tt Z^{-c\Tt} \bigr) \bigl( X^\Ts Z^{-c\Ts} \bigr) .
\end{align}
In fact, the Kraus operators form the $p \times p$ matrix ring over complex numbers (that is, the algebra of $p \times p$ matrices).

We can show this explicitly by a unitary transformation.
In the basis of $\ket{j_1\Th+j_2\Tv}$ the Wilson loop operators are represented by%
	\footnote{The Pauli matrices obey the algebra $Z^p = X^p = {\bf1}$ and $ZX = \zeta_p XZ$.}
\begin{align} \begin{aligned}
	Z^\Th &\cong {\bf1} \otimes Z ,
&&&	X^\Th &\cong X \otimes {\bf1} ,
\\	Z^\Tv &\cong Z^{-1} \otimes {\bf1} ,
&&&	X^\Tv &\cong {\bf1} \otimes X .
\end{aligned} \end{align}
Let $\bar{a}a \equiv 1 \pmod{p}$ (so for example, $\bar2 \equiv \frac{p+1}{2}$ such that $\bar2 + \bar2 \equiv 1$).
Consider the unitary transformation $O \mapsto UOU^\dag$ that maps
\begin{align} \begin{aligned}
	Z \otimes {\bf1} &\mapsto X^{\bar2\bar{c}} \otimes X^{\bar2\bar{c}}
,\\
	X \otimes {\bf1} &\mapsto Z^{-c} \otimes Z^{-c}
,\\
	{\bf1} \otimes Z &\mapsto Z \otimes Z^{-1}
,\\
	{\bf1} \otimes X &\mapsto X^{\bar2} \otimes X^{-\bar2}
.
\end{aligned} \end{align}
This map takes $X^\Th Z^{-c\Th} \mapsto Z^{-2c} \otimes {\bf1}$ and $X^\Tv Z^{-c\Tv} \mapsto X \otimes {\bf1}$.
The operators act on a product of two $p$-dimensional Hilbert spaces, the combined space is $p^2$-dimensional.
The error channel takes the form
\begin{align}
	\frac{1}{p^2} \sum_{a,b=1}^{p} (X^a Z^b \otimes {\bf1}) \, \rho \, (X^a Z^b \otimes {\bf1})^\dag .
\end{align}
This is a partial depolarization, completely erasing information stored in the first subspace while maintaining information stored in the second:
$\rho \to \frac{1}{p}{\bf1} \otimes \Tr_1 \rho$.

The space $\DMxS{n}(\rho_\E)$ is isomorphic to $\mathcal{B}(p)$, the space of states in a $p$-dimensional Hilbert space [see Eq.~\eqref{eq:qudit_space}].
As quantum information can still be retained, this is a QTO phase!

\vspace{1ex}\noindent
\textbf{Case~4}.
Up to a unitary, $Q(\rho) = \rho$.
Since the resulting space of the code space channel is exactly the same with the original logical space of the TC, with geometry $\mathcal{B}(p^2)$, this case is a QTO phase.

\subsection{Wilson loop classification}
\label{sec:ClassificationWL}

\begin{figure}
	\centering
	\includegraphics[width = 0.9\linewidth]{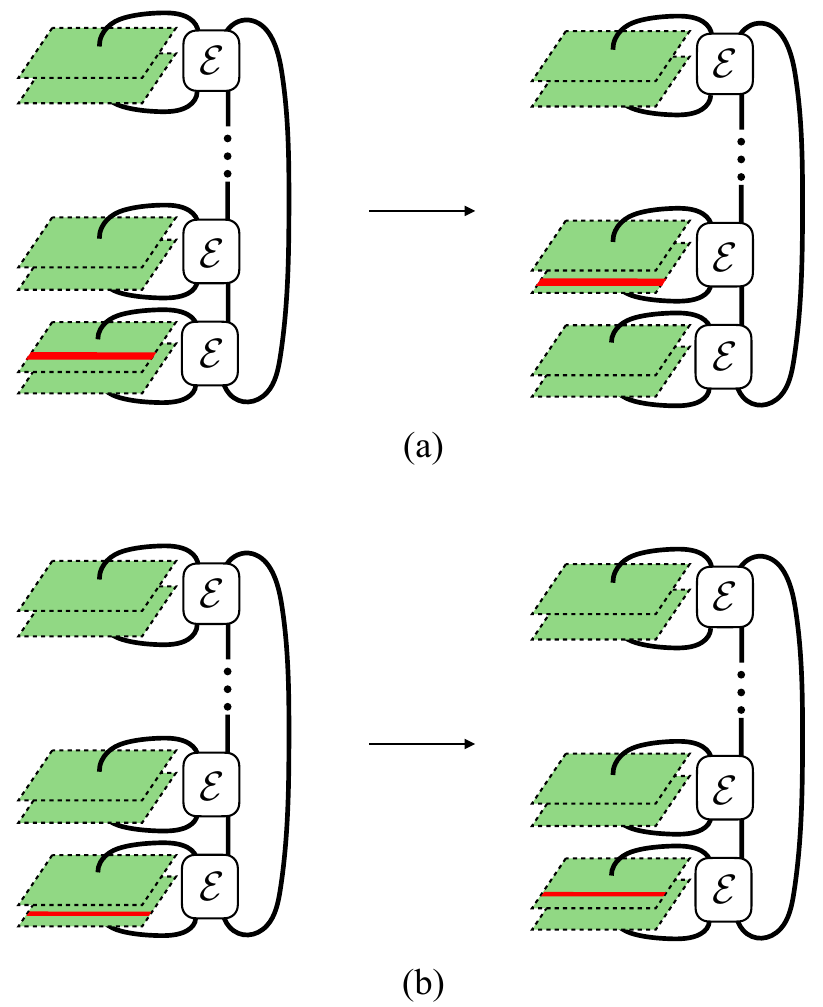}
	\caption{
	The boundary symmetry group can be visualized as operator jumping. We claim that 
	(a) $W_\E$ associate with a Wilson loop operator if the boundary operator $W$ can jump between replicas. (b) $W_\E$ associate with a classical Wilson loop operator if the boundary operator $W$ can jump within a replica.}
	\label{fig:Wjump}
\end{figure}

\begin{figure*}
	\centering
	\includegraphics[width = 0.99\linewidth]{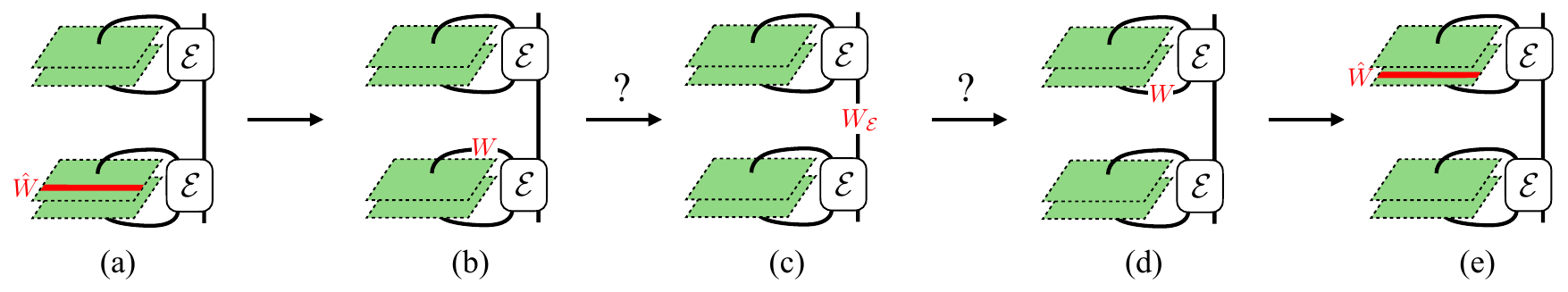}
	\caption{%
Here we break the auxiliary operator's jump into five steps.
The jumping between the auxiliary space and the physical space, (a) to (b) and (d) to (e), are always valid.
If the operator $W_\E$ is well-defined, then, by definition, it commute with the quantum channel.
That is, $\E(W R) = W_\E\E(R)$.
Therefore, if the intermediate operator $W_\E$ is well-defined, the Wilson loop operator $W$ can jump between replicas.
	} \label{fig:Wjump2}
\end{figure*}

In this section, we look for Wilson loop operators of in the mixed state phase of $\rho_\E$.
Specifically, we ask if a Wilson loop operator $W$ of the parent (pure state) phase remains well-defined after the decoherence channel; and if so the determine the operator type.
Formally, we ask if the map $W_\E$ can be defined to make the diagram commutative.
\begin{align}
	\begin{tikzpicture}
	[baseline={([yshift=-.5ex]current bounding box.center)},every text node part/.style={align=center}],
	\node[] (0) at (0pt,20pt) { $\DMxV{n}(\rho_\text{TC})$};
	\node[] (1) at (80pt,20pt) { $\DMxV{n}(\rho_\E)$};
	\node[] () at (40pt,26pt) { $\E$};
	\node[] () at (40pt,-24pt) { $\E$};
	\node[] () at (-8pt,-5pt) { $W$};
	\node[] () at (92pt,-5pt) { $W_\E$};
	\node[] (3) at (80pt,-30pt) { $\DMxV{n}(\rho_\E)$};
	\node[] (2) at (0pt,-30pt) { $\DMxV{n}(\rho_\text{TC})$};
	\draw [->,shorten >= 2pt, shorten <= 2pt] (0) edge (1) ;
	\draw [dashed,->,shorten >= 2pt, shorten <= 2pt] (1) edge (3) ;
	\draw [->,shorten >= 2pt, shorten <= 2pt] (0) edge (2) (2) edge (3);
	\end{tikzpicture} 
\end{align}
If so, we say that $W_\E$ is the \emph{induced} Wilson loop operator from $W$.

The operator $W_\E$ is well-defined if $\E(WR_1) = \E(WR_2) \text{ whenever } \E(R_1) = \E(R_2)$, that is, it consistently takes the same input to the same output.
When characterizing a Wilson loop operator, we are interested if it is proportional to the identity operator, and if it is classical or quantunm as defined in Sec.~\ref{sec:TopoMixedWL}.
The determination of these properties amounts to testing for (in)equalities between expressions of the form $\E(\cdot)$ via the $n$-Schatten norm.
When $n$ is an even integer, the norm $\bigl\lVert \E(R) \bigr\rVert_n$ consists of $n/2$ powers of $|\E(R)|^2 = \E(R^\dag) \E(R)$:
\begin{align}
	\bigl\lVert \E(R) \bigr\rVert_n &= \Tr\bigl[ \E(R^\dag) \, \E(R) \, \E(R^\dag) \cdots \E(R) \bigr]^{1/n} .
\end{align}
This is precisely the type of expression that is produced by the $n$-replica network,
and is completely characterized by the $X$-loop expectation tensor, which in turn is determined by the boundary SPT order.
To cement this connection, we introduce the following \textbf{Wilson loop tests}.

\vspace{1ex}\noindent
Suppose $W \in \DMxV{n}(\rho_\text{TC})$ is a unitary Wilson loop operator of the parent theory.
\begin{enumerate}[itemsep=0pt, parsep=2pt]
\item[W1.] $W_\E$ is a well-defined Wilson loop operator if for all $R_i \in \DMxV{n}(\rho_\text{TC})$,
	\begin{align} \label{eq:WilsonCondition} \quad \begin{aligned}
	& \Tr\bigl[ \E(R_1{W^\dag}) \, \E(W R_2) \, \E(R_3) \cdots \E(R_n) \bigr] 
	\\&\mkern90mu = \Tr\bigl[ \E(R_1) \, \E(R_2) \cdots \E(R_n) \bigr] .
	\end{aligned} \end{align}
	This condition means the Wilson loop operator $W$ can jump between replicas, as shown in Fig.~\ref{fig:Wjump}(a).
\item[W2.] $W_\E$ is a non-identity Wilson loop operator if $W_\E$ is a Wilson loop operator,
	and there exists a nonzero $R \in\mathcal{V}^{n}(\rho)$ such that
	\begin{align} \label{eq:WilsonCondition_NonId}
	\Tr\bigl[ \E(R^\dag) \, \E(WR) \, |\E(R)|^{n-2} \bigr] = 0 .
	\end{align}
	If $R$ is Hermitian, if suffices to test if $\Tr\bigl[ \E(WR) \, \E(R)^{n-1} \bigr]$ vanishes.
\item[WC.] $W_\E$ is a classical Wilson loop operator if $W_\E$ is a Wilson loop operator and for all $R_i \in\mathcal{V}^{n}(\rho_\text{TC})$,
	\begin{align} \label{eq:WilsonCondition_Classical} \begin{aligned}
	\quad& \Tr\bigl[ \E(W R_1 W^\dag) \, \E(R_2) \cdots \bigr]
	\\&\mkern130mu = \Tr\bigl[ \E(R_1) \, \E(R_2) \cdots \bigr] .
	\end{aligned} \end{align}
	This condition also means the Wilson loop operator $W$ can jump within a replica, as shown in Fig.~\ref{fig:Wjump}(b).
\item[WQ.]
	Suppose $(W_1)_\E$, $(W_2)_\E$ are well-defined nonzero operators.
	If $W_1 W_2 = \gamma W_2 W_1$ for some phase $\gamma \neq 1$, then $(W_1)_\E$ and $(W_2)_\E$ are both non-identity operators and quantum.
\end{enumerate}

We first show why these four conditions are true.
As examples, we then apply these conditions to Case~3 and Case~4 of the $\Z_p$ toric code model to classify their topological order.
The results regarding Wilson loops are summarized in Table~\ref{tab:TCdesc_classification}.

\begin{widetext}
To show W1, suppose $W$ satisfies the first condition~\eqref{eq:WilsonCondition} and $n \in 2\Z$.
Thus,
\begin{align}
\bigl\lVert \E(W R_1) - \E(W R_2) \bigr\lVert_n^n
& = \Tr\Bigl[ \bigl( \E(R_1^\dag W^\dag - R_2^\dag W^\dag) \, \E(W R_1 - W R_2) \bigr)^{n/2} \Bigr]
\notag\\ &\; = \Tr\Bigl[ \bigl( \E(R_1^\dag - R_2^\dag) \, \E(R_1 - R_2) \bigr)^{n/2} \Bigr]
= \bigl\lVert \E(R_1) - \E(R_2) \bigr\lVert_n^n \,.
\end{align}
We use the linearity of $\E$ for the first equality, and then~\eqref{eq:WilsonCondition} to eliminate $(W^\dag, W)$ pairs.
Hence $\E(R_1) = \E(R_2) \Rightarrow \E(W R_1) = \E(W R_2)$, and $W_\E$ is a well-defined map.
The map is also linear: $W_\E (\E(R_1)+\E(R_2)) = W_\E \E(R_1) + W_\E \E(R_2)$ and $(W_1)_\E + (W_2)_\E = (W_1+W_2)_\E$ since $\E$ is itself linear.

To show W2, now assume that $W$ satisfies both Eq.~\eqref{eq:WilsonCondition} and~\eqref{eq:WilsonCondition_NonId}.
Then,
\begin{align}
&	\bigl\lVert W_\E\E(R)-\alpha\E(R) \bigr\rVert_n^2 \, \bigl\lVert \E(R) \bigr\rVert_n^{n-2}
\notag\\ &\quad \geq \Tr\Bigl[ \bigl|\E(WR)-\alpha\E(R)\bigr|^2 \, \bigl|\E(R)\bigr|^{n-2} \Bigr] ,
	\mkern70mu \text{\scriptsize(via Corollary~\ref{lem:nSchatten_trace_ineq})}
\notag\\ &\quad = \Tr\Bigl[ \Bigl(\!
	\mathord{\underbrace{ \E(R^\dag W^\dag) \E(W R) }_\text{ $\E(R^\dag) \E(R)$ via~\eqref{eq:WilsonCondition} }}
	- \alpha \, \mathord{\underbrace{ \E(R^\dag W^\dag) \E(R) }_\text{ $0$ via~\eqref{eq:WilsonCondition_NonId} conj. }}
	- \alpha^\ast \mathord{\underbrace{ \E(R^\dag) \E(WR) }_\text{ $0$ via~\eqref{eq:WilsonCondition_NonId} }}
	+ |\alpha|^2 \E(R^\dag) \E(R)
	\Bigr) |\E(R)|^{n-2} \Bigr]
\notag\\ &\quad = \bigl(1+|\alpha|^2\bigr) \Tr\bigl[ |\E(R)|^n \bigr] \quad > 0 .
\end{align}
Thus $W_\E \E(R) \neq \alpha\E(R)$ (for all $\alpha$); $W_\E$ is a non-identity operator.

To show WC, conditions~\eqref{eq:WilsonCondition} and~\eqref{eq:WilsonCondition_Classical} implies that $W$ and $W^\dag$ can be passed between the bra/ket side of any replica.
$\Tr \E(WR_1) \dots = \Tr \cdots \E(WR_i) \cdots = \Tr \cdots \E(R_iW) \cdots$.
Hence for $n \in 2\Z$,
\begin{align}
\bigl\lVert W_\E^\phd \E(R) W_\E^\dag - \E(R) \bigr\rVert_n^n
& = \bigl\lVert \E(W\!RW^\dag)-\E(R) \bigr\rVert_n^n
= \Tr\Bigl[ \bigl( \bigl[\E(W\!R^\dag W^\dag)-\E(R^\dag)\bigr] \bigl[\E(W\!RW^\dag)-\E(R)\bigr] \bigr)^{\tfrac{n}{2}} \Bigr]
= 0 .
\end{align}

Finally to show WQ, assume that $W_1 W_2 = \gamma W_2 W_1$ for $\gamma \neq 1$.
Then,
\begin{align}
	(W_1)_\E (W_2)_\E \E(R) = \E(W_1 W_2 R) = \gamma \E(W_2 W_1 R) = \gamma (W_2)_\E (W_1)_\E \E(R)
\end{align}
and hence $(W_1)_\E$, $(W_2)_\E$ do not commute.
Invoking Prop.~\ref{prop:noncommuting_W}, both operators must be quantum.
\end{widetext}

Now, let's apply these tests to the $\Z_p$ toric code model.
Since $\DMxV{n}(\rho_\text{TC})$ is spanned by states $X^\Tj \rho_{00} X^\Tk$ where $\rho_{00} = \ketbra{0}{0}$,
and the expression $\Tr\bigl[ \E(R_1) \cdots \E(R_n) \bigr]$ is linear in each argument, it suffices to check the three conditions against the set of basis states.
Of course, Eq.~\eqref{eq:xp_XXXX_Qq} relates the trace to the $X$-loop expectation tensor.
\begin{align}
	\Tr\bigl[ \E(X^{\Tt_1}\rho_{00}X^{\Tt_2}) \, \E(X^{\Tt_3}\rho_{00}X^{\Tt_4}) \cdots \bigr] = \bigXp{ \auxX_1^{\Tt_1} \auxX_2^{\Tt_2} \cdots } .
\end{align}

\vspace{1ex}\noindent\textbf{Case~3}.
Recall that the $X$-loop expectation tensor for this case is given by~\eqref{eq:xp_case_em}:
\begin{align*}
    \bigXp{ \auxX_1^{\Tt_1} \auxX_2^{\Tt_2} \cdots \auxX_{2n}^{\Tt_{2n}} }
    &= \delta(\Tt_1+\Tt_2+\dots+\Tt_{2n}) \prod_{i<j} I_{\Tt_i,\Tt_j}^c
\end{align*}
for $0 \leq c < p$.
We check if $W = X^{a\Ts} Z^{b\Ts}$ satisfies conditions W1, W2, WC, and WQ above.

Evaluating the LHS of Eq.~\eqref{eq:WilsonCondition},
\begin{align} \label{eq:W1test_case3}
	& \Tr\bigl[ \E(R_1 W^\dag) \E(W R_2) \E(R_3) \cdots \bigr]
	\notag\\ &\;
	= I_{\Ts,\Tt_2+\Tt_3}^b \tr\Bigl[ \E\bigl( X^{\Tt_1} \rho_{00} X^{\Tt_2-a\Ts} \bigr) \E\bigl( X^{a\Ts+\Tt_3} \rho_{00} X^{\Tt_4} \bigr) \cdots \Bigr]
	\notag\\ &\;
	= I_{\Ts,\Tt_2+\Tt_3}^b \bigXp{ \auxX_1^{\Tt_1} \auxX_2^{\Tt_2-a\Ts} \auxX_3^{a\Ts+\Tt_3} \auxX_4^{\Tt_4} \cdots \auxX_{2n}^{\Tt_{n}} }
	\notag\\ &\;
	= I_{\Ts,\Tt_2+\Tt_3}^b I_{-a\Ts,\Tt_3}^c I_{\Tt_2,a\Ts}^c \bigXp{ \auxX_1^{\Tt_1} \auxX_2^{\Tt_2} \cdots \auxX_{2n}^{\Tt_{n}} }
	\notag\\ &\;
	= I_{\Ts,\Tt_2+\Tt_3}^{b-ac} \Tr\bigl[ \E(R_1) \E(R_2) \E(R_3) \cdots \bigr] .
\end{align}
Via the Wilson loop algebra~\eqref{eq:TC_WL_algebra}, we convert
$R_1 W^\dag = X^{\Tt_1} \rho_{00} X^{\Tt_2} Z^{-b\Ts} X^{-a\Ts} = I_{\Ts,\Tt_2}^b X^{\Tt_1} \rho_{00} X^{\Tt_2-a\Ts}$
and $W R_2 = X^{a\Ts} Z^{b\Ts} X^{\Tt_3} \rho_{00} X^{\Tt_4} = I_{\Ts,\Tt_3}^b X^{a\Ts + \Tt_3} \rho_{00} X^{\Tt_4}$ from the first to second line,
From the third to fourth line, we use the explicit form of the expectation tensor [Eq.~\eqref{eq:xp_case_em}].
Hence, we see the $W_\E$ (induced from $X^{a\Ts}Z^{b\Ts}$) are well-defined Wilson loop operators if $b \equiv ac \pmod{p}$.
In other words, $W_\E$ must be of the form $( X^\Ts Z^{c\Ts} )_\E^a$.

$( X^{a\Ts} Z^{b\Ts} )_\E$ are non-identity Wilson loop operators from the W2 test.
Let $R = \rho_{00}$, $\Tr\bigl[ \E( X^{a\Ts} Z^{b\Ts} R) \E(R)^{n-1} \bigr]$ vanishes when $a \not\equiv 0$ due to the delta function in the expectation tensor.

With a similar calculation, we evaluate the LHS of Eq.~\eqref{eq:WilsonCondition_Classical},
\begin{align}
	& \notag
	\Tr\bigl[ \E(W R_1 W^\dag) \E(R_2) \cdots \bigr]
	\\\notag &\;
	= I_{\Ts,\Tt_1+\Tt_2}^b \bigXp{ \auxX^{a\Ts+\Tt_1} \auxX^{\Tt_2-a\Ts} \auxX^{\Tt_3} \cdots \auxX_{2n}^{\Tt_{n}} }
	\\ &\;
	= I_{\Ts,\Tt_1+\Tt_2}^{b+ac} \Tr\bigl[ \E(R_1) \E(R_2) \cdots \bigr] .
\end{align}
Thus, $W_\E$ are classical Wilson loop operators when $b+ac \equiv 0$ or equivalently $2ac \equiv 0 \pmod{p}$. 
Moreover, from the calculation~\eqref{eq:noncommute_case3b}, we see that $X^{\Ts} Z^{c\Ts}$ for different directions $\Ts$ fail to commute when $2c \not\equiv 0 \pmod{p}$.
We conclude that all Wilson loops of the phase $\rho_\E$ are classical if $2c \equiv 0 \pmod{p}$, and at least one is quantum if $2c \not\equiv 0 \pmod{p}$.

Therefore, the phase is CTO if and only if $2c$ is a multiple of $p$.

\vspace{1ex}\noindent\textbf{Case~4}.
Recall that the $X$-loop expectation tensor for the Case~4 is given by~\eqref{eq:xp_case_quantum}:
\begin{align*}
	\bigXp{ \auxX_1^{\Tt_1} \cdots \auxX_{2n}^{\Tt_{2n}} } = \delta(\Tt_2+\Tt_3) \delta(\Tt_4+\Tt_5) \cdots \delta(\Tt_{2n}+\Tt_1) .
\end{align*}
For an arbitrary operator $W = X^{a\Ts}Z^{b\Ts}$, we have
\begin{align}
	& \Tr\bigl[ \E(R_1 W^\dag) \, \E(W R_2) \cdots \bigr]
	\notag\\ &\;
	= I_{\Ts,\Tt_2+\Tt_3}^b \bigXp{ \auxX_1^{\Tt_1} \auxX_2^{\Tt_2-a\Ts} \auxX_3^{a\Ts+\Tt_3} \auxX_4^{\Tt_4} \cdots \auxX_{2n}^{\Tt_{n}} }
	\notag\\ &\;
	= \bigXp{ \auxX_1^{\Tt_1} \auxX_2^{\Tt_2} \auxX_3^{\Tt_3} \auxX_4^{\Tt_4} \cdots \auxX_{2n}^{\Tt_{n}} }
	\notag\\ &\;
	= \Tr\bigl[ \E(R_1) \, \E(R_2) \cdots \bigr] .
\end{align}
The first two lines follows an identical calculation to Eq.~\eqref{eq:WilsonCondition} of Case~3; the differences manifest when we use the $X$-loop expectation tensor to simplify the expression.
For example, the term $\delta(\Tt_2+\Tt_3)$ eliminates $I_{\Ts,\Tt_2+\Tt_3}$ from the expression---a direct consequence of the fact that $x_2^\phd x_3^{-1} \in G$ and has trivial cocylce.
Hence $(X^{a\Ts}Z^{b\Ts})_\E$ is always a Wilson loop operator.
Moreover, because these operators generally do not commute, all these operators (except for when $a \equiv b \equiv 0$) are classified as quantum.

\subsection{Remarks}
\label{sec:TCdescRemarks}

\noindent\textbf{Classical topologically ordered phases}.
For $p=2$, Cases~2, 3a, and~3b are all CTO phases that results from measurement of Wilson loops of types $e$, $m$, and fermion $\psi = em$ respectively.
In all three scenarios, the density matrix space $\DMxS{n}$ is a tetrahedron.
Cases~2 and~3a can be physically realized via the $Z$-flip/dephasing, and $X$-flip respectively~\cite{2023BaoBoundarySPT, DiagnosticsRelativeEntropyNegativity}.
Because the TC has an automorphism mapping between the bosons $e \leftrightarrow m$, their phases are related by finite-depth unitary circuits.
However, reference~\onlinecite{wang2023fcondensation} finds that the Case~3b (fermion decoherence) phase has non-vanishing topological entanglement negativity,
	in contrast with the $e$- and $m$-decoherence phases.
This means that a finer classification exists beyond our geometric definition topological mixed states.

\vspace{1ex}\noindent\textbf{Chiral phases}.
Case~3b for $p > 2$ is very interesting.
When $\gcd(2c,p) = 1$, the $\Z_p$ toric code can be decomposed into a product of two separate TQFTs,
	with anyons $\mathcal{C}_+ = \{ 0, e^c m, e^{2c}m^2, \dots \}$ and $\mathcal{C}_- = \{ 0, e^{-c} m, e^{-2c}m^{-2}, \dots \}$.
These theories are chiral if $p \equiv 3 \pmod{4}$ or if the Legendre symbol $(c|p) = -1$.
The error channel depolarizes anyons in the $\mathcal{C}_-$ sector, but leaves the subtheory $\mathcal{C}_+$ intact.
The resulting QTO phase has the same geometric structure and Wilson loop algebra as that of the $\mathcal{C}_+$ theory.
This suggests that the mixed state phase is the same as the pure state $\mathcal{C}_+$ phase, in that one can find a continuous deformation connecting the two.
(And if not, it means that there are finer classification of mixed states to be discovered!)

Reference~\onlinecite{Haah_2023} provides an explicit example of operators that can generate such mixed states.
Consider a square lattice with a $p$-dimensional Hilbert space on each edge (the standard Hilbert space for the $\Z_p$ TC).
Define the Pauli operators
\begin{subequations} \begin{align} \label{eq:JH_generators1}
\JHh^\pm_{0,0} \;&=\;
\scalebox{1}{%
	\begin{tikzpicture}
	[baseline={([yshift=-.5ex]current bounding box.center)}],
	\draw[color=black!100,line width=1.pt] (-65pt,30pt) -- (35pt,30pt);
	\draw[color=black!100,line width=1.pt] (-65pt,0pt) -- (35pt,0pt);
	\draw[color=black!100,line width=1.pt] (-65pt,-30pt) -- (35pt,-30pt);
	\draw[color=black!100,line width=1.pt] (30pt,-35pt) -- (30pt,35pt);
	\draw[color=black!100,line width=1.pt] (0pt,-35pt) -- (0pt,35pt);
	\draw[color=black!100,line width=1.pt] (-30pt,-35pt) -- (-30pt,35pt);
	\draw[color=black!100,line width=1.pt] (-60pt,-35pt) -- (-60pt,35pt);
	\filldraw[black] (0pt,0pt) circle (2pt);
	\draw[color=white!100,line width=2.pt] (0pt,11pt) -- (0pt,20pt);
	\draw[color=white!100,line width=2.pt] (-30pt,11pt)--(-30pt,20pt);
	\draw[color=white!100,line width=2.pt](0pt,-12pt) -- (0pt,-21pt);
	\draw[color=white!100,line width=2.pt](-30pt,-12pt) -- (-30pt,-21pt);
	\draw[color=white!100,line width=2.pt] (6pt,0pt) -- (26pt,0pt);
	\draw[color=white!100,line width=2.pt] (-34pt,0pt) -- (-54pt,0pt);
	\draw[color=white!100,line width=2.pt] (-6pt,0pt) -- (-22pt,0pt);
	\node[black!40!red] () at (5pt,16pt){$X^{\pm1}$};
	\node[black!40!red] () at (16pt,1pt){$X^{\pm1}$};
	\node[red] () at (5pt,-16pt){$X^{\mp1}$};
	\node[blue] () at (-14pt,1pt){${Z}^{2c}$};
	\node[black!40!red] () at (-25pt,16pt){$X^{\pm1}$};
	\node[red] () at (-44pt,1pt){$X^{\mp1}$};
	\node[black!40!red] () at (-25pt,-16pt){$X^{\pm1}$};
	\end{tikzpicture} }\;,
\end{align}
and
\begin{align} \label{eq:JH_generators2}
	\JHv^\pm_{0,0}
\;&=\;
	\scalebox{1}{\begin{tikzpicture}
	[baseline={([yshift=-.5ex]current bounding box.center)}],
	\draw[color=black!100,line width=1.pt] (-35pt,60pt) -- (35pt,60pt);
	\draw[color=black!100,line width=1.pt] (-35pt,30pt) -- (35pt,30pt);
	\draw[color=black!100,line width=1.pt] (-35pt,0pt) -- (35pt,0pt);
	\draw[color=black!100,line width=1.pt] (-35pt,-30pt) -- (35pt,-30pt);
	\draw[color=black!100,line width=1.pt] (30pt,-35pt) -- (30pt,65pt);
	\draw[color=black!100,line width=1.pt] (0pt,-35pt) -- (0pt,65pt);
	\draw[color=black!100,line width=1.pt] (-30pt,-35pt) -- (-30pt,65pt);
	\filldraw[black] (0pt,30pt) circle (2pt);
	\draw[color=white!100,line width=2.pt] (0pt,11pt) --(0pt,20pt);
	\draw[color=white!100,line width=2.pt](0pt,-10pt) -- (0pt,-19pt);
	\draw[color=white!100,line width=2.pt] (6pt,0pt) --(26pt,0pt);
	\draw[color=white!100,line width=2.pt] (-6pt,0pt) -- (-26pt,0pt);
	\draw[color=white!100,line width=2.pt] (6pt,30pt) --(26pt,30pt);
	\draw[color=white!100,line width=2.pt] (-6pt,30pt) -- (-26pt,30pt);
	\draw[color=white!100,line width=2.pt] (0pt,41pt) -- (0pt,50pt);
	\node[blue] () at (3pt,16pt){${Z}^{2c}$};
	\node[red] () at (16pt,1pt){$X^{\mp 1}$};
	\node[black!40!red] () at (5pt,-14pt){$X^{\pm1}$};
	\node[red] () at (-16pt,1pt){$X^{\mp1}$};
	\node[red] () at (16pt,31pt){$X^{\mp1}$};
	\node[black!40!red] () at (-16pt,31pt){$X^{\pm1}$};
	\node[red] () at (5pt,46pt){$X^{\mp1}$};
	\end{tikzpicture} }\;.
\end{align} \end{subequations}
The black dots indicate the origin of the lattices.
$\JHh^\pm_{l,m} / \JHv^\pm_{l,m}$ are defined to be translation of $\JHh^\pm_{0,0} / \JHv^\pm_{0,0}$ by $l$ units horizontally, $m$ units vertically.
The set $\bigl\{ \JHh^+_{l,m}, \JHv^+_{l,m} \bigr\}$ commutes with the set $\bigl\{ \JHh^-_{l,m}, \JHv^-_{l,m} \bigr\}$, and together they generate the full algebra of the Hilbert space~\cite{Haah_2021, Haah_2023}.%
	\footnote{That is, every (finitely supported) lattice operator can be written as a finite combination of the Pauli operators $\JHh^\pm_{l,m}$ and $\JHv^\pm_{l,m}$.}
Construct the Hamiltonians $H_\pm = -\sum_{l,m} \sum_{k=1}^p \big(h^\pm_{l,m}\big)\phd^k$ as a sum of commuting Pauli operators
\begin{subequations} \begin{align}
	h^+_{0,0} &= \left( (\JHv_{\bar{1},0}^+)^{\dag} (\JHh_{0,\bar{1}}^+) (\JHv_{0,0}^{+}) (\JHh_{0,0}^+)^{\dag} \right)^{\bar{2}}
\notag\\
	&= 
	\scalebox{0.9}{\begin{tikzpicture}
	[baseline={([yshift=-.5ex]current bounding box.center)}],
	\draw[color=black!100,line width=1.pt] (-45pt,40pt) -- (45pt,40pt);
	\draw[color=black!100,line width=1.pt] (-45pt,0pt) -- (45pt,0pt);
	\draw[color=black!100,line width=1.pt] (-45pt,-40pt) -- (45pt,-40pt);
	\draw[color=black!100,line width=1.pt] (40pt,-45pt) -- (40pt,45pt);
	\draw[color=black!100,line width=1.pt] (0pt,-45pt) -- (0pt,45pt);
	\draw[color=black!100,line width=1.pt] (-40pt,-45pt) -- (-40pt,45pt);
	\filldraw[black] (0pt,0pt) circle (2pt);
	\draw[color=white!100,line width=2.pt] (0pt,15pt) -- (0pt,25pt);
	\draw[color=white!100,line width=2.pt](0pt,-15pt) -- (0pt,-25pt);
	\draw[color=white!100,line width=2.pt](-40pt,-15pt) -- (-40pt,-25pt);
	\draw[color=white!100,line width=2.pt] (15pt,0pt) -- (25pt,0pt);
	\draw[color=white!100,line width=2.pt] (-5pt,0pt) -- (-35pt,0pt);
	\draw[color=white!100,line width=2.pt] (-13pt,-40pt) -- (-25pt,-40pt);
	\node[red] () at (0pt,20pt){$X$};
	\node[red] () at (20pt,0pt){$X$};
	\node[] () at (0pt,-20pt){\textcolor{red}{$X^\dag$}\textcolor{blue}{$Z^{c}$}};
	\node[] () at (-20pt,1pt){\textcolor{red}{$X^\dag$}\textcolor{blue}{$Z^{-c}$}};
	\node[blue] () at (-39pt,-20pt){${Z}^{-c}$};
	\node[blue] () at (-19pt,-40pt){${Z}^{c}$};
	\end{tikzpicture}} \;
	\sim A_+^\phd B_\square^{c} \,,
\\	
	h^-_{0,0} &= \left( (\JHv_{\bar{1},0}^-)^\dag(\JHh_{0,\bar{1}}^-)(\JHv_{0,0}^{-})(\JHh_{0,0}^-)^\dag \right)^{\bar{2}}
	\sim A_+^\dag B_\square^{c} \,.
\end{align} \end{subequations}
(Here $\bar{2} = \frac{p+1}{2}$, $A_+$ and $B_\square$ are the vertex and plaquette stabilizers of the $\Z_p$ toric code, detecting the presence of $e$ and $m$ anyons respectively.)
Reference~\onlinecite{Haah_2023} points out that the Hamiltonians $H_+ / H_-$ are commuting projector Hamiltonians whose excitation spectrum realizes the chiral $\mathcal{C}_+ / \mathcal{C}_-$ theories respectively.
Physically, it means that there is an explicit set of lattice operators which realizes the splitting of the $\Z_p$ TC to $\mathcal{C}_+ \times \mathcal{C}_-$ on a microscopic level.

Each operator $O$ defines a quantum channel $\Phi[O](\sigma) = (1-t)\sigma + t O \sigma O^\dag$ using $O$ as a Kraus operator.
They commute for (product of) Pauli operators: $\Phi[P_1] \circ \Phi[P_2] = \Phi[P_2] \circ \Phi[P_1]$.
The mixed state phase is then realized by applying the set of all $\Phi\bigl[\JHh^-_{l,m}\bigr]$ and $\Phi\bigl[\JHv^-_{l,m}\bigr]$ to the toric code ground state.
At $t=1/2$, the resulting state can be written as $\lim_{\beta \to \infty} \exp\bigl[ -\beta H^+ \bigr]$; the ground state of the $\mathcal{C}^-$ sector but a maximally mixed state in the $\mathcal{C}^+$ sector.
These states (for all values of $t$) admit a 2D tensor network representation (as projected entangled pair operators) with finite bond dimension.

\vspace{1ex}\noindent\textbf{Anyon decoherence}.
It is apparent that for each of the cases studied in the present work, the error channel is of the form (or combinations of channels of the form)
\begin{align}
	Q_\theta(\rho) \propto \sum_\Ts W_\Ts(a) \, \rho \, W_{-\Ts}(a)
\end{align}
for some anyon $a$, where $W_\Ts(a)$ is the Wilson loop operator wrapping $a$ around the loop $\Ts$.
(Here we drop the $\theta$-dependent unitaries $U_\theta$.)
Hence, we can also refer to these channels as \emph{anyon-decoherence processes}, the corresponding anyon(s) for each case are listed in Table~\ref{tab:TCdesc_classification}.
This is essentially the anyon condensation picture proposed in Ref.~\onlinecite{2023BaoBoundarySPT}.
The resulting channel behaves like a dephasing channel when the Wilson loop operators among different directions commute;%
	\footnote{For an Abelian anyon $a$, its Wilson loops commute if and only if $S_{a,a} / S_{0,0} = 1$.}
	and acts like a (partial) depolarization channel when the operators span a matrix ring algebra.
Under such channel, the Wilson loop operators that commutes with all the Kraus operators survive, that is, they induces well-defined operators in $\DMxV{n}(\rho_\E)$.
The algebra generated by such operator determines the classification of the mixed state.
The phase is trivial if the only Wilson loop operator is the identity,
	CTO if all the Wilson loop operators commute,
	and QTO if the Wilson loop algebra is noncommutative.

We can generalize this observation to an arbitary Abelian theory.
Suppose the parent theory is a pure state with topological order described by the TQFT $\mathcal{C}$.
For any subset of anyons $\mathcal{F}$ which includes the identity and is closed under fusion, we can write the ``$\mathcal{F}$-decoherence'' channel
\begin{align}
	Q(\rho) = \frac{1}{|\mathcal{F}|^2} \sum_{a,b \in \mathcal{F}} W_\Th(a) W_\Tv(b) \, \rho \, W_\Tv(b)^\dag W_\Th(a)^\dag .
\end{align}
In an anyon-decoherence process, the surviving operators of the descendant theory are generated by Wilson loop operators which commutes with the Kraus operators.
A simple test for such operators is as follows: $W_{\Ts \neq 0}(c)$ commutes with all the Kraus operators if and only if $S_{a,c} = S_{0,0}$ for all $a \in \mathcal{F}$.
We can view the descendant phase as being described by a subtheory of $\mathcal{C}$ possessing only anyons which ``commutes'' with $\mathcal{F}$:
\begin{align}
	\mathrm{C}_{\mathcal{C}}(\mathcal{F}) \defeq \bigl\{ c \in \mathcal{C} \,\big|\, \forall a\in\mathcal{F}, \, S_{a,c} = S_{0,0} \bigr\} .
\end{align}
(This viewpoint assumes that the decoherence map is surjective, which may not be true if $\mathcal{C}$ is non-Abelian.)
Finally, the phase is trivial if $|\mathrm{C}_{\mathcal{C}}(\mathcal{F})| = 1$,
	CTO if the $S$-matrix restricted to $\mathrm{C}_{\mathcal{C}}(\mathcal{F})$ is elementwise positive real,
	and QTO otherwise.

For example, consider the double semion model $\mathcal{C} = \{0,s,\bar{s},b\}$, where $s/\bar{s}$ have topological spins $\pm i$ and $b = s\bar{s}$ is the boson.
There are 5 subtheories: $\{0\}$, $\{0,s\}$, $\{0,\bar{s}\}$, $\{0,b\}$, and $\mathcal{C}$ itself.
Since $\mathcal{C}$ decomposes as $\{0,s\} \times \{0,\bar{s}\}$, both $s$- and $\bar{s}$-decoherence channels would lead to a QTO mixed state.
However, for the $b$-decoherence process: as $S_{b,b} = S_{0,0} = \frac{1}{2}$, we have $\mathrm{C}_\mathcal{C}(\mathcal{F}) = \mathcal{F} = \{0,b\}$.
Only the Wilson loop operators $W_\Ts(b)$ survives the decoherence process, these operators mutually commute, and so the phase is classical topologically ordered.
Notably, our classification of this phase differs from that of Ref.~\onlinecite{2023BaoBoundarySPT}.

\vspace{1ex}\noindent\textbf{Relation between different replica orders}.
The Wilson loop tests give us a direct link between the $X$-loop expectation tensor~\eqref{eq:xp_XXXX_Qq} and topological order of the mixed phase.
In determining the properties of Wilson loop operators, we only needed to use a portion of the information contained within the expectation tensor/boundary SPT order.
For instance, whether $(X^\Ts)_\E$ is a Wilson loop operator depends only on whether $x_{2n}^\phd x_1^{-1} \in G$ and has trivial cocycles (see Tab.~\ref{tab:TCdesc_classification}).
This suggests that the auxiliary symmetries on some subset of layers can be ignored, which can be thought of as grouping a bunch of replica together as if it was one mixed state.

Furthermore, if $\rho$ and $\sigma$ are $mn$-indistinguishable, then $\rho^m$ and $\sigma^m$ are then $n$-indistinguishable (the converse statement is not obviously true).
Is there a connection between the topological order of $(\rho_\E)^m$ and the topological order of $\rho_\E$?
Here we offer some tantalizing, albeit incomplete, arguments towards such connection.

First, let $\mathcal{X}_n^{(C)}$ denote the $n$-replica $X$-loop tensor for Case~$C$ (all listed in Tab.~\ref{tab:TCcondensation}).
Observe that for each of the $p+3$ possible boundary SPT order,
\begin{align} \label{eq:Xtensor_contract} \begin{aligned}
& \sum_{\Tt_2,\Tt_3} \delta(\Tt_2+\Tt_3) \, \mathcal{X}_{n}^{(C)}(\Tt_1, \Tt_2, \Tt_3, \dots, \Tt_{2n})
\\& \mkern120mu = \kappa_C \mathcal{X}_{n-1}^{(C)}(\Tt_1, \Tt_4, \Tt_5, \dots, \Tt_{2n}) ,
\end{aligned} \end{align}
for
\begin{align}
	\kappa_C = \begin{cases} 1 & \text{Cases 1, 2}, \\ p & \text{Cases 3, 4}. \end{cases}
\end{align}
Basically, the $n$-replica $X$-loop tensor, after tracing over $(n-k)$ pairs of $(\Tt_{2i},\Tt_{2i+1})$, reduces to the $k$-replica $X$-loop tensor of the same case.

Inspired by the identity above, consider the following quantum channel
\begin{align}
	\E_m(X^\Ts \rho_{00} X^\Tt) = \frac{1}{\mathcal{J}_m} \sum_{\Tk_1,\dots,\Tk_{m-1}}
	\left[\begin{array}{@{} l @{}} \E(X^\Ts \rho_{00} X^{-\Tk_1}) \\{} \times \E(X^{\Tk_1} \rho_{00} X^{-\Tk_2}) \\{} \times \cdots \\{} \times \E(X^{\Tk_{m-1}} \rho_{00} X^\Tt) \end{array}\right] .
\end{align}
The channel can also be defined recursively: $\E_{\ell+m}(X^\Ts \rho_{00} X^\Tt) \propto \sum_\Tk \E_\ell(X^\Ts \rho_{00} X^{-\Tk}) \, \E_m(X^\Tk \rho_{00} X^\Tt)$, with $\E_1 = \E$.

While it is not clear that such map is ``local''---that is, can be approximated by a finite-depth quantum channel circuit---%
the map is trace preserving (with the appropriate choice of $\mathcal{J}_m$), and positive.
Leveraging the PEPS construction, the resulting states are locally indistinguishable from $\E(\rho_{00})^m = (\rho_\E)^m$.

By construction, the $n$-replica tensor network for $\E_m(\rho)$ consists of $mn$ copies of $\rho_\E$.
The $n$-replica $X$-loop expectation tensor for $\E_m(\rho)$ is given by $\mathcal{X}_{mn}^{(C)}(\Tt_1, \dots, \Tt_{2mn})$, after tracing over all pairs $(\Tt_{2i},\Tt_{2i+1})$ except for $(\Tt_{2m},\Tt_{2m+1}), (\Tt_{4m},\Tt_{4m+1}), \dots, (\Tt_{2mn},\Tt_{1})$.
Applying~\eqref{eq:Xtensor_contract}, this simplifies to $\mathcal{X}_{n}^{(C)}$, in the same case~$C$ as the $mn$-replica tensor for $\rho_\E$.

In other words, our hypothesis is that the $n$-replica topological order of $(\rho_\E)^m$ is the same as the $mn$-replica topological order of $\rho_\E$.

\section{Numerical results}
\label{sec:numerics}
\begin{figure*}
    \centering
    \includegraphics[width = \linewidth]{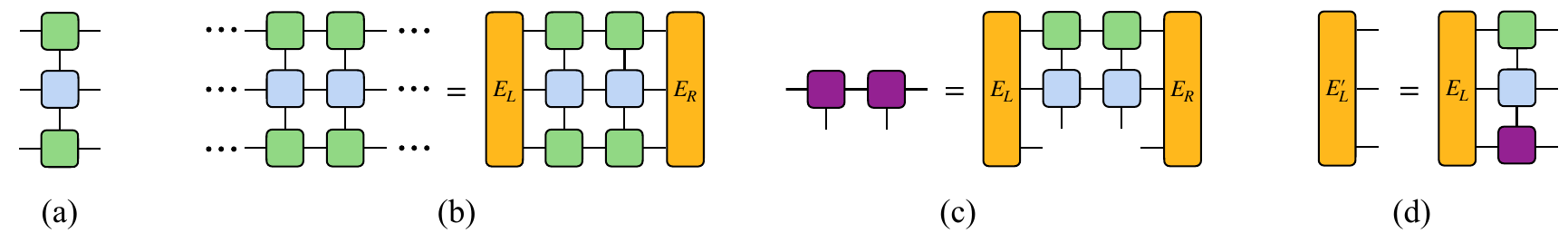}
    \caption{(a) Unit cell operator of the overlap tensor network. (b) The infinite overlap is equivalent to the finite overlap with the environments. The left and right environments are left and right dominant eigenvectors of the unit cell operator. (c) The new two-site tensor is given by the corresponding overlap two-site tensor with appropriate SVD decomposition. (d) The left and right environments are updated accordingly.}
    \label{fig:overlap}
\end{figure*}

In this section, we numerically calculate the boundary expectation tensor of the $\Z_2$ toric code under different quantum channels.
We aim to provide numerical support to our theoretical arguments regarding mixed-state topological order.
We begin by reproducing the well-known error thresholds for the $X$- and $Z$-flip error channels, and then we explore a more interesting situation with the amplitude damping channel.
This exploration demonstrates that PEPS is a powerful tool for both theoretical analysis and numerical calculation.

In our calculation, we employ the standard iPEPS and iMPS tensor network algorithms to handle the infinitely large toric code pure states and their boundary states~\cite{JordanPhysRevLett2008PEPS,PhysRevLettInfitniteMPSVidal,PhysRevBInfitnitePEPSOrusVidal}.
The key step in our calculation is obtaining the dominant eigenstate(s) of the transfer operator.
For this purpose, we use the straightforward power method.
Simply speaking, to find the dominant eigenvector of an operator $\mathcal{O}$, we start with a random vector and repeatedly multiply it by $\mathcal{O}$, normalizing at each step.
This process leads to a convergence to the dominant eigenvector if the operator has the largest eigenvalue (in magnitude).
Specifically, in our case, we start with a completely symmetry breaking state and repeatedly multiply it by the transfer operator.
We can determine if a symmetry operator is restored by taking its expectation values with respect to the symmetry generators.

\begin{figure}
    \centering
    \includegraphics[width = \linewidth]{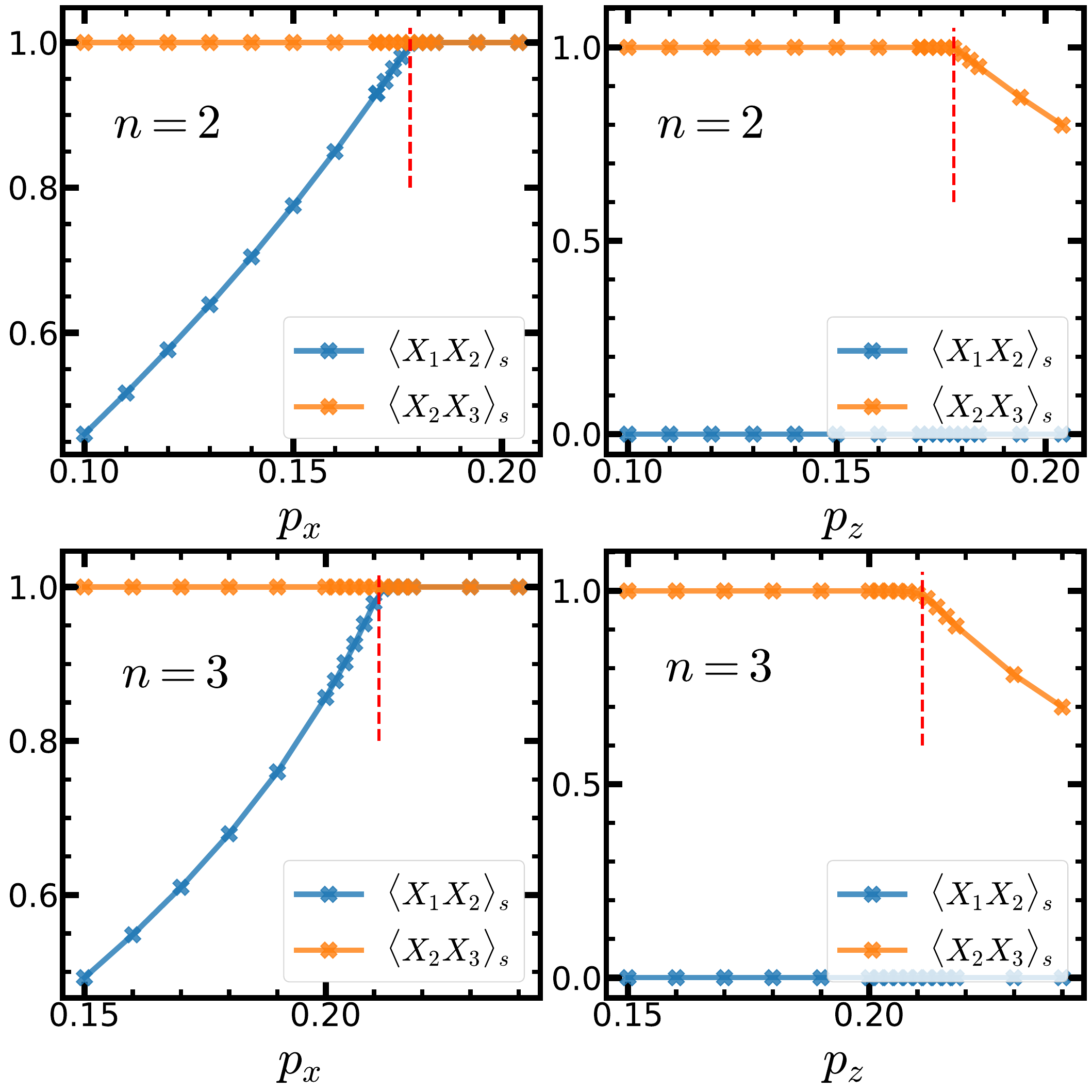}
    \caption{The expectation values per site of the boundary symmetries for the $\Z_2$ toric code model under the $X$- and $Z$-flip quantum channels. The red dashed lines indicate the analytical solutions ($p^{(2)}_c \approx 17.8\%, \ p^{(3)}_c \approx 21.1\%$).  }
    \label{fig:XandZflip}
\end{figure}

The situation becomes slightly more complicated when dealing with an infinite size system.
In our case, the transfer operator is represented by an iMPO, and the eigenstate is represented by a iMPS with the unit cell of two (the bond dimensions of the iMPS are fixed to be 10 in this section). 
Here we use the two-site update method to apply an iMPO to an iMPS (see Fig.~\ref{fig:overlap}).
The idea is that the state $\ket{b}$, which maximizes the quotient $\braket{b|\mathcal{O}|a}/\braket{b|b}$, is given by $\ket{b} = \mathcal{O}\ket{a}$.
The inner product $\braket{b|\mathcal{O}|a}$ is represented by a overlap tensor (Fig.~\ref{fig:overlap}(a)).
To evaluate this overlap, we substitute the infinite overlap with a finite one that sandwiched by environments.
Here the environments are dominant eigenstates of the unit cell of the overlap (Fig.~\ref{fig:overlap}(b)).
We then update the iMPS of $\ket{b}$ by replacing two sites of it with the corresponding sites in the overlap tensor network (Fig.~\ref{fig:overlap}(c)).
This is followed by an appropriate SVD decomposition and normalization to maintain the canonical form of the iMPS.
The left and right environments are also updated accordingly (Fig.~\ref{fig:overlap}(d)).
This two-site update process is repeated until the Frobenius distance between the old and new two-site tensors falls below a set threshold ($10^{-7}$ in our case), or until the total number of updates exceeds the predefined maximum.
After each multiplication step, we record the quotient $\braket{b|\mathcal{O}|b}/\braket{b|b}$, and repeat using $\ket{b}$ as $\ket{a}$ for the next multiplication step.
We terminate the power method when the difference between quotients drops below the threshold ($10^{-7}$ in our case), or when the number of multiplications exceeds the maximum allowance.
Using this iMPO-iMPS multiplication procedure, the iMPS converges to the dominant eigenvector of the transfer operator, i.e., the boundary state of the $n$-replica network.

After obtaining the boundary state $\ket{b}$, the expectation value $\braket{\auxX_i^\Th\auxX_j^\Th}$ is given by the overlap tensor network $\braket{b|\auxX_i^\Th\auxX_j^\Th|b}$.
Recall that the symmetry generators $\auxX^\Th$ is a product of $\auxX$ applied to each ``site'' of the boundary MPS.
Here we calculate the expectation value per site $\braket{\auxX_i \auxX_j}_s$ of the iMPS, which can be an arbitrary number between 0 and 1.
The expectation value of the symmetry generator
\begin{align}
     \bigXp{\auxX_i^\Th \auxX_j^\Th} = \bigl( \bigXp{\auxX_i \auxX_j}\!_s \bigr)^L
\end{align} is either 0 or 1 in the thermodynamic limit $L \to \infty$.
Thus, we can determine the boundary state symmetry group $G$ by calcuating the two boundary expectations $\braket{\auxX_1\auxX_2}_s$ and $\braket{\auxX_2\auxX_3}_s$ (for some arbitrary direction):
	specifically $x_1x_2 \in G$ if $\braket{\auxX_1\auxX_2}_s = 1$, and $x_2x_3 \in G$ if $\braket{\auxX_3\auxX_3}_s = 1$.
The topological order of the mixed state can then be read from Tab.~\ref{tab:TCdesc_classification}.
The subclasses of Case~3 can be distinguished by calculating the SPT order of the boundary MPS using the method presented in Ref.~\onlinecite{Pollmann2012DetectingSPT}.

We first reproduce the results of the $X$- and $Z$-flip error channels by calculating the boundary expectations.
In these cases, the on-site quantum channel is the composition of
\begin{subequations} \begin{align}
	\epsilon_x(\sigma) &= (1-p_x)\sigma + p_x X \sigma X ,
\\	\epsilon_z(\sigma) &= (1-p_z)\sigma + p_z Z \sigma Z . \label{eq:Zerr_channel}
\end{align} \end{subequations}
(Recall that $X$ creates a pair of $m$ anyons, and $Z$ creates a pair of $e$ anyons.)
Note that the two error channels commute with $\M$, so it's unnecessary to include the measurement channels in the tensor network as $\M\E(\rho_\text{TC}) = \E\M(\rho_\text{TC}) = \E(\rho_\text{TC})$.
For the 2- and 3-replica cases, the TC models subject to these two error channels are exactly solvable via a mapping to the random bound Ising model.
The thresholds are given by $p^{(2)}_c = \left(1-\sqrt{\sqrt{2}-1}\right)/2 \approx 17.8\%$, $p^{(3)}_c =\left(3-\sqrt{3}\right)/6\approx 21.1\%$~\cite{dennis2002topologicalMemory,Jean_Marie_Maillard_2003_RBIM,DiagnosticsRelativeEntropyNegativity}.
We present the boundary expectations in Fig.~\ref{fig:XandZflip}, finding that the boundary phase transitions align with previous results.
Observe that for the $X$-flip channel, the boundary symmetry ${x_1x_2}$ is restored across the transition point.
Furthermore, it is straightforward to verify that the projective representation for this case is trivial.
Thus, this is a phase transition from Case~4 to Case~3a.
For the $Z$-flip channel, the boundary symmetry ${x_2x_3}$ breaks across the transition point, indicating a phase transition from Case~4 to Case~2.
These boundary symmetry enhancement/breaking patterns are consistent with our analysis in Tab.~\ref{tab:TCcondensation}.

\begin{figure}
    \centering
    \includegraphics[width = \linewidth]{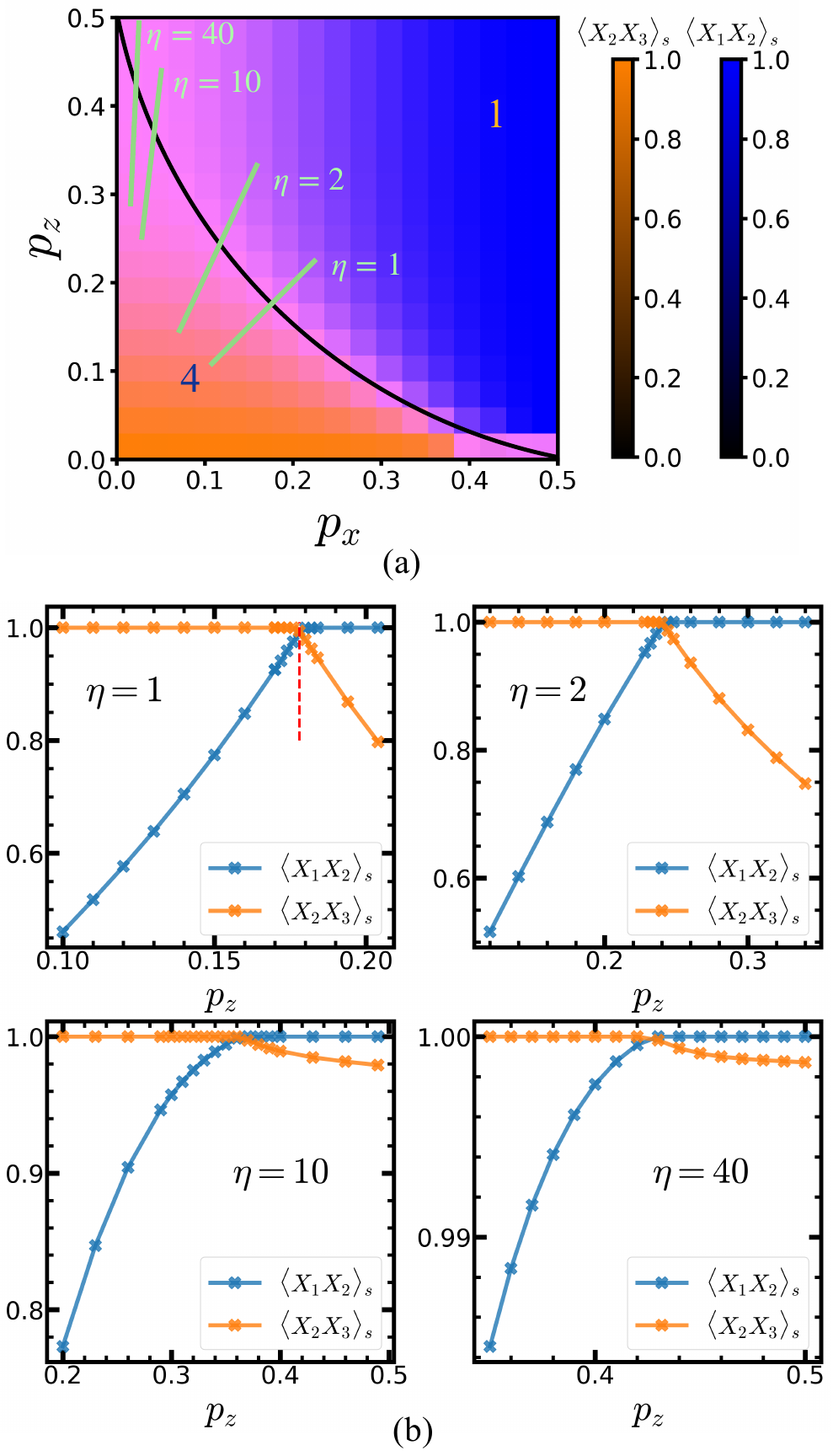}
    \caption{%
(a) Phase diagram based on the boundary expectation values of the 2-replica XZZX code.
(b) The boundary expectation values versus the $Z$-flip error for different error biases. Here the biases is defined as $\eta = p_z/p_x$.
In the unbiased case where $p_z = p_x$, the critical point is again $p_c^{(2)} \approx 17.8\%$.
The error threshold, $p_z^{(2)}$,  rises correspondingly as the bias $\eta = p_z/p_x$ increases.
}
    \label{fig:data_hadamard}
\end{figure}

Next, we consider the XZZX code~\cite{WenPhysRevLett2003Plaquette,bonilla2021xzzx}, which is a variation of the standard TC model.
The unit cell of the XZZX code is obtained by applying Hadamard gates on every horizontal sites.
The unit cell of the PEPS is
\begin{align} \label{eq:toric_code_PEPS_Hadamard_A}
\begin{tikzpicture}
	[baseline={([yshift=-.5ex]current bounding box.center)},rnode1/.style={circle,inner sep=1pt, draw=orange!100,fill = white!100,line width=1.5pt,minimum size = 10pt}],
	\draw[color=black!100,line width=1.6pt] (-20pt,0pt) -- (25pt,0pt);
	\draw[color=black!100,line width=1.6pt] (0pt,-25pt) -- (0pt,15pt);
	\draw[color=black!100,line width=2pt] (0pt,0pt) -- (18pt,-18pt);
	\draw[color=black!100,fill=white!100,rounded corners=4pt,line width=1.5pt] (-10pt,-10pt) rectangle (10pt,10pt);
	\node[] at (0pt, 0pt) {$A$};
\end{tikzpicture}
\mkern10mu = \mkern10mu
\begin{tikzpicture}
	[baseline={([yshift=-.5ex]current bounding box.center)},rnode1/.style={circle,inner sep=1pt, draw=orange!100,fill = white!100,line width=1.5pt,minimum size = 10pt,path picture={\draw[orange]
       (path picture bounding box.south) -- (path picture bounding box.north) 
       (path picture bounding box.west) -- (path picture bounding box.east);
      }}],
	\draw[color=black!100,line width=1.6pt] (-20pt,0pt) -- (45pt,0pt);
	\draw[color=black!100,line width=1.6pt] (0pt,-25pt) -- (0pt,15pt);
	\draw[color=orange!100,line width=1.6pt] (12.5pt,0pt) -- (38.5pt,-26pt);
	\draw[color=orange!100,line width=1.6pt] (0pt,-12.5pt) -- (10pt,-22.5pt);
	\draw[color=black!100,fill=white!100,rounded corners=2pt,line width=1.5pt] (22.5pt,-20.5pt) rectangle (33pt,-10pt);
	\node[]() at (27.75pt,-15.25pt){$H$};
	\node[rnode1] (0) at (12.5pt,0pt) {};
	\node[rnode1] (1) at (0pt,-12.5pt) {};
\end{tikzpicture} \ .
\end{align}
In the XZZX code, an $X$ error creates a pair of $m$ anyons on vertical sites and a pair of $e$ anyons on horizontal sites; conversely, a $Z$ error generates a pair of $e$ anyons on vertical sites and a pair of $m$ anyons on horizontal sites.
As seen in Fig.~\ref{fig:data_hadamard}, the Hadamard transformation leads to two significant consequences. 

Firstly, the equal amounts of $e$ and $m$ errors means that there is a direct phase transition from Case~4 to Case~1.
Thus, there are no CTO phases in this scenario.

Secondly, since the vertical and horizontal sites have different error types, this model has a larger error threshold for $X$- and $Z$- type errors.
In the unbiased case where $p_x = p_z$, the error threshold is identical to that of the $\mathbb{Z}_2$ TC model because there is an equal number of $m$- and $e$-type errors.
As the bias $\eta = p_z/p_x$ increases to 2, 10, and 40, the error threshold, $p_z^{(2)}$, correspondingly rises to approximately 0.24, 0.36, and 0.43, respectively.
We suspect the error threshold increases to 0.5 as the bias approaches infinity.%
	\footnote{For the 1-replica (conventional) error correction protocol, the error threshold is 0.5 as the bias approaches infinity~\cite{bonilla2021xzzx,xiao2024exactXZZX}.}

\begin{figure}
    \centering
    \includegraphics[width = \linewidth]{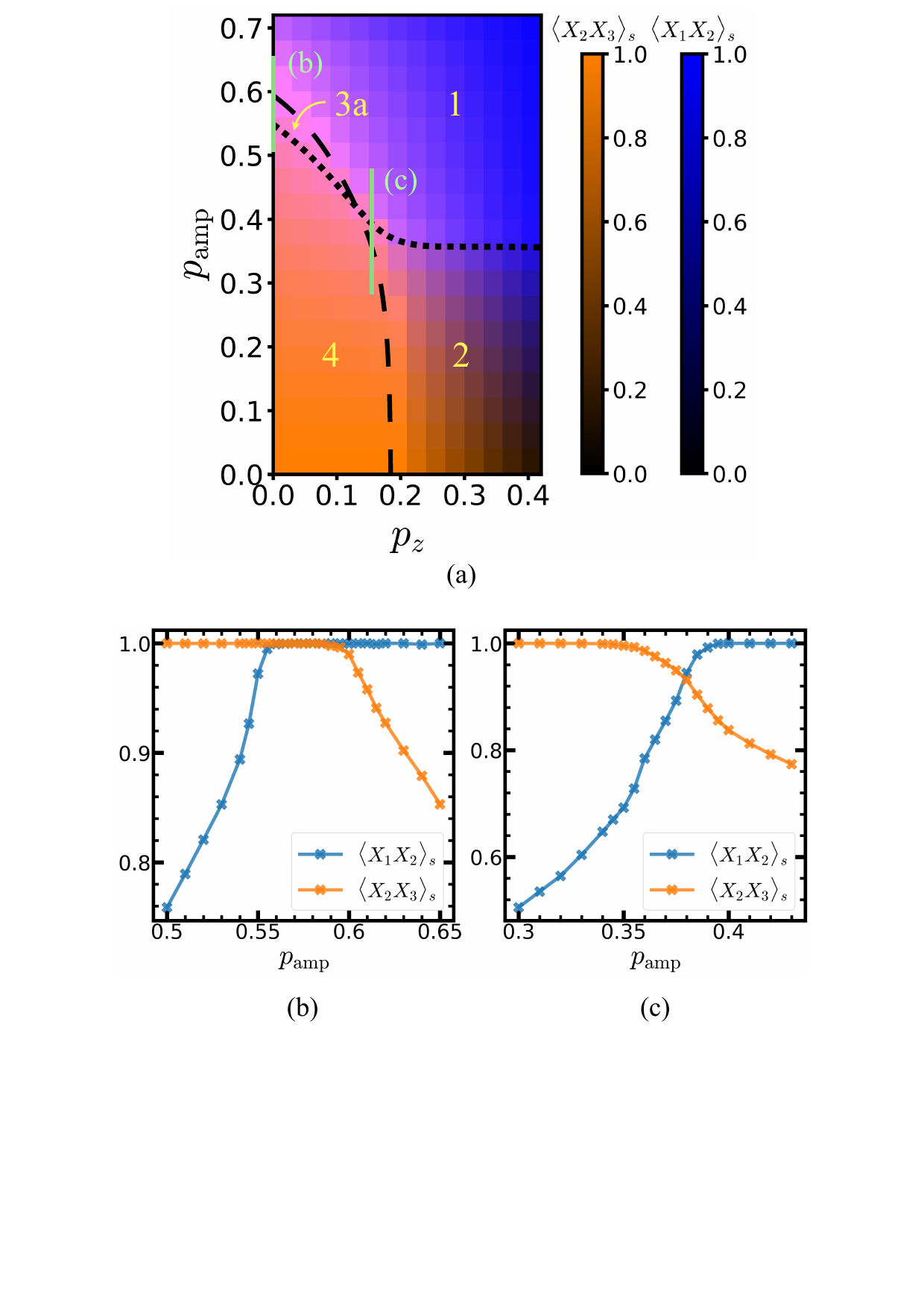}
    \caption{(%
a) Phase diagram of the 2-replica $\Z_2$ TC model subjects to both the $Z$-flip channel and the amplitude damping channel. The phase boundaries are set by the boundary expectations values. The $e$- and the $m$-decoherence transitions are denoted by the dashed and dotted lines respectively. Both lines on the plot are approximate.
(b) and (c) Two different line cuts through the phase diagram which shows that the two transition lines crosses.}
    \label{fig:zflip_deamp}
\end{figure}

We choose the amplitude damping and dephasing channel as our last example.
The amplitude damping channel is non-Pauli, and is given by
\begin{align}
    \epsilon_\mathrm{amp}(\sigma) &=   M_1\sigma M_1^\dag +  M_2\sigma M_2^\dag \,,
\end{align}
where
\begin{align}
    &M_1 = \begin{pmatrix} 1& 0 \\
    0 &\sqrt{1-p_\mathrm{amp}}
    \end{pmatrix},& M_2 = \begin{pmatrix} 0& \sqrt{p_{\mathrm{amp}}} \\
    0 &0
    \end{pmatrix}.&
\end{align}
The dephasing channel is identical to the $Z$-error channel Eq.~\eqref{eq:Zerr_channel}.
Unlike in the previous two scenarios, the amplitude damping channel does not commute with $\M$, specifically the portion that measures $A_+$ (comprised of four $X$'s) [cf.\ Eq.~\eqref{eq:M_channel}].
Here the vertex syndrome measurements must be explicited inserted in the $n$-replica tensor network as a four-site quantum channel,
resulting in a larger bond dimension in comparison to the previous error models.

Fig.~\ref{fig:zflip_deamp} illustrates the phase diagram of the 2-replica $\Z_2$ TC subjects to both the amplitude damping channel and the $Z$-flip channel.
Notably, even with the presence of the amplitude damping channel alone, the system undergoes two types of phase transitions: initially from Case~4 (QTO) to Case~3a (CTO), followed by a transition from Case~3a to Case~1 (trivial).

\section{Summary and Discussion}
\label{sec:discussion}

In this paper, we explore various aspects of topological order in mixed states.
We began, in Sec.~\ref{sec:TopoMixed}, with two definitions of replica topological order in mixed states.
Our definitions are based on the concept of local and global indistinguishability, involving $n$ replicas of the state.
While we use the $n$-Schatten norm to define global indistinguishability [cf.\ Eq.~\eqref{eq:Schatten_norm}],
it is likely that we will arrive at an equivalent definition by utilizing instead the R\'enyi relative entropy $D^{(n)}(\rho\Vert\sigma) \defeq \frac{1}{1-n} \Tr( \rho\sigma^{n-1} ) / \Tr(\rho^n)$ used in Ref.~\onlinecite{DiagnosticsRelativeEntropyNegativity}.

We present examples, based on Kitaev's non-Abelian toric code, of classical topologically ordered phases in Sec.~\ref{sec:GaugeModel}.
We also expect thermal mixed states of the form $e^{-\beta H}$ to be trivial in our classification of mixed states~\cite{PhysRevLett_Hastings_NozeroTemperature}, however we leave the argument for such case to future work.

In this work, we classify phases by considering the subspace of locally indistinguishable states, and the operators that act on such subspace.
A natural question is to ask about their ``excitations''---states that are locally indistinguishable nearly everywhere except for a few isolated regions---to provide a complementary perspective to the effort of this work.
What is the generalization of an anyon and what is the correct mathematical formulation for their braiding, fusion, etc.\ properties?
It will also be fruitful to explore how our definitions compare with that of Ref.~\onlinecite{sang2023mixedstateRG}, which classifies phases based on their ability to be transformed into each other via local channels.

In Sec.~\ref{sec:boundarySPT}, we study mixed states in the context of quantum error correction.
Here, we classify descendant phases which results from a local quantum channel acting on the $\Z_p$ toric code, and associate each phase with an effective quantum channel in our postselection QEC protocol.
We find, when $n \geq 2$, that there are $p+3$ descendant phases: Cases~1, 2, 4 (with one each), and Case~3 (with $p$ phases).
The main technical tool connecting these ideas is the PEPS tensor network, which naturally satisfies the local indistinguishability criteria allowing topological phase to be encoded.

Specifically, we consider mixed states of the form $\M\E(\rho_\text{TC})$ where $\E$ is a local quantum channel, and $\M$ are syndrome measurements.
The PEPS encoding the $\Z_p$ toric code (as a wavefunction) has $\mathcal{G} = \Z_p$ auxiliary symmetry, its density matrix has $\mathcal{G}^2$ symmetry, and its $n$-replica network has $\mathcal{G}^{2n}$ symmetry.
However in evaluating the $n$-replica network, the auxiliary symmetry is spontaneously broken by the dominant eigenvectors of the transfer operator---which we refer to as the ``boundary''---and the state can be classified by the residue boundary symmetry $G \subseteq \mathcal{G}^{2n}$ and the SPT order of the boundary state $H^2(G,\mathrm{U}(1))$.
We show that there is a one-to-one relation between the boundary SPT order of the $n$-replica network (which results from contracting $n$ copies of $\M\E(\rho_\text{TC})$) with the quantum channel $Q$ after optimal recovery, shown in Table~\ref{tab:TCcondensation}.

In Sec.~\ref{sec:ClassifyMixed}, we relate two definitions of the mixed-state topological order to the characteristics we obtained in the previous section. 
To be more precise, we relate the code space quantum channel to the geometric structure of the mixed state density matrix space $\DMxS{n}$, and the boundary SPT order to the properties of the Wilson loop operators acting on $\DMxV{n}$.
In doing so, we demonstrate the equivalence between the two definitions of mixed state topological order, at least when applied to descendants of the $\Z_p$ toric code for $p$ prime.
We caution that some of the tools---such as the Wilson loop tests---developed in this section relies on the map between the logical spaces being surjective.
The decoherence process may be non-surjective when dealing with non-Abelian parent state, for example
	the fermion-decoherence in the Ising TQFT, or (spin-2) boson-decoherence in the $\mathrm{SU}(2)_4$ TQFT.
From the anyon condensation perspective, this is because certain topological sector ``splits'', and so not every state in the mixed state phase can be construct starting from the ground state of the parent phase.

Our QEC protocol has an explicit step $\M$ measuring all the syndrome operators on all $n$ replica.
This is such that the postselction process is well-defined, and also allows us to explicitly write down the code space following $\M$ [cf.~\eqref{eq:def_Ptheta}], an important step in relating the $X$-loop expectation tensor to the effective quantum channel~\eqref{eq:xp_XXXX_Qq}.
From the QEC perspective, $\M$ dephases in the basis of the TC stabilzers and is a lossy operation;
	the stabilzers may not be the optimal set for decoding and recovery (e.g.\ when there are coherent error sources).
From the perspective of classifying mixed states via Wilson loop definition, we only rely the boundary SPT order, which is well-defined without the $\M$ step.

Unsurprisingly, the error threshold depends on the replica index $n$.
This is attributed to the postselection process, which alters the relative probabilities of different error syndromes $\Prob(\theta) \to \Prob(\theta)^n$.
Intuitively, this enhances the likelihood of syndromes with fewer errors, which have a higher probability to be correctable.
The ``amount of postselection'' scales exponentially with system size.
For example, for a $\Z_2$ toric code with $N$ physical qubits, one among $\approx 2^{(n-1)N}$ syndrome outcomes must be postselected in this protocol; the ratio becomes unity in the limit $n \to 1$.
Extrapolating our results, for mixed states whose 1-replica topological order matches that of the original parent state, quantum information can always be recovered without any postselection.
Since postselection is not needed in such limit, a stronger conjecture is that any local error channel $\E$ preserving the 1-replica QTO admits a perfect recovery map preserving its logical quantum content, i.e., a quantum error correcting code.

Throughout this work, we let $n \geq 2$ in all the calculations and analysis.
The reason is that the 1-replica network for $\E(\rho_\text{TC})$ is identical to that of $\rho_\text{TC}$, as $\E$ is trace-preserving.
This means that $G$ is always generated by $x_1^\phd x_2^{-1}$ and the methods from Sec.~\ref{tab:TCdesc_classification} are unable to constrain the form of $Q$ or differentiate between the possible mixed state topological orders.
However, if our conjecture at the end of Sec.~\ref{sec:TCdescRemarks} holds true,
then the $n$-replica topological order of $\rho_\E$ is the same as the $2$-replica topological order of $(\rho_\E)^{n/2}$.
The vectorizing of the latter state, $\bigl| (\rho_\E)^{n/2} \bigr\rangle\mkern-5mu\bigr\rangle$, is simply the canonical purification of $(\rho_\E)^n$.
Taking the replica limit $n \to 1$, it suggests that 1-replica mixed state topological order of $\rho_\E$ is determined by the TQFT describing its canonical purification $\bigl| \sqrt{\rho_\E} \bigr\rangle\mkern-5mu\bigr\rangle$.

Mixed states also provide a route to constructing PEPS network for states with chiral topological order.
It remains an open question whether it is possible to construct an infinite PEPS representing a chiral topological state with exponential decaying correlation functions~\cite{Wahl2013PEPS_ChiralTO,Yang2015PEPS_ChiralTO,Dubail2015TN_nogo_ChiralTO,Chen2018PEPS_NonAbelian_ChiralTO,Hackenbroich2018PEPS_SU2}.
However, it appears that tensor network representations for mixed states with chiral topological order are indeed possible, as discussed in Sec.~\ref{sec:TCdescRemarks}.
Can we smoothly connect the mixed state to a pure state?  What happens to the tensor network representation?

In this work, we delve into the many interconnected relations between quantum information, topological quantum, field theory, tensor networks, and quantum error correction in studying the topological order for mixed states.
Some of the next steps on the horizon will involve finding a complete field theoretical descriptions for topological mixed states, and developing new numerical to diagonse their order.

\begin{acknowledgments}
The authors are grateful for enlightening discussions with
	Jason Alicea,
	Ehud Altman,
	Maissam Barkeshli,
	Jeongwan Haah,
	Eric Huang,
	Pablo Sala,
	Norbert Schuch,
	and Ruben Verresen.
ZL and RM acknowledge support from the National Science Foundation under award No.~DMR-1848336.
ZL also acknowledges support from the Pittsburgh Quantum Institute.
\end{acknowledgments}

\bibliography{ref}

\begin{thebibliography}{61}%
\makeatletter
\providecommand \@ifxundefined [1]{%
 \@ifx{#1\undefined}
}%
\providecommand \@ifnum [1]{%
 \ifnum #1\expandafter \@firstoftwo
 \else \expandafter \@secondoftwo
 \fi
}%
\providecommand \@ifx [1]{%
 \ifx #1\expandafter \@firstoftwo
 \else \expandafter \@secondoftwo
 \fi
}%
\providecommand \natexlab [1]{#1}%
\providecommand \enquote  [1]{``#1''}%
\providecommand \bibnamefont  [1]{#1}%
\providecommand \bibfnamefont [1]{#1}%
\providecommand \citenamefont [1]{#1}%
\providecommand \href@noop [0]{\@secondoftwo}%
\providecommand \href [0]{\begingroup \@sanitize@url \@href}%
\providecommand \@href[1]{\@@startlink{#1}\@@href}%
\providecommand \@@href[1]{\endgroup#1\@@endlink}%
\providecommand \@sanitize@url [0]{\catcode `\\12\catcode `\$12\catcode
  `\&12\catcode `\#12\catcode `\^12\catcode `\_12\catcode `\%12\relax}%
\providecommand \@@startlink[1]{}%
\providecommand \@@endlink[0]{}%
\providecommand \url  [0]{\begingroup\@sanitize@url \@url }%
\providecommand \@url [1]{\endgroup\@href {#1}{\urlprefix }}%
\providecommand \urlprefix  [0]{URL }%
\providecommand \Eprint [0]{\href }%
\providecommand \doibase [0]{https://doi.org/}%
\providecommand \selectlanguage [0]{\@gobble}%
\providecommand \bibinfo  [0]{\@secondoftwo}%
\providecommand \bibfield  [0]{\@secondoftwo}%
\providecommand \translation [1]{[#1]}%
\providecommand \BibitemOpen [0]{}%
\providecommand \bibitemStop [0]{}%
\providecommand \bibitemNoStop [0]{.\EOS\space}%
\providecommand \EOS [0]{\spacefactor3000\relax}%
\providecommand \BibitemShut  [1]{\csname bibitem#1\endcsname}%
\let\auto@bib@innerbib\@empty
\bibitem [{\citenamefont {Kitaev}(2003)}]{KITAEV20032}%
  \BibitemOpen
  \bibfield  {author} {\bibinfo {author} {\bibfnamefont {A.}~\bibnamefont
  {Kitaev}},\ }\bibfield  {title} {\bibinfo {title} {Fault-tolerant quantum
  computation by anyons},\ }\href
  {https://doi.org/https://doi.org/10.1016/S0003-4916(02)00018-0} {\bibfield
  {journal} {\bibinfo  {journal} {Annals of Physics}\ }\textbf {\bibinfo
  {volume} {303}},\ \bibinfo {pages} {2} (\bibinfo {year} {2003})}\BibitemShut
  {NoStop}%
\bibitem [{\citenamefont {Nayak}\ \emph {et~al.}(2008)\citenamefont {Nayak},
  \citenamefont {Simon}, \citenamefont {Stern}, \citenamefont {Freedman},\ and\
  \citenamefont {Das~Sarma}}]{RevModPhys.NayakTopologicalQuantumComputation}%
  \BibitemOpen
  \bibfield  {author} {\bibinfo {author} {\bibfnamefont {C.}~\bibnamefont
  {Nayak}}, \bibinfo {author} {\bibfnamefont {S.~H.}\ \bibnamefont {Simon}},
  \bibinfo {author} {\bibfnamefont {A.}~\bibnamefont {Stern}}, \bibinfo
  {author} {\bibfnamefont {M.}~\bibnamefont {Freedman}},\ and\ \bibinfo
  {author} {\bibfnamefont {S.}~\bibnamefont {Das~Sarma}},\ }\bibfield  {title}
  {\bibinfo {title} {Non-abelian anyons and topological quantum computation},\
  }\href {https://doi.org/10.1103/RevModPhys.80.1083} {\bibfield  {journal}
  {\bibinfo  {journal} {Rev. Mod. Phys.}\ }\textbf {\bibinfo {volume} {80}},\
  \bibinfo {pages} {1083} (\bibinfo {year} {2008})}\BibitemShut {NoStop}%
\bibitem [{\citenamefont {Wen}(1990)}]{wen1990TopologicalOrder}%
  \BibitemOpen
  \bibfield  {author} {\bibinfo {author} {\bibfnamefont {X.-G.}\ \bibnamefont
  {Wen}},\ }\bibfield  {title} {\bibinfo {title} {Topological orders in rigid
  states},\ }\href {https://doi.org/10.1142/S0217979290000139} {\bibfield
  {journal} {\bibinfo  {journal} {International Journal of Modern Physics B}\
  }\textbf {\bibinfo {volume} {4}},\ \bibinfo {pages} {239} (\bibinfo {year}
  {1990})}\BibitemShut {NoStop}%
\bibitem [{\citenamefont {Kitaev}\ and\ \citenamefont
  {Preskill}(2006)}]{TEEkitaev}%
  \BibitemOpen
  \bibfield  {author} {\bibinfo {author} {\bibfnamefont {A.}~\bibnamefont
  {Kitaev}}\ and\ \bibinfo {author} {\bibfnamefont {J.}~\bibnamefont
  {Preskill}},\ }\bibfield  {title} {\bibinfo {title} {Topological entanglement
  entropy},\ }\href {https://doi.org/10.1103/PhysRevLett.96.110404} {\bibfield
  {journal} {\bibinfo  {journal} {Phys. Rev. Lett.}\ }\textbf {\bibinfo
  {volume} {96}},\ \bibinfo {pages} {110404} (\bibinfo {year}
  {2006})}\BibitemShut {NoStop}%
\bibitem [{\citenamefont {Levin}\ and\ \citenamefont
  {Wen}(2006)}]{TEELevinWen}%
  \BibitemOpen
  \bibfield  {author} {\bibinfo {author} {\bibfnamefont {M.}~\bibnamefont
  {Levin}}\ and\ \bibinfo {author} {\bibfnamefont {X.-G.}\ \bibnamefont
  {Wen}},\ }\bibfield  {title} {\bibinfo {title} {Detecting topological order
  in a ground state wave function},\ }\href
  {https://doi.org/10.1103/PhysRevLett.96.110405} {\bibfield  {journal}
  {\bibinfo  {journal} {Phys. Rev. Lett.}\ }\textbf {\bibinfo {volume} {96}},\
  \bibinfo {pages} {110405} (\bibinfo {year} {2006})}\BibitemShut {NoStop}%
\bibitem [{\citenamefont {Zhang}\ \emph {et~al.}(2012)\citenamefont {Zhang},
  \citenamefont {Grover}, \citenamefont {Turner}, \citenamefont {Oshikawa},\
  and\ \citenamefont {Vishwanath}}]{YiZhang2012GroundStateEntanglement}%
  \BibitemOpen
  \bibfield  {author} {\bibinfo {author} {\bibfnamefont {Y.}~\bibnamefont
  {Zhang}}, \bibinfo {author} {\bibfnamefont {T.}~\bibnamefont {Grover}},
  \bibinfo {author} {\bibfnamefont {A.}~\bibnamefont {Turner}}, \bibinfo
  {author} {\bibfnamefont {M.}~\bibnamefont {Oshikawa}},\ and\ \bibinfo
  {author} {\bibfnamefont {A.}~\bibnamefont {Vishwanath}},\ }\bibfield  {title}
  {\bibinfo {title} {Quasiparticle statistics and braiding from ground-state
  entanglement},\ }\href {https://doi.org/10.1103/PhysRevB.85.235151}
  {\bibfield  {journal} {\bibinfo  {journal} {Phys. Rev. B}\ }\textbf {\bibinfo
  {volume} {85}},\ \bibinfo {pages} {235151} (\bibinfo {year}
  {2012})}\BibitemShut {NoStop}%
\bibitem [{\citenamefont {Jiang}\ \emph {et~al.}(2012)\citenamefont {Jiang},
  \citenamefont {Wang},\ and\ \citenamefont {Balents}}]{DMRG2012}%
  \BibitemOpen
  \bibfield  {author} {\bibinfo {author} {\bibfnamefont {H.-C.}\ \bibnamefont
  {Jiang}}, \bibinfo {author} {\bibfnamefont {Z.}~\bibnamefont {Wang}},\ and\
  \bibinfo {author} {\bibfnamefont {L.}~\bibnamefont {Balents}},\ }\bibfield
  {title} {\bibinfo {title} {Identifying topological order by entanglement
  entropy},\ }\href {https://doi.org/10.1038/nphys2465} {\bibfield  {journal}
  {\bibinfo  {journal} {Nature Physics}\ }\textbf {\bibinfo {volume} {8}}
  (\bibinfo {year} {2012})}\BibitemShut {NoStop}%
\bibitem [{\citenamefont {Cincio}\ and\ \citenamefont
  {Vidal}(2013)}]{PhysRevLettCincio}%
  \BibitemOpen
  \bibfield  {author} {\bibinfo {author} {\bibfnamefont {L.}~\bibnamefont
  {Cincio}}\ and\ \bibinfo {author} {\bibfnamefont {G.}~\bibnamefont {Vidal}},\
  }\bibfield  {title} {\bibinfo {title} {Characterizing topological order by
  studying the ground states on an infinite cylinder},\ }\href
  {https://doi.org/10.1103/PhysRevLett.110.067208} {\bibfield  {journal}
  {\bibinfo  {journal} {Phys. Rev. Lett.}\ }\textbf {\bibinfo {volume} {110}},\
  \bibinfo {pages} {067208} (\bibinfo {year} {2013})}\BibitemShut {NoStop}%
\bibitem [{\citenamefont {Zhang}\ \emph {et~al.}(2015)\citenamefont {Zhang},
  \citenamefont {Grover},\ and\ \citenamefont
  {Vishwanath}}]{PhysRevB.91.035127Over}%
  \BibitemOpen
  \bibfield  {author} {\bibinfo {author} {\bibfnamefont {Y.}~\bibnamefont
  {Zhang}}, \bibinfo {author} {\bibfnamefont {T.}~\bibnamefont {Grover}},\ and\
  \bibinfo {author} {\bibfnamefont {A.}~\bibnamefont {Vishwanath}},\ }\bibfield
   {title} {\bibinfo {title} {General procedure for determining braiding and
  statistics of anyons using entanglement interferometry},\ }\href
  {https://doi.org/10.1103/PhysRevB.91.035127} {\bibfield  {journal} {\bibinfo
  {journal} {Phys. Rev. B}\ }\textbf {\bibinfo {volume} {91}},\ \bibinfo
  {pages} {035127} (\bibinfo {year} {2015})}\BibitemShut {NoStop}%
\bibitem [{\citenamefont {Li}\ and\ \citenamefont
  {Mong}(2022)}]{PhysRevBZhuan}%
  \BibitemOpen
  \bibfield  {author} {\bibinfo {author} {\bibfnamefont {Z.}~\bibnamefont
  {Li}}\ and\ \bibinfo {author} {\bibfnamefont {R.~S.~K.}\ \bibnamefont
  {Mong}},\ }\bibfield  {title} {\bibinfo {title} {Detecting topological order
  from modular transformations of ground states on the torus},\ }\href
  {https://doi.org/10.1103/PhysRevB.106.235115} {\bibfield  {journal} {\bibinfo
   {journal} {Phys. Rev. B}\ }\textbf {\bibinfo {volume} {106}},\ \bibinfo
  {pages} {235115} (\bibinfo {year} {2022})}\BibitemShut {NoStop}%
\bibitem [{\citenamefont {Kitaev}(2006)}]{KITAEV2006Honeycomb}%
  \BibitemOpen
  \bibfield  {author} {\bibinfo {author} {\bibfnamefont {A.}~\bibnamefont
  {Kitaev}},\ }\bibfield  {title} {\bibinfo {title} {Anyons in an exactly
  solved model and beyond},\ }\href
  {https://doi.org/https://doi.org/10.1016/j.aop.2005.10.005} {\bibfield
  {journal} {\bibinfo  {journal} {Annals of Physics}\ }\textbf {\bibinfo
  {volume} {321}},\ \bibinfo {pages} {2} (\bibinfo {year} {2006})},\ \bibinfo
  {note} {january Special Issue}\BibitemShut {NoStop}%
\bibitem [{\citenamefont
  {Hastings}(2011)}]{PhysRevLett_Hastings_NozeroTemperature}%
  \BibitemOpen
  \bibfield  {author} {\bibinfo {author} {\bibfnamefont {M.~B.}\ \bibnamefont
  {Hastings}},\ }\bibfield  {title} {\bibinfo {title} {Topological order at
  nonzero temperature},\ }\href
  {https://doi.org/10.1103/PhysRevLett.107.210501} {\bibfield  {journal}
  {\bibinfo  {journal} {Phys. Rev. Lett.}\ }\textbf {\bibinfo {volume} {107}},\
  \bibinfo {pages} {210501} (\bibinfo {year} {2011})}\BibitemShut {NoStop}%
\bibitem [{\citenamefont {Bao}\ \emph {et~al.}(2023)\citenamefont {Bao},
  \citenamefont {Fan}, \citenamefont {Vishwanath},\ and\ \citenamefont
  {Altman}}]{2023BaoBoundarySPT}%
  \BibitemOpen
  \bibfield  {author} {\bibinfo {author} {\bibfnamefont {Y.}~\bibnamefont
  {Bao}}, \bibinfo {author} {\bibfnamefont {R.}~\bibnamefont {Fan}}, \bibinfo
  {author} {\bibfnamefont {A.}~\bibnamefont {Vishwanath}},\ and\ \bibinfo
  {author} {\bibfnamefont {E.}~\bibnamefont {Altman}},\ }\href@noop {}
  {\bibinfo {title} {Mixed-state topological order and the errorfield double
  formulation of decoherence-induced transitions}} (\bibinfo {year} {2023}),\
  \Eprint {https://arxiv.org/abs/2301.05687} {arXiv:2301.05687 [quant-ph]}
  \BibitemShut {NoStop}%
\bibitem [{\citenamefont {Fan}\ \emph {et~al.}(2023)\citenamefont {Fan},
  \citenamefont {Bao}, \citenamefont {Altman},\ and\ \citenamefont
  {Vishwanath}}]{DiagnosticsRelativeEntropyNegativity}%
  \BibitemOpen
  \bibfield  {author} {\bibinfo {author} {\bibfnamefont {R.}~\bibnamefont
  {Fan}}, \bibinfo {author} {\bibfnamefont {Y.}~\bibnamefont {Bao}}, \bibinfo
  {author} {\bibfnamefont {E.}~\bibnamefont {Altman}},\ and\ \bibinfo {author}
  {\bibfnamefont {A.}~\bibnamefont {Vishwanath}},\ }\href@noop {} {\bibinfo
  {title} {Diagnostics of mixed-state topological order and breakdown of
  quantum memory}} (\bibinfo {year} {2023}),\ \Eprint
  {https://arxiv.org/abs/2301.05689} {arXiv:2301.05689 [quant-ph]} \BibitemShut
  {NoStop}%
\bibitem [{\citenamefont {Wang}\ \emph {et~al.}(2023)\citenamefont {Wang},
  \citenamefont {Wu},\ and\ \citenamefont {Wang}}]{wang2023fcondensation}%
  \BibitemOpen
  \bibfield  {author} {\bibinfo {author} {\bibfnamefont {Z.}~\bibnamefont
  {Wang}}, \bibinfo {author} {\bibfnamefont {Z.}~\bibnamefont {Wu}},\ and\
  \bibinfo {author} {\bibfnamefont {Z.}~\bibnamefont {Wang}},\ }\href@noop {}
  {\bibinfo {title} {Intrinsic mixed-state topological order without quantum
  memory}} (\bibinfo {year} {2023}),\ \Eprint
  {https://arxiv.org/abs/2307.13758} {arXiv:2307.13758 [quant-ph]} \BibitemShut
  {NoStop}%
\bibitem [{\citenamefont {Sang}\ \emph {et~al.}(2023)\citenamefont {Sang},
  \citenamefont {Zou},\ and\ \citenamefont {Hsieh}}]{sang2023mixedstateRG}%
  \BibitemOpen
  \bibfield  {author} {\bibinfo {author} {\bibfnamefont {S.}~\bibnamefont
  {Sang}}, \bibinfo {author} {\bibfnamefont {Y.}~\bibnamefont {Zou}},\ and\
  \bibinfo {author} {\bibfnamefont {T.~H.}\ \bibnamefont {Hsieh}},\ }\href@noop
  {} {\bibinfo {title} {Mixed-state quantum phases: Renormalization and quantum
  error correction}} (\bibinfo {year} {2023}),\ \Eprint
  {https://arxiv.org/abs/2310.08639} {arXiv:2310.08639 [quant-ph]} \BibitemShut
  {NoStop}%
\bibitem [{\citenamefont {Dennis}\ \emph {et~al.}(2002)\citenamefont {Dennis},
  \citenamefont {Kitaev}, \citenamefont {Landahl},\ and\ \citenamefont
  {Preskill}}]{dennis2002topologicalMemory}%
  \BibitemOpen
  \bibfield  {author} {\bibinfo {author} {\bibfnamefont {E.}~\bibnamefont
  {Dennis}}, \bibinfo {author} {\bibfnamefont {A.}~\bibnamefont {Kitaev}},
  \bibinfo {author} {\bibfnamefont {A.}~\bibnamefont {Landahl}},\ and\ \bibinfo
  {author} {\bibfnamefont {J.}~\bibnamefont {Preskill}},\ }\bibfield  {title}
  {\bibinfo {title} {{Topological quantum memory}},\ }\href
  {https://doi.org/10.1063/1.1499754} {\bibfield  {journal} {\bibinfo
  {journal} {Journal of Mathematical Physics}\ }\textbf {\bibinfo {volume}
  {43}},\ \bibinfo {pages} {4452} (\bibinfo {year} {2002})}\BibitemShut
  {NoStop}%
\bibitem [{\citenamefont {Maillard}\ \emph {et~al.}(2003)\citenamefont
  {Maillard}, \citenamefont {Nemoto},\ and\ \citenamefont
  {Nishimori}}]{Jean_Marie_Maillard_2003_RBIM}%
  \BibitemOpen
  \bibfield  {author} {\bibinfo {author} {\bibfnamefont {J.-M.}\ \bibnamefont
  {Maillard}}, \bibinfo {author} {\bibfnamefont {K.}~\bibnamefont {Nemoto}},\
  and\ \bibinfo {author} {\bibfnamefont {H.}~\bibnamefont {Nishimori}},\
  }\bibfield  {title} {\bibinfo {title} {Symmetry, complexity and multicritical
  point of the two-dimensional spin glass},\ }\href
  {https://doi.org/10.1088/0305-4470/36/38/301} {\bibfield  {journal} {\bibinfo
   {journal} {Journal of Physics A: Mathematical and General}\ }\textbf
  {\bibinfo {volume} {36}},\ \bibinfo {pages} {9799} (\bibinfo {year}
  {2003})}\BibitemShut {NoStop}%
\bibitem [{\citenamefont {Katzgraber}\ \emph {et~al.}(2009)\citenamefont
  {Katzgraber}, \citenamefont {Bombin},\ and\ \citenamefont
  {Martin-Delgado}}]{KatzgraberPhysRevLett2009ColorCode}%
  \BibitemOpen
  \bibfield  {author} {\bibinfo {author} {\bibfnamefont {H.~G.}\ \bibnamefont
  {Katzgraber}}, \bibinfo {author} {\bibfnamefont {H.}~\bibnamefont {Bombin}},\
  and\ \bibinfo {author} {\bibfnamefont {M.~A.}\ \bibnamefont
  {Martin-Delgado}},\ }\bibfield  {title} {\bibinfo {title} {Error threshold
  for color codes and random three-body ising models},\ }\href
  {https://doi.org/10.1103/PhysRevLett.103.090501} {\bibfield  {journal}
  {\bibinfo  {journal} {Phys. Rev. Lett.}\ }\textbf {\bibinfo {volume} {103}},\
  \bibinfo {pages} {090501} (\bibinfo {year} {2009})}\BibitemShut {NoStop}%
\bibitem [{\citenamefont {Bombin}\ \emph {et~al.}(2012)\citenamefont {Bombin},
  \citenamefont {Andrist}, \citenamefont {Ohzeki}, \citenamefont {Katzgraber},\
  and\ \citenamefont {Martin-Delgado}}]{PhysRevX2012DepolarizingThreshold}%
  \BibitemOpen
  \bibfield  {author} {\bibinfo {author} {\bibfnamefont {H.}~\bibnamefont
  {Bombin}}, \bibinfo {author} {\bibfnamefont {R.~S.}\ \bibnamefont {Andrist}},
  \bibinfo {author} {\bibfnamefont {M.}~\bibnamefont {Ohzeki}}, \bibinfo
  {author} {\bibfnamefont {H.~G.}\ \bibnamefont {Katzgraber}},\ and\ \bibinfo
  {author} {\bibfnamefont {M.~A.}\ \bibnamefont {Martin-Delgado}},\ }\bibfield
  {title} {\bibinfo {title} {Strong resilience of topological codes to
  depolarization},\ }\href {https://doi.org/10.1103/PhysRevX.2.021004}
  {\bibfield  {journal} {\bibinfo  {journal} {Phys. Rev. X}\ }\textbf {\bibinfo
  {volume} {2}},\ \bibinfo {pages} {021004} (\bibinfo {year}
  {2012})}\BibitemShut {NoStop}%
\bibitem [{\citenamefont {Andrist}\ \emph {et~al.}(2015)\citenamefont
  {Andrist}, \citenamefont {Wootton},\ and\ \citenamefont
  {Katzgraber}}]{PhysRevA2015ZnToricCodeThreshold}%
  \BibitemOpen
  \bibfield  {author} {\bibinfo {author} {\bibfnamefont {R.~S.}\ \bibnamefont
  {Andrist}}, \bibinfo {author} {\bibfnamefont {J.~R.}\ \bibnamefont
  {Wootton}},\ and\ \bibinfo {author} {\bibfnamefont {H.~G.}\ \bibnamefont
  {Katzgraber}},\ }\bibfield  {title} {\bibinfo {title} {Error thresholds for
  abelian quantum double models: Increasing the bit-flip stability of
  topological quantum memory},\ }\href
  {https://doi.org/10.1103/PhysRevA.91.042331} {\bibfield  {journal} {\bibinfo
  {journal} {Phys. Rev. A}\ }\textbf {\bibinfo {volume} {91}},\ \bibinfo
  {pages} {042331} (\bibinfo {year} {2015})}\BibitemShut {NoStop}%
\bibitem [{\citenamefont {Darmawan}\ and\ \citenamefont
  {Poulin}(2017)}]{PhysRevLett2017Ampdamping}%
  \BibitemOpen
  \bibfield  {author} {\bibinfo {author} {\bibfnamefont {A.~S.}\ \bibnamefont
  {Darmawan}}\ and\ \bibinfo {author} {\bibfnamefont {D.}~\bibnamefont
  {Poulin}},\ }\bibfield  {title} {\bibinfo {title} {Tensor-network simulations
  of the surface code under realistic noise},\ }\href
  {https://doi.org/10.1103/PhysRevLett.119.040502} {\bibfield  {journal}
  {\bibinfo  {journal} {Phys. Rev. Lett.}\ }\textbf {\bibinfo {volume} {119}},\
  \bibinfo {pages} {040502} (\bibinfo {year} {2017})}\BibitemShut {NoStop}%
\bibitem [{\citenamefont {Kubica}\ \emph {et~al.}(2018)\citenamefont {Kubica},
  \citenamefont {Beverland}, \citenamefont {Brand\~ao}, \citenamefont
  {Preskill},\ and\ \citenamefont
  {Svore}}]{KubicaPhysRevLett3DcolorcodeThreshold}%
  \BibitemOpen
  \bibfield  {author} {\bibinfo {author} {\bibfnamefont {A.}~\bibnamefont
  {Kubica}}, \bibinfo {author} {\bibfnamefont {M.~E.}\ \bibnamefont
  {Beverland}}, \bibinfo {author} {\bibfnamefont {F.}~\bibnamefont
  {Brand\~ao}}, \bibinfo {author} {\bibfnamefont {J.}~\bibnamefont
  {Preskill}},\ and\ \bibinfo {author} {\bibfnamefont {K.~M.}\ \bibnamefont
  {Svore}},\ }\bibfield  {title} {\bibinfo {title} {Three-dimensional color
  code thresholds via statistical-mechanical mapping},\ }\href
  {https://doi.org/10.1103/PhysRevLett.120.180501} {\bibfield  {journal}
  {\bibinfo  {journal} {Phys. Rev. Lett.}\ }\textbf {\bibinfo {volume} {120}},\
  \bibinfo {pages} {180501} (\bibinfo {year} {2018})}\BibitemShut {NoStop}%
\bibitem [{\citenamefont {Tuckett}\ \emph {et~al.}(2018)\citenamefont
  {Tuckett}, \citenamefont {Bartlett},\ and\ \citenamefont
  {Flammia}}]{UltrahighErrorThresholdYNoise}%
  \BibitemOpen
  \bibfield  {author} {\bibinfo {author} {\bibfnamefont {D.~K.}\ \bibnamefont
  {Tuckett}}, \bibinfo {author} {\bibfnamefont {S.~D.}\ \bibnamefont
  {Bartlett}},\ and\ \bibinfo {author} {\bibfnamefont {S.~T.}\ \bibnamefont
  {Flammia}},\ }\bibfield  {title} {\bibinfo {title} {Ultrahigh error threshold
  for surface codes with biased noise},\ }\href
  {https://doi.org/10.1103/PhysRevLett.120.050505} {\bibfield  {journal}
  {\bibinfo  {journal} {Phys. Rev. Lett.}\ }\textbf {\bibinfo {volume} {120}},\
  \bibinfo {pages} {050505} (\bibinfo {year} {2018})}\BibitemShut {NoStop}%
\bibitem [{\citenamefont {Chubb}\ and\ \citenamefont
  {Flammia}(2021)}]{chubb2021MapsToStatistical}%
  \BibitemOpen
  \bibfield  {author} {\bibinfo {author} {\bibfnamefont {C.~T.}\ \bibnamefont
  {Chubb}}\ and\ \bibinfo {author} {\bibfnamefont {S.~T.}\ \bibnamefont
  {Flammia}},\ }\bibfield  {title} {\bibinfo {title} {Statistical mechanical
  models for quantum codes with correlated noise},\ }\href
  {https://doi.org/10.4171/AIHPD/105} {\bibfield  {journal} {\bibinfo
  {journal} {Annales de l’Institut Henri Poincar{\'e} D}\ }\textbf {\bibinfo
  {volume} {8}},\ \bibinfo {pages} {269} (\bibinfo {year} {2021})}\BibitemShut
  {NoStop}%
\bibitem [{\citenamefont {Bonilla~Ataides}\ \emph {et~al.}(2021)\citenamefont
  {Bonilla~Ataides}, \citenamefont {Tuckett}, \citenamefont {Bartlett},
  \citenamefont {Flammia},\ and\ \citenamefont {Brown}}]{bonilla2021xzzx}%
  \BibitemOpen
  \bibfield  {author} {\bibinfo {author} {\bibfnamefont {J.~P.}\ \bibnamefont
  {Bonilla~Ataides}}, \bibinfo {author} {\bibfnamefont {D.~K.}\ \bibnamefont
  {Tuckett}}, \bibinfo {author} {\bibfnamefont {S.~D.}\ \bibnamefont
  {Bartlett}}, \bibinfo {author} {\bibfnamefont {S.~T.}\ \bibnamefont
  {Flammia}},\ and\ \bibinfo {author} {\bibfnamefont {B.~J.}\ \bibnamefont
  {Brown}},\ }\bibfield  {title} {\bibinfo {title} {The xzzx surface code},\
  }\href {https://doi.org/10.1038/s41467-021-22274-1} {\bibfield  {journal}
  {\bibinfo  {journal} {Nature communications}\ }\textbf {\bibinfo {volume}
  {12}},\ \bibinfo {pages} {2172} (\bibinfo {year} {2021})}\BibitemShut
  {NoStop}%
\bibitem [{\citenamefont {Xiao}\ \emph {et~al.}(2024)\citenamefont {Xiao},
  \citenamefont {Srivastava},\ and\ \citenamefont
  {Granath}}]{xiao2024exactXZZX}%
  \BibitemOpen
  \bibfield  {author} {\bibinfo {author} {\bibfnamefont {Y.}~\bibnamefont
  {Xiao}}, \bibinfo {author} {\bibfnamefont {B.}~\bibnamefont {Srivastava}},\
  and\ \bibinfo {author} {\bibfnamefont {M.}~\bibnamefont {Granath}},\
  }\href@noop {} {\bibinfo {title} {Exact results on finite size corrections
  for surface codes tailored to biased noise}} (\bibinfo {year} {2024}),\
  \Eprint {https://arxiv.org/abs/2401.04008} {arXiv:2401.04008 [quant-ph]}
  \BibitemShut {NoStop}%
\bibitem [{\citenamefont {Satzinger}\ \emph {et~al.}(2021)\citenamefont
  {Satzinger}, \citenamefont {Liu}, \citenamefont {Smith}, \citenamefont
  {Knapp}, \citenamefont {Newman}, \citenamefont {Jones}, \citenamefont {Chen},
  \citenamefont {Quintana}, \citenamefont {Mi}, \citenamefont {Dunsworth},
  \citenamefont {Gidney}, \citenamefont {Aleiner}, \citenamefont {Arute},
  \citenamefont {Arya}, \citenamefont {Atalaya}, \citenamefont {Babbush},
  \citenamefont {Bardin}, \citenamefont {Barends}, \citenamefont {Basso},
  \citenamefont {Bengtsson}, \citenamefont {Bilmes}, \citenamefont {Broughton},
  \citenamefont {Buckley}, \citenamefont {Buell}, \citenamefont {Burkett},
  \citenamefont {Bushnell}, \citenamefont {Chiaro}, \citenamefont {Collins},
  \citenamefont {Courtney}, \citenamefont {Demura}, \citenamefont {Derk},
  \citenamefont {Eppens}, \citenamefont {Erickson}, \citenamefont {Faoro},
  \citenamefont {Farhi}, \citenamefont {Fowler}, \citenamefont {Foxen},
  \citenamefont {Giustina}, \citenamefont {Greene}, \citenamefont {Gross},
  \citenamefont {Harrigan}, \citenamefont {Harrington}, \citenamefont {Hilton},
  \citenamefont {Hong}, \citenamefont {Huang}, \citenamefont {Huggins},
  \citenamefont {Ioffe}, \citenamefont {Isakov}, \citenamefont {Jeffrey},
  \citenamefont {Jiang}, \citenamefont {Kafri}, \citenamefont {Kechedzhi},
  \citenamefont {Khattar}, \citenamefont {Kim}, \citenamefont {Klimov},
  \citenamefont {Korotkov}, \citenamefont {Kostritsa}, \citenamefont
  {Landhuis}, \citenamefont {Laptev}, \citenamefont {Locharla}, \citenamefont
  {Lucero}, \citenamefont {Martin}, \citenamefont {McClean}, \citenamefont
  {McEwen}, \citenamefont {Miao}, \citenamefont {Mohseni}, \citenamefont
  {Montazeri}, \citenamefont {Mruczkiewicz}, \citenamefont {Mutus},
  \citenamefont {Naaman}, \citenamefont {Neeley}, \citenamefont {Neill},
  \citenamefont {Niu}, \citenamefont {O’Brien}, \citenamefont {Opremcak},
  \citenamefont {Pató}, \citenamefont {Petukhov}, \citenamefont {Rubin},
  \citenamefont {Sank}, \citenamefont {Shvarts}, \citenamefont {Strain},
  \citenamefont {Szalay}, \citenamefont {Villalonga}, \citenamefont {White},
  \citenamefont {Yao}, \citenamefont {Yeh}, \citenamefont {Yoo}, \citenamefont
  {Zalcman}, \citenamefont {Neven}, \citenamefont {Boixo}, \citenamefont
  {Megrant}, \citenamefont {Chen}, \citenamefont {Kelly}, \citenamefont
  {Smelyanskiy}, \citenamefont {Kitaev}, \citenamefont {Knap}, \citenamefont
  {Pollmann},\ and\ \citenamefont {Roushan}}]{Satzinger2021TCsuperconducting}%
  \BibitemOpen
  \bibfield  {author} {\bibinfo {author} {\bibfnamefont {K.~J.}\ \bibnamefont
  {Satzinger}}, \bibinfo {author} {\bibfnamefont {Y.-J.}\ \bibnamefont {Liu}},
  \bibinfo {author} {\bibfnamefont {A.}~\bibnamefont {Smith}}, \bibinfo
  {author} {\bibfnamefont {C.}~\bibnamefont {Knapp}}, \bibinfo {author}
  {\bibfnamefont {M.}~\bibnamefont {Newman}}, \bibinfo {author} {\bibfnamefont
  {C.}~\bibnamefont {Jones}}, \bibinfo {author} {\bibfnamefont
  {Z.}~\bibnamefont {Chen}}, \bibinfo {author} {\bibfnamefont {C.}~\bibnamefont
  {Quintana}}, \bibinfo {author} {\bibfnamefont {X.}~\bibnamefont {Mi}},
  \bibinfo {author} {\bibfnamefont {A.}~\bibnamefont {Dunsworth}}, \bibinfo
  {author} {\bibfnamefont {C.}~\bibnamefont {Gidney}}, \bibinfo {author}
  {\bibfnamefont {I.}~\bibnamefont {Aleiner}}, \bibinfo {author} {\bibfnamefont
  {F.}~\bibnamefont {Arute}}, \bibinfo {author} {\bibfnamefont
  {K.}~\bibnamefont {Arya}}, \bibinfo {author} {\bibfnamefont {J.}~\bibnamefont
  {Atalaya}}, \bibinfo {author} {\bibfnamefont {R.}~\bibnamefont {Babbush}},
  \bibinfo {author} {\bibfnamefont {J.~C.}\ \bibnamefont {Bardin}}, \bibinfo
  {author} {\bibfnamefont {R.}~\bibnamefont {Barends}}, \bibinfo {author}
  {\bibfnamefont {J.}~\bibnamefont {Basso}}, \bibinfo {author} {\bibfnamefont
  {A.}~\bibnamefont {Bengtsson}}, \bibinfo {author} {\bibfnamefont
  {A.}~\bibnamefont {Bilmes}}, \bibinfo {author} {\bibfnamefont
  {M.}~\bibnamefont {Broughton}}, \bibinfo {author} {\bibfnamefont {B.~B.}\
  \bibnamefont {Buckley}}, \bibinfo {author} {\bibfnamefont {D.~A.}\
  \bibnamefont {Buell}}, \bibinfo {author} {\bibfnamefont {B.}~\bibnamefont
  {Burkett}}, \bibinfo {author} {\bibfnamefont {N.}~\bibnamefont {Bushnell}},
  \bibinfo {author} {\bibfnamefont {B.}~\bibnamefont {Chiaro}}, \bibinfo
  {author} {\bibfnamefont {R.}~\bibnamefont {Collins}}, \bibinfo {author}
  {\bibfnamefont {W.}~\bibnamefont {Courtney}}, \bibinfo {author}
  {\bibfnamefont {S.}~\bibnamefont {Demura}}, \bibinfo {author} {\bibfnamefont
  {A.~R.}\ \bibnamefont {Derk}}, \bibinfo {author} {\bibfnamefont
  {D.}~\bibnamefont {Eppens}}, \bibinfo {author} {\bibfnamefont
  {C.}~\bibnamefont {Erickson}}, \bibinfo {author} {\bibfnamefont
  {L.}~\bibnamefont {Faoro}}, \bibinfo {author} {\bibfnamefont
  {E.}~\bibnamefont {Farhi}}, \bibinfo {author} {\bibfnamefont {A.~G.}\
  \bibnamefont {Fowler}}, \bibinfo {author} {\bibfnamefont {B.}~\bibnamefont
  {Foxen}}, \bibinfo {author} {\bibfnamefont {M.}~\bibnamefont {Giustina}},
  \bibinfo {author} {\bibfnamefont {A.}~\bibnamefont {Greene}}, \bibinfo
  {author} {\bibfnamefont {J.~A.}\ \bibnamefont {Gross}}, \bibinfo {author}
  {\bibfnamefont {M.~P.}\ \bibnamefont {Harrigan}}, \bibinfo {author}
  {\bibfnamefont {S.~D.}\ \bibnamefont {Harrington}}, \bibinfo {author}
  {\bibfnamefont {J.}~\bibnamefont {Hilton}}, \bibinfo {author} {\bibfnamefont
  {S.}~\bibnamefont {Hong}}, \bibinfo {author} {\bibfnamefont {T.}~\bibnamefont
  {Huang}}, \bibinfo {author} {\bibfnamefont {W.~J.}\ \bibnamefont {Huggins}},
  \bibinfo {author} {\bibfnamefont {L.~B.}\ \bibnamefont {Ioffe}}, \bibinfo
  {author} {\bibfnamefont {S.~V.}\ \bibnamefont {Isakov}}, \bibinfo {author}
  {\bibfnamefont {E.}~\bibnamefont {Jeffrey}}, \bibinfo {author} {\bibfnamefont
  {Z.}~\bibnamefont {Jiang}}, \bibinfo {author} {\bibfnamefont
  {D.}~\bibnamefont {Kafri}}, \bibinfo {author} {\bibfnamefont
  {K.}~\bibnamefont {Kechedzhi}}, \bibinfo {author} {\bibfnamefont
  {T.}~\bibnamefont {Khattar}}, \bibinfo {author} {\bibfnamefont
  {S.}~\bibnamefont {Kim}}, \bibinfo {author} {\bibfnamefont {P.~V.}\
  \bibnamefont {Klimov}}, \bibinfo {author} {\bibfnamefont {A.~N.}\
  \bibnamefont {Korotkov}}, \bibinfo {author} {\bibfnamefont {F.}~\bibnamefont
  {Kostritsa}}, \bibinfo {author} {\bibfnamefont {D.}~\bibnamefont {Landhuis}},
  \bibinfo {author} {\bibfnamefont {P.}~\bibnamefont {Laptev}}, \bibinfo
  {author} {\bibfnamefont {A.}~\bibnamefont {Locharla}}, \bibinfo {author}
  {\bibfnamefont {E.}~\bibnamefont {Lucero}}, \bibinfo {author} {\bibfnamefont
  {O.}~\bibnamefont {Martin}}, \bibinfo {author} {\bibfnamefont {J.~R.}\
  \bibnamefont {McClean}}, \bibinfo {author} {\bibfnamefont {M.}~\bibnamefont
  {McEwen}}, \bibinfo {author} {\bibfnamefont {K.~C.}\ \bibnamefont {Miao}},
  \bibinfo {author} {\bibfnamefont {M.}~\bibnamefont {Mohseni}}, \bibinfo
  {author} {\bibfnamefont {S.}~\bibnamefont {Montazeri}}, \bibinfo {author}
  {\bibfnamefont {W.}~\bibnamefont {Mruczkiewicz}}, \bibinfo {author}
  {\bibfnamefont {J.}~\bibnamefont {Mutus}}, \bibinfo {author} {\bibfnamefont
  {O.}~\bibnamefont {Naaman}}, \bibinfo {author} {\bibfnamefont
  {M.}~\bibnamefont {Neeley}}, \bibinfo {author} {\bibfnamefont
  {C.}~\bibnamefont {Neill}}, \bibinfo {author} {\bibfnamefont {M.~Y.}\
  \bibnamefont {Niu}}, \bibinfo {author} {\bibfnamefont {T.~E.}\ \bibnamefont
  {O’Brien}}, \bibinfo {author} {\bibfnamefont {A.}~\bibnamefont {Opremcak}},
  \bibinfo {author} {\bibfnamefont {B.}~\bibnamefont {Pató}}, \bibinfo
  {author} {\bibfnamefont {A.}~\bibnamefont {Petukhov}}, \bibinfo {author}
  {\bibfnamefont {N.~C.}\ \bibnamefont {Rubin}}, \bibinfo {author}
  {\bibfnamefont {D.}~\bibnamefont {Sank}}, \bibinfo {author} {\bibfnamefont
  {V.}~\bibnamefont {Shvarts}}, \bibinfo {author} {\bibfnamefont
  {D.}~\bibnamefont {Strain}}, \bibinfo {author} {\bibfnamefont
  {M.}~\bibnamefont {Szalay}}, \bibinfo {author} {\bibfnamefont
  {B.}~\bibnamefont {Villalonga}}, \bibinfo {author} {\bibfnamefont {T.~C.}\
  \bibnamefont {White}}, \bibinfo {author} {\bibfnamefont {Z.}~\bibnamefont
  {Yao}}, \bibinfo {author} {\bibfnamefont {P.}~\bibnamefont {Yeh}}, \bibinfo
  {author} {\bibfnamefont {J.}~\bibnamefont {Yoo}}, \bibinfo {author}
  {\bibfnamefont {A.}~\bibnamefont {Zalcman}}, \bibinfo {author} {\bibfnamefont
  {H.}~\bibnamefont {Neven}}, \bibinfo {author} {\bibfnamefont
  {S.}~\bibnamefont {Boixo}}, \bibinfo {author} {\bibfnamefont
  {A.}~\bibnamefont {Megrant}}, \bibinfo {author} {\bibfnamefont
  {Y.}~\bibnamefont {Chen}}, \bibinfo {author} {\bibfnamefont {J.}~\bibnamefont
  {Kelly}}, \bibinfo {author} {\bibfnamefont {V.}~\bibnamefont {Smelyanskiy}},
  \bibinfo {author} {\bibfnamefont {A.}~\bibnamefont {Kitaev}}, \bibinfo
  {author} {\bibfnamefont {M.}~\bibnamefont {Knap}}, \bibinfo {author}
  {\bibfnamefont {F.}~\bibnamefont {Pollmann}},\ and\ \bibinfo {author}
  {\bibfnamefont {P.}~\bibnamefont {Roushan}},\ }\bibfield  {title} {\bibinfo
  {title} {Realizing topologically ordered states on a quantum processor},\
  }\href {https://doi.org/10.1126/science.abi8378} {\bibfield  {journal}
  {\bibinfo  {journal} {Science}\ }\textbf {\bibinfo {volume} {374}},\ \bibinfo
  {pages} {1237} (\bibinfo {year} {2021})}\BibitemShut {NoStop}%
\bibitem [{\citenamefont {Verresen}\ \emph {et~al.}(2022)\citenamefont
  {Verresen}, \citenamefont {Tantivasadakarn},\ and\ \citenamefont
  {Vishwanath}}]{verresen2022efficientlyPreparation}%
  \BibitemOpen
  \bibfield  {author} {\bibinfo {author} {\bibfnamefont {R.}~\bibnamefont
  {Verresen}}, \bibinfo {author} {\bibfnamefont {N.}~\bibnamefont
  {Tantivasadakarn}},\ and\ \bibinfo {author} {\bibfnamefont {A.}~\bibnamefont
  {Vishwanath}},\ }\href@noop {} {\bibinfo {title} {Efficiently preparing
  schr\"odinger's cat, fractons and non-abelian topological order in quantum
  devices}} (\bibinfo {year} {2022}),\ \Eprint
  {https://arxiv.org/abs/2112.03061} {arXiv:2112.03061 [quant-ph]} \BibitemShut
  {NoStop}%
\bibitem [{\citenamefont {Tantivasadakarn}\ \emph {et~al.}(2022)\citenamefont
  {Tantivasadakarn}, \citenamefont {Thorngren}, \citenamefont {Vishwanath},\
  and\ \citenamefont {Verresen}}]{tantivasadakarn2022LREfromSRE}%
  \BibitemOpen
  \bibfield  {author} {\bibinfo {author} {\bibfnamefont {N.}~\bibnamefont
  {Tantivasadakarn}}, \bibinfo {author} {\bibfnamefont {R.}~\bibnamefont
  {Thorngren}}, \bibinfo {author} {\bibfnamefont {A.}~\bibnamefont
  {Vishwanath}},\ and\ \bibinfo {author} {\bibfnamefont {R.}~\bibnamefont
  {Verresen}},\ }\href@noop {} {\bibinfo {title} {Long-range entanglement from
  measuring symmetry-protected topological phases}} (\bibinfo {year} {2022}),\
  \Eprint {https://arxiv.org/abs/2112.01519} {arXiv:2112.01519
  [cond-mat.str-el]} \BibitemShut {NoStop}%
\bibitem [{\citenamefont {Bravyi}\ \emph {et~al.}(2022)\citenamefont {Bravyi},
  \citenamefont {Kim}, \citenamefont {Kliesch},\ and\ \citenamefont
  {Koenig}}]{bravyi2022PreparationOfNonAbelianState}%
  \BibitemOpen
  \bibfield  {author} {\bibinfo {author} {\bibfnamefont {S.}~\bibnamefont
  {Bravyi}}, \bibinfo {author} {\bibfnamefont {I.}~\bibnamefont {Kim}},
  \bibinfo {author} {\bibfnamefont {A.}~\bibnamefont {Kliesch}},\ and\ \bibinfo
  {author} {\bibfnamefont {R.}~\bibnamefont {Koenig}},\ }\href@noop {}
  {\bibinfo {title} {Adaptive constant-depth circuits for manipulating
  non-abelian anyons}} (\bibinfo {year} {2022}),\ \Eprint
  {https://arxiv.org/abs/2205.01933} {arXiv:2205.01933 [quant-ph]} \BibitemShut
  {NoStop}%
\bibitem [{\citenamefont {Foss-Feig}\ \emph {et~al.}(2023)\citenamefont
  {Foss-Feig}, \citenamefont {Tikku}, \citenamefont {Lu}, \citenamefont
  {Mayer}, \citenamefont {Iqbal}, \citenamefont {Gatterman}, \citenamefont
  {Gerber}, \citenamefont {Gilmore}, \citenamefont {Gresh}, \citenamefont
  {Hankin}, \citenamefont {Hewitt}, \citenamefont {Horst}, \citenamefont
  {Matheny}, \citenamefont {Mengle}, \citenamefont {Neyenhuis}, \citenamefont
  {Dreyer}, \citenamefont {Hayes}, \citenamefont {Hsieh},\ and\ \citenamefont
  {Kim}}]{fossfeig2023adaptiveQCpreparation}%
  \BibitemOpen
  \bibfield  {author} {\bibinfo {author} {\bibfnamefont {M.}~\bibnamefont
  {Foss-Feig}}, \bibinfo {author} {\bibfnamefont {A.}~\bibnamefont {Tikku}},
  \bibinfo {author} {\bibfnamefont {T.-C.}\ \bibnamefont {Lu}}, \bibinfo
  {author} {\bibfnamefont {K.}~\bibnamefont {Mayer}}, \bibinfo {author}
  {\bibfnamefont {M.}~\bibnamefont {Iqbal}}, \bibinfo {author} {\bibfnamefont
  {T.~M.}\ \bibnamefont {Gatterman}}, \bibinfo {author} {\bibfnamefont {J.~A.}\
  \bibnamefont {Gerber}}, \bibinfo {author} {\bibfnamefont {K.}~\bibnamefont
  {Gilmore}}, \bibinfo {author} {\bibfnamefont {D.}~\bibnamefont {Gresh}},
  \bibinfo {author} {\bibfnamefont {A.}~\bibnamefont {Hankin}}, \bibinfo
  {author} {\bibfnamefont {N.}~\bibnamefont {Hewitt}}, \bibinfo {author}
  {\bibfnamefont {C.~V.}\ \bibnamefont {Horst}}, \bibinfo {author}
  {\bibfnamefont {M.}~\bibnamefont {Matheny}}, \bibinfo {author} {\bibfnamefont
  {T.}~\bibnamefont {Mengle}}, \bibinfo {author} {\bibfnamefont
  {B.}~\bibnamefont {Neyenhuis}}, \bibinfo {author} {\bibfnamefont
  {H.}~\bibnamefont {Dreyer}}, \bibinfo {author} {\bibfnamefont
  {D.}~\bibnamefont {Hayes}}, \bibinfo {author} {\bibfnamefont {T.~H.}\
  \bibnamefont {Hsieh}},\ and\ \bibinfo {author} {\bibfnamefont {I.~H.}\
  \bibnamefont {Kim}},\ }\href@noop {} {\bibinfo {title} {Experimental
  demonstration of the advantage of adaptive quantum circuits}} (\bibinfo
  {year} {2023}),\ \Eprint {https://arxiv.org/abs/2302.03029} {arXiv:2302.03029
  [quant-ph]} \BibitemShut {NoStop}%
\bibitem [{\citenamefont {Iqbal}\ \emph {et~al.}(2023)\citenamefont {Iqbal},
  \citenamefont {Tantivasadakarn}, \citenamefont {Verresen}, \citenamefont
  {Campbell}, \citenamefont {Dreiling}, \citenamefont {Figgatt}, \citenamefont
  {Gaebler}, \citenamefont {Johansen}, \citenamefont {Mills}, \citenamefont
  {Moses}, \citenamefont {Pino}, \citenamefont {Ransford}, \citenamefont
  {Rowe}, \citenamefont {Siegfried}, \citenamefont {Stutz}, \citenamefont
  {Foss-Feig}, \citenamefont {Vishwanath},\ and\ \citenamefont
  {Dreyer}}]{iqbal2023D4TrappedIon}%
  \BibitemOpen
  \bibfield  {author} {\bibinfo {author} {\bibfnamefont {M.}~\bibnamefont
  {Iqbal}}, \bibinfo {author} {\bibfnamefont {N.}~\bibnamefont
  {Tantivasadakarn}}, \bibinfo {author} {\bibfnamefont {R.}~\bibnamefont
  {Verresen}}, \bibinfo {author} {\bibfnamefont {S.~L.}\ \bibnamefont
  {Campbell}}, \bibinfo {author} {\bibfnamefont {J.~M.}\ \bibnamefont
  {Dreiling}}, \bibinfo {author} {\bibfnamefont {C.}~\bibnamefont {Figgatt}},
  \bibinfo {author} {\bibfnamefont {J.~P.}\ \bibnamefont {Gaebler}}, \bibinfo
  {author} {\bibfnamefont {J.}~\bibnamefont {Johansen}}, \bibinfo {author}
  {\bibfnamefont {M.}~\bibnamefont {Mills}}, \bibinfo {author} {\bibfnamefont
  {S.~A.}\ \bibnamefont {Moses}}, \bibinfo {author} {\bibfnamefont {J.~M.}\
  \bibnamefont {Pino}}, \bibinfo {author} {\bibfnamefont {A.}~\bibnamefont
  {Ransford}}, \bibinfo {author} {\bibfnamefont {M.}~\bibnamefont {Rowe}},
  \bibinfo {author} {\bibfnamefont {P.}~\bibnamefont {Siegfried}}, \bibinfo
  {author} {\bibfnamefont {R.~P.}\ \bibnamefont {Stutz}}, \bibinfo {author}
  {\bibfnamefont {M.}~\bibnamefont {Foss-Feig}}, \bibinfo {author}
  {\bibfnamefont {A.}~\bibnamefont {Vishwanath}},\ and\ \bibinfo {author}
  {\bibfnamefont {H.}~\bibnamefont {Dreyer}},\ }\href@noop {} {\bibinfo {title}
  {Creation of non-abelian topological order and anyons on a trapped-ion
  processor}} (\bibinfo {year} {2023}),\ \Eprint
  {https://arxiv.org/abs/2305.03766} {arXiv:2305.03766 [quant-ph]} \BibitemShut
  {NoStop}%
\bibitem [{\citenamefont {Tantivasadakarn}\ \emph
  {et~al.}(2023{\natexlab{a}})\citenamefont {Tantivasadakarn}, \citenamefont
  {Vishwanath},\ and\ \citenamefont
  {Verresen}}]{Tantivasadakarn2023PreparationOfNonAbelianState}%
  \BibitemOpen
  \bibfield  {author} {\bibinfo {author} {\bibfnamefont {N.}~\bibnamefont
  {Tantivasadakarn}}, \bibinfo {author} {\bibfnamefont {A.}~\bibnamefont
  {Vishwanath}},\ and\ \bibinfo {author} {\bibfnamefont {R.}~\bibnamefont
  {Verresen}},\ }\bibfield  {title} {\bibinfo {title} {Hierarchy of topological
  order from finite-depth unitaries, measurement, and feedforward},\ }\href
  {https://doi.org/10.1103/PRXQuantum.4.020339} {\bibfield  {journal} {\bibinfo
   {journal} {PRX Quantum}\ }\textbf {\bibinfo {volume} {4}},\ \bibinfo {pages}
  {020339} (\bibinfo {year} {2023}{\natexlab{a}})}\BibitemShut {NoStop}%
\bibitem [{\citenamefont {Tantivasadakarn}\ \emph
  {et~al.}(2023{\natexlab{b}})\citenamefont {Tantivasadakarn}, \citenamefont
  {Verresen},\ and\ \citenamefont {Vishwanath}}]{Tantivasadakarn2023D4State}%
  \BibitemOpen
  \bibfield  {author} {\bibinfo {author} {\bibfnamefont {N.}~\bibnamefont
  {Tantivasadakarn}}, \bibinfo {author} {\bibfnamefont {R.}~\bibnamefont
  {Verresen}},\ and\ \bibinfo {author} {\bibfnamefont {A.}~\bibnamefont
  {Vishwanath}},\ }\bibfield  {title} {\bibinfo {title} {Shortest route to
  non-abelian topological order on a quantum processor},\ }\href
  {https://doi.org/10.1103/PhysRevLett.131.060405} {\bibfield  {journal}
  {\bibinfo  {journal} {Phys. Rev. Lett.}\ }\textbf {\bibinfo {volume} {131}},\
  \bibinfo {pages} {060405} (\bibinfo {year} {2023}{\natexlab{b}})}\BibitemShut
  {NoStop}%
\bibitem [{\citenamefont {Castelnovo}\ and\ \citenamefont
  {Chamon}(2007)}]{Castelnovo_finiteTemperatureTopologicalEntropy}%
  \BibitemOpen
  \bibfield  {author} {\bibinfo {author} {\bibfnamefont {C.}~\bibnamefont
  {Castelnovo}}\ and\ \bibinfo {author} {\bibfnamefont {C.}~\bibnamefont
  {Chamon}},\ }\bibfield  {title} {\bibinfo {title} {Entanglement and
  topological entropy of the toric code at finite temperature},\ }\href
  {https://doi.org/10.1103/PhysRevB.76.184442} {\bibfield  {journal} {\bibinfo
  {journal} {Phys. Rev. B}\ }\textbf {\bibinfo {volume} {76}},\ \bibinfo
  {pages} {184442} (\bibinfo {year} {2007})}\BibitemShut {NoStop}%
\bibitem [{\citenamefont {Hart}\ and\ \citenamefont
  {Castelnovo}(2018)}]{Castelnovo_negativity_TC_finiteTemperature}%
  \BibitemOpen
  \bibfield  {author} {\bibinfo {author} {\bibfnamefont {O.}~\bibnamefont
  {Hart}}\ and\ \bibinfo {author} {\bibfnamefont {C.}~\bibnamefont
  {Castelnovo}},\ }\bibfield  {title} {\bibinfo {title} {Entanglement
  negativity and sudden death in the toric code at finite temperature},\ }\href
  {https://doi.org/10.1103/PhysRevB.97.144410} {\bibfield  {journal} {\bibinfo
  {journal} {Phys. Rev. B}\ }\textbf {\bibinfo {volume} {97}},\ \bibinfo
  {pages} {144410} (\bibinfo {year} {2018})}\BibitemShut {NoStop}%
\bibitem [{\citenamefont {Lu}\ \emph {et~al.}(2020)\citenamefont {Lu},
  \citenamefont {Hsieh},\ and\ \citenamefont
  {Grover}}]{Lu_negativity_finiteTemperature}%
  \BibitemOpen
  \bibfield  {author} {\bibinfo {author} {\bibfnamefont {T.-C.}\ \bibnamefont
  {Lu}}, \bibinfo {author} {\bibfnamefont {T.~H.}\ \bibnamefont {Hsieh}},\ and\
  \bibinfo {author} {\bibfnamefont {T.}~\bibnamefont {Grover}},\ }\bibfield
  {title} {\bibinfo {title} {Detecting topological order at finite temperature
  using entanglement negativity},\ }\href
  {https://doi.org/10.1103/PhysRevLett.125.116801} {\bibfield  {journal}
  {\bibinfo  {journal} {Phys. Rev. Lett.}\ }\textbf {\bibinfo {volume} {125}},\
  \bibinfo {pages} {116801} (\bibinfo {year} {2020})}\BibitemShut {NoStop}%
\bibitem [{\citenamefont {Colmenarez}\ \emph {et~al.}(2023)\citenamefont
  {Colmenarez}, \citenamefont {Huang}, \citenamefont {Diehl},\ and\
  \citenamefont {Müller}}]{colmenarez2023thresholdsCoherentInformation}%
  \BibitemOpen
  \bibfield  {author} {\bibinfo {author} {\bibfnamefont {L.}~\bibnamefont
  {Colmenarez}}, \bibinfo {author} {\bibfnamefont {Z.-M.}\ \bibnamefont
  {Huang}}, \bibinfo {author} {\bibfnamefont {S.}~\bibnamefont {Diehl}},\ and\
  \bibinfo {author} {\bibfnamefont {M.}~\bibnamefont {Müller}},\ }\href@noop
  {} {\bibinfo {title} {Accurate optimal quantum error correction thresholds
  from coherent information}} (\bibinfo {year} {2023}),\ \Eprint
  {https://arxiv.org/abs/2312.06664} {arXiv:2312.06664 [quant-ph]} \BibitemShut
  {NoStop}%
\bibitem [{\citenamefont {Lee}\ \emph {et~al.}(2023)\citenamefont {Lee},
  \citenamefont {Jian},\ and\ \citenamefont {Xu}}]{PRXQuantumLeeBoundarySPT}%
  \BibitemOpen
  \bibfield  {author} {\bibinfo {author} {\bibfnamefont {J.~Y.}\ \bibnamefont
  {Lee}}, \bibinfo {author} {\bibfnamefont {C.-M.}\ \bibnamefont {Jian}},\ and\
  \bibinfo {author} {\bibfnamefont {C.}~\bibnamefont {Xu}},\ }\bibfield
  {title} {\bibinfo {title} {Quantum criticality under decoherence or weak
  measurement},\ }\href {https://doi.org/10.1103/PRXQuantum.4.030317}
  {\bibfield  {journal} {\bibinfo  {journal} {PRX Quantum}\ }\textbf {\bibinfo
  {volume} {4}},\ \bibinfo {pages} {030317} (\bibinfo {year}
  {2023})}\BibitemShut {NoStop}%
\bibitem [{\citenamefont {Beigi}\ \emph {et~al.}(2011)\citenamefont {Beigi},
  \citenamefont {Shor},\ and\ \citenamefont
  {Whalen}}]{beigi2011quantumDoubleCondensation}%
  \BibitemOpen
  \bibfield  {author} {\bibinfo {author} {\bibfnamefont {S.}~\bibnamefont
  {Beigi}}, \bibinfo {author} {\bibfnamefont {P.~W.}\ \bibnamefont {Shor}},\
  and\ \bibinfo {author} {\bibfnamefont {D.}~\bibnamefont {Whalen}},\
  }\bibfield  {title} {\bibinfo {title} {The quantum double model with
  boundary: condensations and symmetries},\ }\href
  {https://doi.org/doi.org/10.1007/s00220-011-1294-x} {\bibfield  {journal}
  {\bibinfo  {journal} {Communications in mathematical physics}\ }\textbf
  {\bibinfo {volume} {306}},\ \bibinfo {pages} {663} (\bibinfo {year}
  {2011})}\BibitemShut {NoStop}%
\bibitem [{\citenamefont {Verstraete}\ and\ \citenamefont
  {Cirac}(2004{\natexlab{a}})}]{verstraetePhysRevA2003Valence-bondQC}%
  \BibitemOpen
  \bibfield  {author} {\bibinfo {author} {\bibfnamefont {F.}~\bibnamefont
  {Verstraete}}\ and\ \bibinfo {author} {\bibfnamefont {J.~I.}\ \bibnamefont
  {Cirac}},\ }\bibfield  {title} {\bibinfo {title} {Valence-bond states for
  quantum computation},\ }\href {https://doi.org/10.1103/PhysRevA.70.060302}
  {\bibfield  {journal} {\bibinfo  {journal} {Phys. Rev. A}\ }\textbf {\bibinfo
  {volume} {70}},\ \bibinfo {pages} {060302} (\bibinfo {year}
  {2004}{\natexlab{a}})}\BibitemShut {NoStop}%
\bibitem [{\citenamefont {Verstraete}\ and\ \citenamefont
  {Cirac}(2004{\natexlab{b}})}]{verstraete2004renormalization}%
  \BibitemOpen
  \bibfield  {author} {\bibinfo {author} {\bibfnamefont {F.}~\bibnamefont
  {Verstraete}}\ and\ \bibinfo {author} {\bibfnamefont {J.~I.}\ \bibnamefont
  {Cirac}},\ }\href@noop {} {\bibinfo {title} {Renormalization algorithms for
  quantum-many body systems in two and higher dimensions}} (\bibinfo {year}
  {2004}{\natexlab{b}}),\ \Eprint {https://arxiv.org/abs/cond-mat/0407066}
  {arXiv:cond-mat/0407066 [cond-mat.str-el]} \BibitemShut {NoStop}%
\bibitem [{\citenamefont {Schuch}\ \emph {et~al.}(2010)\citenamefont {Schuch},
  \citenamefont {Cirac},\ and\ \citenamefont
  {Pérez-García}}]{SCHUCH20102153PEPS}%
  \BibitemOpen
  \bibfield  {author} {\bibinfo {author} {\bibfnamefont {N.}~\bibnamefont
  {Schuch}}, \bibinfo {author} {\bibfnamefont {I.}~\bibnamefont {Cirac}},\ and\
  \bibinfo {author} {\bibfnamefont {D.}~\bibnamefont {Pérez-García}},\
  }\bibfield  {title} {\bibinfo {title} {{PEPS} as ground states: Degeneracy
  and topology},\ }\href
  {https://doi.org/https://doi.org/10.1016/j.aop.2010.05.008} {\bibfield
  {journal} {\bibinfo  {journal} {Annals of Physics}\ }\textbf {\bibinfo
  {volume} {325}},\ \bibinfo {pages} {2153} (\bibinfo {year}
  {2010})}\BibitemShut {NoStop}%
\bibitem [{\citenamefont {Schuch}\ \emph {et~al.}(2011)\citenamefont {Schuch},
  \citenamefont {P\'erez-Garc\'{\i}a},\ and\ \citenamefont
  {Cirac}}]{SCHUCH2011PEPS2}%
  \BibitemOpen
  \bibfield  {author} {\bibinfo {author} {\bibfnamefont {N.}~\bibnamefont
  {Schuch}}, \bibinfo {author} {\bibfnamefont {D.}~\bibnamefont
  {P\'erez-Garc\'{\i}a}},\ and\ \bibinfo {author} {\bibfnamefont
  {I.}~\bibnamefont {Cirac}},\ }\bibfield  {title} {\bibinfo {title}
  {Classifying quantum phases using matrix product states and projected
  entangled pair states},\ }\href {https://doi.org/10.1103/PhysRevB.84.165139}
  {\bibfield  {journal} {\bibinfo  {journal} {Phys. Rev. B}\ }\textbf {\bibinfo
  {volume} {84}},\ \bibinfo {pages} {165139} (\bibinfo {year}
  {2011})}\BibitemShut {NoStop}%
\bibitem [{\citenamefont {Duivenvoorden}\ \emph {et~al.}(2017)\citenamefont
  {Duivenvoorden}, \citenamefont {Iqbal}, \citenamefont {Haegeman},
  \citenamefont {Verstraete},\ and\ \citenamefont
  {Schuch}}]{PhysRevB17Norbert}%
  \BibitemOpen
  \bibfield  {author} {\bibinfo {author} {\bibfnamefont {K.}~\bibnamefont
  {Duivenvoorden}}, \bibinfo {author} {\bibfnamefont {M.}~\bibnamefont
  {Iqbal}}, \bibinfo {author} {\bibfnamefont {J.}~\bibnamefont {Haegeman}},
  \bibinfo {author} {\bibfnamefont {F.}~\bibnamefont {Verstraete}},\ and\
  \bibinfo {author} {\bibfnamefont {N.}~\bibnamefont {Schuch}},\ }\bibfield
  {title} {\bibinfo {title} {Entanglement phases as holographic duals of anyon
  condensates},\ }\href {https://doi.org/10.1103/PhysRevB.95.235119} {\bibfield
   {journal} {\bibinfo  {journal} {Phys. Rev. B}\ }\textbf {\bibinfo {volume}
  {95}},\ \bibinfo {pages} {235119} (\bibinfo {year} {2017})}\BibitemShut
  {NoStop}%
\bibitem [{\citenamefont {Haegeman}\ \emph {et~al.}(2015)\citenamefont
  {Haegeman}, \citenamefont {Zauner}, \citenamefont {Schuch},\ and\
  \citenamefont {Verstraete}}]{Norbert_anyonCondensation}%
  \BibitemOpen
  \bibfield  {author} {\bibinfo {author} {\bibfnamefont {J.}~\bibnamefont
  {Haegeman}}, \bibinfo {author} {\bibfnamefont {V.}~\bibnamefont {Zauner}},
  \bibinfo {author} {\bibfnamefont {N.}~\bibnamefont {Schuch}},\ and\ \bibinfo
  {author} {\bibfnamefont {F.}~\bibnamefont {Verstraete}},\ }\bibfield  {title}
  {\bibinfo {title} {Shadows of anyons and the entanglement structure of
  topological phases},\ }\href {https://doi.org/10.1038/ncomms9284} {\bibfield
  {journal} {\bibinfo  {journal} {Nature communications}\ }\textbf {\bibinfo
  {volume} {6}},\ \bibinfo {pages} {8284} (\bibinfo {year} {2015})}\BibitemShut
  {NoStop}%
\bibitem [{\citenamefont {Iqbal}\ and\ \citenamefont
  {Schuch}(2021)}]{PhysRevX21Norbert}%
  \BibitemOpen
  \bibfield  {author} {\bibinfo {author} {\bibfnamefont {M.}~\bibnamefont
  {Iqbal}}\ and\ \bibinfo {author} {\bibfnamefont {N.}~\bibnamefont {Schuch}},\
  }\bibfield  {title} {\bibinfo {title} {Entanglement order parameters and
  critical behavior for topological phase transitions and beyond},\ }\href
  {https://doi.org/10.1103/PhysRevX.11.041014} {\bibfield  {journal} {\bibinfo
  {journal} {Phys. Rev. X}\ }\textbf {\bibinfo {volume} {11}},\ \bibinfo
  {pages} {041014} (\bibinfo {year} {2021})}\BibitemShut {NoStop}%
\bibitem [{\citenamefont {Haah}(2023)}]{Haah_2023}%
  \BibitemOpen
  \bibfield  {author} {\bibinfo {author} {\bibfnamefont {J.}~\bibnamefont
  {Haah}},\ }\bibfield  {title} {\bibinfo {title} {Invertible subalgebras},\
  }\href {https://doi.org/10.1007/s00220-023-04806-6} {\bibfield  {journal}
  {\bibinfo  {journal} {Communications in Mathematical Physics}\ }\textbf
  {\bibinfo {volume} {403}},\ \bibinfo {pages} {661–698} (\bibinfo {year}
  {2023})}\BibitemShut {NoStop}%
\bibitem [{\citenamefont {Haah}(2021)}]{Haah_2021}%
  \BibitemOpen
  \bibfield  {author} {\bibinfo {author} {\bibfnamefont {J.}~\bibnamefont
  {Haah}},\ }\bibfield  {title} {\bibinfo {title} {Clifford quantum cellular
  automata: Trivial group in 2d and witt group in 3d},\ }\bibfield  {journal}
  {\bibinfo  {journal} {Journal of Mathematical Physics}\ }\textbf {\bibinfo
  {volume} {62}},\ \href {https://doi.org/10.1063/5.0022185}
  {10.1063/5.0022185} (\bibinfo {year} {2021})\BibitemShut {NoStop}%
\bibitem [{\citenamefont {Jordan}\ \emph {et~al.}(2008)\citenamefont {Jordan},
  \citenamefont {Or\'us}, \citenamefont {Vidal}, \citenamefont {Verstraete},\
  and\ \citenamefont {Cirac}}]{JordanPhysRevLett2008PEPS}%
  \BibitemOpen
  \bibfield  {author} {\bibinfo {author} {\bibfnamefont {J.}~\bibnamefont
  {Jordan}}, \bibinfo {author} {\bibfnamefont {R.}~\bibnamefont {Or\'us}},
  \bibinfo {author} {\bibfnamefont {G.}~\bibnamefont {Vidal}}, \bibinfo
  {author} {\bibfnamefont {F.}~\bibnamefont {Verstraete}},\ and\ \bibinfo
  {author} {\bibfnamefont {J.~I.}\ \bibnamefont {Cirac}},\ }\bibfield  {title}
  {\bibinfo {title} {Classical simulation of infinite-size quantum lattice
  systems in two spatial dimensions},\ }\href
  {https://doi.org/10.1103/PhysRevLett.101.250602} {\bibfield  {journal}
  {\bibinfo  {journal} {Phys. Rev. Lett.}\ }\textbf {\bibinfo {volume} {101}},\
  \bibinfo {pages} {250602} (\bibinfo {year} {2008})}\BibitemShut {NoStop}%
\bibitem [{\citenamefont {Vidal}(2007)}]{PhysRevLettInfitniteMPSVidal}%
  \BibitemOpen
  \bibfield  {author} {\bibinfo {author} {\bibfnamefont {G.}~\bibnamefont
  {Vidal}},\ }\bibfield  {title} {\bibinfo {title} {Classical simulation of
  infinite-size quantum lattice systems in one spatial dimension},\ }\href
  {https://doi.org/10.1103/PhysRevLett.98.070201} {\bibfield  {journal}
  {\bibinfo  {journal} {Phys. Rev. Lett.}\ }\textbf {\bibinfo {volume} {98}},\
  \bibinfo {pages} {070201} (\bibinfo {year} {2007})}\BibitemShut {NoStop}%
\bibitem [{\citenamefont {Or\'us}\ and\ \citenamefont
  {Vidal}(2008)}]{PhysRevBInfitnitePEPSOrusVidal}%
  \BibitemOpen
  \bibfield  {author} {\bibinfo {author} {\bibfnamefont {R.}~\bibnamefont
  {Or\'us}}\ and\ \bibinfo {author} {\bibfnamefont {G.}~\bibnamefont {Vidal}},\
  }\bibfield  {title} {\bibinfo {title} {Infinite time-evolving block
  decimation algorithm beyond unitary evolution},\ }\href
  {https://doi.org/10.1103/PhysRevB.78.155117} {\bibfield  {journal} {\bibinfo
  {journal} {Phys. Rev. B}\ }\textbf {\bibinfo {volume} {78}},\ \bibinfo
  {pages} {155117} (\bibinfo {year} {2008})}\BibitemShut {NoStop}%
\bibitem [{\citenamefont {Pollmann}\ and\ \citenamefont
  {Turner}(2012)}]{Pollmann2012DetectingSPT}%
  \BibitemOpen
  \bibfield  {author} {\bibinfo {author} {\bibfnamefont {F.}~\bibnamefont
  {Pollmann}}\ and\ \bibinfo {author} {\bibfnamefont {A.~M.}\ \bibnamefont
  {Turner}},\ }\bibfield  {title} {\bibinfo {title} {Detection of
  symmetry-protected topological phases in one dimension},\ }\href
  {https://doi.org/10.1103/PhysRevB.86.125441} {\bibfield  {journal} {\bibinfo
  {journal} {Phys. Rev. B}\ }\textbf {\bibinfo {volume} {86}},\ \bibinfo
  {pages} {125441} (\bibinfo {year} {2012})}\BibitemShut {NoStop}%
\bibitem [{\citenamefont {Wen}(2003)}]{WenPhysRevLett2003Plaquette}%
  \BibitemOpen
  \bibfield  {author} {\bibinfo {author} {\bibfnamefont {X.-G.}\ \bibnamefont
  {Wen}},\ }\bibfield  {title} {\bibinfo {title} {Quantum orders in an exact
  soluble model},\ }\href {https://doi.org/10.1103/PhysRevLett.90.016803}
  {\bibfield  {journal} {\bibinfo  {journal} {Phys. Rev. Lett.}\ }\textbf
  {\bibinfo {volume} {90}},\ \bibinfo {pages} {016803} (\bibinfo {year}
  {2003})}\BibitemShut {NoStop}%
\bibitem [{\citenamefont {Wahl}\ \emph {et~al.}(2013)\citenamefont {Wahl},
  \citenamefont {Tu}, \citenamefont {Schuch},\ and\ \citenamefont
  {Cirac}}]{Wahl2013PEPS_ChiralTO}%
  \BibitemOpen
  \bibfield  {author} {\bibinfo {author} {\bibfnamefont {T.~B.}\ \bibnamefont
  {Wahl}}, \bibinfo {author} {\bibfnamefont {H.-H.}\ \bibnamefont {Tu}},
  \bibinfo {author} {\bibfnamefont {N.}~\bibnamefont {Schuch}},\ and\ \bibinfo
  {author} {\bibfnamefont {J.~I.}\ \bibnamefont {Cirac}},\ }\bibfield  {title}
  {\bibinfo {title} {Projected entangled-pair states can describe chiral
  topological states},\ }\href {https://doi.org/10.1103/PhysRevLett.111.236805}
  {\bibfield  {journal} {\bibinfo  {journal} {Phys. Rev. Lett.}\ }\textbf
  {\bibinfo {volume} {111}},\ \bibinfo {pages} {236805} (\bibinfo {year}
  {2013})}\BibitemShut {NoStop}%
\bibitem [{\citenamefont {Yang}\ \emph {et~al.}(2015)\citenamefont {Yang},
  \citenamefont {Wahl}, \citenamefont {Tu}, \citenamefont {Schuch},\ and\
  \citenamefont {Cirac}}]{Yang2015PEPS_ChiralTO}%
  \BibitemOpen
  \bibfield  {author} {\bibinfo {author} {\bibfnamefont {S.}~\bibnamefont
  {Yang}}, \bibinfo {author} {\bibfnamefont {T.~B.}\ \bibnamefont {Wahl}},
  \bibinfo {author} {\bibfnamefont {H.-H.}\ \bibnamefont {Tu}}, \bibinfo
  {author} {\bibfnamefont {N.}~\bibnamefont {Schuch}},\ and\ \bibinfo {author}
  {\bibfnamefont {J.~I.}\ \bibnamefont {Cirac}},\ }\bibfield  {title} {\bibinfo
  {title} {Chiral projected entangled-pair state with topological order},\
  }\href {https://doi.org/10.1103/PhysRevLett.114.106803} {\bibfield  {journal}
  {\bibinfo  {journal} {Phys. Rev. Lett.}\ }\textbf {\bibinfo {volume} {114}},\
  \bibinfo {pages} {106803} (\bibinfo {year} {2015})}\BibitemShut {NoStop}%
\bibitem [{\citenamefont {Dubail}\ and\ \citenamefont
  {Read}(2015)}]{Dubail2015TN_nogo_ChiralTO}%
  \BibitemOpen
  \bibfield  {author} {\bibinfo {author} {\bibfnamefont {J.}~\bibnamefont
  {Dubail}}\ and\ \bibinfo {author} {\bibfnamefont {N.}~\bibnamefont {Read}},\
  }\bibfield  {title} {\bibinfo {title} {Tensor network trial states for chiral
  topological phases in two dimensions and a no-go theorem in any dimension},\
  }\href {https://doi.org/10.1103/PhysRevB.92.205307} {\bibfield  {journal}
  {\bibinfo  {journal} {Phys. Rev. B}\ }\textbf {\bibinfo {volume} {92}},\
  \bibinfo {pages} {205307} (\bibinfo {year} {2015})}\BibitemShut {NoStop}%
\bibitem [{\citenamefont {Chen}\ \emph {et~al.}(2018)\citenamefont {Chen},
  \citenamefont {Vanderstraeten}, \citenamefont {Capponi},\ and\ \citenamefont
  {Poilblanc}}]{Chen2018PEPS_NonAbelian_ChiralTO}%
  \BibitemOpen
  \bibfield  {author} {\bibinfo {author} {\bibfnamefont {J.-Y.}\ \bibnamefont
  {Chen}}, \bibinfo {author} {\bibfnamefont {L.}~\bibnamefont
  {Vanderstraeten}}, \bibinfo {author} {\bibfnamefont {S.}~\bibnamefont
  {Capponi}},\ and\ \bibinfo {author} {\bibfnamefont {D.}~\bibnamefont
  {Poilblanc}},\ }\bibfield  {title} {\bibinfo {title} {Non-abelian chiral spin
  liquid in a quantum antiferromagnet revealed by an {iPEPS} study},\ }\href
  {https://doi.org/10.1103/PhysRevB.98.184409} {\bibfield  {journal} {\bibinfo
  {journal} {Phys. Rev. B}\ }\textbf {\bibinfo {volume} {98}},\ \bibinfo
  {pages} {184409} (\bibinfo {year} {2018})}\BibitemShut {NoStop}%
\bibitem [{\citenamefont {Hackenbroich}\ \emph {et~al.}(2018)\citenamefont
  {Hackenbroich}, \citenamefont {Sterdyniak},\ and\ \citenamefont
  {Schuch}}]{Hackenbroich2018PEPS_SU2}%
  \BibitemOpen
  \bibfield  {author} {\bibinfo {author} {\bibfnamefont {A.}~\bibnamefont
  {Hackenbroich}}, \bibinfo {author} {\bibfnamefont {A.}~\bibnamefont
  {Sterdyniak}},\ and\ \bibinfo {author} {\bibfnamefont {N.}~\bibnamefont
  {Schuch}},\ }\bibfield  {title} {\bibinfo {title} {Interplay of su(2), point
  group, and translational symmetry for projected entangled pair states:
  Application to a chiral spin liquid},\ }\href
  {https://doi.org/10.1103/PhysRevB.98.085151} {\bibfield  {journal} {\bibinfo
  {journal} {Phys. Rev. B}\ }\textbf {\bibinfo {volume} {98}},\ \bibinfo
  {pages} {085151} (\bibinfo {year} {2018})}\BibitemShut {NoStop}%
\bibitem [{\citenamefont {Horn}\ and\ \citenamefont
  {Johnson}(2012)}]{horn2012matrix}%
  \BibitemOpen
  \bibfield  {author} {\bibinfo {author} {\bibfnamefont {R.~A.}\ \bibnamefont
  {Horn}}\ and\ \bibinfo {author} {\bibfnamefont {C.~R.}\ \bibnamefont
  {Johnson}},\ }\href {https://doi.org/10.1017/CBO9781139020411} {\emph
  {\bibinfo {title} {Matrix Analysis}}}\ (\bibinfo  {publisher} {Cambridge
  university press},\ \bibinfo {year} {2012})\BibitemShut {NoStop}%
\end{thebibliography}%

\clearpage

\onecolumngrid
\appendix
\section{Some matrix inequalities}
\label{app:matrix_ineq}

\begin{lemma} \label{lem:column_inclusion}
Let $A$, $B$, $C$ be matrices, with $A$, $C$ square, such that the block matrix
\begin{align}
	\begin{bmatrix} A & B \\ B^\dag & C \end{bmatrix} \succeq 0 .
\end{align}
Then
\begin{align}
	\operatorname{im} B \subseteq \operatorname{im} A .
\end{align}
\end{lemma}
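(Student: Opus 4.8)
The statement is the standard fact that positivity of a Hermitian block matrix forces the column space (image) of the off-diagonal block to be contained in the image of the corresponding diagonal block. The plan is to exploit positivity directly through the definition: a Hermitian matrix $M \succeq 0$ means $\langle w | M | w\rangle \geq 0$ for all vectors $w$. I would apply this to cleverly chosen test vectors of block form $|w\rangle = (|u\rangle, |v\rangle)^{\mathrm{T}}$, where $|u\rangle$ lives in the domain of $A$ and $|v\rangle$ in the domain of $C$.

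**Key steps.** First I would restate the goal in terms of kernels rather than images: since $A$ is Hermitian and positive semidefinite, $\operatorname{im} A = (\ker A)^\perp$, so it is equivalent to show that $\ker A \subseteq \ker B^\dag$ (i.e.\ every vector annihilated by $A$ is annihilated by $B^\dag$). So suppose $A |u\rangle = 0$; I want to conclude $B^\dag |u\rangle = 0$. The main computation is to plug the test vector $|w\rangle = (|u\rangle, -t\, B^\dag |u\rangle)^{\mathrm{T}}$ (for a real scalar $t$) into the quadratic form and expand:
\begin{align}
	0 \leq \langle w | M | w \rangle
	= \langle u | A | u \rangle - 2t\, \langle u | B B^\dag | u \rangle + t^2 \langle u | B C B^\dag | u \rangle .
\end{align}
Using $A|u\rangle = 0$, the first term vanishes, leaving $0 \leq -2t \lVert B^\dag |u\rangle \rVert^2 + t^2 \langle u | B C B^\dag | u\rangle$. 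For small positive $t$ the linear term dominates, so $-2t\lVert B^\dag |u\rangle\rVert^2 \geq -t^2 \langle u|BCB^\dag|u\rangle$ forces $\lVert B^\dag |u\rangle \rVert^2 = 0$, giving $B^\dag |u\rangle = 0$ as desired.

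**Main obstacle.** The step that requires the most care is the limiting/scaling argument: from $0 \leq -2t\,c_1 + t^2 c_2$ with $c_1 = \lVert B^\dag|u\rangle\rVert^2 \geq 0$, one must rigorously conclude $c_1 = 0$. The clean way is to divide by $t>0$ and let $t \to 0^+$, which yields $-2 c_1 \geq 0$, hence $c_1 \leq 0$, and combined with $c_1 \geq 0$ gives $c_1 = 0$. I would make sure to note that $C$ (and hence $c_2$) need not be controlled, so the argument must not rely on any bound on $c_2$; the $t\to 0^+$ limit sidesteps this entirely. Finally I would translate $\ker A \subseteq \ker B^\dag$ back into $\operatorname{im} B \subseteq \operatorname{im} A$ by taking orthogonal complements, using that $\operatorname{im} B = (\ker B^\dag)^\perp$ and $\operatorname{im} A = (\ker A)^\perp$ for the Hermitian positive semidefinite matrix $A$, which completes the proof.
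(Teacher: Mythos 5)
Your proof is correct. Note that the paper does not actually spell out a proof of this lemma: it simply cites Horn and Johnson, \emph{Matrix Analysis}, \S 7.1.10, where the standard argument runs through a factorization. Writing $M = K^\dag K$ with $K = \begin{bmatrix} K_1 & K_2 \end{bmatrix}$, one gets $A = K_1^\dag K_1$, $B = K_1^\dag K_2$, and then
\begin{align}
	\operatorname{im} B = \operatorname{im}\bigl(K_1^\dag K_2\bigr) \subseteq \operatorname{im} K_1^\dag = \operatorname{im}\bigl(K_1^\dag K_1\bigr) = \operatorname{im} A ,
\end{align}
using the identity $\operatorname{im}(K_1^\dag K_1) = \operatorname{im} K_1^\dag$. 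Your route is genuinely different and more elementary: you work directly with the quadratic form on test vectors $(\,|u\rangle, -t\,B^\dag|u\rangle\,)^{\mathrm{T}}$, and the $t \to 0^+$ scaling argument correctly extracts $\ker A \subseteq \ker B^\dag$ without needing any control on $C$; passing to orthogonal complements (valid since $A$ is Hermitian, being a diagonal block of a Hermitian matrix) then gives the image inclusion. What the factorization proof buys is brevity and an immediate generalization (e.g.\ it shows $B = A^{1/2} W C^{1/2}$ for a contraction $W$); what your proof buys is self-containedness, avoiding the existence of the square-root or Gram factorization entirely. Both are complete and rigorous.
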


$\operatorname{im} B$ denotes the image of $B$---the vector space span by the columns of $B$---also known as the \emph{column space} or the \emph{range} in literature.
This lemma is proved in Ref.~\onlinecite{horn2012matrix} \S7.1.10.

\begin{lemma} \label{lem:matrix_power_trace_ineq}
Let $A,B \succeq 0$ be matrices of the same size.
Then for $0 < \mu < 1$
\begin{align} \label{eq:matrix_power_trace_ineq}
    \tr A^{1-\mu} B^\mu \leq \bigl(\tr A\bigr)^{1-\mu} \bigl(\tr B\bigr)^\mu \,.
\end{align}
Equality holds iff $A \propto B$.
\end{lemma}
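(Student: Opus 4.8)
The plan is to reduce to the normalized case and then apply Young's inequality eigenvalue-by-eigenvalue. First I would dispose of the degenerate cases: if $\tr A = 0$ then $A = 0$ (since $A \succeq 0$), so $A^{1-\mu} = 0$ and both sides vanish, and similarly for $B$; hence I may assume $\tr A, \tr B > 0$ and, after rescaling $A \to A/\tr A$ and $B \to B/\tr B$ (which multiplies both sides of the claimed bound by the same factor), reduce to the case $\tr A = \tr B = 1$. The goal then becomes $\tr A^{1-\mu} B^\mu \leq 1$.

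Next I would diagonalize. Writing $A = \sum_i a_i \ketbra{u_i}{u_i}$ and $B = \sum_j b_j \ketbra{v_j}{v_j}$ with $\{u_i\}$, $\{v_j\}$ orthonormal eigenbases and $a_i, b_j \geq 0$, and setting $P_{ij} = |\braket{u_i|v_j}|^2$, a direct expansion gives $\tr A^{1-\mu} B^\mu = \sum_{i,j} a_i^{1-\mu} b_j^\mu P_{ij}$ (using $0^s = 0$ for $s > 0$ from the functional calculus). The matrix $P$ is doubly stochastic, $\sum_i P_{ij} = \sum_j P_{ij} = 1$, because each family is an orthonormal basis. Applying the scalar Young inequality $x^{1-\mu} y^\mu \leq (1-\mu) x + \mu y$ termwise and summing against $P$ then yields $\sum_{i,j} a_i^{1-\mu} b_j^\mu P_{ij} \leq (1-\mu) \sum_i a_i + \mu \sum_j b_j = (1-\mu) + \mu = 1$, which establishes the inequality.

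The step I expect to be the main obstacle is the equality condition. Young's inequality is sharp exactly when $x = y$, so equality in the trace bound forces $a_i = b_j$ for every pair $(i,j)$ with $P_{ij} > 0$, i.e.\ whenever $\braket{u_i|v_j} \neq 0$. Phrasing this in terms of the spectral projectors $\Pi^A_\lambda$ and $\Pi^B_{\lambda'}$ onto the $\lambda$- and $\lambda'$-eigenspaces, the condition reads $\Pi^A_\lambda \Pi^B_{\lambda'} = 0$ whenever $\lambda \neq \lambda'$. I would then insert the resolutions of identity $\sum_{\lambda'} \Pi^B_{\lambda'} = 1 = \sum_\lambda \Pi^A_\lambda$ to obtain $\Pi^A_\lambda = \Pi^A_\lambda \Pi^B_\lambda$, take adjoints, and use the symmetric roles of $A$ and $B$ to conclude $\Pi^A_\lambda = \Pi^B_\lambda$ for every $\lambda$, with a short check that the eigenvalue supports coincide (if $\lambda$ were an eigenvalue of $A$ but not of $B$ then $\Pi^B_\lambda = 0$ would force $\Pi^A_\lambda = 0$, a contradiction). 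This gives $A = B$ in the normalized variables, hence $A \propto B$ in the original ones; the converse direction is immediate by substitution.

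As a shortcut for the inequality alone, one could instead invoke Hölder's inequality for Schatten norms with conjugate exponents $p = 1/(1-\mu)$ and $q = 1/\mu$ applied to $A^{1-\mu}$ and $B^\mu$, after noting that $\tr A^{1-\mu} B^\mu = \tr\bigl( B^{\mu/2} A^{1-\mu} B^{\mu/2} \bigr) \geq 0$ is real and nonnegative. I nonetheless prefer the eigenvalue/Young route, since it delivers the sharp equality characterization with no additional machinery.
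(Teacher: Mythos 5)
Your proof is correct, and it takes a genuinely different route from the paper. The paper's argument is an interpolation one: it defines $f(\mu) = \ln\tr A^{1-\mu}B^\mu$, shows $f$ is convex by computing $\frac{\D[2]{f}}{\D{\mu}^2}$ and bounding it with the Cauchy--Schwarz inequality for traces, and then applies Jensen's inequality between the endpoints $f(0) \leq \ln\tr A$ and $f(1) \leq \ln\tr B$; the equality case is then chased through the convexity and Cauchy--Schwarz equality conditions. Your argument instead diagonalizes both operators, writes $\tr A^{1-\mu}B^\mu = \sum_{i,j} a_i^{1-\mu} b_j^\mu P_{ij}$ with $P_{ij} = |\braket{u_i|v_j}|^2$ doubly stochastic, and applies the scalar weighted AM--GM (Young) inequality termwise --- no calculus, no matrix Cauchy--Schwarz, just finite-dimensional spectral theory. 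What each buys: your route is more elementary and, in my view, yields a cleaner equality analysis (Young's equality forces $a_i = b_j$ on every overlapping pair, and the spectral-projector matching argument you sketch --- including the check that the spectra coincide --- closes it rigorously, with the degenerate $A=0$ or $B=0$ cases disposed of up front exactly as the paper does via its ``orthogonal'' case); the paper's log-convexity route avoids diagonalization entirely, which is the standard way such results extend to settings where spectral decompositions are less convenient (e.g.\ it establishes the stronger statement that $\mu \mapsto \ln\tr A^{1-\mu}B^\mu$ is convex on all of $[0,1]$, of which the stated inequality is just the comparison of one interior point with the endpoints). Your Hölder remark is also a valid alternative for the inequality alone, and you are right to flag that the equality characterization is the part it does not deliver for free.
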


\begin{proof}
Denote $P_A = \lim_{\nu \to 0^+} A^\nu$ and $P_B = \lim_{\mu \to 0^+} B^\mu$ as the projectors on to the image of $A$ and $B$ respectively.

If $A$ and $B$ are orthogonal---i.e., $P_AP_B=0$---then $\tr A^{1-\mu} B^\mu = 0$ for all $0 < \mu < 1$, and the inequality is trivially true.
Equality occurs iff either $A = 0$ or $B = 0$.

Suppose $A$ and $B$ are not orthogonal, then $\tr A^{1-\mu} B^\mu > 0$ for all $0 < \mu < 1$.
Define
\begin{align}
   f(\mu) \defeq \begin{cases} \ln\tr AP_B & \mu=0, \\ \ln\tr A^{1-\mu} B^\mu & 0<\mu<1 , \\ \ln\tr P_AB & \mu=1. \end{cases}
\end{align}
$f$ is continuous in $[0,1]$ and infinitely differentiable in $(0,1)$.
Since $P_A, P_B \preceq 1$, $f(0) \leq \ln\tr A$ and $f(1) \leq \ln\tr B$.
If $f$ is convex, by Jensen's inequality
\begin{align} \label{eq:matrix_power_trace_ineq_J}
   f(\mu) \leq (1-\mu) f(0) + \mu f(1) \leq (1-\mu)(\ln\tr A) + \mu(\ln\tr B)
\end{align}
which gives our desired inequality.

We will prove that $f$ is convex by showing its second derivative is non-negative.
Note that $\frac{\D{}}{\D{\mu}} A^{1-\mu} = (-\ln A)A^{1-\mu} = A^{1-\mu}(-\ln A)$ and $\frac{\D{}}{\D{\mu}} B^\mu = (\ln B)B^\mu = B^\mu(\ln B)$.%
    \footnote{While $\ln A$ diverges if $A$ becomes singular (i.e., some of the eigenvalues are zero), combinations like $(\ln A)A^{1-\mu}$, $A^{1-\mu}(\ln A)$, $(\ln A)A^{1-\mu}(\ln A)$, etc.\ are still finite.}
\begin{align} \begin{aligned}
    \frac{\D{f}}{\D{\mu}}
    &= \frac{\tr A^{1-\mu} (\ln B-\ln A) B^\mu}
    {\tr A^{1-\mu} B^\mu} \,,
\\
    \frac{\D[2]{f}}{\D{\mu}^2}
    &= \frac{ \bigl(\tr A^{1-\mu} B^\mu \bigr) \bigl[\tr A^{1-\mu} (\ln B-\ln A) B^\mu (\ln B-\ln A) \bigr]
    - \bigl[ \tr A^{1-\mu} (\ln B-\ln A) B^\mu \big]^2 }
    {\bigl(\tr A^{1-\mu} B^\mu \bigr)^2} \,.
\end{aligned} \end{align}
Via the Cauchy-Schwarz inequality $(\tr E^\dag F)^2 \leq (\tr E^\dag E) (\tr F^\dag F)$ with $E = A^{(1-\mu)/2} B^{\mu/2}$ and $F = A^{(1-\mu)/2} (\ln A - \ln B) B^{\mu/2}$,
we have
$\bigl[\tr B^{\mu/2} A^{1-\mu} (\ln A - \ln B) B^{\mu/2} \bigr]^2
\leq \bigl(\tr B^{\mu/2} A^{1-\mu} B^{\mu/2} \bigr)
\bigl[\tr B^{\mu/2} (\ln A - \ln B) A^{1-\mu} (\ln A - \ln B) B^{\mu/2} \bigr]$.
Indeed the numerator of $\frac{\D[2]{f}}{\D{\mu}^2}$ is the difference between the two sides of this inequality and hence non-negative.

For a convex function $f$, equality $f(\mu) = (1-\mu) f(0) + \mu f(1)$ at any $\mu \in (0,1)$ requires the function to be linear, implying equality for all $\mu \in [0,1]$.
Likewise, equality $(1-\mu) f(0) + \mu f(1) = (1-\mu)(\ln\tr A) + \mu(\ln\tr B)$ at any $\mu \in (0,1)$ requires the end points $f(0)$ and $f(1)$ to be equal to $\ln\tr A$ and $\ln\tr B$ respectively,
and hence equality of~\eqref{eq:matrix_power_trace_ineq_J} at any $\mu \in (0,1)$ implies equality for all $\mu \in [0,1]$.
Choosing the halfway point, equality of~\eqref{eq:matrix_power_trace_ineq} implies $\tr A^{1/2} B^{1/2} = \sqrt{\tr A} \sqrt{\tr B} \Rightarrow A \propto B$ again via the Cauchy-Schwarz inequality.
It is straightforward to show that $A \propto B \Rightarrow \tr A^{1-\mu} B^\mu = \bigl(\tr A\bigr)^{1-\mu} \bigl(\tr B\bigr)^\mu$.
\end{proof}

\begin{corollary} \label{lem:nSchatten_trace_ineq}
Let $\lVert \sigma \rVert_n$ denote the $n$-Schatten norm, defined via $\lVert \sigma \rVert_n \defeq \bigl[ \Tr(|\sigma|^n) \bigr]^{1/n}$.
Then for $0 < k < n$,
\begin{align}
	\bigl\lVert A \bigr\rVert_n^{k} \bigl\lVert B \bigr\rVert_n^{n-k} \geq \tr\bigl( |A|^k |B|^{n-k} \bigr) .
\end{align}
\end{corollary}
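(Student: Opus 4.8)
The plan is to obtain this as an immediate corollary of Lemma~\ref{lem:matrix_power_trace_ineq} by a change of variables. First I would reduce to the positive semidefinite setting: since the statement only involves $|A|$ and $|B|$, which are by definition positive semidefinite, I may replace $A,B$ by $|A|,|B|$ and assume from the outset that both arguments are $\succeq 0$. Note also that for positive semidefinite $P,Q$ the quantity $\tr(PQ) = \tr\bigl(P^{1/2} Q P^{1/2}\bigr)$ is real and non-negative, so the right-hand side is well-defined and the inequality is meaningful.

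Next I would set $\tilde{A} \defeq |A|^n \succeq 0$ and $\tilde{B} \defeq |B|^n \succeq 0$, and choose $\mu = (n-k)/n$, so that $0 < \mu < 1$ precisely when $0 < k < n$, and $1-\mu = k/n$. Using the functional calculus for a single positive semidefinite matrix, $\tilde{A}^{1-\mu} = |A|^k$ and $\tilde{B}^{\mu} = |B|^{n-k}$, so that $\tr\bigl(|A|^k |B|^{n-k}\bigr) = \tr\bigl(\tilde{A}^{1-\mu} \tilde{B}^{\mu}\bigr)$. Applying Lemma~\ref{lem:matrix_power_trace_ineq} to $\tilde{A},\tilde{B}$ at this value of $\mu$ then gives $\tr\bigl(\tilde{A}^{1-\mu} \tilde{B}^{\mu}\bigr) \leq (\tr \tilde{A})^{1-\mu} (\tr \tilde{B})^{\mu}$.

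Finally I would identify the right-hand side with the Schatten norms: $(\tr\tilde{A})^{1-\mu} = (\tr |A|^n)^{k/n} = \lVert A\rVert_n^k$ and $(\tr\tilde{B})^{\mu} = (\tr |B|^n)^{(n-k)/n} = \lVert B\rVert_n^{n-k}$, using the convention $\mathcal{N}=1$ adopted in the statement, which yields the claimed bound. The equality condition transfers directly as well: Lemma~\ref{lem:matrix_power_trace_ineq} gives equality iff $\tilde{A}\propto\tilde{B}$, i.e.\ iff $|A|^n \propto |B|^n$, equivalently $|A|\propto|B|$.

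There is essentially no obstacle here beyond bookkeeping — the entire analytic content is carried by the convexity argument already established in Lemma~\ref{lem:matrix_power_trace_ineq}. The only point requiring a moment's care is confirming that the fractional powers line up, namely that raising $|A|^n$ to the power $k/n$ returns $|A|^k$; this holds by the spectral theorem applied to the single positive semidefinite matrix $|A|$, since the power map acts directly on its (non-negative) eigenvalues.
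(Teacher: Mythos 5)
Your proof is correct and is exactly the argument the paper intends: the corollary is stated as an immediate consequence of Lemma~\ref{lem:matrix_power_trace_ineq}, obtained by substituting $|A|^n$, $|B|^n$ and $\mu=(n-k)/n$, which is precisely your change of variables. The bookkeeping (the exponent matching via the spectral theorem and the transfer of the equality condition to $|A|\propto|B|$) is all handled correctly.
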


\vspace{1ex}

\begin{lemma} \label{lem:block2tr}
Let $A, B$ be square matrices such that
\begin{align}
    \begin{bmatrix} A & B \\ B^\dag & A \end{bmatrix} \succeq 0 .
\end{align}
Then for $\alpha \geq 1$,
\begin{align}
    \bigl\lvert \tr (BA^{\alpha-1}) \bigr\rvert \leq \tr A^\alpha \,,
\end{align}
with equality iff $B = e^{i\theta} A$ for some $\theta \in \mathbb{R}$.
\end{lemma}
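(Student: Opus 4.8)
The plan is to diagonalize $A$ and reduce the whole statement to elementary bounds on $2\times2$ principal submatrices of the block matrix. Since $A$ appears as a diagonal block, it is a principal submatrix of a positive semidefinite matrix and so $A \succeq 0$; I would write its spectral decomposition $A = \sum_i \lambda_i \ketbra{i}{i}$ with $\lambda_i \geq 0$ and $\{\ket{i}\}$ an orthonormal basis, and set $B_{ij} = \braket{i|B|j}$. For each pair $(i,j)$, picking out the rows and columns labelled by $\ket{i}$ in the top block and $\ket{j}$ in the bottom block gives the principal submatrix
\[ \begin{bmatrix} \lambda_i & B_{ij} \\ \overline{B_{ij}} & \lambda_j \end{bmatrix} \succeq 0 , \]
whence $|B_{ij}|^2 \leq \lambda_i \lambda_j$. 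In particular $|B_{ii}| \leq \lambda_i$, and $B_{ij} = 0$ whenever $\lambda_i = 0$ or $\lambda_j = 0$ (consistent with $\operatorname{im} B \subseteq \operatorname{im} A$ from Lemma~\ref{lem:column_inclusion}).

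Next I would expand the trace in this eigenbasis. Because $\alpha \geq 1$, the matrix $A^{\alpha-1}$ is well defined and positive with $A^{\alpha-1}\ket{i} = \lambda_i^{\alpha-1}\ket{i}$, so $\tr(BA^{\alpha-1}) = \sum_i \lambda_i^{\alpha-1} B_{ii}$. The triangle inequality followed by the diagonal bound $|B_{ii}| \leq \lambda_i$ then yields
\[ \bigl| \tr(BA^{\alpha-1}) \bigr| \leq \sum_i \lambda_i^{\alpha-1} |B_{ii}| \leq \sum_i \lambda_i^{\alpha-1}\lambda_i = \tr A^\alpha , \]
which is the desired inequality; the hypothesis $\alpha \geq 1$ is exactly what guarantees $\lambda_i^{\alpha-1} \geq 0$ and finiteness on $\ker A$.

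The harder and more delicate part is the equality case, which I expect to be the main obstacle. Equality forces equality in both steps above: the second requires $|B_{ii}| = \lambda_i$ for every $i$ with $\lambda_i > 0$, and the first (a triangle inequality with nonnegative weights $\lambda_i^{\alpha-1}$) requires all nonvanishing $B_{ii}$ to share a common phase $e^{i\theta}$; together these give $B_{ii} = e^{i\theta}\lambda_i$ for all $i$ with $\lambda_i > 0$. To upgrade this diagonal information to the full statement $B = e^{i\theta}A$, I would replace $B$ by $B' = e^{-i\theta}B$ (the block matrix with $B'$ remains positive) and use that a positive semidefinite matrix annihilates any vector on which its quadratic form vanishes. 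Testing the vector $(\ket{i},-\ket{i})^{\mathrm{T}}$ gives quadratic form $2\lambda_i - 2\operatorname{Re}B'_{ii} = 0$ for $\lambda_i > 0$, so that vector lies in the kernel and hence $B'\ket{i} = \lambda_i\ket{i} = A\ket{i}$; for $\lambda_i = 0$ the submatrix bound already forces $B'\ket{i} = 0 = A\ket{i}$. Since the $\ket{i}$ form a complete basis, $B' = A$, i.e.\ $B = e^{i\theta}A$. The converse is a direct check: if $B = e^{i\theta}A$ then the block matrix factors as $\bigl[\begin{smallmatrix} 1 & e^{i\theta} \\ e^{-i\theta} & 1 \end{smallmatrix}\bigr] \otimes A \succeq 0$ and $\tr(BA^{\alpha-1}) = e^{i\theta}\tr A^\alpha$. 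The one step that genuinely needs care is this kernel argument, which is what converts mere agreement of diagonal entries into agreement of the full operators; by contrast the inequality itself is routine once the $2\times2$ submatrix bound is established.
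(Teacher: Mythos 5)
Your proof is correct, and it takes a genuinely different route from the paper's. The paper argues basis-free: it introduces the auxiliary positive semidefinite block matrix $F_\alpha = \bigl[\begin{smallmatrix} A^{\alpha-1} & -A^{\alpha-1} \\ -A^{\alpha-1} & A^{\alpha-1}\end{smallmatrix}\bigr]$ alongside the phase-rotated matrix $G_\theta$ (your block matrix with $B' = e^{-i\theta}B$), and reads the inequality off from $\tr\bigl(F_\alpha G_\theta\bigr) \geq 0$; equality then forces $\operatorname{im} G_\theta \subseteq \ker F_\alpha$, which for invertible $A$ pins the columns of $G_\theta$ to the span of vectors of the form $(v,v)^{\mathrm{T}}$ and hence $e^{-i\theta}B = A$, with a separate isometry reduction $A = K A' K^\dag$ to handle singular $A$. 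You instead work in the eigenbasis of $A$: the $2\times2$ principal-minor bound $|B_{ij}|^2 \leq \lambda_i\lambda_j$, the expansion $\tr(BA^{\alpha-1}) = \sum_i \lambda_i^{\alpha-1}B_{ii}$, and the triangle inequality give the bound, while the equality case is settled by showing that each $(\ket{i},-\ket{i})^{\mathrm{T}}$ is a null vector of $G_\theta$ (its quadratic form vanishes, and a PSD matrix annihilates any vector with vanishing quadratic form), so $B'\ket{i} = A\ket{i}$ on every eigenvector. The two equality arguments are complementary: the paper shows the columns of $G_\theta$ lie in the ``symmetric'' subspace $\ker F_\alpha$, whereas you show the ``antisymmetric'' basis vectors lie in $\ker G_\theta$. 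Your route buys a uniform treatment of singular $A$ --- the $\lambda_i = 0$ directions are killed by the same entrywise bound, so no isometry reduction is needed --- at the cost of fixing a basis; the paper's route buys a two-line derivation of the inequality itself and makes the positivity mechanism (the trace of a product of two PSD matrices is nonnegative) completely explicit.
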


\begin{proof}
We have $A \succeq 0$.
Let
\begin{align}
    F_\alpha &= \begin{bmatrix} A^{\alpha-1} & -A^{\alpha-1} \\ -A^{\alpha-1} & A^{\alpha-1} \end{bmatrix} ,
&   G_\theta &= \begin{bmatrix} A & e^{-i\theta}B \\ e^{i\theta}B^\dag & A \end{bmatrix} ,
\end{align}
which are both positive semidefinite for all $\alpha \geq 1$ and $\theta \in \mathbb{R}$.
Since
\begin{align}
    \frac{1}{2} \tr F_\alpha G_\theta
    = \tr A^\alpha - \operatorname{Re} \tr\bigl( e^{-i\theta} B A^{\alpha-1} \bigr) \,,
\end{align}
is always non-negative, we have $\tr A^\alpha \geq \bigl\lvert \tr\bigl( B A^{\alpha-1} \bigr) \bigr\rvert$.
Equality occurs iff $\operatorname{im} G_\theta \subseteq \ker F_\alpha$.

If $A \succ 0$ (full rank),
$\ker F_\alpha = \mathrm{span} \bigl\{ \bigl(\begin{smallmatrix} v \\ v \end{smallmatrix}\bigr) \,\big|\, v \in \mathbb{C}^{\dim A} \bigr\}$.
$\operatorname{im} G_\theta$ is spanned by the columns of $G_\theta$, which belong in $\ker F_\alpha$ iff $e^{-i\theta}B = A$, i.e., when all the blocks are identical.

If $A$ is singular (not full rank), we can make the same argument by finding an isometry $K$ (with $K^\dag K = 1$) and a positive matrix $A'$ such that $A = K A' K^\dag$.
Because $G_\theta \succeq 0$, we can also find $B'$ such that $B = K B' K^\dag$.
The lemma holds for the pair of matrices $(A',B')$ and thus also holds for $(A,B)$.
\end{proof}

\begin{lemma} \label{lem:phaseMx_positive}
Let $A$ be a positive semidefinite matrix where all its elements have unit modulus:
$A_{i,j} = e^{i\theta_{i,j}}$.
Then (a) $A_{i,k} = A_{i,j}A_{j,k}$ for all $i,j,k$, and (b) $A$ has rank 1.
\end{lemma}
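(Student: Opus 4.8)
The plan is to leverage the fact that positive semidefiniteness of $A$ is inherited by all of its principal submatrices, so that every such submatrix has non-negative determinant. First I would record two immediate consequences of $A \succeq 0$: the diagonal entries are real and non-negative, and since they also have unit modulus they must equal $1$ (so $\theta_{i,i} = 0$); and $A$ is Hermitian, giving $A_{j,i} = \overline{A_{i,j}} = e^{-i\theta_{i,j}}$.

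For part (a), the cases where two of the indices $i,j,k$ coincide follow at once from $A_{i,i} = 1$ together with Hermiticity (e.g.\ $i=k$ gives $A_{i,j}A_{j,i} = |A_{i,j}|^2 = 1 = A_{i,i}$). For three distinct indices, consider the $3\times3$ principal submatrix on rows/columns $\{i,j,k\}$, and write $a = A_{i,j}$, $b = A_{j,k}$, $c = A_{i,k}$, each of unit modulus. A direct expansion gives
\begin{align}
\det\begin{bmatrix} 1 & a & c \\ \bar a & 1 & b \\ \bar c & \bar b & 1 \end{bmatrix} = 2\operatorname{Re}(ab\bar c) - 2 .
\end{align}
Positive semidefiniteness forces this determinant to be $\ge 0$, whereas $|ab\bar c| = 1$ forces $\operatorname{Re}(ab\bar c) \le 1$ and hence the determinant to be $\le 0$. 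Thus $\operatorname{Re}(ab\bar c) = 1$, which (again using $|ab\bar c|=1$) yields $ab\bar c = 1$, i.e.\ $A_{i,j}A_{j,k} = A_{i,k}$, as claimed.

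Part (b) is then a short factorization. Fixing any reference index $r$ and setting $v_i \defeq A_{i,r}$, the multiplicativity from part (a) with the middle index chosen to be $r$ gives $A_{i,k} = A_{i,r}A_{r,k} = A_{i,r}\overline{A_{k,r}} = v_i \bar v_k$, so that $A = v v^\dagger$ with $v = (v_1,\dots,v_n)^{\mathrm{T}}$. Hence $A$ has rank $1$.

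I do not anticipate a genuine obstacle here; the only real content is the $3\times3$ minor computation, and the crux is the observation that its determinant is simultaneously bounded below by $0$ (positivity) and above by $0$ (the unit-modulus constraint), pinning the phases to satisfy $ab = c$ exactly. The $2\times2$ minors alone only give $|A_{i,j}| \le 1$ and are not enough; it is essential to pass to the $3\times3$ minors to extract the cocycle-type relation of part (a), from which the rank-one structure follows immediately.
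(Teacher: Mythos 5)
Your proof is correct and follows essentially the same route as the paper's: both arguments hinge on the $3\times3$ principal minor, whose determinant is simultaneously forced to be $\geq 0$ by positive semidefiniteness and $\leq 0$ by the unit-modulus constraint (your $2\operatorname{Re}(ab\bar c)-2$ is exactly the paper's $-4\sin^2\bigl(\tfrac{\theta_{i,j}-\theta_{i,k}+\theta_{j,k}}{2}\bigr)$), yielding the cocycle relation, after which the factorization $A = vv^\dag$ gives rank $1$. The only cosmetic differences are your explicit treatment of repeated indices and your choice of an arbitrary reference index $r$ instead of the paper's first row.
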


\begin{proof}
The diagonal elements of a positive semidefinite matrix must be real and non-negative, and hence $A_{ii}=1$.
Consider a $3\times3$ principal submatrix of $A$ with rows/columns $\{i,j,k\}$:
\begin{align}
    A^{[i,j,k]} &= \begin{bmatrix}
    1 & e^{i\theta_{i,j}} & e^{i\theta_{i,k}}
    \\ e^{-i\theta_{i,j}} & 1 & e^{i\theta_{j,k}}
    \\ e^{-i\theta_{i,k}} & e^{-i\theta_{j,k}} & 1
    \end{bmatrix} ,
&   \det A^{[i,j,k]} &=
    -4 \biggl( \sin\frac{\theta_{i,j}-\theta_{i,k}+\theta_{j,k}}{2} \biggr)^2 .
\end{align}
Being a principal submatrix, $A^{[i,j,k]}$ must also be positive semidefinite with non-negative determinant.
This is only possible if $\theta_{i,j}-\theta_{i,k}+\theta_{j,k} \in 2\pi\Z$, and hence $e^{i\theta_{i,k}} = e^{i\theta_{i,j}} e^{i\theta_{j,k}}$.
(Alternately, we can invoke Lem.~\ref{lem:column_inclusion} here.)
Thus
\begin{align}
    A &= v \, v^\dag \,,
&   v^\dag &= \begin{pmatrix} 1 & e^{i\theta_{1,2}} & e^{i\theta_{1,3}} & \cdots & e^{i\theta_{1,n}} \end{pmatrix} .
\end{align}
\end{proof}

\section{Enumerating descendants topological orders of the toric code (Case~3)}\label{sec:TC_case_mf}

The goal of this section is to prove Prop.~\ref{prop:case_mf_same_m}, classifying the possible boundary SPT orders for the $\Z_p$ toric code when $x_1x_2^{-1}, x_2x_3^{-1} \in G$ and $x_1^k \notin G$ for $1 \leq k < p$.
While some of the results from the main text stipulates that $p$ is a prime, the results of this specific section applies for any integer $p$ greater than 1.
Our strategy is to first show that boundary SPT orders can be completely characterized by 3-point correlators.
Then we argue that the positivity of the channel $Q$ induces a number of positivity constraints on the boundary expectation values.
The positivity constraints on the 3-point correlators allow us to narrow down the possible boundary SPT orders.
The result of the section is that there are $p$ solutions.

In the $n$-replica formalism, the PEPS has $\Z_p^{2n}$ symmetry (which will be partially broken at the boundary).
It will be convenient to labeled the $2n$ ``layers'' as elements in $\Z_{2n}$.
For example, we will identify $X_k = X_{k+2n}$.

\begin{definition}
Let $n$ be a positive integer.
An \underline{X3-tensor} of size $2n$ is a map $\mathbf{X}: H_1(T^2;\Z_p)^{2n} \to \zeta_p^\Z \cup \{0\}$, such that
\begin{align} \label{eq:xp_Im_form2}
    \mathbf{X}(\Tt_1, \Tt_2, \dots, \Tt_{2n}) = \delta(\Tt_1+\Tt_2+\dots+\Tt_{2n}) \prod_{1 \leq i<j \leq 2n} \bigl(I_{\Tt_i,\Tt_j}\bigr)^{m_{i,j}}
\end{align}
for some matrix $m \in \Z_p^{2n\times2n}$ with $-m_{i,j} = m_{j,i}$ and $m_{i,i} = 0$ (antisymmetric and zero diagonal).
\end{definition}

Recall that $\zeta_p \defeq \exp\bigl[ 2\pi i/p \bigr]$,
so $\zeta_p^\Z$ denotes $\bigl\{ 1, \zeta_p, \zeta_p^2, \dots, \zeta_p^{-1} \bigr\}$.
The homology group $H_1(T^2;\Z_p) \cong \Z_p^2$, generated by $\Th$ and $\Tv$.
The intersection form $I: H_1(T^2;\Z_p)^2 \to \zeta_p^\Z$ is given as
\begin{align}
	I_{a\Th+b\Tv \,,\, c\Th+d\Tv} &= \exp\bigl[ 2\pi i (ad-bc) / p \bigr] .
\end{align}
The result of Prop.~\ref{prop:xp_case_mf_Im_form} is that the boundary expectation values of the PEPS $\Braket{ X_1^{\Tt_1} X_2^{\Tt_2} \cdots X_{2n}^{\Tt_{2n}} }$ for case~3 is an X3-tensor.%
	\footnote{While the proof for Prop.~\ref{prop:xp_case_mf_Im_form} is found in Sec.~\ref{sec:enum_boundary_SPT} where $p$ is a prime number, the proof itself is applicable for any positive integer $p$.}

The matrix $m$ is not uniquely specified by an X3-tensor $\mathbf{X}$.  Consider the transformation
\begin{align} \label{eq:m_gauge}
    \widetilde{m}_{i,j} = m_{i,j} + w(i) - w(j)
\end{align}
for $w : [1,2n] \to \Z_p$.
Indeed, the ratios between the nonzero elements of the X3-tensors computed from $m$ and $\widetilde{m}$ are
\begin{align}
    \prod_{i<j} I_{\Tt_i,\Tt_j}^{w(i)} \prod_{i<j} I_{\Tt_i,\Tt_j}^{-w(j)}
    = \prod_{i<j} I_{\Tt_i,\Tt_j}^{w(i)} \overbrace{  \prod_{j<i} I_{\Tt_j,\Tt_i}^{-w(i)} }^\text{swap $i \leftrightarrow j$}
    = \prod_{i<j} I_{\Tt_i,\Tt_j}^{w(i)} \mkern-12mu
        \overbrace{ \prod_{j<i} I_{\Tt_i,\Tt_j}^{w(i)} }^\text{use $I_{\Ts,\Tt} = 1/I_{\Tt,\Ts}$}
    = \prod_{i \neq j} I_{\Tt_i,\Tt_j}^{w(i)}
    = \overbrace{ \prod_{i=1}^{2n} \prod_{j=1}^{2n} I_{\Tt_i,\Tt_j}^{w(i)} }^\text{use $I_{\Ts,\Ts} = 1$}
    = \prod_{i=1}^{2n} I_{\Tt_i,0}^{w(i)}
    = 1 .
\end{align}
In the penultimate equality, we use the property $I_{\Ts,\Tt} I_{\Ts,\mathbf{u}} = I_{\Ts,\Tt+\mathbf{u}}$ and that $\sum_j \Tt_j = 0$ (enforced by the implicit delta function which is not shown).
As $m$ and $\widetilde{m}$ gives the same X3-tensor, we call~\eqref{eq:m_gauge} a \underline{gauge transformation}.

It is convenient to consider gauge-invariant quantities.
Define the completely antisymmetric tensor $\ccc : [1,2n]^3 \to \Z_p$:
\begin{align} \label{eq:def_c3_from_m}
    \ccc_{i,j,k} \defeq m_{i,j} + m_{j,k} + m_{k,i} \,,
\end{align}
which remain invariant under transformation~\eqref{eq:m_gauge}.
They are in fact given by the 3-point correlators%
	\footnote{The notation here is use to denote the X3-tensor evaluated with $\Tt_i = \Th$, $\Tt_j = \Tv$, $\Tt_k = -\Th-\Tv$, and $0$ for all other $\Tt_\ast$'s}
\begin{align} \label{eq:c3_from_xp}
    \bigXp{ \auxX_i^\Th \, \auxX_j^\Tv \, \auxX_k^{-\Th-\Tv} }
    = I_{\Th,\Tv}^{m_{i,j}} I_{\Th,-\Th-\Tv}^{m_{i,k}} I_{\Tv,-\Th-\Tv}^{m_{j,k}}
    = \zeta_p^{\ccc_{i,j,k}}
\end{align}
and hence $\ccc_{i,j,k}$ are completely determined by the X3-tensor.
The price we pay is that the $\ccc$'s are not independent, they are constrained by
\begin{align} \label{eq:c3_constraint4}
    \ccc_{i,j,k} - \ccc_{j,k,l} + \ccc_{k,l,i} - \ccc_{l,i,j} = 0 .    
\end{align}

\begin{proposition} \label{prop:Xtensor_group}
The following groups are isomorphic
\begin{enumerate}[label={(\roman*)}, itemsep=0pt, parsep=4pt, topsep=-2pt]
\item The set of X3-tensors of size $2n$ (group composition via element-wise multiplication).
\item The set of antisymmetric, $2n\times2n$ $\Z_p$-valued matrices with vanishing diagonal, modulo gauge relations~\eqref{eq:m_gauge}.
\item The set of completely antisymmetric tensors $\ccc_{i,j,k} \in \Z_p$ for $1 \leq i,j,k \leq 2n$ subject to conditions $\ccc_{i,i,j} = 0$ and Eq.~\eqref{eq:c3_constraint4}.
\end{enumerate}
\end{proposition}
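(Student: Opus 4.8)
The plan is to establish the isomorphisms (i)$\cong$(ii)$\cong$(iii) by exhibiting explicit group homomorphisms between the three sets and verifying they are mutually inverse bijections. First I would fix the group structures: (i) is a group under pointwise multiplication of $\zeta_p^\Z \cup \{0\}$-valued functions (the product of two X3-tensors is again an X3-tensor because the delta-function supports agree and the $I$-exponents add); (ii) is the additive group $\Z_p^{\binom{2n}{2}}$ of antisymmetric matrices modulo the gauge subgroup~\eqref{eq:m_gauge}; and (iii) is an additive subgroup of completely antisymmetric $\Z_p$-valued 3-tensors cut out by the linear constraint~\eqref{eq:c3_constraint4}. The maps are already implicit in the excerpt: (ii)$\to$(i) sends the class of $m$ to the X3-tensor defined by~\eqref{eq:xp_Im_form2}, and (ii)$\to$(iii) sends $m$ to $\ccc_{i,j,k} = m_{i,j}+m_{j,k}+m_{k,i}$ via~\eqref{eq:def_c3_from_m}.

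Next I would check well-definedness and homomorphism properties. The map (ii)$\to$(i) is well-defined on gauge classes precisely by the computation displayed after~\eqref{eq:m_gauge}, which shows that a gauge transformation leaves the X3-tensor unchanged; it is a homomorphism since adding matrices adds the $I$-exponents. The map (ii)$\to$(iii) is gauge-invariant because $\ccc$ is manifestly unchanged under $m_{i,j}\to m_{i,j}+w(i)-w(j)$ (the $w$-terms telescope: $w(i)-w(j)+w(j)-w(k)+w(k)-w(i)=0$), it is linear in $m$, and its image lands in the constrained set since $\ccc$ is automatically completely antisymmetric with $\ccc_{i,i,j}=0$ and satisfies~\eqref{eq:c3_constraint4} (a direct four-term cancellation). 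To see (ii)$\to$(i) is injective on gauge classes, I would show its kernel is trivial: if the X3-tensor is identically $1$ on its support, then evaluating on $\Tt_i=\Th$, $\Tt_j=\Tv$, $\Tt_k=-\Th-\Tv$ as in~\eqref{eq:c3_from_xp} forces every $\ccc_{i,j,k}=0$, and I would then argue (see below) that $\ccc\equiv0$ implies $m$ is pure gauge. Surjectivity of (ii)$\to$(i) is immediate since every X3-tensor is defined from some $m$.

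The main obstacle, and the technical heart of the argument, is showing that (ii)$\to$(iii) is a bijection — equivalently, that the gauge classes of antisymmetric matrices are in exact correspondence with the constrained 3-tensors. Injectivity amounts to: if $\ccc\equiv 0$, i.e.\ $m_{i,j}+m_{j,k}+m_{k,i}=0$ for all $i,j,k$, then $m$ is a gauge transformation. I would prove this by setting $w(i) \defeq m_{i,1}$ (using the cyclic/fixed reference index $1$) and verifying $m_{i,j} = m_{i,1} - m_{j,1} = w(i)-w(j)$ directly from the cocycle relation $\ccc_{i,j,1}=0$. Surjectivity requires that every completely antisymmetric $\ccc$ obeying~\eqref{eq:c3_constraint4} arises from some $m$; here I would construct $m_{i,j} \defeq \ccc_{i,j,1}$ (again anchoring on index $1$) and check, using~\eqref{eq:c3_constraint4} with $l=1$, that $m_{i,j}+m_{j,k}+m_{k,i} = \ccc_{i,j,1}+\ccc_{j,k,1}+\ccc_{k,i,1}=\ccc_{i,j,k}$, recovering the prescribed $\ccc$. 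This reference-index construction is the cleanest route; the one subtlety to watch is that $m$ so defined is genuinely antisymmetric with vanishing diagonal, which follows from the antisymmetry and $\ccc_{i,i,j}=0$ conditions on $\ccc$. Composing the verified bijections (ii)$\cong$(i) and (ii)$\cong$(iii) then yields all three isomorphisms.
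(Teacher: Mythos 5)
Your proof is correct, but it takes a genuinely more elementary route than the paper's. The paper identifies group (ii) with $\operatorname{coker} d_1$ and group (iii) with $\ker d_3$ in the simplicial cochain complex of a $(2n-1)$-simplex with $\Z_p$ coefficients, and obtains (ii) $\cong$ (iii) in one stroke from exactness (the simplex is contractible, so $H^{k\geq1}(\Delta;\Z_p)=0$); group (i) is then incorporated through a commutative-diagram chase whose only substantive input is surjectivity of the map $m \mapsto$ Eq.~\eqref{eq:xp_Im_form2}. You replace the appeal to vanishing cohomology by explicit reference-index constructions: setting $w(i) \defeq m_{i,1}$ shows that $\ccc \equiv 0$ forces $m$ to be pure gauge (injectivity of (ii) $\to$ (iii)), and setting $m_{i,j} \defeq \ccc_{i,j,1}$ together with Eq.~\eqref{eq:c3_constraint4} at $l=1$ recovers any admissible $\ccc$ (surjectivity) --- and indeed your identity $\ccc_{i,j,k} = \ccc_{i,j,1} + \ccc_{j,k,1} + \ccc_{k,i,1}$ checks out, including the degenerate cases where indices coincide or equal $1$. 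This is precisely the contracting homotopy (cone on vertex $1$) that underlies the paper's exactness claim, so the two arguments share the same mathematical core: yours is self-contained, requires no algebraic topology, and produces explicit inverse maps, while the paper's is shorter once the cohomological machinery is granted and explains conceptually why the gauge relation~\eqref{eq:m_gauge} is a coboundary. Your remaining steps --- gauge invariance of both maps, the homomorphism properties, and triviality of the kernel of (ii) $\to$ (i) via evaluation as in Eq.~\eqref{eq:c3_from_xp} --- coincide with verifications the paper also performs (or that are definitional, as with surjectivity of (ii) $\to$ (i)), so the proposal stands as a complete alternative proof.
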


\noindent
This proposition means that the classification of X3-tensors amounts to the classification of 3-point correlators $\ccc$.

\begin{proof}
Let $\Delta$ be a $(2n-1)$-simplex.
$\Delta$ consists of $2n$ vertices, $\binom{2n}{2}$ edges, $\binom{2n}{3}$ faces, etc.
Consider the simplicial cochain complex of $\Delta$ with $\Z_p$ coefficients.
\begin{align}
    \cdots \xrightarrow{d_0} C^1 \xrightarrow{d_1} C^2 \xrightarrow{d_2} C^3 \xrightarrow{d_3} C^4 \xrightarrow{d_4} \cdots
\end{align}
Since $\Delta$ is contractible, the cohomology groups $H^{k \geq 1}(\Delta;\Z_p)$ are trivial and this sequence is exact.
Therefore, we have an isomorphism among
\begin{align}
\operatorname{coker} d_1 \cong \operatorname{img} d_2 = \operatorname{ker} d_3 \,.
\end{align}
$C^2(\Delta;\Z_p)$ is the set of antisymmetric binary matrices with vanishing diagonal, $m \mapsto m + d_1 w$ is the gauge transformation~\eqref{eq:m_gauge}, and hence $\operatorname{coker} d_1 = C^2/\operatorname{im} d_1$ is the group~(ii).
$\operatorname{ker} d_3$ is the set of completely antisymmetric 3-tensors subject to the constraint, hence is the group~(iii).
The two groups are isomorphic, the bijection is induced by the formula~\eqref{eq:def_c3_from_m}.

Consider the following diagram.
\begin{align}
\xymatrix @C=18mm @R=4mm @M=2mm {
    C^2(\Delta;\Z_p)
    \ar[rd]_(0.5){\alpha \;=\; \text{Eq.~\eqref{eq:xp_Im_form2}}}
    \ar[dd]_{Q}
    \ar[rr]_(0.5){d_2 \;=\; \text{Eq.~\eqref{eq:def_c3_from_m}}}
&&  C^3(\Delta;\Z_p)
\\& \text{group (i)}
    \ar[ur]_(0.5){\beta \;=\; \text{Eq.~\eqref{eq:c3_from_xp}}}
    \ar[dr]_(0.4){\beta'}
\\  \operatorname{coker} d_1 = \text{group (ii)}
    \ar[ur]_(0.6){\alpha'}
    \ar[rr]_(0.5){\cong}
&&  \text{group (iii)} = \operatorname{ker} d_3
    \ar[uu]_(0.5){\iota}
}
\end{align}
$Q$ is the quotient map $C^2 \to C^2/\operatorname{im} d_1$ and $\iota$ is the inclusion map.
$\alpha'$ is well-defined because the map $\alpha$ is invariant under a gauge transformation.
Prop.~\ref{prop:xp_case_mf_Im_form} asserts that the map $\alpha$ is surjective, consequently $\alpha'$ is also surjective.
Eq.~\eqref{eq:c3_from_xp} specifies a map $\beta: \text{(i)} \to C^3$,
and the top triangle is commutative by direct computation.
Since $\alpha$ is surjective, $\operatorname{im} \beta = \operatorname{im} \beta\circ\alpha = \operatorname{im} d_2 = \operatorname{ker} d_3$,
and so $\beta'$ is well-defined.

Finally, since $Q$ is surjective, $\iota$ is injective, the left, top, right triangles, and the outer rectangle are commutative, the bottom triangle is also commutative.
As the map $\text{(ii)} \to \text{(iii)}$ is an isomorphism and that $\alpha'$ is surjective, the three groups are all isomorphic.
\end{proof}

\begin{definition} \label{def:Xtensor_char}
Let $\mathbf{X}(\Tt_1,\dots,\Tt_{2r})$ be an X3-tensor of size $2r$.
\begin{itemize}
\item
$\mathbf{X}$ is \emph{positive} if the following matrix $M[\mathbf{X}]$ is Hermitian and positive semidefinite.
Let $M[\mathbf{X}]$ be $\mathbf{X}$ reshaped as a $p^{2r} \times p^{2r}$ matrix, with rows indexed by $(\Tt_1,\Tt_2,\dots,\Tt_r)$ and columns indexed by $(-\Tt_{2r}, -\Tt_{2r-1}, \dots, -\Tt_{r+1})$.
\item
$\mathbf{X}$ is \emph{monochromatic} if its 3-tensor invariants satisfy
\begin{align}
	\ccc_{i,j,k} = \begin{cases} c & \text{$i < j < k$ or $j < k < i$ or $k < i < j$}, \\ -c & \text{$k < j < i$ or $i < k < j$ or $j < i < k$}, \\ 0 & \text{otherwise}. \end{cases}
\end{align}
for some $c \in \Z_p$.
\end{itemize}
\end{definition}

In other words,
$\mathbf{X}$ is positive if (1) $\mathbf{X}(-\Tt_{2r}, -\Tt_{2r-1}, \dots, -\Tt_1) = \mathbf{X}(\Tt_1, \Tt_2, \dots, \Tt_{2n})^\ast$ for all $\{\Tt\}$
and (2) for all $v: H_1(T^2;\Z_p)^r \to \mathbb{C}$, $\sum_{\{\Ts\},\{\Tt\}} v(\Ts_1,\dots,\Ts_r)^\ast \, \mathbf{X}(\Ts_1,\Ts_2,\dots,\Ts_r \,,\, -\Tt_r,\dots,-\Tt_2,-\Tt_1) \, v(\Tt_1,\dots,\Tt_r) \geq 0$.
Condition (1) is equivalent to the constraints $\ccc_{i,j,k} = \ccc_{2r+1-k, 2r+1-j, 2r+1-i}$ for all $(i,j,k)$.

\begin{definition}
Let $r \leq n$ be a positive integer.
A \underline{$2r$-subsequence} is an ordered list of $2r$ distinct elements of $\Z_{2n}$:
$(i_1,i_2,\dots,i_r | i_{r+1},\dots,i_{2r})$
such that $i_1 + i_{2r} = i_2 + i_{2r-1} = \dots = i_a + i_{2r+1-a} \pmod{2n}$ and is equal to an odd number.

A $2r$-subsequence is \emph{contiguous} if it takes the form $(i_0+1, i_0+2, \dots, i_0+r | i_0+r+1 , \dots, i_0+2r)$ for some $i_0$.
\end{definition}

\noindent
Examples.
\begin{itemize}[itemsep=0pt, parsep=4pt, topsep=-2pt]
\item For $2n=8$, the following are subsequences:
$(2,4|5,7)$, $(5|8)$, $(8,1|2,3)$ (contiguous), $(2,3|8,1)$, $(4,6,1,8|3,2,5,7)$.
\item For $2n=8$, the following are not subsequences:
$(1,2|4,5)$, $(4,2|5,7)$, $(11|22)$.
\end{itemize}\hspace{1em}

Given an X3-tensor $\mathbf{X}(\Tt_1,\dots,\Tt_{2n}) = \Braket{ X_1^{\Tt_1} X_2^{\Tt_2} \cdots X_{2n}^{\Tt_{2n}} }$, a $2r$-subsequence $B = (i_1,\dots|\dots,i_{2r})$ defines a subtensor $H_1(T^2;\Z_p)^{2r} \to \zeta_p^\Z \cup \{0\}$.
\begin{align}
    \mathbf{X}_B( \Ts_1, \dots, \Ts_{2r} ) &\defeq \Braket{ X_{i_1}^{\Ts_1} X_{i_2}^{\Ts_2} \cdots X_{i_{2r}}^{\Ts_{2r}} } .
\end{align}
$\mathbf{X}_B$ is a partial function (in the computer science sense) derived from $\mathbf{X}$ by fixing arguments $\Tt_j$ for $j \notin B$ to 0.
It is clear that $\mathbf{X}_B$ also an X3-tensor.

\begin{proposition}
Suppose $\mathbf{X}$ be an X3-tensor of size $2n$ be of the form
\begin{align} \label{eq:Xtensor_Q}
    \mathbf{X}(\Tt_1, \Tt_2, \dots, \Tt_{2n})
    = \sum_{\theta} P(\theta) \, \sum_{i_1,\cdots, i_{n}} 
     Q(\theta)_{\Ti_1,\Tt_1;\Ti_2,-\Tt_2} \, Q(\theta)_{\Ti_2,\Tt_3;\Ti_3,-\Tt_4} \cdots Q(\theta)_{\Ti_{n},\Tt_{2n-1};\Ti_{1},-\Tt_{2n}}
\end{align}
for some non-negative function $P(\theta)$ and completely positive maps $Q(\theta) \succeq 0$ (i.e., $\sum_{\Ti,\Ts,\Tj,\Tt} v_{\Ti,\Ts}^\ast Q(\theta)_{\Ti,\Ts;\Tj,\Tt} v_{\Tj,\Tt}^\phd \geq 0$ for all $v$).
Then $\mathbf{X}_B$ is positive for all subsequences $B$.
\end{proposition}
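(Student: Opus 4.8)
The plan is to recognize Eq.~\eqref{eq:Xtensor_Q} as an instance of reflection positivity and to exhibit $M[\mathbf{X}_B]$ explicitly as a Gram matrix. First I would decompose each positive semidefinite Choi matrix in its square-root form, writing $Q(\theta)_{\Ti,\Ts;\Tj,\Tt} = \sum_\mu v^{(\theta,\mu)}_{\Ti,\Ts}\,\overline{v^{(\theta,\mu)}_{\Tj,\Tt}}$; this is available precisely because $Q(\theta)\succeq 0$. Substituting into the cyclic chain of Eq.~\eqref{eq:Xtensor_Q} turns $\mathbf{X}$ into a sum over the syndrome $\theta$, the internal bond labels $\Ti_1,\dots,\Ti_n$, and the Gram labels $\mu_1,\dots,\mu_n$ of a product $\prod_k v^{(\theta,\mu_k)}_{\Ti_k,\Tt_{2k-1}}\,\overline{v^{(\theta,\mu_k)}_{\Ti_{k+1},-\Tt_{2k}}}$, where I view the $n$ factors as sites on a ring, site $k$ carrying the ``ket'' leg $\Tt_{2k-1}$ (odd) and the ``bra'' leg $\Tt_{2k}$ (even).

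Second I would set up the reflection that the subsequence condition encodes. A $2r$-subsequence fixes a common odd value $s \equiv i_a + i_{2r+1-a} \pmod{2n}$, and I would define the involution $\sigma(j)=s-j$ on $\Z_{2n}$. Because $s$ is odd, $\sigma$ is fixed-point-free and exchanges odd and even positions, i.e.\ it swaps ket legs with bra legs; on the level of sites it acts as a reflection $k \mapsto \tfrac{s+1}{2}-k \pmod n$ that carries the $Q$-factor at a site to its adjoint (the statement $q_{\Ts\Tt}^\dag = q_{\Tt\Ts}$). The pairing $i_a \leftrightarrow i_{2r+1-a}=\sigma(i_a)$ shows that the active leg-set $B$, and hence also its complement (the legs forced to $0$), are $\sigma$-invariant.

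Third, I would cut the ring along the reflection axis of $\sigma$ into two mirror-image arcs. On one arc the product of $v$'s, summed over the internal and Gram labels interior to it, assembles into a single vector $w_\Theta(\Ts_1,\dots,\Ts_r)$ depending on the row multi-index; the PSD decomposition together with the ket$\leftrightarrow$bra swap makes the mirror arc assemble into exactly $\overline{w_\Theta(\Tu_1,\dots,\Tu_r)}$, the column multi-index being $-\Ts_{2r},\dots,-\Ts_{r+1}$, i.e.\ the $\sigma$-images of the row legs. The bond indices straddling the two cuts, together with $\theta$ and the $\mu$'s, play the role of the shared summation label $\Theta$ (with $P(\theta)\ge 0$ absorbed into $w$). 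This gives $M[\mathbf{X}_B]_{(\Ts);(\Tu)} = \sum_\Theta w_\Theta(\Ts)\,\overline{w_\Theta(\Tu)} \succeq 0$, the positivity required by Def.~\ref{def:Xtensor_char}; Hermiticity $M[\mathbf{X}_B]=M[\mathbf{X}_B]^\dag$ follows separately from reversing the cyclic trace and using $q_{\Ts\Tt}^\dag=q_{\Tt\Ts}$ with $P(\theta)$ real.

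I expect the main obstacle to be making the ``folding'' along the reflection axis rigorous for the cyclic (rather than open) chain: one must choose the two cut bonds consistently, verify that the zeroed legs in the complement of $B$---which insert identity factors---are themselves $\sigma$-paired and so do not spoil the mirror symmetry, and handle the case where $\sigma$ fixes a site of the $n$-site ring (reflection plane through a site), where positivity of that single $Q$-factor must be invoked directly rather than as a gluing of two arcs. Keeping the index bookkeeping so that the two arcs really are complex conjugates---and not merely transposes---is where the oddness of $s$ is essential and where the argument must be checked with care.
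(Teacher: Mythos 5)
Your core mechanism is the same as the paper's: write each $Q(\theta)\succeq 0$ in Gram/Kraus form $Q_{\Ti,\Ts;\Tj,\Tt}=\sum_\mu (N_\mu)_{\Ti,\Ts}(N_\mu)^\ast_{\Tj,\Tt}$ and exhibit $M[\mathbf{X}_B]$ as a Gram matrix by cutting the cyclic chain along the reflection encoded in the subsequence pairing. Where you diverge is in how general subsequences are treated, and here the paper has a shortcut that eliminates every obstacle you flag. The paper performs the folding \emph{only} for the full contiguous subsequences $F(i_0)=(i_0{+}1,\dots,i_0{+}n\,|\,i_0{+}n{+}1,\dots,i_0{+}2n)$, where all $2n$ legs are active: drawn at the level of the $N,N^\dag$ tensors (so the ring has $2n$ tensors and every cut passes through a bond, either an internal $\mu$-bond of some $Q$ or an $\Ti$-bond between consecutive $Q$'s), the two separations needed for $i_0$ even and $i_0$ odd are immediate, and your worry about the reflection axis passing through a site dissolves automatically—that case is just a cut through the $\mu$-bond. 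Then a general $2r$-subsequence $B=(j_1,\dots,j_r|j_{r+1},\dots,j_{2r})$ with $j_a+j_{2r+1-a}\equiv 2J+1 \pmod{2n}$ is handled in one line: because the pairing of $B$ matches the row/column identification of $M[\mathbf{X}_{F(J)}]$, fixing the inactive legs to zero selects matching row and column positions, so $M[\mathbf{X}_B]$ is a \emph{principal submatrix} of $M[\mathbf{X}_{F(J)}]$ and is therefore positive semidefinite. This reduction is exactly what your direct folding has to reproduce by hand: the $\sigma$-invariance of the complement of $B$, the consistent choice of the two cut bonds, and the spectator zero legs are all real bookkeeping in your route, whereas in the paper's route they never need to be confronted because the folding happens only in the geometrically trivial contiguous case. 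Your argument can be completed (the fixed-point-freeness of $\sigma$ on legs, which you correctly note follows from $s$ being odd, is the essential ingredient), but you should either carry out that bookkeeping explicitly or, better, insert the principal-submatrix observation and let it do the work.
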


\begin{proof}
We first argue that the contiguous subtensor $\mathbf{X}_{F(i_0)}$ associated with the $2n$-subsequence $F(i_0) = (i_0{+}1 , i_0{+}2 , \dots , i_0{+}n \,|\, i_0{+}n{+}1 , \dots , i_0{-}1 , i_0 )$ is always positive.

Write the quantum channels $Q(\theta)$ in the operator-sum representation:
$Q(\theta)[\rho] = \sum_\mu^\phd N_\mu(\theta) \rho\, N_\mu(\theta)^\dag$,
or equivalently $Q_{\Ti,\Ts;\Tj,\Tt} = \sum_\mu (N_\mu)_{\Ti,\Ts}^\phd (N_\mu)_{\Tj,\Tt}^\ast$ (with implicit $\theta$-dependence).

Presented in the tensor network representation, Eq.~\eqref{eq:Xtensor_Q} is
\begin{align}
    \begin{tikzpicture}
        [baseline={([yshift=2ex]current bounding box.center)},
        rnode1/.style={rounded corners=5pt,minimum width = 75pt, inner sep=1pt, draw=black!100,fill = white!100,line width=1.5pt,minimum height= 25pt}],
        \draw[black!100,line width=1.6pt] plot [smooth,tension=0.6] coordinates { (-65pt,-5pt) (-65pt,10pt) (-50pt,15pt)  };
        \draw[black!100,line width=1.6pt] plot [smooth,tension=0.6] coordinates { (-45pt,-5pt) (-45pt,10pt) (-30pt,15pt)  };
        \draw[black!100,line width=1.6pt] plot [smooth,tension=0.6] coordinates { (20pt,-5pt) (20pt,10pt) (15pt,15pt)  };
        \draw[black!100,line width=1.6pt] plot [smooth,tension=0.6] coordinates { (40pt,-5pt) (40pt,10pt) (25pt,15pt)  };
        \node[rnode1]() at (-12pt,20pt)    {$\mathbf{X}$};
        \node[]() at (-63pt,-10pt)    {$\Tt_1$};
        \node[]() at (-42pt,-10pt)    {$\Tt_2$};
        \node[]() at (18pt,-10pt)    {$\Tt_{2n-1}$};
        \node[]() at (45pt,-10pt)    {$\Tt_{2n}$};
        \node[]() at (-11pt,-0pt)    {$\cdots$};
    \end{tikzpicture}
    =\mkern13mu \sum_\theta P(\theta) \mkern11mu
    \begin{tikzpicture}
        [baseline={([yshift=1.5ex]current bounding box.center)},
        rnode1/.style={rounded corners=5pt,minimum width = 15pt, inner sep=1pt, draw=black!100,fill = white!100,line width=1.5pt,minimum height= 25pt}],
        \draw[black!100,line width=1.6pt] plot [smooth,tension=0.6] coordinates { (-25pt,-5pt) (-25pt,10pt) (-10pt,15pt)  };
        \draw[black!100,line width=1.6pt] plot [smooth,tension=0.6] coordinates { (10pt,15pt)  (25pt,10pt) (25pt,-5pt)  };
        \draw[black!100,line width=1.6pt] plot [smooth,tension=0.6] coordinates { (45pt,-5pt) (45pt,10pt) (60pt,15pt)   };
        \draw[black!100,line width=1.6pt] plot [smooth,tension=0.6] coordinates { (80pt,15pt)  (95pt,10pt) (95pt,-5pt)  };
        \draw[black!100,line width=1.6pt] plot [smooth,tension=0.6] coordinates { (175pt,-5pt) (175pt,10pt) (190pt,15pt)   };
        \draw[black!100,line width=1.6pt] plot [smooth,tension=0.6] coordinates { (210pt,15pt)  (225pt,10pt) (225pt,-5pt)  };
        \draw[color=black!100,line width=1.6pt] (0pt,25pt) -- (110pt,25pt);
        \draw[color=black!100,line width=1.6pt] (160pt,25pt) -- (210pt,25pt);
        \draw[black!100,line width=1.6pt] plot [smooth,tension=0.2] coordinates {(0pt,25pt) (-25pt,27pt) (-25pt,35pt) (-10pt,38pt) (210pt,38pt)  (225pt,35pt) (225pt,27pt) (205pt,25pt) };
        \node[rnode1]() at (-12pt,20pt)    {$N$};
        \node[rnode1]() at (12pt,20pt)    {$N^\dag$};
        \node[rnode1]() at (58pt,20pt)    {$N$};
        \node[rnode1]() at (82pt,20pt)    {$N^\dag$};
        \node[rnode1]() at (188pt,20pt)    {$N$};
        \node[rnode1]() at (212pt,20pt)    {$N^\dag$};
        \node[circle,draw=black, line width = 1pt, fill=white, inner sep=0pt,minimum size=2pt] (b) at(26pt,5pt) {$\scriptstyle -$};
        \node[circle,draw=black, line width = 1pt, fill=white, inner sep=0pt,minimum size=2pt] (b) at(96pt,5pt) {$\scriptstyle -$};
        \node[circle,draw=black, line width = 1pt, fill=white, inner sep=0pt,minimum size=2pt] (b) at(226pt,5pt) {$\scriptstyle -$};
        \node[]() at (-23pt,-10pt)    {$\Tt_1$};
        \node[]() at (25pt,-10pt)    {$\Tt_2$};
        \node[]() at (47pt,-10pt)    {$\Tt_3$};
        \node[]() at (95pt,-10pt)    {$\Tt_4$};
        \node[]() at (175pt,-10pt)    {$\Tt_{2n-1}$};
        \node[]() at (225pt,-10pt)    {$\Tt_{2n}$};
        \node[]() at (135pt,25pt)    {$\bullet \ \bullet \ \bullet$};
    \end{tikzpicture},
\end{align}
where \begin{tikzpicture}
        [baseline={([yshift=-0.5ex]current bounding box.center)}]
        \draw[color=black!100,line width=1.6pt] (-10pt,0pt) -- (10pt,0pt);
        \node[circle,draw=black, line width = 1pt, fill=white, inner sep=0pt,minimum size=2pt] (b) at(0pt,0pt) {$\scriptstyle -$}; 
        \node[]() at (-15pt,0pt)    {$\Tt_1$};
        \node[]() at (15pt,0pt)    {$\Tt_2$};
    \end{tikzpicture}
$= \delta(\Tt_1+\Tt_2)$ is the negation operator.
Visually, it is straightforward to see that $M[\mathbf{X}_{F(i_0)}]$ is positive semidefinite along either of the following separations (legs on the `right half' are negated in the definition of $M$):
\begin{subequations} \begin{align}
    \begin{tikzpicture}
        [baseline={([yshift=-.5ex]current bounding box.center)},
        rnode1/.style={rounded corners=5pt,minimum width = 15pt, inner sep=1pt, draw=black!100,fill = white!100,line width=1.5pt,minimum height= 25pt}],
        \draw[black!100,line width=1.6pt] plot [smooth,tension=0.6] coordinates { (-25pt,-5pt) (-25pt,10pt) (-10pt,15pt)  };
        \draw[black!100,line width=1.6pt] plot [smooth,tension=0.6] coordinates { (10pt,15pt)  (25pt,10pt) (25pt,-5pt)  };
        \draw[black!100,line width=1.6pt] plot [smooth,tension=0.6] coordinates { (-105pt,-5pt) (-105pt,10pt) (-90pt,15pt)  };
        \draw[black!100,line width=1.6pt] plot [smooth,tension=0.6] coordinates { (-70pt,15pt)  (-55pt,10pt) (-55pt,-5pt)  };
        \draw[black!100,line width=1.6pt] plot [smooth,tension=0.6] coordinates { (45pt,-5pt) (45pt,10pt) (60pt,15pt)   };
        \draw[black!100,line width=1.6pt] plot [smooth,tension=0.6] coordinates { (80pt,15pt)  (95pt,10pt) (95pt,-5pt)  };
        \draw[black!100,line width=1.6pt] plot [smooth,tension=0.6] coordinates { (125pt,-5pt) (125pt,10pt) (140pt,15pt)   };
        \draw[black!100,line width=1.6pt] plot [smooth,tension=0.6] coordinates { (160pt,15pt)  (175pt,10pt) (175pt,-5pt)  };
        \draw[color=black!100,line width=1.6pt] (-25pt,25pt) -- (95pt,25pt);
        \draw[black!100,line width=1.6pt] plot [smooth,tension=0.2] coordinates {(-55pt,25pt) (-105pt,27pt) (-105pt,35pt) (-90pt,38pt) (160pt,38pt)  (175pt,35pt) (175pt,27pt) (125pt,25pt) };
        \node[]() at (110pt,25pt)    {$\bullet \ \bullet \ \bullet$};
        \node[]() at (-40pt,25pt)    {$\bullet \ \bullet \ \bullet$};
        \draw[dotted, red!100,line width=1.6pt] (35pt,0pt) -- (35pt,45pt);
        \node[rnode1]() at (-12pt,20pt)    {$N$};
        \node[rnode1]() at (12pt,20pt)    {$N^\dag$};
        \node[rnode1]() at (-92pt,20pt)    {$N$};
        \node[rnode1]() at (-68pt,20pt)    {$N^\dag$};
        \node[rnode1]() at (58pt,20pt)    {$N$};
        \node[rnode1]() at (82pt,20pt)    {$N^\dag$};
        \node[rnode1]() at (138pt,20pt)    {$N$};
        \node[rnode1]() at (162pt,20pt)    {$N^\dag$};
        \node[circle,draw=black, line width = 1pt, fill=white, inner sep=0pt,minimum size=2pt] (b) at(-54pt,5pt) {$\scriptstyle -$};
        \node[circle,draw=black, line width = 1pt, fill=white, inner sep=0pt,minimum size=2pt] (b) at(26pt,5pt) {$\scriptstyle -$};
        \node[circle,draw=black, line width = 1pt, fill=white, inner sep=0pt,minimum size=2pt] (b) at(44pt,5pt) {$\scriptstyle -$};
        \node[circle,draw=black, line width = 1pt, fill=white, inner sep=0pt,minimum size=2pt] (b) at(124pt,5pt) {$\scriptstyle -$};
    \end{tikzpicture} \;,
\\
    \begin{tikzpicture}
        [baseline={([yshift=-.5ex]current bounding box.center)},
        rnode1/.style={rounded corners=5pt,minimum width = 15pt, inner sep=1pt, draw=black!100,fill = white!100,line width=1.5pt,minimum height= 25pt}],
        \draw[black!100,line width=1.6pt] plot [smooth,tension=0.6] coordinates { (10pt,-5pt) (10pt,10pt) (25pt,15pt)  };
        \draw[black!100,line width=1.6pt] plot [smooth,tension=0.6] coordinates { (-15pt,15pt)  (-0pt,10pt) (-0pt,-5pt)  };
        \draw[black!100,line width=1.6pt] plot [smooth,tension=0.6] coordinates { (-53pt,-5pt) (-53pt,10pt) (-38pt,15pt)  };
        \draw[black!100,line width=1.6pt] plot [smooth,tension=0.6] coordinates { (-90pt,15pt)  (-75pt,10pt) (-75pt,-5pt)  };
        \draw[black!100,line width=1.6pt] plot [smooth,tension=0.6] coordinates { (70pt,-5pt) (70pt,10pt) (85pt,15pt)   };
        \draw[black!100,line width=1.6pt] plot [smooth,tension=0.6] coordinates { (45pt,15pt)  (60pt,10pt) (60pt,-5pt)  };
        \draw[color=black!100,line width=1.6pt] (-50pt,25pt) -- (120pt,25pt);
        \draw[black!100,line width=1.6pt] plot [smooth,tension=0.6] coordinates { (145pt,-5pt) (145pt,10pt) (160pt,15pt)  };
        \draw[black!100,line width=1.6pt] plot [smooth,tension=0.6] coordinates { (107pt,15pt)  (122pt,10pt) (122pt,-5pt)  };
        \draw[black!100,line width=1.6pt] plot [smooth,tension=0.2] coordinates {(-80pt,25pt) (-105pt,27pt) (-105pt,35pt) (-90pt,38pt) (160pt,38pt)  (175pt,35pt) (175pt,27pt) (150pt,25pt) };
        \node[]() at (135pt,25pt)    {$\bullet \ \bullet \ \bullet$};
        \node[]() at (-65pt,25pt)    {$\bullet \ \bullet \ \bullet$};
        \draw[dotted, red!100,line width=1.6pt] (35pt,0pt) -- (35pt,45pt);
        \node[rnode1]() at (-12pt,20pt)    {$N^\dag$};
        \node[rnode1]() at (22pt,20pt)    {$N$};
        \node[rnode1]() at (-92pt,20pt)    {$N^\dag$};
        \node[rnode1]() at (-37pt,20pt)    {$N$};
        \node[rnode1]() at (47pt,20pt)    {$N^\dag$};
        \node[rnode1]() at (82pt,20pt)    {$N$};
        \node[rnode1]() at (107pt,20pt)    {$N^\dag$};
        \node[rnode1]() at (162pt,20pt)    {$N$};
        \node[circle,draw=black, line width = 1pt, fill=white, inner sep=0pt,minimum size=2pt] (b) at(-74pt,5pt) {$\scriptstyle -$};
        \node[circle,draw=black, line width = 1pt, fill=white, inner sep=0pt,minimum size=2pt] (b) at(1pt,5pt) {$\scriptstyle -$};
        \node[circle,draw=black, line width = 1pt, fill=white, inner sep=0pt,minimum size=2pt] (b) at(69pt,5pt) {$\scriptstyle -$};
        \node[circle,draw=black, line width = 1pt, fill=white, inner sep=0pt,minimum size=2pt] (b) at(143pt,5pt) {$\scriptstyle -$};
    \end{tikzpicture} \;.
\end{align} \end{subequations}
The former demonstrates positivity of $\mathbf{X}_{F(i_0)}$ when $i_0$ is even; the latter when $i_0$ is odd.

For every $2r$-subsequence $B = (j_1,\dots,j_r|j_{r+1},\dots,j_{2r})$ with $j_a+j_{2r+1-a} \equiv 2J+1 \pmod{2n}$, its ``$M$-matrix'' (cf.\ Def.~\ref{def:Xtensor_char}) $M[\mathbf{X}_B]$ is a principal submatrix of $M[\mathbf{X}_{F(J)}]$ constructed from the contiguous subsequence $F(J) = (J{+}1,\dots|\dots,J{+}2n)$.
Therefore $\mathbf{X}_B$ is positive.
\end{proof}

\begin{lemma} \label{lem:Xtensor_4}
Every positive X3-tensor of size 4 is monochromatic.
\end{lemma}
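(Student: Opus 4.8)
The plan is to peel $M[\mathbf{X}]$ apart until positivity reduces to a single intersection-form matrix, and then to detect a forced negative eigenvalue whenever the three-point invariants disagree. For a size-$4$ tensor the entries of $M[\mathbf{X}]$ are $\mathbf{X}(\Tt_1,\Tt_2,\Tt_3,\Tt_4)$ read with rows $(\Tt_1,\Tt_2)$ and columns $(-\Tt_4,-\Tt_3)$; the delta factor $\delta(\Tt_1+\Tt_2+\Tt_3+\Tt_4)$ makes this matrix block-diagonal in the total charge $s$, and $M[\mathbf{X}]\succeq 0$ if and only if each block $M^{(s)}$, indexed by a single loop $a,b\in H_1(T^2;\Z_p)$, is positive semidefinite.

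First I would expand $M^{(s)}_{a,b}$ from the product formula~\eqref{eq:xp_Im_form2} and collapse the six $I$-factors with the identities $I_{\Ts,-\Tt}=I_{\Ts,\Tt}^{-1}$, $I_{\Ts+\Tt,\mathbf u}=I_{\Ts,\mathbf u}I_{\Tt,\mathbf u}$ and $I_{\Ts,\Ts}=1$ [cf.~\eqref{eq:I_properties}]. A careful but routine bookkeeping of exponents sorts every factor into a dependence on $(a,s)$, on $(s,b)$, or on $(a,b)$, with exponents equal to the gauge-invariant three-point data:
\begin{align*}
	M^{(s)}_{a,b} = I_{a,s}^{\,\ccc_{1,2,3}}\, I_{s,b}^{\,\ccc_{2,3,4}}\, I_{a,b}^{\,\ccc_{1,3,4}-\ccc_{2,3,4}} .
\end{align*}
The Hermiticity half of positivity (Def.~\ref{def:Xtensor_char}, i.e.\ $\ccc_{i,j,k}=\ccc_{5-k,5-j,5-i}$) then forces $\ccc_{1,2,3}=\ccc_{2,3,4}=:\gamma$ and $\ccc_{1,2,4}=\ccc_{1,3,4}=:\delta$, and the cocycle relation~\eqref{eq:c3_constraint4} is satisfied automatically; so a size-$4$ positive tensor carries only the two invariants $\gamma,\delta$, and being monochromatic is precisely the equality $\gamma=\delta$.

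Since $I_{a,s}^{\gamma}$ and $I_{s,b}^{\gamma}$ depend on the row and column index separately, I would remove them by conjugating $M^{(s)}$ with the diagonal unitary $D_a=I_{a,s}^{\gamma}$; using $I_{s,b}=I_{b,s}^{-1}$ these two factors cancel and leave the $s$-independent matrix $N_{a,b}=I_{a,b}^{\,\kappa}$ with $\kappa=\delta-\gamma$. Conjugation preserves positivity, so the lemma reduces to the claim that $N\succeq 0$ forces $\kappa\equiv 0\pmod p$.

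The heart of the argument — and the step I expect to be the main obstacle — is this spectral claim, which I would establish by signature counting rather than explicit diagonalization. Writing $I_{a,b}=\zeta_p^{a\times b}$, the product $N N^\dagger$ has entries $\sum_b \zeta_p^{\kappa\,((a-c)\times b)}$, which factor over the two coordinates of $\Z_p^2$ and evaluate to $p^2\,\mathbf 1[a-c\in H]$ with $H=(p/g)\Z_p\times(p/g)\Z_p$ and $g=\gcd(\kappa,p)$. Hence $N N^\dagger=p^2 C$, where $C_{a,c}=\mathbf 1[a-c\in H]$ is a convolution by $\mathbf 1_H$ whose eigenvalues are $|H|=g^2$ (multiplicity $[\Z_p^2:H]=(p/g)^2$) and $0$; thus the nonzero eigenvalues of the Hermitian matrix $N$ are $\pm pg$, numbering $(p/g)^2$ in all. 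Together with $\operatorname{tr} N=\sum_a I_{a,a}^{\kappa}=p^2$, the counts of positive and negative eigenvalues satisfy $n_+-n_-=p/g$ and $n_++n_-=(p/g)^2$, giving $n_-=\tfrac12(p/g)(p/g-1)$. This is strictly positive whenever $g<p$, i.e.\ whenever $\kappa\not\equiv 0$, contradicting $N\succeq 0$. Therefore $\kappa\equiv 0$, so $\gamma=\delta$ and $\mathbf{X}$ is monochromatic; the argument uses only the structure of $\Z_p^2$ and so holds for every integer $p>1$.
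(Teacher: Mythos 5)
Your proof is correct, and its key step takes a genuinely different route from the paper's. Both arguments begin the same way: Hermiticity of $M[\mathbf{X}]$ forces $\ccc_{1,2,3}=\ccc_{2,3,4}=:\gamma$ and $\ccc_{1,2,4}=\ccc_{1,3,4}=:\delta$, so monochromaticity reduces to the single equation $\gamma=\delta$. From there the paper produces a finite witness: the $4\times4$ principal submatrix built from $\Th$, $\Tv$, $\Td=-\Th-\Tv$, whose all-ones $2\times2$ corner forces $z_{134}=z_{234}$ via the column-inclusion Lemma~\ref{lem:column_inclusion}. You instead treat the matrix globally: the charge grading is indeed respected (the entry vanishes unless $\Tt_1+\Tt_2=-\Tt_3-\Tt_4$), your exponent bookkeeping correctly collapses each block to $M^{(s)}_{a,b}=I_{a,s}^{\gamma}\,I_{s,b}^{\gamma}\,I_{a,b}^{\delta-\gamma}$, and conjugation by the diagonal unitary $I_{a,s}^{\gamma}$ leaves $N_{a,b}=I_{a,b}^{\kappa}$ with $\kappa=\delta-\gamma$. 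The signature count is also sound: $N$ is Hermitian, $N^2$ equals $p^2$ times convolution by $\mathbf{1}_H$ with $H=(p/g)\Z_p\times(p/g)\Z_p$ and $g=\gcd(\kappa,p)$, so the nonzero spectrum of $N$ is $\pm pg$ with $n_++n_-=(p/g)^2$, while $\tr N=p^2$ gives $n_+-n_-=p/g$; hence $n_-=\tfrac{1}{2}(p/g)(p/g-1)>0$ whenever $\kappa\not\equiv 0$, contradicting positivity. The trade-off: the paper's argument is shorter and uses exactly the tool (principal submatrices plus column inclusion) that it reuses at size $6$ in Lemma~\ref{lem:Xtensor_6}, whereas your computation is heavier but more informative --- it gives a normal form for every charge block, shows the cross term $I_{a,b}^{\kappa}$ is the \emph{only} obstruction to positivity, and quantifies the failure by counting negative eigenvalues rather than merely detecting one. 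Both routes hold for every integer $p>1$, as this appendix requires.
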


\begin{proof}
For $2r = 4$, there are four $\ccc$-invariants: $\ccc_{1,2,3}$, $\ccc_{1,2,4}$, $\ccc_{1,3,4}$, $\ccc_{2,3,4}$,
with a single constraint $\ccc_{1,2,3} - \ccc_{1,2,4} + \ccc_{1,3,4} - \ccc_{2,3,4} = 0$.
So there are $p^3$ possible X3-tensors of size 4.
The Hermitian requirement of $M$ further constrains $\ccc_{1,2,3} = \ccc_{2,3,4}$ and $\ccc_{1,2,4} = \ccc_{1,3,4}$, reducing to $p^2$ possibilities.

Let $\Td = -\Th -\Tv$ and consider the principal submatrix of $M$ with
rows $(t_1,t_2) \in \bigl\{ (-\Td,0), (0,-\Td), (\Th,\Tv), (\Tv,\Th) \bigr\}$
and columns $(t_3,t_4) \in \bigl\{ (0,\Td), (\Td,0), (-\Tv,-\Th), (-\Th,-\Tv) \bigr\}$.
Its elements are
\begin{align}
	\newcommand{\pha}{{\vphantom{\ast}}}
	\begin{bmatrix}
		1 & 1 & z_{134}^\ast & z_{134}^\pha
	\\ 1 & 1 & z_{234}^\ast & z_{234}^\pha
	\\	z_{124}^\pha & z_{123}^\pha & 1 & z_{123}^\pha z_{134}^\pha
	\\	z_{124}^\ast & z_{123}^\ast & z_{123}^\ast z_{134}^\ast & 1
	\end{bmatrix}
\end{align}
where $z_{ijk} = \zeta_p^{\ccc_{i,j,k}}$.
This matrix can only be positive semidefinite when $z_{134} = z_{234}$ (cf.\ Lem.~\ref{lem:column_inclusion}) which implies $\ccc_{1,3,4} = \ccc_{2,3,4}$.
Hence at size 4, positivity implies monochromaticity.
\end{proof}

\begin{lemma} \label{lem:Xtensor_6}
Every positive X3-tensor of size 6 is uniquely identified by the three invariants $\ccc_{1,2,3}$, $\ccc_{1,2,4}$, and $\ccc_{1,3,4}$.
The tensor is monochromatic if these three invariants are equal.
\end{lemma}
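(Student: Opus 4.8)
The plan is to combine Lemma~\ref{lem:Xtensor_4}, applied to the size-4 subtensors, with the positivity of the full matrix $M[\mathbf{X}]$. First I would set up coordinates on the space of size-6 X3-tensors: by the cocycle relation~\eqref{eq:c3_constraint4} (equivalently, by writing $\ccc$ as the coboundary of a gauge-fixed edge matrix $m$ as in Prop.~\ref{prop:Xtensor_group}), all twenty invariants $\ccc_{i,j,k}$ are determined by the ten ``layer-1'' invariants $\ccc_{1,j,k}$, so a general size-6 X3-tensor has ten free parameters. The Hermiticity built into positivity (Def.~\ref{def:Xtensor_char}) imposes $\ccc_{i,j,k}=\ccc_{7-k,7-j,7-i}$; after discarding the redundant relations (only four of the pairings are independent once the cocycle relation is used) this leaves exactly six free parameters.

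Next I would produce three positive size-4 subtensors as principal submatrices of $M[\mathbf{X}]$. Setting each conjugate pair of legs $\{3,4\}$, $\{2,5\}$, $\{1,6\}$ to zero (these are precisely the pairs $k\leftrightarrow 7-k$ identified by the row/column indexing of $M$) yields the subsequences $(1,2|5,6)$, $(1,3|4,6)$, $(2,3|4,5)$; each is a genuine size-4 X3-tensor whose $M$-matrix is a principal submatrix of the Hermitian positive semidefinite $M[\mathbf{X}]$, hence positive. Lemma~\ref{lem:Xtensor_4} then forces each of the three to be monochromatic. Modulo the Hermiticity relations already in hand, each monochromaticity contributes a single new $\Z_p$-linear relation among the $\ccc$'s; solving this linear system I expect the six parameters to collapse to three, which can be taken to be $\ccc_{1,2,3}$, $\ccc_{1,2,4}$, $\ccc_{1,3,4}$. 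For the monochromatic statement I would substitute $\ccc_{1,2,3}=\ccc_{1,2,4}=\ccc_{1,3,4}=c$ into the general solution and verify that every $\ccc_{i,j,k}$ reduces to the alternating pattern of Def.~\ref{def:Xtensor_char}; a short check shows all layer-1 invariants become $c$ and hence (via the cocycle relation) every increasing-index invariant becomes $c$.

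The main obstacle is that the three size-4 monochromaticity relations are \emph{not} by themselves sufficient for every modulus. The relation coming from $(2,3|4,5)$ reduces to a statement of the form $2\,(\ccc_{1,3,5}-\ccc_{1,3,4})=0$, which pins down $\ccc_{1,3,5}$ for odd $p$ but degenerates when $p=2$, leaving $\ccc_{1,3,5}$ (versus $\ccc_{1,3,6}$) as a spurious extra parameter; moreover one checks that no size-4 subsequence is supported on $\{1,3,5,6\}$, so Lemma~\ref{lem:Xtensor_4} cannot be invoked to relate these two invariants directly. Eliminating this last degree of freedom for all $p$ therefore requires a bespoke positivity argument on a larger principal submatrix of $M[\mathbf{X}]$ — extracting a suitable $2\times2$ block and applying the column-inclusion criterion of Lemma~\ref{lem:column_inclusion} (the same device used inside the proof of Lemma~\ref{lem:Xtensor_4} to conclude $\ccc_{1,3,4}=\ccc_{2,3,4}$). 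I expect identifying the correct submatrix and handling the $p=2$ case uniformly to be the delicate step of the argument.
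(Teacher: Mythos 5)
Your reduction is correct as far as it goes, and your diagnosis of where it stalls is exactly right; the problem is that the step you defer --- eliminating the last invariant uniformly in $p$ --- is not a finishing detail but the entire content of the paper's proof, so as written the proposal has a genuine gap, and it sits precisely at $p=2$, the case most relevant to the toric code. In the paper's gauge-fixed coordinates ($\alpha=\ccc_{1,2,6}$, $\beta=\ccc_{1,3,6}$, $\gamma=\ccc_{2,3,6}$, plus three more parameters $\chi,\phi,\psi$), your three conjugate-pair subsequences $(1,2|5,6)$, $(1,3|4,6)$, $(2,3|4,5)$ are indeed the only size-4 subtensors realizable as principal submatrices of $M[\mathbf{X}]$ under the lemma's hypotheses, and Lemma~\ref{lem:Xtensor_4} applied to them yields only $\phi=0$, $\psi=0$, and $2\chi=0$: the parameter $\chi$ (your $\ccc_{1,3,5}-\ccc_{1,3,4}$, once $\phi=\psi=0$ are in hand) survives whenever $p$ is even. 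Be warned also that a literal $2\times2$ principal submatrix can never close this gap: all entries of $M[\mathbf{X}]$ are roots of unity or zero and the diagonal is forced to $1$, so every $2\times2$ principal minor is automatically nonnegative; the leverage must come from a rank argument on a larger block.

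The paper's own proof never invokes Lemma~\ref{lem:Xtensor_4}. Instead it writes down a single $6\times6$ principal submatrix of $M[\mathbf{X}]$, with rows $(\Tt_1,\Tt_2,\Tt_3)$ chosen from $\bigl\{(-\Td,0,0),(0,-\Td,0),(0,0,-\Td),(0,\Th,\Tv),(\Tv,0,\Th),(\Th,\Tv,0)\bigr\}$ and the conjugate columns, where $\Td=-\Th-\Tv$. Its upper-left $3\times3$ corner is the all-ones matrix, of rank one, so Lemma~\ref{lem:column_inclusion} forces every column of the adjacent $3\times3$ block to be proportional to $(1,1,1)^{\mathrm{T}}$, i.e.\ $\ccc_{1,4,6}=\ccc_{2,4,6}=\ccc_{3,4,6}$ and $\ccc_{1,5,6}=\ccc_{2,5,6}=\ccc_{3,5,6}$. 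In coordinates these read $(\beta,\chi+\beta,\psi+\beta)$ all equal and $(\alpha,\phi+\alpha,\chi+\alpha)$ all equal, which kills $\chi$, $\phi$, $\psi$ simultaneously and for every $p$, prime or not. The rows mixing two nonzero loops, such as $(0,\Th,\Tv)$, are what give access to correlators like $\ccc_{2,4,6}$ and $\ccc_{3,5,6}$, which no conjugate-pair size-4 submatrix can see --- constructing such a submatrix is the missing idea your proposal would need to supply before the parameter count can collapse from four to three.
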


\begin{proof}
At size $2r=6$, there are $\lvert \operatorname{coker} d_1 \rvert = p^{10}$ possible X3-tensors [cf.\ Prop.~\ref{prop:Xtensor_group}].

Every tensor is uniquely specified by the $m$-matrix (which generically has 15 nonzero entries).  We can gauge-fix $m$ to be of the form
\begin{align}
    m &= \begin{bmatrix}
        & \alpha & \beta &&&
    \\ -\alpha && \gamma & \chi & \phi &
    \\ -\beta & -\gamma && \psi & \chi' &
    \\ & -\chi & -\psi && \gamma' & \beta'
    \\ & -\phi & -\chi' & -\gamma' && \alpha'
    \\ &&& -\beta' & -\alpha' &
    \end{bmatrix} ,
\end{align}
with ten $\Z_p$ variables: $\alpha, \beta, \gamma, \alpha', \beta', \gamma', \phi, \psi, \chi, \chi'$.
Notice that $\ccc_{1,2,6} = \alpha$ and $\ccc_{1,5,6} = \alpha'$ [cf.~\eqref{eq:def_c3_from_m}], so Hermiticity of $M$ forces $\ccc_{1,2,6} = \ccc_{1,5,6} \Rightarrow \alpha = \alpha'$.  The same argument for pairs $\ccc_{1,3,6} = \ccc_{1,4,6}$, $\ccc_{2,3,6} = \ccc_{1,4,5}$, and $\ccc_{1,2,4} = \ccc_{3,5,6}$ allow us to equate $\beta = \beta'$, $\gamma = \gamma'$, and $\chi = \chi'$.

Let $\Td = -\Th -\Tv$ and consider the principal submatrix of $M$ with selected rows/columns
\begin{align} \begin{aligned}
    &(t_1,t_2,t_3) \in \bigl\{ (-\Td,0,0), (0,-\Td,0), (0,0,-\Td), (0,\Th,\Tv), (\Tv,0,\Th), (\Th,\Tv,0) \bigr\}, \\
    &(t_4,t_5,t_6) \in \bigl\{ (0,0,\Td), (0,\Td,0), (\Td,0,0), (-\Tv,-\Th,0), (-\Th,0,-\Tv), (0,-\Tv,-\Th) \bigr\}.
\end{aligned} \end{align}
Its elements are
\begin{align}
	\newcommand{\pha}{{\vphantom{\ast}}}
	\begin{bmatrix}
		1 & 1 & 1 & z_{145}^\ast & z_{146}^\pha & z_{156}^\ast
	\\	1 & 1 & 1 & z_{245}^\ast & z_{246}^\pha & z_{256}^\ast	
	\\	1 & 1 & 1 & z_{345}^\ast & z_{346}^\pha & z_{356}^\ast	
	\\	z_{236}^\pha & z_{235}^\pha & z_{234}^\pha & 1 & z_{234}^\pha z_{246}^\pha & z_{236}^\pha z_{256}^\ast
	\\	z_{136}^\ast & z_{135}^\ast & z_{134}^\ast & z_{134}^\ast z_{145}^\ast & 1 & z_{135}^\ast z_{156}^\ast
	\\	z_{126}^\pha & z_{125}^\pha & z_{124}^\pha & z_{125}^\pha z_{145}^\ast & z_{124}^\pha z_{146}^\pha & 1
	\end{bmatrix}
\end{align}
where $z_{ijk} = \zeta_p^{\ccc_{i,j,k}}$.
Applying Lem.~\ref{lem:column_inclusion}, we have
(fifth column) $z_{146} = z_{246} = z_{346} \Rightarrow \ccc_{1,4,6} = \ccc_{2,4,6} = \ccc_{3,4,6}$
and (sixth column) $z_{156} = z_{256} = z_{356} \Rightarrow \ccc_{1,5,6} = \ccc_{2,5,6} = \ccc_{3,5,6}$.
Since $\bigl( \ccc_{1,4,6} , \ccc_{2,4,6} , \ccc_{3,4,6} \bigr) = (\beta, \chi+\beta, \psi+\beta)$ are all equal, $\chi=\psi=0$.
Since $\bigl( \ccc_{1,5,6} , \ccc_{2,5,6} , \ccc_{3,5,6} \bigr) = (\alpha, \phi+\alpha, \chi+\alpha)$ are all equal, we also have $\phi=\chi=0$.

Thus every X3-tensor comes from an $m$-matrix of the form
\begin{align}
    m &= \begin{bmatrix}
        & \alpha & \beta &&&
    \\ -\alpha && \gamma &&&
    \\ -\beta & -\gamma &&&&
    \\ &&&& \gamma & \beta
    \\ &&& -\gamma && \alpha
    \\ &&& -\beta & -\alpha &
    \end{bmatrix} ,
\end{align}
Since $\alpha = \ccc_{1,2,4}$, $\beta = \ccc_{1,3,4}$, $\gamma = \ccc_{1,2,3} - \alpha + \beta$, every X3-tensor is uniquely identified by the three invariants $\ccc_{1,2,3}$, $\ccc_{1,2,4}$, and $\ccc_{1,3,4}$.

If $\ccc_{1,2,3} = \ccc_{1,2,4} = \ccc_{1,3,4}$, then $\alpha = \beta = \gamma$.  It is straightforward (but perhaps arduous) to check for each of the twenty possible sorted triples $1 \leq i < j < k \leq 6$, $\ccc_{i,j,k} = \alpha$, and hence the X3-tensor become monochromatic.
\end{proof}

\begin{proposition}
Let $\mathbf{X}$ be an X3-tensor of size $2n$ be of the form~\eqref{eq:Xtensor_Q}.
Then every subtensor $\mathbf{X}_B$ of $\mathbf{X}$ constructed from a contiguous subsequence $B$ is monochromatic.
\end{proposition}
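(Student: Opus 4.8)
The plan is to combine positivity of all subtensors with the size-4 and size-6 results, propagating monochromaticity across a contiguous block by strong induction on its length $2r$. Since $\mathbf{X}$ has the form~\eqref{eq:Xtensor_Q}, the positivity proposition (the one asserting that $\mathbf{X}_B$ is positive for \emph{every} subsequence $B$) guarantees that each valid size-4 or size-6 subsequence yields a positive X3-tensor, so Lemmas~\ref{lem:Xtensor_4} and~\ref{lem:Xtensor_6} are available on all of them. The key structural observation is that any window of consecutive relabelled indices $\{m+1,\dots,m+2s\}$ of a contiguous subsequence is again a contiguous subsequence, because the pairing $j_a+j_{2s+1-a}=2m+2s+1$ is constant and odd. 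Hence every consecutive even-length sub-window is positive and, if shorter than $2r$, monochromatic by the induction hypothesis.

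For the base cases, size $2$ is trivial ($\ccc$ vanishes), size $4$ is exactly Lemma~\ref{lem:Xtensor_4}, and size $6$ follows by applying Lemma~\ref{lem:Xtensor_4} to the valid sub-subsequence $(1,2\mid3,4)$ (pair sums $1+4=2+3=5$ are odd) to force $\ccc_{1,2,3}=\ccc_{1,2,4}=\ccc_{1,3,4}$, and then invoking Lemma~\ref{lem:Xtensor_6}. For the inductive step $r\ge4$, the two consecutive windows $\{1,\dots,2r-2\}$ and $\{3,\dots,2r\}$ are each monochromatic by hypothesis; as they overlap in the block $\{3,\dots,2r-2\}$, which still contains a triple, their constants agree, yielding a single value $c$ for every triple contained in some consecutive length-$(2r-2)$ window.

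What remains are the ``long-range'' triples $\{i,j,k\}$ escaping all three such windows, namely those with $i\in\{1,2\}$ and $k\in\{2r-1,2r\}$. These are handled using wrap-around subsequences built from the block's own pairing $a\leftrightarrow 2r+1-a$: for instance $\{1,j,2r+1-j,2r\}$ arranged as $(1,j\mid 2r+1-j,2r)$ is a valid size-4 subsequence (pair sums $2r+1$, odd), so Lemma~\ref{lem:Xtensor_4} identifies $\ccc_{1,j,2r}$ with the shorter-range invariant $\ccc_{1,j,2r+1-j}$, which for the appropriate range of $j$ lies inside a length-$(2r-2)$ window and hence equals $c$. Iterating this device, and resolving the handful of extremal ``corner'' triples $\ccc_{1,2,2r-1}$, $\ccc_{1,2,2r}$, $\ccc_{1,2r-1,2r}$, $\ccc_{2,2r-1,2r}$ by embedding them in a wrap-around size-6 subsequence such as $\{1,2,k',2r+1-k',2r-1,2r\}$ and applying Lemma~\ref{lem:Xtensor_6}, forces every long-range invariant to equal $c$ as well, establishing monochromaticity.

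The main obstacle is not any single inequality but the bookkeeping of this last step: one must verify that each wrap-around size-4 and size-6 subsequence is genuinely valid (distinct elements, equal odd pair sums) and that the chain of identifications always terminates at a triple already known to equal $c$, rather than cycling among the uncharacterised corner triples. I expect this to demand an explicit, though routine, case analysis organised by the distance of $i$ from the start and of $k$ from the end of the block; the contiguity and pairing structure is precisely what makes enough valid subsequences available to close every case.
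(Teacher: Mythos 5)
Your proposal is correct and follows essentially the same route as the paper's proof: strong induction with base cases settled by Lemma~\ref{lem:Xtensor_4} and Lemma~\ref{lem:Xtensor_6} (size 6 forced monochromatic by applying the size-4 lemma to the valid sub-subsequence $(1,2\,|\,3,4)$), overlapping contiguous windows to pin a single constant $c$ on all short-range triples, and wrap-around subsequences exploiting the odd pair-sum structure, fed into Lemma~\ref{lem:Xtensor_6}, to force the long-range and corner invariants to equal $c$. The only cosmetic differences are your indexing of the induction (size $2r$ from size $2r-2$, versus the paper's size $2(r{+}1)$ from size $2r$) and your auxiliary size-4 device $(1,j\,|\,2r{+}1{-}j,\,2r)$, which the paper bypasses by using only the size-6 wrap-around subsequences $(1,2,a\,|\,\bar{a},2r{+}1,2r{+}2)$ — these alone already cover every long-range triple.
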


\begin{proof}
[Proof by induction.]
Statement$(r)$: every subtensor from a contiguous $2r$-subsequence is monochromatic.
We will let $r=3$ be the base case, and prove that Statement$(r)$ implies Statement$(r+1)$.
Without loss of generality, we assume that all contiguous $2r$-subsequence take the form $(1,\dots,r|r+1,\dots,2r)$.

For the $2r=4$ case, see Lemma~\ref{lem:Xtensor_4}.

For the $2r=6$ case, consider the 6-subsequence $(1,2,3|4,5,6)$.
Lemma~\eqref{lem:Xtensor_6}, reduces the number of possible tensors to $p^3$, identified by $\ccc_{1,2,3}$, $\ccc_{1,2,4}$, $\ccc_{1,3,4}$.
Since $(1,2|3,4)$ is also a subsequence, by lemma~\eqref{lem:Xtensor_4}, $\ccc_{1,2,3} = \ccc_{1,2,4} = \ccc_{1,3,4}$ are all equal.
Hence all contiguous 6-subsequences produce monochromatic X3-tensors.

Consider the $2(r+1)$-subsequence
\begin{align}
    B =
    \ooalign{$\phantom{ \bigl(\, 1,2,} \overbrace{\phantom{ 3,\dots,r{+}1\,|\,r{+}2,\dots,2r,2r{+}1,2r{+}2 }}^\text{$2r$-subsequence} \phantom{\,\bigr)}
    $\cr$ \bigl(\, \underbrace{1,2,3,\dots,r{+}1\,|\,r{+}2,\dots,2r}_\text{$2r$-subsequence} ,2r{+}1,2r{+}2 \,\bigr) $}
    \,.
\end{align}
By our induction hypothesis, the subsequence $(1,\dots,2r)$ is monochromatic as is the subsequence $(3,\dots,2r+2)$.
Because $r\geq3$, the two subsequences overlap by at least 4 elements, so $\ccc_{g,h,i} = \ccc_{3,4,5} = \ccc_{j,k,l}$ for $1 \leq g<h<i \leq 2r$ and $3 \leq j<k<l \leq 2r+2$.
In other words, $\ccc_{i,j,k}$ must be equal for any sorted triplet $i<j<k$ as long as all three $i,j,k$ are greater than 2 or less than $2r+1$.

Next, consider subsequences of the form $B_a = (1,2,a \,|\, \bar{a},2r+1,2r+2)$ with $\bar{a} = 2r+3-a$ where $3 \leq a \leq r+1$.
Any distinct triplet $(i,j,k) \in [1,2r+2]$ that contains both an element in $[1,2]$ and an element in $[2r+1,2r+2]$ must be a subset of some subsequence of this form.
By Lemma~\eqref{lem:Xtensor_6}, the possible tensors $\mathbf{X}_{B_a}$ are identified by $\ccc_{1,2,a}$, $\ccc_{1,2,\bar{a}}$, and $\ccc_{1,a,\bar{a}}$.
By our earlier assertion, all these must be equal since $1,2,a,\bar{a} \in [1,2r]$.
Therefore, $\ccc_{i,j,k}$ for any sorted triplet $i<j<k$ must be equal; $\mathbf{X}_B$ constructed from the $2(r+1)$-subsequence is indeed monochromatic.
\end{proof}

Thus, X3-tensor of the form~\eqref{eq:Xtensor_Q} are completely characterized by a single invariant $c \in \Z_p$ such that $\ccc_{i,j,k} = c$ for all sorted triplets $i < j < k$.
By inspection,
\begin{align}
	m_{i,j} = \begin{cases} c & i < j , \\ -c & i > j . \end{cases}
\end{align}
is a compatible solution.
This concludes the proof to Prop.~\ref{prop:case_mf_same_m}.

\end{document}